\documentclass[10pt]{article}

\usepackage[T1]{fontenc}
\usepackage[letterpaper,textwidth=6.8in,textheight=9in,centering]{geometry}
\usepackage[parfill]{parskip}
\usepackage{amsmath,amssymb,amsthm,mathtools}
\usepackage[tt=false,type1=true]{libertine}
\usepackage[varqu]{zi4}
\usepackage[libertine]{newtxmath}
\usepackage{microtype,graphicx}
\usepackage{enumitem}
\usepackage{booktabs,tabularx,needspace}
\usepackage{algorithm,algpseudocode}
\usepackage[dvipsnames]{xcolor}
\definecolor{PaperLinkBlue}{HTML}{2A7FB8}
\definecolor{PaperBlue}{HTML}{1F4E79}
\usepackage{titlesec}
\titleformat{\section}{\normalfont\Large\bfseries}{\thesection}{1em}{}
\titleformat{\subsection}{\normalfont\large\bfseries\color{PaperBlue}}{\thesubsection}{1em}{}
\titleformat{\subsubsection}{\normalfont\normalsize\bfseries\color{PaperBlue}}{\thesubsubsection}{1em}{}
\titleformat{\paragraph}[runin]{\normalfont\normalsize\bfseries\color{PaperBlue}}{\theparagraph}{1em}{}
\titlespacing*{\paragraph}{0pt}{1.7mm}{0.6em}
\usepackage{hyperref}
\hypersetup{colorlinks=true,linkcolor=PaperLinkBlue,citecolor=PaperLinkBlue,urlcolor=PaperLinkBlue,
 pdftitle={A Walk From Free Probability to Matrix Discrepancy I: Matrix Spencer},
 pdfauthor={Tarun Kathuria}}
\numberwithin{equation}{section}
\newtheorem{theorem}{Theorem}[section]
\newtheorem{lemma}[theorem]{Lemma}
\newtheorem{proposition}[theorem]{Proposition}
\newtheorem{corollary}[theorem]{Corollary}
\theoremstyle{definition}
\newtheorem{definition}[theorem]{Definition}
\theoremstyle{remark}
\newtheorem{remark}[theorem]{Remark}
\DeclareMathOperator{\Tr}{Tr}
\DeclareMathOperator{\ran}{ran}
\DeclareMathOperator{\rank}{rank}
\DeclareMathOperator{\diag}{diag}
\newcommand{\HS}{\mathrm{HS}}
\newcommand{\op}{\mathrm{op}}
\newcommand{\Id}{\mathrm{Id}}
\newcommand{\R}{\mathbb R}

\newcommand{\E}{\mathbb E}

\allowdisplaybreaks[2]

\newcommand{\A}{\widehat A}
\newcommand{\norm}[1]{\lVert#1\rVert}

\newcommand{\proc}[1]{\textnormal{\textsc{#1}}}

\title{A Walk From Free Probability to Matrix\\
Discrepancy I: Matrix Spencer}
\author{Tarun Kathuria\\[0.6ex]
\includegraphics[width=1.4in]{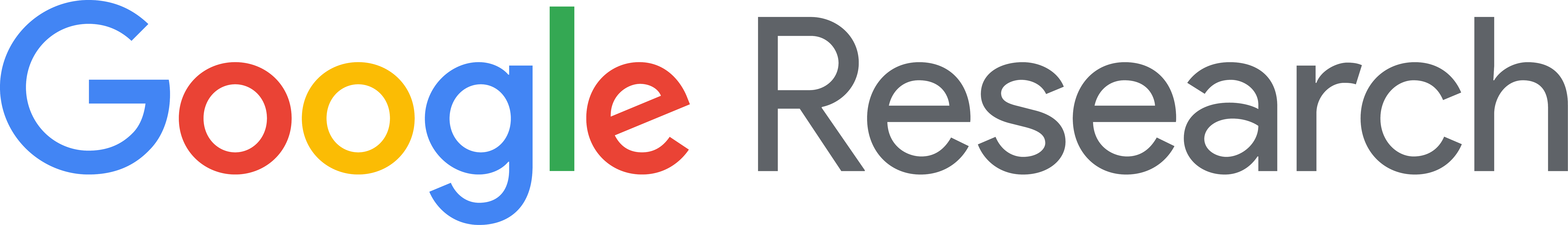}\\[0.6ex]
{\normalsize\href{mailto:tarunkathuria@google.com}{\texttt{tarunkathuria@google.com}}}}
\date{September 15, 2026}
\hypersetup{pdftitle={A Walk From Free Probability to Matrix Discrepancy I: Matrix Spencer},pdfauthor={Tarun Kathuria}}
\begin{document}
\maketitle
\begin{abstract}
The Matrix Spencer conjecture asks whether any $n$ real symmetric
matrices $A_1,\ldots,A_n\in\R^{m\times m}$ of operator norm at most
one admit a signing $x\in\{-1,1\}^n$ such that
\[
 \left\|\sum_{i=1}^n x_iA_i\right\|
 \le O\!\left(\sqrt{n\max\{1,\log(m/n)\}}\right).
\]
In the real-arithmetic model, we give a randomized algorithm running in
polynomial time that establishes this bound. We first
prove the $O(\sqrt n)$ bound for $m\le n$, resolving the square case,
and then obtain the rectangular bound by changing the regularizer.
After the results of this paper were completed, a preprint by Akbas
and Sra \cite{akbasSra2026} resolved the square Matrix Spencer conjecture
but gives a suboptimal dependence on $\log(m/n)$ in the rectangular
case. Our approach improves the logarithmic exponent from $2$ to the
optimal $1/2$.

As in earlier algorithmic discrepancy methods
\cite{lovettmeka2012,bansalLaddhaVempala2022,pesentivladu2026}, we run a
covariance-controlled random walk from the origin of the hypercube,
rounding coordinates near its faces and keeping them fixed. Our potential measures a
soft spectral edge of the evolving discrepancy matrix perturbed by an
operator-valued free semicircular element. Inspired by the free
interpolation approach of Bandeira, Boedihardjo, and van Handel
\cite{bbvh2023}, we combine Lehner's variational formula for the free
edge \cite{lehner1999} with spectral Tsallis regularization
\cite{allenZhuLiaoOrecchia2015,pesentivladu2026}. This puts the discrepancy
and remaining covariance in a single smooth optimization problem. The
potential has a finite-dimensional semidefinite formulation. We analyze
the optimizer's stability through the linearized Karush--Kuhn--Tucker
(KKT) system of a regularized min--max problem, whose stationarity
equations are related to the matrix Dyson equation \cite{erdos2019}.
This lets us find a large subspace in which to move while controlling
discrepancy. The square case uses the
Tsallis--$1/2$ regularizer; the rectangular case uses a suitable
generalized Tsallis power regularizer.

Our companion paper \cite{kathuria2026ks} applies these ideas to give an algorithmic proof of
Weaver's discrepancy theorem, whose existence proof by Marcus,
Spielman, and Srivastava resolved the Kadison--Singer conjecture
\cite{mss2015}.
Lean formalizations of our main discrepancy theorems have been
completed and will be released shortly.
\end{abstract}

\section{Introduction}
Let $A_1,\ldots,A_n$ be real symmetric $m\times m$ matrices with
$\norm{A_i}\le1$, where $\norm{\cdot}$ denotes the operator norm.
We seek signs $\sigma_i\in\{-1,1\}$ for which
$\norm{\sum_i\sigma_iA_i}$ is small. Throughout, $n$ counts matrices
and $m$ is their dimension.\footnote{A nonsymmetric real matrix $B_i$
can be replaced by its symmetric dilation
$\left(\begin{smallmatrix}0&B_i\\B_i^T&0\end{smallmatrix}\right)$.
This preserves signed-sum norms and doubles the dimension.}
For diagonal inputs this is a vector discrepancy problem. General
symmetric inputs require control of spectral directions that can change
as the coefficients move.
Our algorithm runs in polynomial time in the real-arithmetic
model.\footnote{We allow scalar real arithmetic, comparisons,
nonnegative square roots, exact symmetric eigendecomposition, and
uniform random draws. The required feasible SDP solutions are computed
internally to certified objective accuracy by an ellipsoid algorithm.
Definition~\ref{model:main} specifies the operation counts.}

\Needspace{9\baselineskip}
\begin{theorem}\label{thm:main}\label{main:combined}
There are signs satisfying
\[
 \left\|\sum_{i=1}^n\sigma_iA_i\right\|\le
 \begin{cases}
 10^8\sqrt n,&m\le n,\\
 10^8\sqrt{n(1+\log(2m/n))},&m\ge n\ge1.
 \end{cases}
\]
For every integer $b\ge1$, a randomized algorithm returns such a signing
or failure, with failure probability at most $2^{-b}$. Every execution
runs in time polynomial in $n,m,b$ in this real-arithmetic model,
with constants independent of the input matrix entries.
\end{theorem}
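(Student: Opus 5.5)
The plan is to follow the partial-coloring-by-random-walk template of Lovett--Meka and Bansal--Laddha--Vempala, replacing the scalar potential with a free-probabilistic spectral potential. Maintain a fractional coloring $x\in[-1,1]^n$, starting at $x=0$. Call a coordinate \emph{alive} while $|x_i|<1-\delta$, and freeze it once it leaves that range. Write $A(x)=\sum_i x_iA_i$. Let $\Sigma(x)=\sum_{i\text{ alive}}(1-x_i^2)A_i^2$ be the remaining matrix covariance, and let $S$ be an operator-valued semicircular element with that covariance.

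The potential $\Phi(x)$ is a soft version of the free spectral edge of $A(x)\otimes 1+S$. We make it finite-dimensional through Lehner's variational formula,
\[
\lambda_{\max}\bigl(A\otimes1+S\bigr)=\inf_{Z\succ0}\lambda_{\max}\Bigl(Z^{-1}+A+\sum_i a_i A_iZA_i\Bigr),
\]
where $a_i$ is the remaining variance of coordinate $i$. The hard $\lambda_{\max}$ is replaced by a Tsallis-regularized smooth max, with exponent $1/2$ when $m\le n$ and a generalized power when $m>n$. This yields a convex--concave min--max problem, solvable as an SDP.

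Two endpoint facts anchor the argument. At $x=0$ the potential is at most $O(\sqrt n)$, by the free bound $\|S\|\le 2\|\sum_iA_i^2\|^{1/2}\le 2\sqrt n$ plus the regularization loss. At the end, when all coordinates are frozen, $\Sigma=0$, so $\Phi$ upper bounds $\|A(x)\|$ up to an additive regularization term. The sign issue is handled by symmetrizing, i.e.\ applying the construction to $\pm A$ or to the dilation.

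The key step is a one-step drift bound. Take a step $x\leftarrow x+\gamma g$, where $g$ is Gaussian with covariance supported on a subspace $U$ of the alive coordinates with $\dim U\ge(\text{alive})/2$. We want
\[
\mathbb E\,\Phi(x+\gamma g)\le\Phi(x)+O(\gamma^2)\cdot(\text{small}).
\]
The first-order term vanishes in expectation. Moving $x_i$ shrinks $a_i=1-x_i^2$, and the Schur-complement structure of Lehner's formula should make the resulting decrease offset the second-order increase $\tfrac12\gamma^2\,\mathbb E\,\nabla^2\Phi[g,g]$ coming from $A$. This is the free analogue of the Itô identity "variance moves from $S$ to $A$."

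To make this precise I would differentiate the optimizer $Z^\ast(x)$, together with the dual density matrix, through the linearized KKT system. Its stationarity equations have the form of a matrix Dyson equation $Z^{-1}=\ldots+\sum_ia_iA_iZA_i$. Stability of that equation bounds $\|\partial Z^\ast\|$, provided we restrict $U$ to be orthogonal to the few directions where the linearized operator is ill-conditioned. Those directions are counted via the Tsallis trace bound: their number is at most $O(\text{alive}/C)$ when the regularization scale is $C\sqrt{n}$ (square case), or is tuned to $\sqrt{n\log(2m/n)}$ in the rectangular case. The generalized Tsallis power is what produces the $\log^{1/2}$ factor, because its effective dimension scales like $m/n$ rather than $m$.

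Iterating, each phase freezes a constant fraction of the alive coordinates in $\mathrm{poly}(n,m)$ steps with constant probability, by standard martingale and variance arguments as in Lovett--Meka. Over $O(\log n)$ phases the contributions form a geometric sum, since the remaining variance halves each phase; this gives $\|A(x)\|\le 10^8\sqrt n$ (resp.\ the rectangular bound). Finally, rounding the frozen coordinates from $1-\delta$ to $\pm1$ costs at most $\delta n$, which is negligible with $\delta=1/\mathrm{poly}$. Each SDP is solved by ellipsoid to certified accuracy, and independent repetition reduces the failure probability to $2^{-b}$.

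The main obstacle is the drift step: showing that the Hessian of the regularized Lehner potential in the walk directions is dominated by the decrease from shrinking covariance, uniformly along the walk. This requires a quantitative stability bound for the KKT/Dyson linearization. I would first attempt it in the square Tsallis-$1/2$ case, where the regularizer's Hessian has an explicit Daleckii--Krein form.
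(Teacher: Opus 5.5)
Your outline has the right ingredients (Lehner's formula smoothed by a Tsallis regularizer, a covariance-controlled walk, halving phases, ellipsoid SDP solves). However, the drift bound is the step that carries the theorem, and you only hope for it. The mechanism you propose for it does not work as set up.

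You tie the free covariance to the margins, $a_i=1-x_i^2$, so the potential is a function of $x$ alone. A step then changes $a_i$ at \emph{first} order ($-2x_i\gamma g_i$). It\^o's formula for $\Phi(x)$ therefore contains, besides $\Tr(QJ)$ and $-\sum_iQ_{ii}\Gamma_{ii}$, two further families of terms:
the mixed terms $-2\sum_{ij}Q_{ij}x_j\,\partial_{a_j}D_HE[\A_i]$, and the second covariance derivatives $2\sum_{ij}Q_{ij}x_ix_j\,\partial_{a_i}\partial_{a_j}E$.
These are of the same order, have no definite sign, and nothing in your plan controls them. Moreover, only the diagonal of the step covariance is withdrawn, so even the Gaussian heat-equation heuristic does not match.

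The paper avoids all of this by making the reserve $C\succeq0$ an independent full matrix, not a function of $x$. A step withdraws exactly $h^2Q$ with $Q=\tfrac12P_{U\cap V}\preceq C$. At each epoch $C$ is reset to the identity on the live labels, at cost at most $2\sqrt\ell$ because $0\le E-f\le2\sqrt\ell$.

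Nor does the paper show that withdrawal offsets curvature. It discards $-\Tr(Q\Gamma)\le0$ and bounds $\Tr(QJ)\le\Tr(CJ)$ absolutely, so the drift can be positive, up to $h^2B\sqrt\ell$. This is why the paper needs epochs of clock length $1/(B+1)$, an acceptance test on the supporting-plane excess with retries, and $O(B)$ accepted epochs per phase.

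That absolute bound is the real content, and it is not a stability estimate for $\partial Z^*$. In balanced coordinates the source map satisfies $\Phi(P)=P$. Hence the free curvature $(\Id-\Phi)\mathcal J_P^{-1}(\Id-\Phi)$ is degenerate and the Dyson-type linearization has no spectral gap; only the weak regularizer curvature keeps it invertible. What the paper proves is the trace estimate
\[
 \Gamma|_{\ran C}\preceq\frac{L}{\sqrt\ell}I
 \ \Longrightarrow\
 \Tr(CJ)\le\Bigl(2+\frac{12\sqrt L}{\theta}\Bigr)\sqrt\ell .
\]
Its proof uses four ingredients:
the coefficient forces have total mass $\Tr P=F(S,\eta_C(S))\le\sqrt\ell$;
the domination $\mathcal A^{-1}\preceq\Id+TH_\theta^{-1}T^*$;
a split of the forces at the eigenvalue cutoff $c/\sqrt\ell$ of $P$, with the low part paid for by the regularizer alone;
and a Kraus tensor estimate for the high part.

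That tensor estimate uses the Gram cap for the whole Kraus frame of $\eta_C$, not only for the directions you move in. So the cap must hold on all of $\ran C$. The paper enforces it with a preparation loop: whenever the cap fails, it cuts $C\gets C-\alpha vv^T$ along a top eigenvector $v$ of $\Gamma$. Each cut is paid for by a potential decrease of about $\alpha v^T\Gamma v$, and the lost trace is budgeted in the acceptance test. Choosing the walk subspace orthogonal to ``ill-conditioned directions'' does not supply this hypothesis, especially with a diagonal margin covariance that cannot be cut along a general $v$.

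The same gap affects the rectangular case. There the $\log^{1/2}$ comes from the regularizer $\Tr S^{1-1/p}$ with dyadic $p\approx\log(2m/n)$, not from an effective dimension $m/n$. The response bound becomes $2\sqrt k+6L^qk^{1-q}/(\theta q)$, obtained via a divided-difference curvature comparison and the interpolation $\Tr(R_qX)\le\Tr(WX)^{1-2q}\Tr(RX)^{2q}$. Then $\theta$ is balanced against the initial cost $\theta D^q/(1-q)$.
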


Section~\ref{impl:runtime} bounds every execution, including failed trials;
the existence conclusion
follows from positive success probability and has no computational
assumptions. Empty inputs or zero dimension are immediate.

The walk begins at $x=0$ and gradually fixes the coefficients at signs.
Alongside $x$ it maintains a positive semidefinite matrix $C$ on coefficient
space. This matrix is a reserve: decreasing it can compensate for the
spectral cost of moving $x$. The main analytic estimate tells us when
that compensation is useful. It requires a cap on the potential's
covariance derivative, so the algorithm first removes covariance in
directions that violate the cap. It then keeps covariance eigenvalues at
least $1/2$, intersects their subspace with the frozen-coordinate and
radial constraints, and walks uniformly in an orthonormal basis of the
intersection. The matching withdrawal is a small multiple of half that
subspace's projection. Preparation pays for the
curvature estimate; movement makes progress toward a vertex.

Spencer's theorem gives the corresponding $O(\sqrt n)$ scale for
bounded scalar entries in the square regime \cite{spencer1985}.
Constructive discrepancy methods build signs through correlated or
subspace-restricted random walks, beginning with Bansal's semidefinite
method and the edge walk of Lovett and Meka
\cite{bansal2010,lovettmeka2012}. For matrices, Bansal, Jiang, and Meka
prove an $O(\sqrt n)$ bound when every input has rank at most
$n/\log^3 n$; their argument uses the refined matrix concentration
inequality of Bandeira, Boedihardjo, and van Handel \cite{bjm2022}.
The theorem here imposes no rank hypothesis. Our computational claim is
in this real-arithmetic model, so its comparison with algorithms in
other computation models is limited to their discrepancy guarantees.

After the results of this paper were completed, a preprint by Akbas
and Sra \cite{akbasSra2026} resolved the square Matrix Spencer conjecture
through hereditary matrix small-ball estimates but gives a suboptimal
dependence on $\log(m/n)$ in the rectangular case.
For $m\ge n$, their Theorem~1.3 and the partial-coloring argument in
Section~3.5 yield the rectangular bound
$O\!\left(\sqrt n\,(1+\log(2m/n))^2\right)$.
Our approach gives the bound $O\!\left(\sqrt{n\log(2m/n)}\right)$,
improving the logarithmic exponent from $2$ to the optimal $1/2$.
The square-root dependence on $\log(2m/n)$ is optimal already for
diagonal matrices: classical vector-discrepancy lower bounds give
$\Omega\!\left(\sqrt{n\log(2m/n)}\right)$ for
$n\le m\le2^{cn}$ with a sufficiently small absolute constant $c>0$
\cite[Section~4.1, Exercise~1(c)]{matousek2010}.
For larger dimensions, the trivial discrepancy bound $n$ provides the
natural saturation scale.

Akbas and Sra obtain their algorithm by applying the shifted
partial-coloring theorem of Reis and Rothvoss
\cite[Theorem~6]{reisRothvoss2023} to their small-ball estimate.
This theorem uses Rothvoss's Gaussian-projection method
\cite{rothvoss2017}: project a Gaussian point onto a discrepancy body
intersected with a box. For spectral discrepancy bodies, these
nearest-point problems are SDPs.\footnote{For radius $r>0$ and coefficient
vector $z$, the spectral constraint is
$-rI\preceq\sum_i z_iA_i\preceq rI$. The quadratic projection objective
has a semidefinite epigraph by the Schur complement.}
Our proof gives an end-to-end random walk and also implies that these
projection SDPs succeed. To see this, for $k$ active matrices set
\[
 r_m(k)=\sqrt{k\max\{1,\log(2m/k)\}},\qquad
 K_k=\left\{z\in\R^k:
       \left\|\sum_{i=1}^k z_iA_i\right\|\le c_0r_m(k)\right\},
\]
where $c_0$ is a sufficiently large universal constant and the active
matrices have been relabeled. Since $r_m(k)$ is increasing in $k$,
Theorem~\ref{thm:main} gives the same bound for every subfamily.
Reis and Rothvoss \cite[Remark~3]{reisRothvoss2023} convert this hereditary
discrepancy bound into
$\gamma_k(K_k)\ge2^{-O(k)}$, where $\gamma_k$ is standard Gaussian
measure on $\R^k$. Their shifted partial-coloring theorem now applies
at each stage. The stage costs are summable,
$\sum_{j\ge0}r_m(2^{-j}n)=O(r_m(n))$, so the projection method also
achieves our square and rectangular bounds up to universal constants.

The construction is motivated by the interpolation approach of
Bandeira, Boedihardjo, and van Handel, which compares random matrices
with their operator-valued free counterparts through moments and
resolvent statistics \cite[Section 1.4.2]{bbvh2023}. Lehner's formula
provides a finite-dimensional variational description of the free
spectral edge \cite{lehner1999}. We smooth that description using the
density regularizers developed by Allen-Zhu, Liao, and Orecchia for
spectral sparsification framed as matrix optimization
\cite{allenZhuLiaoOrecchia2015}, and by Pesenti and Vladu for discrepancy
minimization \cite{pesentivladu2026}. The resulting
potential treats the discrepancy center and covariance in one optimization.

The maximizing density and minimizing transport form a saddle point
of a regularized min--max problem. Their stationarity and normalization
equations are its Karush--Kuhn--Tucker (KKT) conditions. Differentiating
these conditions gives the linearized KKT system governing the
optimizer's response to changes in the center and covariance.
Its operator structure is related to the analysis of
matrix Dyson equations by Ajanki, Erd\H{o}s, and Kr\"uger
\cite{aek2019}, and to the regular-edge stability analysis of Alt,
Erd\H{o}s, Kr\"uger, and Schr\"oder \cite[Section 4]{aeks2020}.
The regularizer and trace constraint change the equations. Our task is
to bound their response to the particular center and covariance changes
made by the walk. The discussion below explains this task before
introducing the operators used to solve it.

The companion paper \cite{kathuria2026ks} develops the same
free-probability variational approach into a deterministic algorithm
for Weaver's discrepancy problem and the Kadison--Singer theorem.
Lean formalizations of the main discrepancy theorems proved here have
also been completed; the formal proofs will be released shortly.

\paragraph{Reading the proof.}
Section~\ref{sec:motivation} develops the potential from a scalar example,
and Section~\ref{sq:overview} explains the algorithm through the estimate
it needs. We then derive the variational formulas and the response bound,
and use them to prove the square signing theorem in
Section~\ref{walk:phases}. The rectangular argument begins in
Section~\ref{rect:extension}; it uses a suitable generalized Tsallis
power regularizer while keeping the same walk. The last three sections
construct the numerical reports and bound the work of the resulting algorithm.
Appendix~\ref{reg:section} supplies the uniform derivative estimates
needed for finite steps. The numerical constants are chosen for a
transparent proof, without an attempt at optimization.

\section{A spectral potential with a covariance reserve}\label{sec:motivation}
\subsection{The state and its two matrix spaces}
Set $D=2m$ and define
\[
 \A_i=\diag(A_i,-A_i),\qquad H(x)=\sum_i x_i\A_i.
\]
Then $\lambda_{\max}(H(x))=\norm{\sum_i x_iA_i}$.
A coordinate is \emph{live} when $|x_i|<1$ and \emph{frozen} when
$|x_i|=1$. Frozen coordinates remain in $H(x)$.
An epoch retains an index set $I_0$ of $\ell$ initially live labels,
even if some freeze during that epoch. Its covariance satisfies
$0\preceq C\preceq I_\ell$. For a coefficient vector $a\in\R^{I_0}$,
write $\mathcal A(a)=\sum_{i\in I_0}a_i\A_i$.

The source associated with $C$ is the linear map on physical matrices
\begin{equation}\label{sq:fo:eta}
 \eta_C(X)=\sum_{i,j\in I_0}C_{ij}\A_iX\A_j.
\end{equation}
Factoring $C=RR^T$ and setting $T_a=\sum_iR_{ia}\A_i$ gives
$\eta_C(X)=\sum_aT_aXT_a$. In particular it preserves positive
semidefiniteness and is self-adjoint for the trace pairing. Also,
\begin{equation}\label{sq:fo:sourcebudget}
 0\preceq\eta_C(I_D)\preceq\sum_{i\in I_0}\A_i^2\preceq\ell I_D.
\end{equation}
The matrix $C$ is selected by the algorithm; it is not the covariance
of the random point $x$ or a prescribed function of the coordinate margins.

All traces are unnormalized. A \emph{density} is a positive semidefinite
$D\times D$ matrix of trace one. The Hilbert--Schmidt pairing is
$\langle X,Y\rangle=\Tr(X^TY)$, with norm
$\|X\|_{\mathrm{HS}}=\sqrt{\Tr(X^TX)}$.
The trace norm $\|X\|_1$ is the sum of the singular values of $X$. For a subspace $W$, $P_W$ is its orthogonal projection. A superscript
$T$ or $*$ denotes the transpose or the adjoint between the stated real
Hilbert spaces. A positive matrix is called faithful if it is positive
definite on its stated space. For a linear map $T$, its range is
$\ran T=\{Tv:v\text{ is in the domain of }T\}$ and its kernel is
$\ker T=\{v:Tv=0\}$. The support of a positive semidefinite matrix is
its range, the orthogonal complement of its kernel. Inverses on a support
are inverses of the corresponding restricted matrices.

\subsection{What the density measures}
For positive semidefinite matrices $S,M$, define their fidelity by
\[
 F(S,M)=\Tr\sqrt{S^{1/2}MS^{1/2}}.
\]
We first use the square-root regularizer and retain a parameter
$\theta>0$ to display its role. Define
\begin{align}
 f_\theta(H)&=\max_{S\succeq0,\,\Tr S=1}
       \{\Tr(HS)+2\theta\Tr\sqrt S\},\label{sq:fo:f}\\
 E_\theta(H,C)&=\max_{S\succeq0,\,\Tr S=1}
       \{\Tr(HS)+2F(S,\eta_C(S))+2\theta\Tr\sqrt S\}.
       \label{sq:fo:energy}
\end{align}
Theorem~\ref{origin:formula} will show that this potential is also
\[
 E_\theta(H,C)=\inf_{Z\succ0}
 f_\theta\bigl(H+Z^{-1}+\eta_C(Z)\bigr).
\]
Here $Z$ is a positive transport matrix. The two formulas play different
roles: the density form defines a concave optimization, and the transport
form makes its relation to the free spectral edge visible.
The density gives a spectral test for $H$. The square-root term smooths
that test, while fidelity measures the remaining source through the
same density. The square argument takes $\theta=1$; we usually suppress this fixed
parameter and write $E$ and $f$.
The density is optimized anew at each state, so its response must be
included when differentiating the value.

These quantities satisfy
\begin{equation}\label{eq:reserve}
 \lambda_{\max}(H)\le f_\theta(H)\le E_\theta(H,C)
 \le f_\theta(H)+2\sqrt\ell.
\end{equation}
Indeed, $F(S,M)=\|S^{1/2}M^{1/2}\|_1
\le\sqrt{\Tr S\Tr M}$ by the Schatten Cauchy--Schwarz inequality.
Self-adjointness of $\eta_C$ gives
$\Tr\eta_C(S)=\Tr(S\eta_C(I_D))\le\ell$.
Maximizing the resulting bound proves the last inequality; nonnegativity
of fidelity and the rank-one spectral test give the others.
At zero center, $f_\theta(0)=2\theta\sqrt D$ by scalar
Cauchy--Schwarz on the density eigenvalues. Thus the initial potential
has order $\sqrt n$ in the square regime.

The last inequality in \eqref{eq:reserve} also has an algorithmic use.
A fresh identity covariance increases the potential by at most
$2\sqrt\ell$ above its source-free value. Replenishing the reserve is
therefore possible at a controlled cost, which must be included whenever
an epoch restarts.

\subsection{Why retain a map rather than a variance matrix?}
The ordinary remaining variance associated with $C$ is
$\eta_C(I_D)=\sum_{ij}C_{ij}\A_i\A_j$. With diagonal weights
$w_i=1-x_i^2$, this becomes $\sum_i(1-x_i^2)\A_i^2$.
It records the map at the identity. The potential uses its action on
other matrices because the spectral test is itself changing.

The rank-one example in Section~\ref{sec:rank-one-example} will show
that even a fixed family of rank-one inputs can have covariance states
with identical ordinary variance and different reserve terms in the
density objective. The proof uses these interactions with the changing
spectral test through the full sandwich map.

The distinction from classical Gaussian covariance is separate. A
Gaussian matrix sum can have the same map $\eta_C$. Choosing a free
semicircular model specifies higher moments as well, and leads to the
Lehner variational structure. The algorithm uses the finite-dimensional form of that structure.
For standard freely independent semicircular variables $s_a$ and
self-adjoint coefficient matrices $T_a$, the operator
$H\otimes1+\sum_a T_a\otimes s_a$ has matrix-valued covariance
$X\mapsto\sum_a T_aXT_a$. Lehner's variational description of its
upper edge is the unregularized version of the displayed transport
formula \cite{lehner1999,bbvh2023}. Its optimization over $Z$ retains
the spectral effect of the free perturbation after the free variables
have disappeared from the formula.

\subsection{The diagonal example}
Suppose the $\A_i$ are diagonal. For each physical coordinate
$r\in\{1,\ldots,D\}$, let $a_r\in\R^{I_0}$ collect their diagonal
entries, put $h_r=H_{rr}$, and set $v_r=a_r^TCa_r$.
Averaging diagonal sign conjugations replaces a density by its diagonal
without decreasing the objective: the source commutes with these
conjugations, fidelity is jointly concave, and the trace square root is
concave. For $S=\diag(s_1,\ldots,s_D)$ this gives
\begin{equation}\label{eq:diagonal}
 E_\theta(H,C)=\max_{s_r\ge0,\,\sum_rs_r=1}
 \left\{\sum_rs_r(h_r+2\sqrt{v_r})+
                    2\theta\sum_r\sqrt{s_r}\right\}.
\end{equation}
Each diagonal entry acquires a shift equal to twice its remaining
standard deviation. Reducing covariance lowers these shifts. The
weights $s_r$ then readjust to the new spectral landscape. This
readjustment is already visible in a commuting example and is the
source of the optimized curvature in the general proof.

The diagonal formula already suggests the derivative comparison that
will drive the walk. Let $\Gamma$ be the derivative with respect to
covariance, defined by $D_CE[\Delta C]=\Tr(\Gamma\Delta C)$ on the
current covariance support. Write $b_r=P_{\ran C}a_r$ and omit terms
with $v_r=0$. Envelope differentiation gives
\[
 \Gamma=\sum_r\frac{s_r}{\sqrt{v_r}}b_rb_r^T.
\]
If $\Gamma\preceq tI$ on $\ran C$, then each positive summand is
bounded by $tI$. Since $C\preceq I$, this implies
$s_r\sqrt{v_r}\le t$. The inverse curvature of the square-root
regularizer is diagonal with entries $2s_r^{3/2}/\theta$.
Dropping the trace constraint in its inverse quadratic form therefore
gives, for the half-Hessian $J$ of the optimized value,
\[
 \Tr(CJ)\le\frac1\theta\sum_r v_rs_r^{3/2}
 \le\frac{\sqrt t}{\theta}\sum_r v_r^{3/4}s_r
 \le\frac{\sqrt t}{\theta}\ell^{3/4}.
\]
The last inequality uses $v_r\le\ell$ and $\sum_rs_r=1$.
Thus a covariance derivative cap $t=L/\sqrt\ell$ yields an average
curvature bound $\sqrt{L\ell}/\theta$ in the diagonal example.
The noncommutative proof seeks the same scale, with a larger absolute
constant and a matrix-space response calculation in place of this
scalar computation. Section~\ref{sq:an:section} derives the constrained
inverse formula that justifies this calculation as well.

\subsection{A rank-one example: what ordinary variance misses}\label{sec:rank-one-example}
Suppose $A_i=v_iv_i^T$ and first work in the original $m$-dimensional
matrix space. For a symmetric $m\times m$ matrix $Z$, write
\[
 \widetilde\eta_C(Z)=\sum_{i,j\in I_0}C_{ij}A_iZA_j.
\]
For diagonal $C=\diag(c_i)$ with $0\le c_i\le1$, the rank-one identity
$A_iZA_i=(v_i^TZv_i)v_iv_i^T$ gives
\begin{equation}\label{eq:rank-one-source}
 \widetilde\eta_C(Z)=\sum_i c_i(v_i^TZv_i)v_iv_i^T,
 \qquad
 \widetilde\eta_C(I_m)=\sum_i c_i\norm{v_i}^2v_iv_i^T.
\end{equation}
At the identity, each term is weighted by the squared length of its
vector. At a general $Z$, that weight becomes the quadratic form
$v_i^TZv_i$, which also depends on the vector's position relative to $Z$.

For a concrete calculation, take $I_0=\{1,2,3,4\}$ and $m=2$.
Let $e_1,e_2$ be the standard basis, put
$u_\pm=(e_1\pm e_2)/\sqrt2$, and take
\[
 A_1=e_1e_1^T,\quad A_2=e_2e_2^T,\quad
 A_3=u_+u_+^T,\quad A_4=u_-u_-^T.
\]
These are rank-one projections of norm one; the family is
noncommutative, since $A_1A_3\ne A_3A_1$. Compare the two admissible
covariances
\[
 C^{(0)}=\diag(1,1,0,0),\qquad
 C^{(1)}=\diag(0,0,1,1).
\]
Both have the same variance:
$\widetilde\eta_{C^{(0)}}(I_2)=
\widetilde\eta_{C^{(1)}}(I_2)=I_2$.
At $Z=\diag(2,1)$, however, \eqref{eq:rank-one-source} gives
\[
 \widetilde\eta_{C^{(0)}}(Z)=\begin{pmatrix}2&0\\0&1\end{pmatrix},
 \qquad
 \widetilde\eta_{C^{(1)}}(Z)=\frac32I_2,
\]
since $u_+^TZu_+=u_-^TZu_-=3/2$.

The difference also appears directly in the density objective.
For $0<p<1$, use the density $S_p=\diag(p,1-p)$. Then
\[
 \widetilde\eta_{C^{(0)}}(S_p)=S_p,\qquad
 \widetilde\eta_{C^{(1)}}(S_p)=\frac12I_2,
\]
so the two fidelity terms are
\begin{equation}\label{eq:rank-one-fidelity}
 F(S_p,\widetilde\eta_{C^{(0)}}(S_p))=1,
 \qquad
 F(S_p,\widetilde\eta_{C^{(1)}}(S_p))
       =\frac{\sqrt p+\sqrt{1-p}}{\sqrt2}.
\end{equation}
The latter is strictly less than one when $p\ne1/2$.
For the same center and this same density, the linear and regularizer
terms are unchanged. What changes is the value assigned to the
covariance reserve by an anisotropic spectral test.

The intermediate covariance states
$C_t=(1-t)C^{(0)}+tC^{(1)}$, for $0\le t\le1$, all retain variance
$I_2$. Their source at $S_p$ is $(1-t)S_p+(t/2)I_2$, so
\[
 F(S_p,\widetilde\eta_{C_t}(S_p))
 =\sqrt{p\bigl((1-t)p+t/2\bigr)}
  +\sqrt{(1-p)\bigl((1-t)(1-p)+t/2\bigr)}.
\]
For $0<t<1$, both pairs of projections contribute to the source,
including noncommuting matrices. Comparing these covariance states
isolates the dependence of the reserve on the directions in which
covariance remains available.

These are also evaluations of the signed potential used in the paper.
Indeed, embedding $S_p$ as the $4\times4$ density
$\widehat S_p=\diag(S_p,0_2)$ gives
\[
 \eta_C(\widehat S_p)
   =\diag(\widetilde\eta_C(S_p),0_2),\qquad
 F(\widehat S_p,\eta_C(\widehat S_p))
   =F(S_p,\widetilde\eta_C(S_p)).
\]
The zero block is allowed in the density maximization.
Consequently, retaining only $\eta_C(I_D)$ would discard precisely the
interaction displayed in \eqref{eq:rank-one-fidelity}, even for
rank-one inputs.

\subsection{Matching movement with covariance withdrawal}
There is a useful classical analogy for the update we will use.
For a smooth function $g$ on coefficient space and a centered Gaussian
vector $g_C$ of covariance $C\succ0$, Gaussian convolution satisfies
\[
 u(x,C)=\mathbb E g(x+g_C),\qquad
 D_Cu[Q]=\tfrac12\Tr\bigl(Q\nabla_x^2u\bigr).
\]
For example, the identity holds for $g$ with bounded derivatives through
order two: differentiate the Gaussian density and integrate by parts
twice. For a rank-one direction $Q=vv^T$, the derivative of that density
is one half of its second derivative in direction $v$; linearity gives
the general formula. This calculation explains why a centered increment
of covariance $h^2Q$ naturally goes with a withdrawal of $h^2Q$ from a
remaining covariance.

Our potential uses the free variational structure instead of Gaussian
convolution. The exact heat-equation identity is replaced by a bound on
the optimized response, and covariance preparation makes that bound
applicable. The next section derives the resulting movement rule.
The same principle will be used with the power profile, for
$0<q\le1/2$, $\theta>0$, and $\kappa\ge0$,
\begin{equation}\label{intro:potential}
 E(H,C)=\max_{S\succeq0,\Tr S=1}
 \left\{\Tr(HS)+2F(S,\eta_C(S))+
 \frac\theta{1-q}\Tr S^{1-q}+2\kappa\Tr\sqrt S\right\}.
\end{equation}
The square choice is
\begin{equation}\label{sq:fo:theta}
 (q,\theta,\kappa)=(1/2,1,0),\qquad D=2m\le2n.
\end{equation}
For the rectangular regime, Section~\ref{rect:parameters} chooses
$q=1/p$ with $p$ a power of two, and balances $\theta$ against the
initial smoothing cost. The additional $\kappa$ term supplies uniform
numerical conditioning and is included in that regime's response proof.
\space

\section{How the potential determines the walk}\label{sq:overview}
\subsection{The local estimate to aim for}
At a smooth state, define the covariance derivative $\Gamma$ and the
half-Hessian $J$ on coefficient directions by
\[
 D_CE_\theta[\Delta C]=\Tr(\Gamma\Delta C),\qquad
 J[a,a]=\tfrac12D_H^2E_\theta[\mathcal A(a),\mathcal A(a)].
\]
Covariance derivatives are taken within the current support of $C$.
For a centered sample $\xi$ with $\E\xi\xi^T=Q\preceq C$, the
matched update is
\[
 x_{I_0}\gets x_{I_0}+h\xi,\qquad C\gets C-h^2Q.
\]
A symmetric sample cancels odd terms, giving the local expansion
\begin{equation}\label{eq:drift}
 \E[\Delta E_\theta\mid x,C]
   =h^2\Tr\bigl(Q(J-\Gamma)\bigr)+O(h^4).
\end{equation}
Lemma~\ref{walk:step} proves this expansion; Appendix~\ref{reg:section}
bounds its remainder uniformly on the states reached by the algorithm. The two quadratic terms explain the construction:
center curvature is paid against covariance withdrawal.

For the square profile, the key estimate is
\begin{equation}\label{eq:cap}
 \Gamma|_{\ran C}\preceq\frac{4096}{\sqrt\ell}I
 \quad\Longrightarrow\quad \Tr(CJ)\le770\sqrt\ell.
\end{equation}
This is Theorem~\ref{sq:te:cap-response-theorem}, specialized to
$L=4096$ and $\theta=1$. It concerns the fully optimized potential. Its proof linearizes the density and transport stationarity equations
\eqref{sq:an:nonlinear-stationarity}, balances the resulting matrix
operators, and controls the response forces through their relation to
the source. The analogy with MDE stability is at this level: a coupled
self-consistent system must be analyzed before its response can be
bounded. The regularized $Z$ is generally not an ordinary edge resolvent.
Near-edge MDE stability provides context for this response problem
\cite[Section 4]{aeks2020}; the broader deterministic theory includes
edges and cusps \cite{alt2020}. Here the required output is the
specific quadratic-form estimate \eqref{eq:cap}.

\subsection{Preparation, legal movement, and progress}
The cap in \eqref{eq:cap} is enforced by preparation. If a unit vector
$v\in\ran C$ has large $v^T\Gamma v$, a sufficiently small reduction
$C\gets C-\alpha vv^T$ lowers the potential by approximately
$\alpha v^T\Gamma v$. During this operation $x$ stays fixed.
Preparation trades covariance for potential decrease until the response
bound applies. Its trace loss is recorded separately from movement;
the numerical implementation also records discarded small eigenvalues.

At the prepared state, let $V$ be the coefficient subspace defined by
vanishing on frozen labels and orthogonality to $x_{I_0}$. Retain the
spectral subspace $U$ of $C$ with eigenvalues at least $1/2$, and put
\[
 W=U\cap V,\qquad Q=\tfrac12P_W,
\]
where $P_W$ is the orthogonal projection onto $W$. The cutoff gives
$Q\preceq C$. The trace ledger shows that $W$ still has dimension
proportional to the epoch's live count: little trace has been spent on
preparation or movement, and only a few new coordinate constraints
have appeared. Constructing $Q$ leaves the stored $C$ in place; the
update withdraws only $h^2Q$. Eigenvalues between the cleanup threshold
$2\delta$ and the movement cutoff $1/2$ remain in $C$: the cutoff
selects movement directions, while cleanup has its own recorded trace cost.

Let $k=\dim W>0$ and choose an orthonormal basis $u_1,\ldots,u_k$ of $W$
from the EVD of $P_W$. The movement has a simple geometric description:
choose a signed unit vector $\pm u_j$ uniformly among these $2k$
possibilities and take a small step of length $s_{\rm step}=h\sqrt{k/2}$.
The matching covariance withdrawal gives
\begin{equation}\label{eq:unit-direction-update}
 x_{I_0}'=x_{I_0}\pm s_{\rm step}u_j,\qquad
 C'=C-\frac{s_{\rm step}^2}{k}P_W.
\end{equation}
The covariance of the displacement is exactly
$(s_{\rm step}^2/k)P_W$, the amount withdrawn from $C$.
The parameter $h$ is fixed by the numerical recipe; the physical step
length $s_{\rm step}$ varies with the dimension of $W$.

For the drift calculation we write the same displacement as $h\xi$, where
$\xi=\pm\sqrt{k/2}\,u_j$. This normalization gives
$\E\xi=0$, $\E\xi\xi^T=Q$, and $\norm\xi^2=k/2=\Tr Q$.
The radial constraint therefore gives, for either sign,
\[
 \norm{x_{I_0}'}^2-\norm{x_{I_0}}^2=s_{\rm step}^2=h^2\Tr Q.
\]
Thus every movement makes the same radial progress at a given state.
The step-size choice keeps the movement inside the cube; sufficiently
near-face coordinates are then rounded and frozen.

\subsection{Why epochs can be accepted and restarted}
Save the initial state of an epoch as $x_*$, write $H_*=H(x_*)$,
and let $S_*$ maximize $f_\theta(H_*)$. Convexity makes
\[
 \Psi(x,C)=E_\theta(H(x),C)-f_\theta(H_*)
                 -\Tr(S_*(H(x)-H_*))
\]
nonnegative. At identity covariance it starts at most $2\sqrt\ell$.
The drift estimate controls its growth and pays for preparation loss.
The fixed supporting-plane contribution from centered movements is
controlled separately by a second-moment estimate, with rounding charged
by its deterministic norm bound.

A trial stops after enough coordinates freeze, its time budget expires,
or preparation consumes too much covariance. The acceptance test checks
the potential excess, the supporting-plane contribution, and the recorded
loss. With constant probability the trial is accepted and makes enough
radial progress or freezes enough labels. Failed trials restart from
$x_*$; accepted trials keep their endpoint.

The next epoch replenishes covariance on the remaining live labels.
This can raise $E_\theta$, but \eqref{eq:reserve} bounds the cost.
A bounded number of accepted epochs halves the live count, and summing
the square-root costs over these phases gives the square discrepancy
bound. The rectangular construction changes the density regularizer to
a power $1-1/p$, with $p$ on the scale of $1+\log(2m/n)$, and changes
the response and phase budgets accordingly. The preparation and movement
mechanism remains the same.

\begin{algorithm}[H]
\caption{Organization of one epoch; quantitative version in Algorithm~\ref{alg:epoch}}
\label{alg:overview}
\begin{algorithmic}[1]
\Require Saved coefficient vector $x_*$ and its $\ell$ live labels
\State Start from $x=x_*$ and identity covariance $C=I_\ell$
\While{the time, freezing, and preparation-loss budgets permit movement}
 \State Hold $x$ fixed and reduce $C$ until its covariance derivative is capped
 \State Let $W$ be the intersection of the $C$-eigenspace with eigenvalues at least $1/2$ and the frozen and radial constraints; set $Q=P_W/2$
 \State Let $k=\dim W$; choose $u$ uniformly from a signed orthonormal basis of $W$
 \State Set $s_{\rm step}=h\sqrt{k/2}$, $x_{I_0}\gets x_{I_0}+s_{\rm step}u$, and $C\gets C-(s_{\rm step}^2/k)P_W$; round near-face coordinates
\EndWhile
\State Accept if the covariance loss, potential excess, and supporting-plane term pass their tests
\State On rejection, retry from $x_*$; on acceptance, keep $x$ and pay for fresh covariance
\end{algorithmic}
\end{algorithm}

\begin{figure}[ht]
\centering
\includegraphics[width=\linewidth]{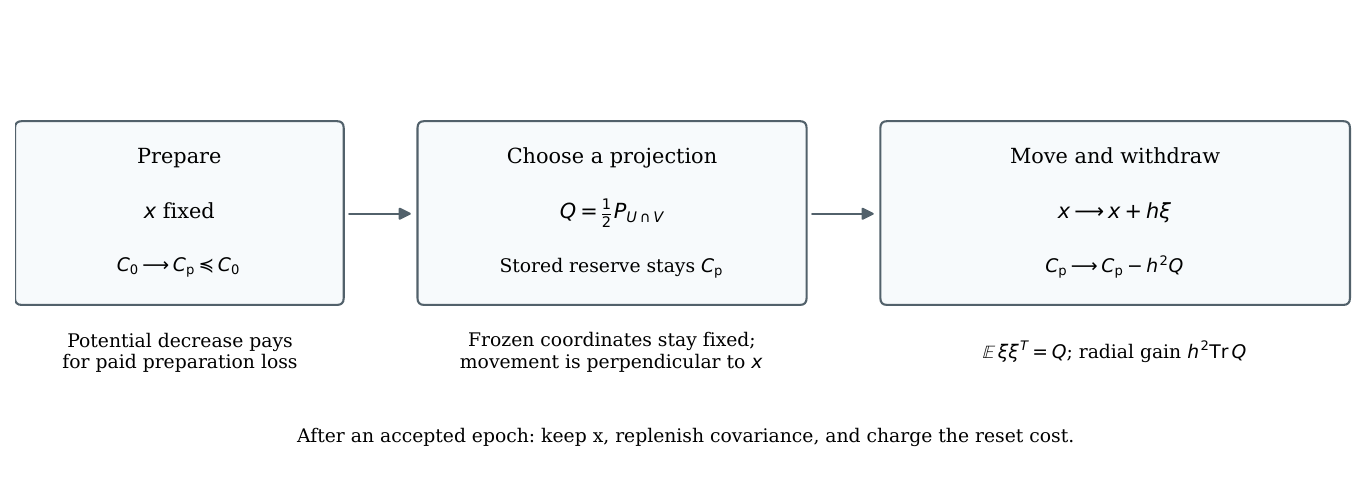}
\caption{One covariance cycle. Preparation changes $C_0$ to $C_p$ at a
fixed coefficient vector. Intersecting its high spectral subspace with
$V$ enforces frozen signs and radial orthogonality, and $Q$ is half the
projection onto that intersection. Computing $Q$ does not overwrite $C_p$; only
the sampled movement withdraws $h^2Q$. Paid preparation loss is charged
to potential decrease, while small discarded eigenvalues have a separate
trace budget. An accepted endpoint keeps its coefficients when the
covariance is reset.}
\label{fig:walk-roadmap}
\end{figure}
The full finite algorithm, including stopping rules, numerical reports,
and retry counts, is specified in Section~\ref{intro:algorithm}.
Its analysis uses the square response estimate proved next and the
contracts in Section~\ref{walk:contracts}, discharged in
Sections~\ref{query:section}--\ref{impl:main}.
\space

\section{The variational toolkit}\label{sq:sec:potential}

We now prove the matrix facts used by the potential. Fidelity has two
representations: its transport minimization is suited to differentiation,
and its semidefinite lift is suited to computation. The square-root
regularizer supplies both uniqueness of the density and a comparison
under compression, needed when the source has a kernel.

The full real matrix space $\R^{D\times D}$ has inner product
$\langle X,Y\rangle=\Tr(X^TY)$ and norm $\|X\|_{\HS}$.
Its symmetric subspace has inner product $\Tr(XY)$. The density and its
variations lie in that subspace. Products, vectorized frames, and tensor
operators will also act on the full matrix space. We write $\Tr_{\HS}$
for the trace of an operator on the specified matrix Hilbert space.
For a real symmetric physical matrix $X$, define
\[
 L_X(Y)=XY,\qquad R_X(Y)=YX,\qquad \operatorname{Ad}_X(Y)=XYX.
\]
Products such as $L_XR_Y$ act on the full real matrix space; individual
left or right multiplication need not preserve symmetric matrices.
All inverses are taken on the support explicitly stated in the proof.
We repeatedly use $\Tr(XY)\ge0$ for $X,Y\succeq0$, because
$\Tr(XY)=\Tr(Y^{1/2}XY^{1/2})$.

For Hilbert-space vectors $U,V$, the notation $|U\rangle\langle V|$
denotes the rank-one operator $X\mapsto U\langle V,X\rangle$.
If $\mathsf F$ is the linear map whose columns are $F_a$, their
\emph{frame operator} is $\mathsf F\mathsf F^*=
\sum_a|F_a\rangle\langle F_a|$, and their \emph{Gram matrix} is
$\mathsf F^*\mathsf F=(\langle F_a,F_b\rangle)_{a,b}$.
The two operators have the same nonzero eigenvalues: if
$\mathsf F^*\mathsf Fv=\lambda v$ with $\lambda>0$, then
$\mathsf Fv\ne0$ is a corresponding eigenvector of
$\mathsf F\mathsf F^*$, and the reverse correspondence uses
$\mathsf F^*$; these maps are inverse up to the factor $\lambda$.
We will use this elementary correspondence to turn covariance caps
into inequalities on matrix space.

\subsection{Square roots and fidelity}

This section supplies the matrix-function and variational facts used later.
We use the finite-dimensional spectral theorem, ordinary scalar calculus,
and the Cauchy--Schwarz inequality throughout.

\begin{lemma}[Square roots and their curvature]\label{sq:fo:sqrt}
For positive semidefinite matrices the square root is order preserving.
If $V$ is an isometry, then
\begin{equation}\label{sq:fo:sqrtcompression}
 V^T\sqrt S\,V\preceq\sqrt{V^TSV}.
\end{equation}
The function $S\mapsto\Tr\sqrt S$ is strictly concave. On the positive
definite cone it is smooth, and, in an eigenbasis $S=\operatorname{diag}(s_i)$,
\begin{align}
 D(\Tr\sqrt S)[X]&=\tfrac12\Tr(S^{-1/2}X),\label{sq:fo:sqrtfirst}\\
 -D^2(2\theta\Tr\sqrt S)[X,X]
 &=\theta\sum_{i,j}
 \frac{|X_{ij}|^2}{\sqrt{s_i}\sqrt{s_j}(\sqrt{s_i}+\sqrt{s_j})}.
 \label{sq:fo:sqrthessian}
\end{align}
The last quadratic form is strictly positive for $X\ne0$.
\end{lemma}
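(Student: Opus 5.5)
The plan is to prove the four claims in order, each from the spectral theorem, with operator monotonicity of the square root as the one nontrivial input.

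\emph{Order preservation.} Let $0\preceq A\preceq B$. It suffices to handle $A\succeq0$ and $B\succ0$, since we can replace $B$ by $B+\epsilon I$ and let $\epsilon\to0$. Suppose $\sqrt A-\sqrt B$ has an eigenvalue $\mu>0$ with unit eigenvector $v$. Write $a=v^T\sqrt A v$ and $b=v^T\sqrt B v$, so that $a=b+\mu$. Using $(\sqrt A)^2=A$ and $\sqrt A v=\sqrt B v+\mu v$, we get
\[
 v^TAv=\|\sqrt Av\|^2=\|\sqrt Bv+\mu v\|^2=v^TBv+2\mu b+\mu^2>v^TBv .
\]
The strict inequality holds because $b\ge0$ and $\mu>0$. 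This contradicts $A\preceq B$, so $\sqrt A\preceq\sqrt B$.

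\emph{Compression inequality.} Let $P=VV^T$, which is an orthogonal projection. Set $Y=V^T\sqrt S\,V\succeq0$. Then
\[
 Y^2=V^T\sqrt S\,P\sqrt S\,V\preceq V^T\sqrt S\sqrt S\,V=V^TSV .
\]
Applying the order preservation just proved gives $Y=\sqrt{Y^2}\preceq\sqrt{V^TSV}$.

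\emph{Derivatives on the positive definite cone.} Smoothness follows from the holomorphic functional calculus, for instance a Cauchy integral of $\sqrt z$ around the spectrum. Alternatively, apply the inverse function theorem to $R\mapsto R^2$: its derivative $Y\mapsto RY+YR$ is invertible when $R\succ0$.

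Work in an eigenbasis $S=\diag(s_i)$. The Daleckii--Krein formula gives the first divided differences
\[
 \bigl(D\sqrt S[X]\bigr)_{ij}=\frac{X_{ij}}{\sqrt{s_i}+\sqrt{s_j}} .
\]
Taking the trace keeps only the diagonal entries, which gives \eqref{sq:fo:sqrtfirst}.

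For the second derivative, differentiate $G(S)=\tfrac12\Tr(S^{-1/2}X)$ in the direction $X$. Differentiating $S^{-1/2}=(\sqrt S)^{-1}$ gives
\[
 D(S^{-1/2})[X]=-S^{-1/2}\,D\sqrt S[X]\,S^{-1/2},
\]
whose $(i,j)$ entry is
\[
 -\frac{X_{ij}}{\sqrt{s_i}\sqrt{s_j}\,(\sqrt{s_i}+\sqrt{s_j})} .
\]
Pairing this with $X$ and using symmetry $|X_{ij}|^2=X_{ij}X_{ji}$ yields
\[
 D^2\Tr\sqrt S[X,X]=-\tfrac12\sum_{i,j}\frac{|X_{ij}|^2}{\sqrt{s_i}\sqrt{s_j}(\sqrt{s_i}+\sqrt{s_j})}.
\]
Multiplying by $-2\theta$ gives \eqref{sq:fo:sqrthessian}. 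Every weight in the sum is positive, so the form is strictly positive for $X\neq0$.

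\emph{Strict concavity on the closed cone.} This is the step I expect to need the most care, because the Hessian is unavailable at singular matrices. First, concavity on the whole cone: on the open cone it follows from the negative definite Hessian, and it extends to the closed cone by continuity. For strictness, suppose $t\mapsto\Tr\sqrt{S_t}$ is affine on a segment $S_t=(1-t)S_0+tS_1$ with $S_0\neq S_1$. Consider the midpoint $S_{1/2}$. Restricted to its support $W$, both endpoints are supported in $W$, because $S_0,S_1\preceq 2S_{1/2}$. Apply the Hessian computation on $W$, where $S_{1/2}$ is positive definite. The function is smooth near $t=1/2$ along the segment, with strictly negative second derivative there. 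This contradicts affineness, so $\Tr\sqrt S$ is strictly concave.
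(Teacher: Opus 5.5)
Your proof is correct. Its second half follows the paper: the divided-difference derivative, which the paper obtains by differentiating $Q^2=S$ in a $Q$-eigenbasis, and strict concavity by restricting to the support of the midpoint, which is $\ran(S_0+S_1)$. The first half takes a different route.

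\begin{itemize}
\item \emph{Monotonicity.} You use the classical eigenvector argument: a positive eigenvalue $\mu$ of $\sqrt A-\sqrt B$ would force $v^TAv\ge v^TBv+\mu^2$. The paper instead writes $\sqrt a=\frac1\pi\int_0^\infty\frac{a}{a+t}t^{-1/2}\,dt$. It then uses that $X(X+tI)^{-1}=I-t(X+tI)^{-1}$ is order preserving, because inversion reverses order.
\item \emph{Compression.} You square: $(V^T\sqrt S\,V)^2=V^T\sqrt S\,VV^T\sqrt S\,V\preceq V^TSV$, and then apply monotonicity. The paper uses the Schur-complement bound $V^T(S+tI)^{-1}V\succeq(V^TSV+tI)^{-1}$ and integrates the same resolvent formula.
\end{itemize}

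Your version is shorter and purely algebraic, with no integral and no block elimination. What the paper's route buys is uniformity in the exponent. The same resolvent-and-block-elimination argument gives $V^TS^aV\preceq(V^TSV)^a$ for every $0<a<1$. The paper reuses exactly this mechanism in Lemma~\ref{rect:compression} for the exponents $j/p$ of the rectangular inverse curvature. Your squaring trick is tied to the square root: it iterates to exponents $2^{-k}$, but does not directly give exponents such as $3/4$.

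Two small remarks:
\begin{itemize}
\item The reduction to $B\succ0$ in your monotonicity step is never used and can be dropped.
\item In the strictness step, say explicitly that for a concave function, equality with the chord at one interior point forces affineness on the whole segment. That is the situation you then contradict.
\end{itemize}
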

\begin{proof}
For $a\ge0$ elementary integration gives
\[
 \sqrt a=\frac1\pi\int_0^\infty
              \frac{a}{a+t}\,t^{-1/2}\,dt.
\]
The spectral theorem gives the same norm-convergent integral for a matrix.
If $0<X\preceq Y$, congruence by $X^{-1/2}$ and diagonalization show
$Y^{-1}\preceq X^{-1}$. The same comparison with $X+tI,Y+tI$ holds when
$X$ is singular. Consequently
$X(X+tI)^{-1}=I-t(X+tI)^{-1}$ is order preserving for each $t>0$.
Integrating proves square-root monotonicity.

To prove the compression assertion, extend $V$ to an orthogonal basis and
write a positive definite matrix $T$ in blocks with upper left block
$T_{11}=V^TTV$. Block elimination gives
\[
 V^TT^{-1}V=(T_{11}-T_{12}T_{22}^{-1}T_{21})^{-1}
 \succeq T_{11}^{-1}.
\]
Apply this to $T=S+tI$, subtract $t$ times the result from the identity,
and integrate the preceding square-root formula. This proves
\eqref{sq:fo:sqrtcompression} for all $S\succeq0$.

For $S\succ0$, put $Q=\sqrt S$. The linear map $Y\mapsto QY+YQ$
is invertible, since its entries in a $Q$-eigenbasis are multiplied by
$\sqrt{s_i}+\sqrt{s_j}>0$. Differentiating $Q^2=S$ therefore gives
\[
 (D\sqrt S[X])_{ij}=\frac{X_{ij}}{\sqrt{s_i}+\sqrt{s_j}}.
\]
This also proves smoothness by the finite-dimensional implicit function
theorem (Theorem~\ref{reg:ift}), applied on the open positive cone.
Taking the trace proves \eqref{sq:fo:sqrtfirst}. Differentiating
$Q^{-1}$ gives $D(Q^{-1})[X]=-Q^{-1}(DQ[X])Q^{-1}$, and substitution in
\eqref{sq:fo:sqrtfirst} gives \eqref{sq:fo:sqrthessian}.
Thus the trace of the square root has strictly negative second derivative
along any nonconstant line in the positive definite cone.
For two distinct semidefinite matrices $S,T$, restrict to
$\ran(S+T)$. Every strict convex combination of their restrictions is
positive definite there. The same negative second derivative on that
open segment, and continuity at the endpoints, prove strict concavity.
\end{proof}

\begin{lemma}[Fidelity, including its variational and SDP formulas]
\label{sq:fo:fidelity}
For $S,M\succeq0$ define
\[
 F(S,M)=\Tr\sqrt{S^{1/2}MS^{1/2}}.
\]
This continuous function is symmetric, jointly concave, order preserving
in both arguments, and homogeneous in the form $F(aS,aM)=aF(S,M)$ for
$a\ge0$. It obeys
\begin{equation}\label{sq:fo:fidelitytrace}
 0\le F(S,M)\le\sqrt{\Tr S\,\Tr M}.
\end{equation}
In the second formula below, $W$ ranges over all real matrices of the
same size as $S$ and need not be symmetric. The two useful variational
formulas are
\begin{align}
 2F(S,M)&=\inf_{Z\succ0}\{\Tr(SZ^{-1})+\Tr(MZ)\},
 \label{sq:fo:fidinf}\\
 F(S,M)&=\max_W\left\{\Tr W:
 \begin{pmatrix}S&W\\W^T&M\end{pmatrix}\succeq0\right\}.
 \label{sq:fo:fidsdp}
\end{align}
When $S,M\succ0$, the infimum in \eqref{sq:fo:fidinf} has a unique minimizer
$Z\succ0$, characterized by $ZMZ=S$.
If $M$ is supported on a subspace $K$ and $\pi S$ denotes compression to
$K$, then
\begin{equation}\label{sq:fo:fidcompression}
 F(S,M)=F(\pi S,M|_K).
\end{equation}
\end{lemma}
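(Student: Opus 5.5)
The plan is to prove the SDP formula \eqref{sq:fo:fidsdp} first and derive the other properties from it. The key identity is $F(S,M)=\|S^{1/2}M^{1/2}\|_1$. It holds because $(S^{1/2}M^{1/2})(S^{1/2}M^{1/2})^T=S^{1/2}MS^{1/2}$, so the singular values of $S^{1/2}M^{1/2}$ are the eigenvalues of $\sqrt{S^{1/2}MS^{1/2}}$.

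For the upper bound in \eqref{sq:fo:fidsdp}, note that positive semidefiniteness of the block matrix is equivalent to $W=S^{1/2}KM^{1/2}$ for some contraction $K$ with $\|K\|\le1$. This is the standard block-positivity factorization; in the singular case one restricts to supports. Then
\[
\Tr W=\Tr(KM^{1/2}S^{1/2})\le\|S^{1/2}M^{1/2}\|_1 .
\]
Equality is attained with $K=V^TU^T$, where $S^{1/2}M^{1/2}=U\Sigma V^T$ is a singular value decomposition; this choice is still a contraction. This proves the maximum formula.

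The listed properties then follow from \eqref{sq:fo:fidsdp}.
\begin{itemize}
\item \emph{Joint concavity.} If $(S_k,M_k,W_k)$ are feasible for $k=1,2$, then any convex combination of the triples is feasible, so $F$ is jointly concave.
\item \emph{Homogeneity.} Scaling the feasible set by $a\ge0$ gives $F(aS,aM)=aF(S,M)$.
\item \emph{Monotonicity.} Enlarging $S$ or $M$ in the Loewner order keeps the block matrix positive, so every feasible $W$ stays feasible.
\item \emph{Symmetry.} Swapping the diagonal blocks and replacing $W$ by $W^T$ shows $F(S,M)=F(M,S)$.
\item \emph{Continuity.} The map $(S,M)\mapsto\sqrt{S^{1/2}MS^{1/2}}$ is a composition of continuous maps on the positive cone, by Lemma~\ref{sq:fo:sqrt}.
\item \emph{Bound \eqref{sq:fo:fidelitytrace}.} Nonnegativity follows from the choice $W=0$. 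For the upper bound, apply Hölder in the form $\|S^{1/2}M^{1/2}\|_1\le\|S^{1/2}\|_{\HS}\|M^{1/2}\|_{\HS}=\sqrt{\Tr S\,\Tr M}$.
\end{itemize}

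For \eqref{sq:fo:fidinf}, fix $Z\succ0$ and write
\[
\Tr(SZ^{-1})+\Tr(MZ)=\|Z^{-1/2}S^{1/2}\|_{\HS}^2+\|Z^{1/2}M^{1/2}\|_{\HS}^2
\ge 2\|Z^{-1/2}S^{1/2}\|_{\HS}\,\|Z^{1/2}M^{1/2}\|_{\HS}\ge 2\|S^{1/2}M^{1/2}\|_1 .
\]
The first inequality is AM–GM. The second is Hölder applied to the factorization $S^{1/2}M^{1/2}=(Z^{-1/2}S^{1/2})^T(Z^{1/2}M^{1/2})$.

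When $S,M\succ0$, equality is attained at the geometric-mean transport
\[
Z=M^{-1/2}(M^{1/2}SM^{1/2})^{1/2}M^{-1/2},
\]
which satisfies $ZMZ=S$. A direct computation then gives $\Tr(SZ^{-1})=\Tr(MZ)=F(S,M)$.

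Uniqueness of the minimizer comes from strict convexity. The map $Z\mapsto\Tr(SZ^{-1})$ is strictly convex on the positive definite cone when $S\succ0$, and the term $\Tr(MZ)$ is linear. Setting the gradient $-Z^{-1}SZ^{-1}+M$ to zero gives exactly $ZMZ=S$. This equation has a unique positive solution: conjugating by $M^{1/2}$ turns it into $(M^{1/2}ZM^{1/2})^2=M^{1/2}SM^{1/2}$, and positive square roots are unique.

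For general semidefinite $S,M$, apply the result to $S+\epsilon I$ and $M+\epsilon I$. Continuity of $F$ and monotonicity in $\epsilon$ give $\le$ in the limit. The reverse inequality $\ge$ holds for every $Z$ by the Hölder argument above, which never used invertibility of $S$ or $M$.

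For the compression identity \eqref{sq:fo:fidcompression}, use $M=P_KMP_K$ and hence $M^{1/2}=P_KM^{1/2}P_K$. This gives $\|S^{1/2}M^{1/2}\|_1=\|S^{1/2}P_K M^{1/2}\|_1$. The nonzero singular values of $S^{1/2}P_KM^{1/2}$ are the square roots of the eigenvalues of
\[
M^{1/2}P_KSP_KM^{1/2}=(M|_K)^{1/2}(\pi S)(M|_K)^{1/2},
\]
and these are exactly the quantities defining $F(\pi S,M|_K)$.

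The main obstacle is the infimum formula in the singular case. Attainment fails there, so it needs the careful $\epsilon$-limit above. The uniqueness claim also needs the strict-convexity argument spelled out.
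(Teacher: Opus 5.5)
Your proof is correct. It uses the same core ingredients as the paper: the identity $F(S,M)=\|S^{1/2}M^{1/2}\|_1$, the contraction factorization $W=S^{1/2}KM^{1/2}$ of a positive block matrix, $\epsilon$-regularization, and compression via nonzero singular values. You organize them differently, in two places.

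\emph{The structural properties.} You derive concavity, monotonicity, homogeneity, and symmetry from the maximum formula \eqref{sq:fo:fidsdp} by manipulating feasible triples. The paper instead obtains concavity and monotonicity from the infimum formula \eqref{sq:fo:fidinf}, viewed as an infimum of functions jointly linear in $(S,M)$ with $Z,Z^{-1}\succ0$.

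\emph{The infimum formula.} The paper uses the completed square
\[
\Tr(AQ^{-1})+\Tr Q-2\Tr\sqrt A=\|Q^{-1/2}\sqrt A-Q^{1/2}\|_{\HS}^2,\qquad A=M^{1/2}SM^{1/2},\ Q=M^{1/2}ZM^{1/2}.
\]
This gives the lower bound, attainment, and uniqueness in one identity, but only for faithful arguments. Both inequalities must then be carried to the semidefinite case by limits. Your AM--GM and H\"older argument on $S^{1/2}M^{1/2}=(S^{1/2}Z^{-1/2})(Z^{1/2}M^{1/2})$ gives the lower bound for all semidefinite pairs directly. So your $\epsilon$-limit is needed only for the upper bound. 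The price is a separate uniqueness argument: convexity, the stationarity equation $ZMZ=S$, and uniqueness of positive square roots. You supply this correctly.

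Two small repairs are needed.

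\begin{enumerate}
\item The attaining contraction is $K=UV^T$, not $V^TU^T$. With $M^{1/2}S^{1/2}=V\Sigma U^T$ one has $\Tr W=\Tr(U^TKV\Sigma)$. This equals $\Tr\Sigma$ for $K=UV^T$, but not in general for $K=V^TU^T$.
\item Lemma~\ref{sq:fo:sqrt} proves smoothness of the square root only on the positive definite cone. Your $\epsilon$-limit uses continuity of $F$ at singular arguments. That requires continuity of the matrix square root on the closed cone, for example by uniform polynomial approximation of $\sqrt{\cdot}$ on a bounded interval, as the paper does.
\end{enumerate}
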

\begin{proof}
The nonzero eigenvalues, with multiplicities, of $XX^T$ and $X^TX$
coincide: multiplication by $X^T$ gives inverse isomorphisms on each
positive eigenspace after division by its eigenvalue. Apply this to
$X=S^{1/2}M^{1/2}$ to obtain symmetry and
\[
 F(S,M)=\|S^{1/2}M^{1/2}\|_1,
 \qquad \|X\|_1:=\Tr\sqrt{X^TX}.
\]
For an arbitrary real matrix $X$, the singular-value decomposition gives
\[\|X\|_1=\max_{\|K\|_{\op}\le1}\Tr(XK).\]
Here $K$ ranges over all real matrices of the required size.
In singular coordinates each diagonal entry of a real contraction is at most one, and an orthogonal matrix
from the singular-value decomposition attains the sum of singular values.
Hilbert--Schmidt Cauchy--Schwarz then gives
\[
 |\Tr(S^{1/2}M^{1/2}K)|
 \le\|S^{1/2}\|_{\HS}\|M^{1/2}K\|_{\HS}
 \le\sqrt{\Tr S\,\Tr M},
\]
proving \eqref{sq:fo:fidelitytrace}. Continuity follows from continuity of
the scalar square root under spectral functional calculus (equivalently,
uniform polynomial approximation on a bounded spectral interval).

Suppose first that $S,M\succ0$. Let
$A=M^{1/2}SM^{1/2}$ and $Q=M^{1/2}ZM^{1/2}$. Cyclicity of trace gives
\begin{align*}
 \Tr(SZ^{-1})+\Tr(MZ)-2\Tr\sqrt A
 &=\Tr(AQ^{-1})+\Tr Q-2\Tr\sqrt A\\
 &=\|Q^{-1/2}\sqrt A-Q^{1/2}\|_{\HS}^2\ge0.
\end{align*}
Equality holds exactly at $Q=\sqrt A$, equivalently at the unique
$Z=M^{-1/2}\sqrt A\,M^{-1/2}$, and this is equivalent to $ZMZ=S$.
This proves \eqref{sq:fo:fidinf} in the faithful case. For semidefinite
arguments replace both by $S+\epsilon I,M+\epsilon I$. For each fixed
$Z\succ0$, passage to the limit gives the inequality with $2F(S,M)$ on
the left. Conversely the infimum for $S,M$ is at most the infimum for
the enlarged arguments, since both trace terms increase when an
argument increases. Letting $\epsilon\downarrow0$ proves equality.
The infimum of functions jointly linear in $(S,M)$ is concave: at a
convex combination each such function is at least the corresponding
convex combination of the two infima. The same formula proves
monotonicity because $Z,Z^{-1}\succ0$. Homogeneity is immediate.

For the SDP formula first suppose $S,M\succ0$. Congruence by
$\operatorname{diag}(S^{-1/2},M^{-1/2})$ shows that the block constraint
is equivalent to $W=S^{1/2}KM^{1/2}$ with $\|K\|_{\op}\le1$.
Indeed the resulting block matrix is
$\left(\begin{smallmatrix}I&K\\K^T&I\end{smallmatrix}\right)$;
the Schur complement says it is positive semidefinite exactly when
$I-K^TK\succeq0$; see \cite[Appendix~A.5.5]{boydvandenberghe2004}.
Maximizing its trace objective by the trace-norm
formula proves \eqref{sq:fo:fidsdp}.
For singular $S,M$, positivity of the block matrix implies
$|u^TWv|^2\le(u^TSu)(v^TMv)$: restrict its quadratic form to the two
vectors $(u,0),(0,v)$ and use the nonnegative determinant of the resulting
$2$-by-$2$ matrix. Thus $W$ vanishes on the relevant kernels and the
same contraction factorization follows by restriction to the supports.
Conversely any such contraction produces a positive block matrix by
congruence. The maximization proof therefore still applies. The feasible
set is compact by this factorization, so the maximum is attained.

Finally, symmetry writes the fidelity as
$\Tr\sqrt{M^{1/2}SM^{1/2}}$. If $M$ is supported on $K$, this matrix is
supported on $K$ and depends only on $\pi S$, proving
\eqref{sq:fo:fidcompression}.
\end{proof}

For the remainder of the square argument, $k$ denotes any retained
label count and $0\preceq C\preceq I_k$. The source and potential are
\eqref{sq:fo:eta} and \eqref{sq:fo:energy}. Factoring $C$ gives the
symmetric Kraus matrices
\[
 B'_a=\sum_i(C^{1/2})_{ia}\A_i,
 \qquad \eta_C(X)=\sum_aB'_aXB'_a.
\]
Thus the source is completely positive: every block amplification is
again a sum of matrix congruences and preserves positive semidefiniteness.
It is also self-adjoint for the Hilbert--Schmidt pairing. The budget
\eqref{sq:fo:sourcebudget} follows by factoring $I-C$ in the same way.
We retain a general $\theta>0$ in the analytic formulas and set
$\theta=1$ when proving the square theorem.

\begin{proposition}[The density semidefinite program]\label{sq:fo:sdp}
For fixed $H,C$, \eqref{sq:fo:energy} is the value of the finite-dimensional
semidefinite program
\begin{align*}
 \text{maximize}\quad &\Tr(HS)+2\Tr W+2\theta\Tr T,\\
 \text{subject to}\quad
 &S,T\succeq0,\quad\Tr S=1,\\
 &\begin{pmatrix}S&W\\W^T&\eta_C(S)\end{pmatrix}\succeq0,
 \qquad\begin{pmatrix}S&T\\T&I\end{pmatrix}\succeq0.
\end{align*}
Here $W$ is an arbitrary real matrix and $S,T$ are real symmetric.
\end{proposition}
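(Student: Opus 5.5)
The plan is to show that, for each fixed density $S$, the inner maximization over the auxiliary variables $(W,T)$ reproduces exactly the two nonlinear terms in \eqref{sq:fo:energy}. The SDP value is therefore the maximum over $S$ of the same objective. Since $S$ appears linearly in both block constraints, the feasible set is a spectrahedron, and the problem is a genuine SDP once this is verified.

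First, fix $S\succeq0$ with $\Tr S=1$. Because $\eta_C$ is linear and preserves positive semidefiniteness, $M=\eta_C(S)\succeq0$, so the SDP formula \eqref{sq:fo:fidsdp} of Lemma~\ref{sq:fo:fidelity} applies with this $M$. It gives $\max_W\{2\Tr W\}=2F(S,\eta_C(S))$ over the first block constraint, and this maximum is attained. The $W$ and $T$ constraints are decoupled once $S$ is fixed, so the two maximizations can be done separately.

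Second, I would show that $\max\{\Tr T: T\succeq0,\ \left(\begin{smallmatrix}S&T\\T&I\end{smallmatrix}\right)\succeq0\}=\Tr\sqrt S$. By the Schur complement with respect to the identity block, the constraint is equivalent to $T^2\preceq S$. Taking $T=\sqrt S$ is feasible and gives $\Tr\sqrt S$. Conversely, $T^2\preceq S$ with $T\succeq0$ implies $T\preceq\sqrt S$, because the square root is order preserving (Lemma~\ref{sq:fo:sqrt}) and $\sqrt{T^2}=T$. Hence $\Tr T\le\Tr\sqrt S$. The constraint $T\succeq0$ is listed separately, so no extra care is needed about sign.

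Combining the two steps, the SDP value equals $\max_S\{\Tr(HS)+2F(S,\eta_C(S))+2\theta\Tr\sqrt S\}$ over densities, which is $E_\theta(H,C)$. The outer maximum is attained: the density set is compact and the objective is continuous, since $F$ is continuous by Lemma~\ref{sq:fo:fidelity} and $\eta_C$ is linear. For the same reason the SDP attains its value, because the feasible $(W,T)$ sets are bounded uniformly (by the contraction factorization and by $T\preceq\sqrt S$).

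The only point that needs care is the Schur complement step when $S$ is singular. Here it causes no difficulty, because the complement is taken with respect to the invertible identity block, so I expect no real obstacle.
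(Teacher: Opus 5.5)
Your proposal is correct and follows essentially the same route as the paper. The $W$-maximization comes from the fidelity SDP formula of Lemma~\ref{sq:fo:fidelity}. The $T$-block reduces, by the Schur complement against the identity, to $T^2\preceq S$, and square-root monotonicity then gives $\Tr T\le\Tr\sqrt S$ with equality at $T=\sqrt S$; your added attainment and boundedness remarks are harmless, and the paper defers such issues to its solver section.
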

\begin{proof}
Lemma~\ref{sq:fo:fidelity} gives the maximum over $W$ for each $S$.
The second block constraint is equivalent by block elimination to
$S-T^2\succeq0$. For $T\succeq0$, square-root monotonicity gives
$T=\sqrt{T^2}\preceq\sqrt S$, so its largest trace is $\Tr\sqrt S$,
attained at $T=\sqrt S$. The two auxiliary optimizations are independent
once $S$ is fixed, proving the assertion. Every displayed constraint is
affine in real variables for fixed $C$, and the objective is linear.
Thus this is an SDP in the standard sense of
\cite[Section~4.6.2]{boydvandenberghe2004}.
Section~\ref{solver:main} supplies the equivalent normalized formulation
used for certified optimization, including singular physical sources.
\end{proof}

\begin{lemma}[The maximizing density]\label{sq:fo:optimizer}
For every real symmetric $H$ and every real covariance $C\succeq0$,
the maximum in \eqref{sq:fo:energy} is attained by a unique density
$S(H,C)\succ0$. Both its value and its maximizing density are continuous
in $(H,C)$.
\end{lemma}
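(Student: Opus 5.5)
Existence follows from compactness; uniqueness from strict concavity; positive definiteness from the infinite slope of the square root at zero eigenvalues; continuity from a maximum-theorem argument.

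\textbf{Existence and uniqueness.} Write the objective as $\Phi_{H,C}(S)=\Tr(HS)+2F(S,\eta_C(S))+2\theta\Tr\sqrt S$ on the compact convex set of densities.
\begin{itemize}
\item The linear term is continuous.
\item $S\mapsto F(S,\eta_C(S))$ is the composition of the jointly concave, continuous fidelity (Lemma~\ref{sq:fo:fidelity}) with the linear map $S\mapsto(S,\eta_C(S))$. It is therefore concave and continuous.
\item $2\theta\Tr\sqrt S$ is strictly concave and continuous by Lemma~\ref{sq:fo:sqrt}.
\end{itemize}
So $\Phi_{H,C}$ is continuous and strictly concave, and a maximizer exists and is unique.

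\textbf{Positive definiteness.} Suppose the maximizer $S$ has a unit kernel vector $u$. Put $S_\epsilon=(1-\epsilon)S+\epsilon uu^T$, which is again a density.
\begin{itemize}
\item The linear term changes by $O(\epsilon)$.
\item The fidelity term cannot fall below its concave interpolation. By concavity, $F(S_\epsilon,\eta_C(S_\epsilon))\ge(1-\epsilon)F(S,\eta_C(S))+\epsilon F(uu^T,\eta_C(uu^T))\ge F(S,\eta_C(S))-O(\epsilon)$, using \eqref{sq:fo:fidelitytrace} and $\Tr\eta_C(S)\le k$ for the constant.
\item Since $u\perp\ran S$, the square-root term splits: $\Tr\sqrt{S_\epsilon}=\sqrt{1-\epsilon}\,\Tr\sqrt S+\sqrt\epsilon$. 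This is a gain of order $\sqrt\epsilon-O(\epsilon)$.
\end{itemize}
For small $\epsilon$ the gain $2\theta\sqrt\epsilon$ dominates every $O(\epsilon)$ loss, contradicting maximality. Hence $S(H,C)\succ0$.

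\textbf{Continuity.} The feasible set does not depend on $(H,C)$. The map $(H,C,S)\mapsto\Phi_{H,C}(S)$ is jointly continuous, because $\eta_C(S)$ is bilinear in $(C,S)$ and fidelity is continuous. Berge's maximum theorem, or a direct argument, then gives continuity of the value. For the density, take $(H_n,C_n)\to(H,C)$ and pass to a convergent subsequence of the maximizers $S_n$ using compactness. The limit $\bar S$ satisfies $\Phi_{H,C}(\bar S)=\lim\Phi_{H_n,C_n}(S_n)=\lim E_\theta(H_n,C_n)=E_\theta(H,C)$. So $\bar S$ is a maximizer, and uniqueness gives $\bar S=S(H,C)$. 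Since every subsequence has a further subsequence converging to the same point, the whole sequence converges.

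The main point requiring care is the fidelity estimate in the positive-definiteness step. Fidelity can itself have square-root-type behaviour near singular $S$, and the estimate must not lose an $\Omega(\sqrt\epsilon)$ amount that could cancel the regularizer's gain. Concavity along the segment settles this, because it bounds the fidelity from below by its linear interpolation, which changes only by $O(\epsilon)$. A decrease in fidelity could only help the competitor's deficit if fidelity were not concave, so this lower bound is sufficient.
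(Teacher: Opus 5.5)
Your proof is correct and follows the paper's argument throughout. Existence and uniqueness come from compactness and strict concavity; faithfulness comes from a mixing perturbation whose $\sqrt\epsilon$ regularizer gain beats the $O(\epsilon)$ losses, with fidelity controlled by joint concavity; continuity comes from a compactness-plus-uniqueness subsequence argument. There are two small differences. You mix toward $uu^T$ for a kernel vector $u$ rather than toward $I/D$, which makes the square-root split exact. Also, your fidelity loss is just $\epsilon F(S,\eta_C(S))$, a finite multiple of $\epsilon$, so you do not need $\Tr\eta_C(S)\le k$; that bound would require $C\preceq I_k$, which is more than the lemma's hypothesis $C\succeq0$.
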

\begin{proof}
The density set is compact and the objective is continuous. The fidelity
term is concave in $S$ because both its arguments are linear in $S$ and
fidelity is jointly concave. The square-root trace is strictly concave
by Lemma~\ref{sq:fo:sqrt}, so their sum with the linear term is strictly
concave. This proves existence and uniqueness.

Suppose a maximizing $S$ has a zero eigenvalue. For $0<\epsilon<1$, put
$S_\epsilon=(1-\epsilon)S+\epsilon I/D$. These matrices commute with $S$.
Every zero eigenvalue contributes $\sqrt{\epsilon/D}$ to
$\Tr\sqrt{S_\epsilon}$; the change from the finitely many positive
eigenvalues is bounded in absolute value by a constant times $\epsilon$.
The linear term changes by at most a constant times $\epsilon$.
Joint concavity gives
\[
 F(S_\epsilon,\eta_C(S_\epsilon))
 \ge(1-\epsilon)F(S,\eta_C(S))
       +\epsilon F(I/D,\eta_C(I/D)),
\]
so the loss of the fidelity term is also at most a constant times
$\epsilon$. The positive order-$\sqrt\epsilon$ gain from the regularizer
therefore contradicts optimality for sufficiently small $\epsilon$.
Thus $S\succ0$.

For continuity, along any convergent parameter sequence the objective
converges uniformly on the compact density set, by continuity on the
compact set consisting of the sequence and its limit times the density
set. The corresponding maximum values converge. Every subsequential
limit of the optimizing densities maximizes the limiting objective;
uniqueness forces that limit to be $S(H,C)$. Compactness then gives
convergence of the entire sequence.
\end{proof}

\begin{lemma}[Elementary barrier bounds]\label{sq:fo:barrier}
If $0\preceq C\preceq I_k$ is a covariance on $k$ retained labels, then
\begin{equation}\label{sq:fo:excess}
 0\le E(H,C)-f(H)\le2\sqrt k.
\end{equation}
Both $E(\cdot,C)$ and $f$ are convex and $1$-Lipschitz for operator norm.
The value $E(H,C)$ is order preserving in $C$. In square mode
($\theta=1$ and $D\le2n$), the lifted pencil satisfies
\begin{equation}\label{sq:fo:initialbound}
 E(H(x),C)\ge\|H(x)\|_{\op},\qquad
 E(0,I_n)\le5\sqrt n.
\end{equation}
At the same center $H$, discarding any current covariance and installing the
identity covariance on $\ell$ live labels increases the energy by at most
$2\sqrt \ell$.
\end{lemma}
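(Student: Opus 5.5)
The plan is to read every claim off the density formulas \eqref{sq:fo:f} and \eqref{sq:fo:energy}, combined with the bounds on fidelity and on the source already proved. I will treat the claims in the order of the statement.

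\textbf{Excess bound.} Fidelity is nonnegative, so the objective of $E$ dominates that of $f$ pointwise in $S$, which gives $E\ge f$. For the upper bound, \eqref{sq:fo:fidelitytrace} gives $F(S,\eta_C(S))\le\sqrt{\Tr S\,\Tr\eta_C(S)}$. By self-adjointness of $\eta_C$ and \eqref{sq:fo:sourcebudget},
\[
 \Tr\eta_C(S)=\Tr(S\,\eta_C(I_D))\le k\Tr S=k .
\]
Hence for every density the objective of $E$ is at most the objective of $f$ plus $2\sqrt k$. Maximizing over densities gives \eqref{sq:fo:excess}.

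\textbf{Convexity, Lipschitz bound, monotonicity.} For each fixed density, the objective is affine in $H$, so $E(\cdot,C)$ and $f$ are pointwise maxima of affine functions and hence convex. For the Lipschitz bound, fix a density $S$; then $|\Tr((H-H')S)|\le\|H-H'\|_{\op}\Tr S=\|H-H'\|_{\op}$, and taking maxima preserves this $1$-Lipschitz estimate. For monotonicity in $C$: if $C\preceq C'$, then $\eta_{C'}-\eta_C=\eta_{C'-C}$ is completely positive (factor $C'-C$ as before), so $\eta_C(S)\preceq\eta_{C'}(S)$. Fidelity is order preserving in its second argument by Lemma~\ref{sq:fo:fidelity}, so the objective increases pointwise, and so does the maximum.

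\textbf{The square-mode bounds \eqref{sq:fo:initialbound}.} For the lower bound, take $S=vv^T$ with $v$ a top unit eigenvector of $H(x)$. Then
\[
 E\ge f\ge\lambda_{\max}(H(x))=\|\textstyle\sum_ix_iA_i\|=\|H(x)\|_{\op},
\]
where the middle equality is the lifted pencil identity from the start of Section~\ref{sec:motivation}, and the two outer terms are nonnegative. For the upper bound, $f(0)=\max_S 2\Tr\sqrt S$. By Cauchy--Schwarz on the eigenvalues, $\sum_r\sqrt{s_r}\le\sqrt{D\sum_rs_r}=\sqrt D$, so $f(0)=2\sqrt D\le2\sqrt{2n}$. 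Combining this with \eqref{sq:fo:excess} for $k=n$ gives
\[
 E(0,I_n)\le(2\sqrt2+2)\sqrt n\le5\sqrt n .
\]

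\textbf{Replenishment.} At a fixed center $H$, any current covariance gives energy at least $f(H)$ by the first half of \eqref{sq:fo:excess}. The identity covariance on $\ell$ live labels gives energy at most $f(H)+2\sqrt\ell$ by the second half. The increase from discarding the current covariance and installing the identity is therefore at most $2\sqrt\ell$.

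\textbf{Expected difficulty.} None of these steps is substantive. The only points needing care are bookkeeping ones. First, when the covariance lives on $\ell$ labels rather than on the full index set, the source budget \eqref{sq:fo:sourcebudget} must be applied with $\ell$ in place of $k$, using only the $\ell$ retained $\A_i$. Second, the monotonicity argument needs $\eta_{C'}-\eta_C$ to be positivity preserving, which follows from the Kraus factorization of $C'-C$.
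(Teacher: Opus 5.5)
Your proposal is correct and follows the paper's proof essentially step for step. Both use the same chain: the fidelity trace bound with $\Tr\eta_C(S)=\Tr(S\eta_C(I_D))\le k$ for the excess, maxima of affine functions for convexity and the $1$-Lipschitz bound, a Kraus factorization of $C'-C$ plus fidelity monotonicity for order preservation, a rank-one top-eigenvector density for $E\ge\|H(x)\|_{\op}$, scalar Cauchy--Schwarz on the density eigenvalues for $E(0,I_n)\le(2+2\sqrt2)\sqrt n<5\sqrt n$, and the two halves of the excess bound for the replenishment cost.
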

\begin{proof}
For every density $S$, self-adjointness and \eqref{sq:fo:sourcebudget} give
\[
 \Tr\eta_C(S)=\Tr(S\eta_C(I))\le k.
\]
Thus Lemma~\ref{sq:fo:fidelity} bounds the fidelity term between $0$ and
$\sqrt k$, proving \eqref{sq:fo:excess} by maximizing. A supremum of affine
functions of $H$ is convex. For any density,
$|\Tr((H_1-H_2)S)|\le\|H_1-H_2\|_{\op}$, by diagonalizing $S$ into a
convex combination of rank-one projections. Taking maxima proves
the Lipschitz statements. If $C_1\preceq C_2$, factoring $C_2-C_1$ shows
$\eta_{C_1}(S)\preceq\eta_{C_2}(S)$ for each $S\succeq0$.
Fidelity monotonicity proves energy monotonicity.

All regularizer terms are nonnegative, so a rank-one density on a top
eigenvector gives $E(H,C)\ge\lambda_{\max}(H)$. The two-sided encoding in the introduction
proves the first assertion of \eqref{sq:fo:initialbound}. At $H=0$, the
fidelity bound and scalar Cauchy--Schwarz applied to the eigenvalues of
$S$ give
\[
 E(0,I_n)\le2\sqrt n+2\sqrt{2n}<5\sqrt n.
\]
For the final assertion, the old energy is at least $f(H)$, while the new
one is at most $f(H)+2\sqrt \ell$, by \eqref{sq:fo:excess}.
\end{proof}

\section{The transport form of the spectral potential}
\label{origin:section}

Lehner's expression puts the free spectral edge in the form of an
optimization over positive matrices. For our potential, the analogous
identity follows from a saddle point between the density and the
fidelity transport. We prove that identity directly, including the case
in which the source occupies only a proper physical subspace. This
justifies the variational interpretation in Section~\ref{sec:motivation}
and provides the equations to be differentiated in the next section.

Let $\mathcal D_D=\{S\succeq0:\Tr S=1\}$ be the real density set,
and let $\mathfrak r$ be a continuous concave regularizer on this set,
with a differentiable extension to positive definite matrices.  Define
\begin{equation}\label{origin:definitions}
 f(Y)=\max_{S\in\mathcal D_D}\{\Tr(YS)+\mathfrak r(S)\},\qquad
 E(H,C)=\max_{S\in\mathcal D_D}
 \{\Tr(HS)+2F(S,\eta_C(S))+\mathfrak r(S)\}.
\end{equation}
The regularizers used in this paper are
$\mathfrak r(S)=2\Tr\sqrt S$ for the square argument and
$\mathfrak r(S)=\theta\Tr S^{1-q}/(1-q)+2\kappa\Tr\sqrt S$
for the rectangular implementation.  In both cases the maximum defining
$E$ has a faithful density, as proved for the corresponding profile.

\begin{theorem}[The regularized transport variational formula]
\label{origin:formula}
Suppose $E(H,C)$ in \eqref{origin:definitions} has a positive definite
maximizing density.  Then
\begin{equation}\label{origin:full}
 \boxed{\displaystyle
 E(H,C)=\inf_{Z\succ0}f\bigl(H+Z^{-1}+\eta_C(Z)\bigr).}
\end{equation}
To state the attained version when the source is singular, let
$K=\operatorname{ran}\eta_C(I_D)$, let $\pi$ denote compression to $K$,
and let $\pi^*$ extend a matrix on $K$ by zero on $K^\perp$.  If
$K\ne\{0\}$, then
\begin{equation}\label{origin:supported}
 E(H,C)=\min_{Z\succ0\text{ on }K}
 f\bigl(H+\pi^*(Z^{-1})+\eta_C(\pi^*Z)\bigr).
\end{equation}
The density in $f$ remains a full $D$-dimensional density.
If $K=\{0\}$, the supported formula is interpreted as $E(H,C)=f(H)$.
\end{theorem}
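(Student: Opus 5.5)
The formula is a min--max exchange. By \eqref{sq:fo:fidinf},
\[
 2F(S,\eta_C(S))=\inf_{Z\succ0}\{\Tr(SZ^{-1})+\Tr(\eta_C(S)Z)\}
 =\inf_{Z\succ0}\Tr\bigl(S(Z^{-1}+\eta_C(Z))\bigr),
\]
where the second equality uses self-adjointness of $\eta_C$ for the trace pairing. Therefore
\[
 E(H,C)=\max_{S\in\mathcal D_D}\inf_{Z\succ0}\Phi(S,Z),\qquad
 \Phi(S,Z)=\Tr\bigl(S(H+Z^{-1}+\eta_C(Z))\bigr)+\mathfrak r(S).
\]
The right side of \eqref{origin:full} is $\inf_Z\max_S\Phi$. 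The weak inequality $\max\inf\le\inf\max$ is automatic, so $E(H,C)\le\inf_Zf(\cdots)$. The work is the reverse inequality.

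For the reverse inequality I would use the optimizer rather than an abstract minimax theorem, because the $Z$-domain is open and noncompact. First treat the faithful case $\eta_C(I_D)\succ0$. Let $S_0\succ0$ be the maximizing density given by the hypothesis. Then $\eta_C(S_0)\succ0$, since $\eta_C(S_0)\succeq\lambda_{\min}(S_0)\eta_C(I_D)$. Lemma~\ref{sq:fo:fidelity} then supplies the unique $Z_0\succ0$ with $Z_0\eta_C(S_0)Z_0=S_0$. The aim is to show that $S_0$ maximizes $\Phi(\cdot,Z_0)$ over $\mathcal D_D$. Given that, $f(H+Z_0^{-1}+\eta_C(Z_0))=\Phi(S_0,Z_0)=E(H,C)$, and the infimum is attained.

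$\Phi(\cdot,Z_0)$ is concave, so it suffices to check the first-order condition at $S_0$ along trace-zero directions $X$. Define
\[
 g(S)=\Tr(HS)+2F(S,\eta_C(S))+\mathfrak r(S).
\]
Because $S_0$ is interior, $Dg(S_0)[X]=0$ for every trace-zero $X$. Now $2F(S,\eta_C(S))\le\Tr(S(Z_0^{-1}+\eta_C(Z_0)))$ for all $S$, with equality at $S_0$. The difference $\Phi(S,Z_0)-g(S)$ is therefore nonnegative and vanishes at $S_0$, so it has a minimum there. If $F$ is differentiable at $S_0$, both one-sided directional derivatives of this difference vanish. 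This envelope argument then gives $D_S\Phi(S_0,Z_0)[X]=Dg(S_0)[X]=0$.

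Differentiability of $S\mapsto F(S,\eta_C(S))$ near $S_0$ is the main obstacle. I would obtain it from the faithful-case identity
\[
 F=\Tr\sqrt{M^{1/2}SM^{1/2}}, \qquad M=\eta_C(S),
\]
which is smooth on the positive definite cone by Lemma~\ref{sq:fo:sqrt}. Alternatively, the implicit function theorem applied to $ZMZ=S$ gives a smooth minimizer and a Danskin-type derivative. Concavity of $\Phi(\cdot,Z_0)$ then upgrades the vanishing first-order condition to global maximality on $\mathcal D_D$.

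For the singular source, take $K\ne\{0\}$. By \eqref{sq:fo:fidcompression}, $F(S,\eta_C(S))=F(\pi S,\eta_C(S)|_K)$, and $\eta_C(S)$ is supported on $K$. It is faithful on $K$ when $S\succ0$, since $\eta_C(S)\succeq\lambda_{\min}(S)\eta_C(I_D)$. The same argument therefore runs on $K$ with $Z\succ0$ on $K$. Set $Z_0$ by $Z_0\,\eta_C(S_0)|_K\,Z_0=\pi S_0$ and use the identity
\[
 \Tr(S\,\pi^*(Z^{-1}))+\Tr(\eta_C(S)\pi^*Z)=\Tr(\pi S\,Z^{-1})+\Tr(\eta_C(S)|_K Z).
\]
This gives \eqref{origin:supported} with the minimum attained. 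The infimum formula \eqref{origin:full} follows from the supported one by taking $Z_\epsilon=\pi^*Z_0+\epsilon^{-1}P_{K^\perp}$. Then $Z_\epsilon^{-1}=\pi^*(Z_0^{-1})+\epsilon P_{K^\perp}$, and $\eta_C(P_{K^\perp})$ vanishes because $\eta_C$ is a sum of congruences by the $B'_a$, which annihilate $K^\perp$. By the $1$-Lipschitz property of $f$ this yields convergence as $\epsilon\to0$. Weak duality supplies the other inequality. When $K=\{0\}$, $\eta_C\equiv0$ and both sides of \eqref{origin:full} reduce to $f(H)$ in the same way.
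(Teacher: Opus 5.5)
Your proposal is correct and follows essentially the same route as the paper. You get weak duality from the fidelity variational formula. You then build a saddle point from the faithful maximizer and the unique transport solving $ZMZ=S$ on the source support. Concavity of $S\mapsto\Tr((H+B_*)S)+\mathfrak r(S)$ gives global maximality, and the family $\pi^*Z_*+tP_{K^\perp}$ gives the unattained infimum.

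The only cosmetic difference is how the gradient of the fidelity term at $S_*$ is identified. You use a touching-majorant argument together with smoothness of $\Tr\sqrt{M^{1/2}SM^{1/2}}$. The paper instead differentiates the transport minimization at its minimizer, where the transport contribution vanishes.
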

\begin{proof}
First, the fidelity variational inequality gives, for every full
$Z\succ0$ and density $S$,
\[
 2F(S,\eta_C(S))\le\Tr(SZ^{-1})+\Tr(\eta_C(S)Z).
\]
The source map is self-adjoint for the trace inner product, so maximizing
this inequality over $S$ proves that the right side of
\eqref{origin:full} is at least $E(H,C)$.

We construct a saddle point on the physical source support.  A real symmetric
Kraus factorization of $\eta_C$ shows that each Kraus matrix is supported
on $K$: the kernel of their sum of squares is the intersection of their
kernels.  Thus $\eta_C$ depends only on compression to $K$ and vanishes
on matrices supported on $K^\perp$.  Let $S_*$ be a faithful maximizer,
put $S_0=\pi S_*$ and $M=\eta_C(S_*)|_K$, and first assume $K\ne\{0\}$.
Both $S_0$ and $M$ are positive definite.  The fidelity formula supplies
a unique $Z_*\succ0$ on $K$ satisfying
\begin{equation}\label{origin:transport}
 Z_*MZ_*=S_0.
\end{equation}
Set $B_* =\pi^*(Z_*^{-1})+\eta_C(\pi^*Z_*)$.

The density derivative of $2F(S,\eta_C(S))$ at $S_*$ is $B_*$.
To verify this derivative, differentiate
$\Tr((\pi S)Z^{-1})+\Tr((\eta_C(S)|_K)Z)$ at its minimizing
transport.  Its derivative with respect to $Z$ is
$M-Z_*^{-1}S_0Z_*^{-1}=0$ by \eqref{origin:transport}, so the
transport derivative contributes zero.  The remaining two trace terms,
using self-adjointness of $\eta_C$, give $\Tr(B_*\,\Delta S)$.
The explicit positive square-root formula for the minimizing transport
justifies this differentiation on the fixed support.

Since $S_*$ is interior in the density set, its trace-constrained
stationarity is therefore
\[
 H+B_*+\nabla\mathfrak r(S_*)=\lambda I_D.
\]
The function $S\mapsto\Tr((H+B_*)S)+\mathfrak r(S)$ is concave.
Its tangent derivative vanishes at $S_*$, so the supporting-plane
inequality makes $S_*$ a global maximizer, including over boundary
densities by continuity.  Fidelity equality at $Z_*$ now gives
$f(H+B_*)=E(H,C)$.  The supported fidelity inequality supplies the
opposite inequality for every supported $Z$, proving
\eqref{origin:supported}.

Finally, let $P_{K^\perp}$ be the orthogonal projection onto $K^\perp$
and put $Z_t=\pi^*Z_*+tP_{K^\perp}$ for $t>0$.
It is positive definite on the full space, and
\[
 H+Z_t^{-1}+\eta_C(Z_t)=H+B_*+t^{-1}P_{K^\perp}.
\]
Every density has $0\le\Tr(SP_{K^\perp})\le1$, whence
\[
 E(H,C)\le f(H+Z_t^{-1}+\eta_C(Z_t))\le E(H,C)+t^{-1}.
\]
Let $t\to\infty$ to prove \eqref{origin:full}.
If $K=\{0\}$, then $\eta_C=0$ and the same conclusion follows by
using $Z_t=tI_D$.  If $K$ is the full space, $Z_*$ itself attains
\eqref{origin:full}.
\end{proof}

The free contribution is therefore encoded by the transport optimization.
In particular, reducing covariance changes the same objective whose
center derivative controls movement. At a faithful source and optimizing
pair $(S,Z)$, a coefficient direction $a\in\ran C$ gives
\[
 D_CE[aa^T]=\Tr(S\A(a)Z\A(a))
          =\|Z^{1/2}\A(a)S^{1/2}\|_{\HS}^2.
\]
The envelope calculation behind this formula is included in
Lemma~\ref{sq:an:hessian}. It exhibits the interaction omitted by the
single variance matrix $\eta_C(I)$: covariance is measured through the
optimizing density and transport, in precisely the direction that moves
the center.

For comparison, the unregularized expression has the SDP formulation
\[
 \inf_{Z\succ0}\lambda_{\max}(H+Z^{-1}+\eta_C(Z))
 =\inf_{t,Z}\left\{t:
 \begin{pmatrix}tI_D-H-\eta_C(Z)&I_D\\I_D&Z\end{pmatrix}
 \succeq0\right\},
\]
where $t\in\R$ and $Z$ is real symmetric. The block constraint implies
$Z\succeq0$; if $Zv=0$, positivity forces the off-diagonal block to
annihilate $v$, so $I_Dv=0$. Thus $Z\succ0$, and its Schur complement
is precisely the eigenvalue bound.
This uses standard SDP and Schur-complement machinery
\cite[Section~4.6.2 and Appendix~A.5.5]{boydvandenberghe2004}.
For the square-root and dyadic-power regularizers used here,
Proposition~\ref{solver:program} derives the additional block constraints
that give an exact SDP for the regularized value.  The general variational
identity alone does not assert SDP representability for every regularizer.

\section{The optimizer response and its stability}
\label{sq:an:section}

Moving the center changes the density that tests it. The corresponding
second derivative is governed by the inverse of the density curvature,
including the change transmitted through the optimizing transport.
Our objective in this section is to put that inverse response in a form
that can be estimated on the coefficient directions of the walk.

We retain the physical dimension $D$, the real coefficient space, the
lifted contractions $\A_i$, and the positive regularizer strength $\theta$.
Write $\mathcal H_D$ for the real symmetric $D$-by-$D$ matrices with
inner product $\langle X,Y\rangle=\Tr(XY)$. This is the space of density
variations. Left-right multiplication and balanced frame operators also
act on the full real matrix space, with inner product $\Tr(X^TY)$.
Whenever a density curvature is used on that larger space, it denotes
the positive self-adjoint extension given by its matrix-unit multiplier;
its restriction to $\mathcal H_D$ is the actual density Hessian.
The displayed operators preserve the symmetric subspace whenever a
response on that subspace is asserted. Adjoint notation $*$ refers to
the relevant real Hilbert-space inner product.

\subsection{A smooth optimizer on each covariance support}

Differentiation requires a domain on which the matrix square roots have
fixed support. A \emph{coefficient face} means that the range of $C$ is
fixed, while its positive eigenvalues may vary. Even on such a face,
$\eta_C(S)$ can have a kernel in physical space. The next lemma shows
that this kernel is fixed as well. It therefore permits differentiation on that fixed support, while
the density remains a full $D$-dimensional matrix. The lower bound for $S$ will later make numerical tolerances
uniform over all states of the walk.

\begin{lemma}[Quantitative faithfulness and smoothness on a coefficient face]
\label{sq:an:smooth}
Let \(V\subseteq\mathbb R^k\) be a fixed coefficient subspace, and let
\(C\) be positive definite on \(V\) and zero on \(V^\perp\).
The maximizer \(S=S(H,C)\) defining \(E(H,C)\) is unique, is positive
definite, and depends smoothly on \(H\) and on \(C\) in this relative
open coefficient face. This assertion allows
\(\eta_C(S)\) to be singular on the full physical space.

Writing \(v=\|\eta_C(I)\|_{\op}\), one has the quantitative bound
\begin{equation}
 S(H,C)\succeq
 \frac{\theta^2}
 {(2\|H\|_{\op}+2\sqrt v+\theta\sqrt D)^2}\,I.
 \label{sq:an:density-floor}
\end{equation}
\end{lemma}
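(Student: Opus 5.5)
The plan is to get uniqueness and faithfulness from Lemma~\ref{sq:fo:optimizer}, smoothness from the implicit function theorem applied to the trace-constrained stationarity equation, and the density floor \eqref{sq:an:density-floor} by evaluating that same stationarity equation on a bottom eigenvector of $S$. Throughout, $\mathfrak r(S)=2\theta\Tr\sqrt S$.

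\emph{Uniqueness and faithfulness.} Lemma~\ref{sq:fo:optimizer} gives a unique maximizer $S=S(H,C)\succ0$. Its proof uses only the strict concavity of $\Tr\sqrt S$ and the $\sqrt\epsilon$ gain near a zero eigenvalue, so it applies verbatim with a general $\theta>0$.

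\emph{The physical support is fixed on the face.} Next I show that $K=\ran\eta_C(I_D)$ is constant on the face. Factor $C$ with the Kraus matrices $B'_a=\sum_i(C^{1/2})_{ia}\A_i$. The kernel of $\eta_C(I)=\sum_a B_a'^2$ is the intersection of the kernels of the $B'_a$. Because $\ran C^{1/2}=V$, the span of the $B'_a$ equals $\{\mathcal A(w):w\in V\}$. Hence $K$ is the sum of the ranges of $\mathcal A(w)$ for $w\in V$, which depends only on $V$. Moreover $\eta_C(S)\succeq\lambda_{\min}(S)\,\eta_C(I)$, so $M=\eta_C(S)|_K$ is positive definite whenever $S\succ0$.

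\emph{Smoothness.} By \eqref{sq:fo:fidcompression} and symmetry of fidelity, the objective equals
\[
 G(S;H,C)=\Tr(HS)+2\Tr\sqrt{M^{1/2}(\pi S)M^{1/2}}+2\theta\Tr\sqrt S,
 \qquad M=\eta_C(S)|_K .
\]
On the open set where $S\succ0$, and with $C$ ranging over the relative-open face, $M$ and $M^{1/2}(\pi S)M^{1/2}$ are positive definite on the fixed space $K$. The map $C\mapsto\eta_C$ is linear. Therefore $G$ is smooth by Lemma~\ref{sq:fo:sqrt} and the implicit function theorem for square roots.

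Interior maximality gives the Lagrange system $\nabla_SG(S;H,C)=\lambda I$, $\Tr S=1$. I would apply Theorem~\ref{reg:ift} to the map $(S,\lambda)\mapsto(\nabla_SG-\lambda I,\Tr S-1)$. Its linearization is invertible provided $D_S^2G$ is negative definite on trace-zero symmetric directions. The fidelity term is concave, and \eqref{sq:fo:sqrthessian} makes the regularizer Hessian strictly negative, so that condition holds and the system is nondegenerate. Uniqueness then identifies the local smooth solution with $S(H,C)$.

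The step needing the most care is the smoothness of the fidelity term when $\eta_C(S)$ is singular in physical space. This is exactly why the compression to the fixed $K$ is done first. If one prefers, the derivative can be cross-checked against the transport form from the proof of Theorem~\ref{origin:formula}.

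\emph{Density floor.} The proof of Theorem~\ref{origin:formula} gives $\nabla_S(2F)=B_*=\pi^*(Z_*^{-1})+\eta_C(\pi^*Z_*)\succeq0$, so stationarity reads
\[
 H+B_*+\theta S^{-1/2}=\lambda I .
\]
First bound the multiplier. Pairing the equation with $S$ gives
\[
 \lambda=\Tr(HS)+\Tr(B_*S)+\theta\Tr\sqrt S .
\]
Using $Z_*MZ_*=\pi S$, we have $\Tr(B_*S)=\Tr(\pi S\,Z_*^{-1})+\Tr(MZ_*)=2F(S,\eta_C(S))$. By \eqref{sq:fo:fidelitytrace} and $\Tr\eta_C(S)=\Tr(S\eta_C(I))\le v$, this is at most $2\sqrt v$. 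Also $\Tr\sqrt S\le\sqrt D$ by Cauchy--Schwarz on the eigenvalues. Hence $\lambda\le\|H\|_{\op}+2\sqrt v+\theta\sqrt D$.

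Now take a unit eigenvector $u$ of $S$ for its least eigenvalue $s_{\min}$. Since $u^TB_*u\ge0$,
\[
 \frac{\theta}{\sqrt{s_{\min}}}=\lambda-u^THu-u^TB_*u\le\lambda+\|H\|_{\op}\le2\|H\|_{\op}+2\sqrt v+\theta\sqrt D .
\]
Squaring this inequality gives \eqref{sq:an:density-floor}.
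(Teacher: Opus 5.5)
Your proposal is correct and follows essentially the same route as the paper: uniqueness and faithfulness from Lemma~\ref{sq:fo:optimizer}, the fixed physical support $K(V)$ with fidelity compression for smoothness of the free term, the implicit function theorem driven by the strictly negative Tsallis Hessian, and the density floor from pairing the stationarity equation $H+G+\theta S^{-1/2}=\lambda I$ with $S$ and then dropping $G\succeq0$. The differences are cosmetic. You obtain $G\succeq0$ and $\Tr(SG)=2F(S,\eta_C(S))$ from the explicit transport gradient $B_*$ of Theorem~\ref{origin:formula} and evaluate stationarity on a bottom eigenvector; the paper instead uses fidelity monotonicity, degree-one homogeneity, and the operator inequality $\theta S^{-1/2}\preceq(\lambda+\|H\|_{\op})I$.
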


\begin{proof}
Existence, uniqueness, faithfulness, and continuity of the optimizer
are given by Lemma~\ref{sq:fo:optimizer}. We supply the source-support,
quantitative, and differential assertions.

Write \(\A(z)=\sum_i z_i\A_i\), and set
\[
 K(V)=\operatorname{span}\{\ran \A(z):z\in V\}.
\]
For a covariance \(C\) as in the statement, a real square-root
factorization gives real symmetric matrices \(B'_a\) spanning the same
linear space as the matrices \(\A(z)\), \(z\in V\), with
\(\eta_C(X)=\sum_a B'_aXB'_a\).
For every \(S>0\),
\[
 \lambda_{\min}(S)\eta_C(I)
 \preceq\eta_C(S)\preceq\|S\|_{\op}\eta_C(I).
\]
The kernel of \(\eta_C(I)=\sum_a(B'_a)^2\) is the common kernel of
the \(B'_a\). Since these matrices are real symmetric, its orthogonal
complement is \(K(V)\). Thus the physical support of \(\eta_C(S)\)
is the fixed space \(K(V)\). Let \(\Pi\) be the projection onto this
space. Every \(B'_a\) is supported there, so
\(\eta_C(S)=\eta_C(\Pi S\Pi)\). The compression identity in
Lemma~\ref{sq:fo:fidelity} yields
\[
 F(S,\eta_C(S))
 =F\bigl(\Pi S\Pi,\eta_C(\Pi S\Pi)\bigr),
\]
where the two arguments on the right are positive definite on \(K(V)\).
Consequently the free term is smooth in \(S>0\) and \(C\) on this
coefficient face. If \(K(V)=\{0\}\), it is identically zero and the
same conclusion holds.

Let
\[
 G=\nabla_S\bigl[2F(S,\eta_C(S))\bigr].
\]
Monotonicity of fidelity in each argument, together with positivity of
\(\eta_C\), implies \(G\succeq0\): its pairing with every positive
matrix direction is nonnegative. The free term is homogeneous of
degree one in \(S\), so differentiation of its scalar scaling gives
\[
 \Tr(SG)=2F(S,\eta_C(S)).
 \tag{*}
\]
The density is an interior maximizer on its trace hyperplane.
Its stationarity equation, with a real trace multiplier $\lambda$, is therefore
\[
 H+G+\theta S^{-1/2}=\lambda I.
 \]
Pairing this equation with \(S\), and using \((*)\), gives
\[
 \lambda
 =\Tr(HS)+2F(S,\eta_C(S))+\theta\Tr\sqrt S.
 \]
Since \(\Tr S=1\), self-adjointness of \(\eta_C\) gives
\(\Tr\eta_C(S)=\Tr(S\eta_C(I))\le v\). The trace bound for fidelity
and the trace square-root bound from the foundational lemmas imply
\[
 \lambda\le\|H\|_{\op}+2\sqrt v+\theta\sqrt D.
 \]
As \(G\succeq0\), stationarity also gives
\[
 \theta S^{-1/2}\preceq\lambda I-H
 \preceq(\lambda+\|H\|_{\op})I.
 \]
This proves \eqref{sq:an:density-floor}.

For smooth dependence, the negative Hessian of the Tsallis term is
strictly positive. Explicitly, in an eigenbasis of \(S\) with
eigenvalues \(s_i>0\), the formula in Lemma~\ref{sq:fo:sqrt} gives
\begin{equation}
 \left\langle X,
       -D_S^2(2\theta\Tr\sqrt S)\,X\right\rangle
 =\theta\sum_{i,j}
       \frac{|X_{ij}|^2}
       {\sqrt{s_i s_j}(\sqrt{s_i}+\sqrt{s_j})}>0
 \quad (X\ne0).
 \label{sq:an:tsallis-positive}
\end{equation}
The free term is concave, so its negative Hessian is positive
semidefinite. The full negative density Hessian is consequently
strictly positive, also after restriction to the trace-zero tangent.
The implicit-function theorem applied to the density stationarity
equations on that tangent proves smooth dependence on \(H,C\).

\end{proof}

\subsection{Linearizing the density and transport equations}

The potential is defined by a nonlinear optimization, but its local
response is determined by a coupled linear system. We write that system
before eliminating either optimizer. Throughout the differentiation,
$\theta$ and the coefficient support are fixed. We first assume that the
physical source is positive definite; the corresponding equations on a
proper physical support are given at the end of the subsection.

Put $M=\eta_C(S)$ and $W=Z^{-1}$. At the maximizing density and its
minimizing fidelity transport, the stationarity equations are
\begin{equation}\label{sq:an:nonlinear-stationarity}
 \boxed{\quad
 \begin{aligned}
  \eta_C(S)&=WSW,\\
  H+W+\eta_C(Z)+\theta S^{-1/2}&=\lambda I,\\
  \Tr S&=1.
 \end{aligned}\quad}
\end{equation}
The first equation is transport stationarity, equivalently $ZMZ=S$.
The second is density stationarity: the multiplier $\lambda$ enforces
its trace constraint. At the positive density and transport, these
stationarity and normalization equations are the KKT conditions of the
regularized saddle-point problem. The next lemma differentiates them
to obtain the linearized KKT system for the response of $(S,Z,\lambda)$
as the center and covariance change.

This is the self-consistent system whose stability matters for the walk.
The method of linearizing such a system has a parallel in matrix Dyson
equation analysis \cite{aek2019}; at a regular spectral edge, stability
also requires isolating the response that matters rather than applying
an undifferentiated inverse bound \cite[Section 4]{aeks2020}.
Here the matrix-valued term $\theta S^{-1/2}$ and the trace constraint
are part of the equations. We differentiate the entire optimizing triple
$(S,Z,\lambda)$ and estimate the quadratic forms generated by our
coefficient matrices. All estimates below are derived for this system.

We now define the derivatives used by the movement covariance.
The center derivative $J$ is defined for all real coefficient directions;
the covariance derivative $\Gamma$ is defined on the current coefficient
support. Put
\begin{equation}
 \begin{aligned}
 J(u,w)&=\tfrac12D_H^2E(H,C)[\A(u),\A(w)],\\
 \Gamma(u,w)&=D_CE(H,C)
       \left[\frac{uw^T+wu^T}{2}\right].
 \end{aligned}
 \label{sq:an:response-definitions}
\end{equation}
Lemma~\ref{sq:an:smooth} makes these derivatives well-defined. The first
formula is used for every $u,w\in\mathbb R^k$ and the second for
$u,w\in\ran C$. For coefficient trace expressions we extend $\Gamma$
by zero on $(\ran C)^\perp$; the extension does not affect a trace
against $C$ or any $0\preceq Q\preceq C$. For physical matrix
directions we write $D_H^2E[B,B]$ explicitly, reserving $J$ for the
coefficient form.

\begin{lemma}[Linearized KKT system and exact response]
\label{sq:an:hessian}
At a positive density \(S\), suppose initially that
\(M=\eta_C(S)>0\) on the full physical space. Let \(Z>0\) solve
\(ZMZ=S\), and write
\[
 \begin{aligned}
 W&=Z^{-1},& \mathcal R(X)&=WXW,\\
 \mathcal K(U)&=WUM+MUW,&\mathcal L&=\mathcal R-\eta_C.
 \end{aligned}
\]
For an affine parameter path $H(t)=H+tB$, $C(t)=C+t\Delta$
within the coefficient face, suppose $S$ is the maximizing density and
write $X=\dot S(0)$ and $U=\dot Z(0)$. Here $\Delta$ is real
symmetric and supported on $\ran C$, and $t$ is small enough that
$C(t)$ remains positive definite on that support.
Let $A_\theta^{\rm original}$ denote the negative Hessian of
$2\theta\Tr\sqrt S$. The derivatives solve
\begin{equation}\label{sq:an:coupled-linearization}
 \begin{aligned}
  \mathcal K U&=\mathcal L X-\eta_\Delta(S),\\
  A_\theta^{\rm original}X+\mathcal L^*U+\dot\lambda I
       &=B+\eta_\Delta(Z),\\
  \Tr X&=0.
 \end{aligned}
\end{equation}
All operators in this system are evaluated at the original state.
In particular, its forcing includes both the center velocity $B$ and
the covariance velocity $\Delta$.

The negative Hessian of the free term
\(S\mapsto2F(S,\eta_C(S))\) is
\begin{equation}
 A_f^{\rm original}=\mathcal L^*\mathcal K^{-1}\mathcal L.
 \label{sq:an:free-original}
\end{equation}
If the second fidelity argument is held fixed at \(M\), its negative
Hessian with respect to the first argument is
\begin{equation}
 A_{\rm ref}=\mathcal R^*\mathcal K^{-1}\mathcal R,\qquad
 A_{\rm ref}^{-1}(B)=ZBS+SBZ.
 \label{sq:an:reference-inverse}
\end{equation}
The inverse negative Hessian of \(2\theta\Tr\sqrt S\) is
\begin{equation}
 \bigl(A_\theta^{\rm original}\bigr)^{-1}(B)
 =\frac1\theta(SB\sqrt S+\sqrt S BS).
 \label{sq:an:tsallis-inverse}
\end{equation}

At the maximizing density, let \(\mathcal V_0\) be the real
space of trace-zero real symmetric matrices, let
\(\iota:\mathcal V_0\longrightarrow\mathcal H_D\) be its inclusion,
and let $B_0=\iota^*B=B-(\Tr B/D)I$. Then
\begin{equation}
 \tfrac12D_H^2E(H,C)[B,B]=\frac12
 \left\langle B_0,
 \{\iota^*(A_f^{\rm original}+A_\theta^{\rm original})\iota\}^{-1}
 B_0\right\rangle.
 \label{sq:an:optimized-center}
\end{equation}
In particular this is at most the corresponding unconstrained
quadratic form
\(\frac12\langle B,
 (A_f^{\rm original}+A_\theta^{\rm original})^{-1}B\rangle\).
The covariance derivative is
\begin{equation}
 \Gamma(u,w)
 =\Tr\bigl(S\A(u)Z\A(w)\bigr).
 \label{sq:an:gamma-original}
\end{equation}
\end{lemma}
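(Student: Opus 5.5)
The plan is to differentiate the three boxed stationarity equations \eqref{sq:an:nonlinear-stationarity} along the affine path, and then to read off each displayed formula from the resulting linear system together with the envelope theorem. Lemma~\ref{sq:an:smooth} makes $S(t)$ smooth. With $M>0$, Lemma~\ref{sq:fo:fidelity} makes $Z(t)$ the unique positive solution of $ZMZ=S$, and the explicit formula $Z=M^{-1/2}(M^{1/2}SM^{1/2})^{1/2}M^{-1/2}$ shows that $Z$ is smooth too. The multiplier $\lambda$ is smooth because it equals $\Tr(S(H+W+\eta_C(Z)+\theta S^{-1/2}))$.

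\textbf{Linearized system.} Write the transport equation as $\eta_{C(t)}(S)=WSW$ and use $\dot W=-WUW$.
\begin{itemize}
\item Differentiating the transport equation gives $\eta_C(X)+\eta_\Delta(S)=\mathcal R(X)-WUWSW-WSWUW$. Since $WSW=M$, this is $\mathcal K U=\mathcal L X-\eta_\Delta(S)$.
\item Differentiating the density equation gives $B-WUW+\eta_C(U)+\eta_\Delta(Z)-A_\theta^{\rm original}X=\dot\lambda I$. Here the negative Hessian of $2\theta\Tr\sqrt S$ is $-\theta\,D(S^{-1/2})$.
\item Both $\mathcal R$ and $\eta_C$ are self-adjoint, so $-\mathcal R(U)+\eta_C(U)=-\mathcal L^*U$, and the second line of \eqref{sq:an:coupled-linearization} follows.
\item The third line is immediate from $\Tr S=1$.
\end{itemize}

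\textbf{Hessian formulas.} Set $\Delta=0$ and eliminate $U=\mathcal K^{-1}\mathcal L X$. The density gradient of the free term is $G=W+\eta_C(Z)$, as shown in Theorem~\ref{origin:formula}. Its derivative in direction $X$ is $-\mathcal R(U)+\eta_C(U)=-\mathcal L^*\mathcal K^{-1}\mathcal L X$, and this gives \eqref{sq:an:free-original}. Freezing $M$ replaces $\mathcal L$ by $\mathcal R$, which gives $A_{\rm ref}$.

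To invert $A_{\rm ref}$, note that $\mathcal R^{-1}=\operatorname{Ad}_Z$ and that $\mathcal K=L_WR_M+L_MR_W$. Using $M=WSW$ we get $\mathcal R^{-1}\mathcal K\mathcal R^{-1}(B)=Z(WBZ\,WSW+WSW\,ZBW)Z$, which equals $BSZ\cdot$ and is precisely $ZBS+SBZ$ after simplification. For \eqref{sq:an:tsallis-inverse}, work in an eigenbasis of $S$. There the Hessian multiplier is $\theta/(\sqrt{s_is_j}(\sqrt{s_i}+\sqrt{s_j}))$, by \eqref{sq:fo:sqrthessian}, and its reciprocal $(s_i\sqrt{s_j}+\sqrt{s_i}s_j)/\theta$ is exactly the multiplier of $B\mapsto(SB\sqrt S+\sqrt SBS)/\theta$.

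\textbf{Center and covariance derivatives.} By the envelope theorem, $D_HE[B]=\Tr(SB)$, so $D_H^2E[B,B]=\Tr(BX)$. With $\Delta=0$, project the second equation of \eqref{sq:an:coupled-linearization} onto $\mathcal V_0$; this removes $\dot\lambda$. Since $X\in\mathcal V_0$, we get $\iota^*(A_f+A_\theta)\iota X=B_0$. The operator $A_f+A_\theta$ is positive, so it is invertible on the tangent space, and $\Tr(BX)=\langle B_0,X\rangle$ gives \eqref{sq:an:optimized-center}.

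The comparison with the unconstrained form is the standard fact that, for a positive operator $A$, $\iota(\iota^*A\iota)^{-1}\iota^*\preceq A^{-1}$. To see it, minimize $\tfrac12\langle X,AX\rangle-\langle B,X\rangle$ over the subspace rather than the whole space; the minimum value is $-\tfrac12$ times each quadratic form.

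For $\Gamma$, apply the envelope theorem to the saddle value $\Tr(HS)+\Tr(SW)+\Tr(\eta_C(S)Z)+\mathfrak r(S)$. Because $S$ and $Z$ are both stationary, only the explicit $C$-dependence survives, giving $\Tr(\eta_\Delta(S)Z)$. For $\Delta=(uw^T+wu^T)/2$, the cyclic trace turns this into $\Tr(S\A(u)Z\A(w))$.

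\textbf{Main obstacle.} The main technical point is justifying the envelope step in the min--max structure. The cleanest route is to use the identity $E=f(H+W+\eta_C(Z))$ from Theorem~\ref{origin:formula}, together with the fact that both partial derivatives vanish at the optimizers. The singular-source case should reduce to the same computation through the compression in Lemma~\ref{sq:an:smooth}.
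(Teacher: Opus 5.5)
Your proposal is correct and follows the paper's proof essentially step for step. You differentiate the boxed stationarity system, eliminate $U$ to obtain $A_f^{\rm original}$ and $A_{\rm ref}$, compress to $\mathcal V_0$ to remove $\dot\lambda$, and use the envelope identities for $D_HE$ and $\Gamma$. Reading the Tsallis inverse directly off the eigenbasis multiplier in \eqref{sq:fo:sqrthessian} is an equally valid shortcut for the paper's $L_{\sqrt S}+R_{\sqrt S}$ factorization.

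Two small repairs are needed:
\begin{enumerate}
\item In inverting $A_{\rm ref}$, your bracket has spurious factors of $W$. Since $WZ=I$, one has $\mathcal K(ZBZ)=BZ\,WSW+WSW\,ZB$, so the correct expression is $Z(BZ\,WSW+WSW\,ZB)Z=ZBS+SBZ$. As written, $Z(WBZ\,WSW+WSW\,ZBW)Z$ equals $BS+SB$.
\item Before using $\mathcal K^{-1}$, and the positivity of $A_f^{\rm original}+A_\theta^{\rm original}$, you should record, as the paper does, that $\langle U,\mathcal K U\rangle=2\|W^{1/2}UM^{1/2}\|_{\HS}^2>0$ for $U\ne0$.
\end{enumerate}
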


\begin{proof}
\emph{Differentiating the transport minimization.}
The fidelity variational formula of Lemma~\ref{sq:fo:fidelity} gives
\[
 2F(S,\eta_C(S))
 =\min_{Z>0}g(S,Z),\qquad
 g(S,Z)=\Tr(SZ^{-1})+\Tr(\eta_C(S)Z).
 \]
The minimizer is unique: its stationarity equation is
\(M=WSW\), equivalently \(ZMZ=S\), and hence
\[
 Z=M^{-1/2}(M^{1/2}SM^{1/2})^{1/2}M^{-1/2}.
 \]
Along the affine path, the derivatives of the source and inverse
transport are
\[
 \dot M=\eta_C(X)+\eta_\Delta(S),\qquad \dot W=-WUW.
\]
Thus differentiating the first equation of
\eqref{sq:an:nonlinear-stationarity} gives
\begin{align*}
 \eta_C(X)+\eta_\Delta(S)
 &=(-WUW)SW+WXW+WS(-WUW)\\
 &=WXW-WUM-MUW.
\end{align*}
This is the first equation of \eqref{sq:an:coupled-linearization}.
For the second equation, the four density-gradient terms differentiate
as follows:
\[
 \dot H=B,\qquad \dot W=-WUW,\qquad
 \frac{d}{dt}\eta_{C(t)}(Z(t))=\eta_C(U)+\eta_\Delta(Z),\qquad
 \frac{d}{dt}(\theta S^{-1/2})=-A_\theta^{\rm original}X.
\]
Move the terms containing $X,U$ to the left and differentiate
$\Tr S=1$. This proves the remaining equations of
\eqref{sq:an:coupled-linearization}. Here $\mathcal L^*=\mathcal L$,
since both $\mathcal R$ and $\eta_C$ are self-adjoint; the adjoint
notation also records the two roles of this operator in elimination.

The transport Hessian is $D_Z^2g=\mathcal K$. It is self-adjoint,
and for nonzero real symmetric $U$,
\[
 \langle U,\mathcal K(U)\rangle
 =2\Tr(UWUM)
 =2\|W^{1/2}UM^{1/2}\|_{\HS}^2>0.
\]
It therefore has a genuine inverse. To compute the density Hessian
of the free term, vary $S$ at fixed $C$. The first linearized equation
then reads $\mathcal K(D_SZ[X])=\mathcal L X$. The free density
gradient is $W+\eta_C(Z)$, whose derivative is
$-\mathcal L^*D_SZ[X]$. Substitution proves
\eqref{sq:an:free-original}. The transport response is exactly what
produces this quadratic form.

For a fixed second argument \(M\), the mixed derivative is
\(-\mathcal R\) instead, which gives the first identity in
\eqref{sq:an:reference-inverse}. As \(\mathcal R^{-1}(B)=ZBZ\),
its inverse is \(\mathcal R^{-1}\mathcal K\mathcal R^{-1}\).
Substituting \(ZBZ\) into \(\mathcal K\) and then conjugating by \(Z\)
gives
\[
 \mathcal R^{-1}\mathcal K\mathcal R^{-1}(B)
 =ZBZ\,MZ+ZMZ\,BZ=ZBS+SBZ.
 \]

\smallskip\noindent
\emph{The regularizer curvature.}
For \eqref{sq:an:tsallis-inverse}, let \(T_S=\sqrt S\).
Differentiating \(T_S^2=S\) gives
\[
 D_ST_S[X]=(L_{T_S}+R_{T_S})^{-1}(X).
 \]
The gradient of \(2\theta\Tr\sqrt S\) is \(\theta T_S^{-1}\).
Its negative derivative is therefore
\[
 A_\theta^{\rm original}(X)
 =\theta T_S^{-1}
       (L_{T_S}+R_{T_S})^{-1}(X)T_S^{-1}.
 \]
In a \(T_S\)-eigenbasis the displayed factors commute as operators
on matrix entries. Inverting them gives
\[
 (A_\theta^{\rm original})^{-1}(B)
 =\theta^{-1}
       (L_{T_S}+R_{T_S})(T_SBT_S),
 \]
which is exactly \eqref{sq:an:tsallis-inverse}.

\smallskip\noindent
\emph{Differentiating the maximizing density.}
Solve the first linearized equation for $U$ and substitute it into
the second. Write $A=A_f^{\rm original}+A_\theta^{\rm original}$
and define the effective density forcing
\begin{equation}\label{sq:an:effective-forcing}
 B_{\rm eff}=B+\eta_\Delta(Z)
                +\mathcal L^*\mathcal K^{-1}\eta_\Delta(S),
 \qquad B_{{\rm eff},0}=\iota^*B_{\rm eff}.
\end{equation}
Every term involving $X$ remains on the left:
\[
 (A_\theta^{\rm original}+\mathcal L^*\mathcal K^{-1}\mathcal L)X
       +\dot\lambda I
 =B+\eta_\Delta(Z)+\mathcal L^*\mathcal K^{-1}\eta_\Delta(S)
 =B_{\rm eff}.
\]
The covariance forcing has two contributions: the explicit change
$\eta_\Delta(Z)$ of the density gradient, and the change transmitted
through the transport equation. Neither is present if $C$ is fixed.
Since $\Tr X=0$, write $X=\iota X_0$ and apply $\iota^*$.
The multiplier disappears because $\iota^*I=0$, yielding
\[
 (\iota^*A\iota)X_0=B_{{\rm eff},0}.
\]
The regularizer makes this compressed operator positive definite.
Consequently the derivatives of both optimizers are
\begin{equation}\label{sq:an:eliminated-linearization}
 \begin{aligned}
  X&=\iota(\iota^*A\iota)^{-1}B_{{\rm eff},0},\\
  U&=\mathcal K^{-1}\{\mathcal L X-\eta_\Delta(S)\}.
 \end{aligned}
\end{equation}
The remaining multiplier is recovered as
$\dot\lambda=D^{-1}\Tr(B_{\rm eff}-AX)$.

For a pure center perturbation $\Delta=0$, one has $B_{\rm eff}=B$.
The envelope identity $D_HE[B]=\Tr(SB)$ then gives
$D_H^2E[B,B]=\Tr(XB)$ and proves \eqref{sq:an:optimized-center}.
To check the envelope identity directly, differentiate the objective
at its optimizing density and transport. The density chain-rule term
is $\lambda\Tr X=0$, and the transport chain-rule term vanishes by
transport stationarity. Only the explicit center derivative remains.
The comparison with the unconstrained inverse follows from the
variational inverse formula
\[
 \langle B,A^{-1}B\rangle
 =\sup_X\{2\langle B,X\rangle-\langle X,AX\rangle\}:
 \]
restricting this supremum to trace-zero \(X\) can only decrease it.

\smallskip\noindent
\emph{The covariance derivative.}
Finally, when \(C\) varies within its coefficient face, the same
chain rule eliminates the density derivative. In the fidelity
minimization the transport derivative has zero coefficient because
$\nabla_Zg=0$ at its minimizer. The only remaining explicit covariance
derivative is therefore
\[
 D_CE[\dot C]=\Tr\bigl(Z\eta_{\dot C}(S)\bigr).
 \]
Substituting
\(\dot C=(uw^T+wu^T)/2\) proves
\eqref{sq:an:gamma-original}. In particular
\[
 \Gamma(u,u)=\Tr(S\A(u)Z\A(u))
 =\|Z^{1/2}\A(u)S^{1/2}\|_{\HS}^2\ge0.
 \]
For the general affine path, the same envelope calculation gives
\[
 \frac{d}{dt}E(H(t),C(t))
 =\Tr(S(t)B)+\Tr\bigl(Z(t)\eta_\Delta(S(t))\bigr).
\]
Differentiating once more at zero and using self-adjointness of
$\eta_\Delta$ gives
\begin{align*}
 E''(0)
 &=\langle B+\eta_\Delta(Z),X\rangle
                         +\langle\eta_\Delta(S),U\rangle\\
 &=\langle B_{\rm eff},X\rangle
            -\langle\eta_\Delta(S),\mathcal K^{-1}\eta_\Delta(S)\rangle.
\end{align*}
Substituting the formula for $X$ gives the full affine-path response:
\begin{equation}\label{sq:an:joint-affine-response}
 \left.\frac{d^2}{dt^2}E(H+tB,C+t\Delta)\right|_{t=0}
 =\langle B_{{\rm eff},0},(\iota^*A\iota)^{-1}B_{{\rm eff},0}\rangle
   -\langle\eta_\Delta(S),\mathcal K^{-1}\eta_\Delta(S)\rangle.
\end{equation}
The positive term is the response of
the maximizing density. The negative term comes from minimizing over
the transport. The square-root regularizer remains in $A$ throughout.
\end{proof}

\paragraph{The equations when the physical source is singular.}
Let $K$ be the fixed physical support supplied by Lemma~\ref{sq:an:smooth},
let $\pi$ denote compression to $K$, and let $\pi^*$ extend by zero.
The density $S$ is still a full trace-one matrix. Set $S_0=\pi S$,
and take $M,Z,W,\mathcal K,\mathcal L$ on $K$, using $S_0$ in
place of $S$. On this support the two transport arguments are positive
definite. The nonlinear density equation on the full space is
\[
 H+\pi^*\{W+\eta_C(Z)\}+\theta S^{-1/2}=\lambda I.
\]
For the same affine parameter path, its exact linearization is
\begin{equation}\label{sq:an:singular-linearization}
 \begin{aligned}
  \mathcal K U&=\mathcal L(\pi X)-\eta_\Delta(S_0),\\
  A_\theta^{\rm original}X+\pi^*\mathcal L^*U+\dot\lambda I
       &=B+\pi^*\eta_\Delta(Z),\\
  \Tr X&=0.
 \end{aligned}
\end{equation}
Indeed the fidelity depends on $S$ only through $\pi S$, whereas
$\theta S^{-1/2}$ is differentiated on the full physical space.
Eliminating $U$ therefore replaces $A$ and $B_{\rm eff}$ above by
\begin{equation}\label{sq:an:singular-effective-forcing}
 \begin{aligned}
  A&=A_\theta^{\rm original}
                   +\pi^*\mathcal L^*\mathcal K^{-1}\mathcal L\pi,\\
  B_{\rm eff}&=B+\pi^*\{\eta_\Delta(Z)
                +\mathcal L^*\mathcal K^{-1}\eta_\Delta(S_0)\}.
 \end{aligned}
\end{equation}
The trace-zero inversion is still on the full density space. In
particular $\Tr(\pi X)$ need not vanish: mass may move between $K$
and its complement. Formula~\eqref{sq:an:joint-affine-response} then
holds with these replacements and with $\eta_\Delta(S_0)$ in the
transport term. If $K=\{0\}$, all source and transport terms vanish,
and only the full regularizer response remains. The later compression
comparison in Proposition~\ref{sq:an:singular-transfer} bounds these exact
responses by a local quadratic form; it does not replace the global
trace constraint during differentiation.

\subsection{Balanced coordinates and the coefficient forces}

The formulas just proved involve the transport $Z$, the density $S$,
and the source $\eta_C(S)$ in asymmetric positions. A congruence by
$Z^{1/2}$ puts the transport equation into the fixed-point form
$\Phi(P)=P$. A second change of coordinates normalizes the reference
curvature. These are changes of coordinates in a quadratic form at a
fixed state; they are not changes to the algorithm's state.

In the next two lemmas, $P$ is the balanced density, $R$ records the
square-root regularizer in the same coordinates, and $D_a$ are the
balanced coefficient matrices. The vectors $F_a$ are these matrices
viewed as forces acting on density perturbations. Their Gram matrix is
exactly the covariance derivative after factoring $C$. This identity
connects the cap tested by the algorithm to the inverse-Hessian
expression that we need to bound.

\begin{lemma}[Exact balanced normal form]
\label{sq:an:balanced}
Factor the real covariance \(C\) and write
\[
 B'_a=\sum_i(C^{1/2})_{ia}\A_i,\qquad
 \eta_C(X)=\sum_aB'_aXB'_a.
 \]
At a density with faithful source, define
\begin{equation}
 \begin{aligned}
 P&=Z^{-1/2}SZ^{-1/2},&R&=Z^{-1/2}\sqrt S Z^{-1/2},\\
 D_a&=Z^{1/2}B'_aZ^{1/2},&\Phi(X)&=\sum_aD_aXD_a.
 \end{aligned}
 \label{sq:an:balanced-definitions}
\end{equation}
Then \(\Phi\) is self-adjoint, positive, and an ordinary
Hilbert--Schmidt contraction, and \(\Phi(P)=P\).
Put \(\mathcal J_P=L_P+R_P\). In the density coordinate
\(Y=Z^{-1/2}(dS)Z^{-1/2}\), the exact negative curvatures are
\begin{equation}
 A_f=(\Id-\Phi)\mathcal J_P^{-1}(\Id-\Phi),\qquad
 A_\theta^{-1}=\theta^{-1}(L_PR_R+L_RR_P).
 \label{sq:an:balanced-hessians}
\end{equation}
For \(Y=\mathcal J_P^{1/2}y\), define
\begin{equation}
 \begin{aligned}
 T&=\mathcal J_P^{-1/2}\Phi\mathcal J_P^{1/2},\\
 H_\theta&=\mathcal J_P^{1/2}A_\theta\mathcal J_P^{1/2},\\
 \mathcal A&=(\Id-T)^*(\Id-T)+H_\theta.
 \end{aligned}
 \label{sq:an:whitened}
\end{equation}
This is the full negative density Hessian in the \(y\) coordinate.
The matrices specifying the linear coordinate changes are frozen at the
current point; the variation of the optimizing transport has already
been included in \eqref{sq:an:free-original}.
\end{lemma}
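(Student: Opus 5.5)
The plan is to verify each claim by direct substitution, using the factorization $\eta_C(X)=\sum_aB'_aXB'_a$ with real symmetric $B'_a$, the transport equation $ZMZ=S$ from Lemma~\ref{sq:an:hessian}, and the operator identities \eqref{sq:an:free-original} and \eqref{sq:an:tsallis-inverse}. The congruence by $Z^{\pm1/2}$ is the whole content. Throughout, $\operatorname{Ad}_{Z^{1/2}}$ is an invertible self-adjoint operator on the full real matrix space, and $(\operatorname{Ad}_{Z^{1/2}})^{-1}=\operatorname{Ad}_{Z^{-1/2}}$.

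\emph{Properties of $\Phi$.} Each $D_a$ is real symmetric, so $\Phi$ is self-adjoint for $\Tr(X^TY)$ and completely positive. Substituting the definitions gives
\[
 \Phi(P)=Z^{1/2}\eta_C(S)Z^{1/2}=Z^{1/2}MZ^{1/2}.
\]
From $ZMZ=S$ this equals $Z^{-1/2}SZ^{-1/2}=P$. For the contraction property I will use $P\succ0$ as an order unit. If $-cP\preceq X\preceq cP$ for symmetric $X$, positivity gives $-cP\preceq\Phi(X)\preceq cP$, so every power $\Phi^N$ is uniformly bounded on symmetric matrices. Hence the spectral radius of $\Phi$ there is at most one, and self-adjointness makes its Hilbert--Schmidt norm equal to that radius.

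The step I expect to need the most care is extending this to the full nonsymmetric matrix space. Antisymmetric matrices form an invariant complement, since each $D_a$ is symmetric. I would handle them by the Cauchy--Schwarz estimate $|\langle Y,\Phi(X)\rangle|\le\langle Y,\Phi(Y)\rangle^{1/2}\langle X,\Phi(X)\rangle^{1/2}$ for the positive form $\langle X,\Phi(X)\rangle=\sum_a\Tr(X^TD_aXD_a)$. Alternatively, I would apply Perron--Frobenius to the positive map $\Phi$ on the vectorized space $\sum_aD_a\otimes D_a$, whose positive definite eigenvector $P$ forces spectral radius one.

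\emph{Curvatures.} Write $X=\operatorname{Ad}_{Z^{1/2}}Y$, so a quadratic form $\langle X,AX\rangle$ becomes $\langle Y,\operatorname{Ad}_{Z^{1/2}}A\operatorname{Ad}_{Z^{1/2}}Y\rangle$. I will compute two intertwinings. First, $\mathcal L\operatorname{Ad}_{Z^{1/2}}=\operatorname{Ad}_{Z^{-1/2}}(\Id-\Phi)$, because
\[
 W\,Z^{1/2}YZ^{1/2}\,W=Z^{-1/2}YZ^{-1/2},\qquad
 \eta_C(Z^{1/2}YZ^{1/2})=Z^{-1/2}\Phi(Y)Z^{-1/2}.
\]
Second, $\operatorname{Ad}_{Z^{1/2}}\mathcal K\operatorname{Ad}_{Z^{1/2}}=\mathcal J_P$, using $Z^{1/2}MZ^{1/2}=P$. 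Therefore $\mathcal K^{-1}=\operatorname{Ad}_{Z^{1/2}}\mathcal J_P^{-1}\operatorname{Ad}_{Z^{1/2}}$. Inserting both into \eqref{sq:an:free-original}, the inner $\operatorname{Ad}$ factors cancel and give $A_f=(\Id-\Phi)\mathcal J_P^{-1}(\Id-\Phi)$.

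For the regularizer, the transformed inverse is $\operatorname{Ad}_{Z^{-1/2}}(A_\theta^{\rm original})^{-1}\operatorname{Ad}_{Z^{-1/2}}$. Applying it to $B$ via \eqref{sq:an:tsallis-inverse} yields $\theta^{-1}(PBR+RBP)$, which is the stated formula.

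\emph{Whitening.} Because $P\succ0$, $\mathcal J_P$ is positive definite, with matrix-unit multipliers $p_i+p_j$ in a $P$-eigenbasis, so $\mathcal J_P^{\pm1/2}$ exist. Since $\Phi$ and $\mathcal J_P$ are self-adjoint, $T^*=\mathcal J_P^{1/2}\Phi\mathcal J_P^{-1/2}$. Then
\[
 \mathcal J_P^{1/2}A_f\mathcal J_P^{1/2}
 =\mathcal J_P^{1/2}(\Id-\Phi)\mathcal J_P^{-1/2}\,\mathcal J_P^{-1/2}(\Id-\Phi)\mathcal J_P^{1/2}
 =(\Id-T)^*(\Id-T).
\]
Adding $H_\theta$ gives $\mathcal A$, the full negative Hessian in the $y$ coordinate.

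Finally, I will note that $\operatorname{Ad}_{Z^{1/2}}$ and $\mathcal J_P$ preserve the symmetric subspace. The coordinate changes use frozen matrices, so the transport response is already contained in \eqref{sq:an:free-original}.
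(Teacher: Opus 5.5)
Your fixed-point computation, the order-unit argument on symmetric matrices, the intertwinings $\mathcal L\operatorname{Ad}_{Z^{1/2}}=\operatorname{Ad}_{Z^{-1/2}}(\Id-\Phi)$ and $\operatorname{Ad}_{Z^{1/2}}\mathcal K\operatorname{Ad}_{Z^{1/2}}=\mathcal J_P$, the conjugated Tsallis inverse, and the whitening are all correct, and they are essentially the paper's proof. The gap is in the step you flagged yourself: extending the contraction bound from symmetric to antisymmetric matrices. Neither of your two routes works as written.

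\emph{First route.} It assumes $X\mapsto\langle X,\Phi(X)\rangle=\sum_a\Tr(X^TD_aXD_a)$ is a positive form. But $\Phi$ is only a positive \emph{map}, meaning it preserves the PSD cone; it is not a positive semidefinite operator on the Hilbert space. Take a single Kraus factor $D=\operatorname{diag}(1,-1)$, so $\Phi(I)=I$. For $X=\left(\begin{smallmatrix}0&1\\-1&0\end{smallmatrix}\right)$ one gets $\Phi(X)=-X$ and $\langle X,\Phi(X)\rangle=-2$. This is exactly the balanced data for $m=n=1$, where $Z=I$ and $P=S$. So the Cauchy--Schwarz inequality you quote is unavailable.

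\emph{Second route.} The matrix $\sum_aD_a\otimes D_a$ is not entrywise nonnegative, so classical Perron--Frobenius does not apply. The cone-based version on real PSD matrices controls $\Phi$ only on the span of that cone, which is the symmetric matrices, where you already had the bound. Perron--Frobenius for positive maps on complex matrices would suffice, but using it here amounts to the complexification step below.

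\emph{What is missing.} You need a way to bring a nonsymmetric $X$ inside the order-unit argument; either of the following works.
\begin{itemize}
\item The paper uses the symmetric dilation $\widetilde X=\left(\begin{smallmatrix}0&X\\X^T&0\end{smallmatrix}\right)$. Since $\Phi(X^T)=\Phi(X)^T$, the positive map with Kraus factors $D_a\oplus D_a$ sends $\widetilde X$ to $\lambda\widetilde X$ whenever $\Phi(X)=\lambda X$. That map fixes $P\oplus P\succ0$, so the same order bound gives $|\lambda|\le1$.
\item Alternatively, complexify. For real antisymmetric $X$ the matrix $iX$ is Hermitian, so $-cP\preceq iX\preceq cP$ for some $c$. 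Since $\Phi$ is a sum of congruences by real symmetric, hence Hermitian, matrices, it preserves complex Hermitian order. Hence $-cP\preceq i\Phi^N(X)\preceq cP$, and $\Phi^N$ is bounded on antisymmetric matrices too.
\end{itemize}
Either fix, together with self-adjointness on the full real matrix space, gives $\|\Phi\|_{\HS\to\HS}\le1$.
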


\begin{proof}
The real symmetric Kraus factors make $\Phi$ completely positive and
self-adjoint on the full real matrix Hilbert space. Here complete positivity
means that every block amplification preserves positive semidefinite
matrices; each amplification is a sum of congruences by the amplified
Kraus matrices, which proves that property. The transport identity gives
\[
 \Phi(P)=Z^{1/2}\eta_C(S)Z^{1/2}=Z^{-1/2}SZ^{-1/2}=P.
\]
First consider a symmetric eigenmatrix $X\ne0$ with eigenvalue $\lambda$.
Faithfulness of $P$ gives a finite $b>0$ such that
$-bP\preceq X\preceq bP$. Positivity and the fixed point imply
$-bP\preceq\lambda^jX\preceq bP$ for every integer $j\ge0$,
which excludes $|\lambda|>1$.

For an arbitrary real eigenmatrix $X$, form the symmetric block matrix
\[
 \widetilde X=\begin{pmatrix}0&X\\X^T&0\end{pmatrix}.
\]
The positive map with Kraus factors $D_a\oplus D_a$ sends
$\widetilde X$ to $\lambda\widetilde X$ and fixes $P\oplus P\succ0$.
The same order argument excludes $|\lambda|>1$ without assuming that
$X$ is symmetric. Since $\Phi$ is self-adjoint on the full real matrix
space, its eigenmatrices form an orthonormal basis there. Thus
$\|\Phi\|_{\HS\to\HS}\le1$ on that full space.

For the curvature transformation, write
\(\mathcal G=\operatorname{Ad}_{Z^{1/2}}\), so \(dS=\mathcal G(Y)\).
The operators of Lemma~\ref{sq:an:hessian} satisfy
\[
 \mathcal K
 =\operatorname{Ad}_{Z^{-1/2}}\mathcal J_P
                    \operatorname{Ad}_{Z^{-1/2}},
 \qquad
 \mathcal L\mathcal G
 =\operatorname{Ad}_{Z^{-1/2}}(\Id-\Phi).
 \]
To verify the first identity, use
\(M=Z^{-1/2}PZ^{-1/2}\) and expand the two summands; they are
\(Z^{-1}UM\) and \(MUZ^{-1}\). The second follows by subtracting
\(\eta_C(Z^{1/2}YZ^{1/2})\) from \(Z^{-1/2}YZ^{-1/2}\).
Substituting these identities into
\(\mathcal G^*\mathcal L^*\mathcal K^{-1}\mathcal L\mathcal G\)
cancels the conjugations and proves the first formula of
\eqref{sq:an:balanced-hessians}.

The balanced Tsallis curvature is
\(\mathcal G^*A_\theta^{\rm original}\mathcal G\). Its inverse is
\(\mathcal G^{-1}(A_\theta^{\rm original})^{-1}\mathcal G^{-1}\).
Substitution of \eqref{sq:an:tsallis-inverse} gives, for a real symmetric \(B\),
\[
 A_\theta^{-1}(B)=\theta^{-1}(PBR+RBP),
 \]
which proves the second formula. Finally congruencing the sum of the
two curvatures by \(\mathcal J_P^{1/2}\) proves
\eqref{sq:an:whitened}. The factors \(\mathcal J_P^{1/2}\) cannot
generally be omitted: \(\Phi\) need not commute with \(\mathcal J_P\).
\end{proof}

The fixed point explains why the reference term alone is insufficient:
$(I-\Phi)P=0$. The regularizer contributes the positive term $H_\theta$
to the full curvature, so the complete system remains invertible. We
will exploit the structure of the forces instead of estimating a
spectral gap for $I-\Phi$.

\begin{table}[ht]
\centering
\begin{tabularx}{\linewidth}{@{}p{.22\linewidth}p{.31\linewidth}>{\raggedright\arraybackslash}X@{}}
\toprule
Coordinate & Change of variables & Useful identity \\
\midrule
Original density & $dS$ & Transport: $Z\eta_C(S)Z=S$. \\
Balanced density & $Y=Z^{-1/2}(dS)Z^{-1/2}$ &
The source becomes $\Phi$ with $\Phi(P)=P$. \\
Normalized curvature & $Y=\mathcal J_P^{1/2}y$ &
The density curvature becomes $(I-T)^*(I-T)+H_\theta$. \\
\bottomrule
\end{tabularx}
\caption{Changes of coordinates at one fixed state. They rewrite the
response quadratic form; the algorithm still stores only its original
coefficient state and covariance.}
\label{tab:response-coordinates}
\end{table}
The next lemma identifies the corresponding forces and shows that their
total squared mass is the fidelity. This is where the covariance reserve
reappears inside the curvature calculation.

\begin{lemma}[Coefficient forces and their budgets]
\label{sq:an:frame}
Assume \(0\preceq C\preceq I_k\), and assume that the
covariance derivative satisfies
\(\Gamma|_{\ran C}\preceq tI\). In the notation of
Lemma~\ref{sq:an:balanced}, put
\begin{equation}
 F_a=2^{-1/2}\mathcal J_P^{1/2}D_a,\qquad
 \Sigma=\sum_a|F_a\rangle\langle F_a|.
 \label{sq:an:frame-definitions}
\end{equation}
Then
\begin{align}
 \langle F_a,F_b\rangle
 &=\Tr(PD_aD_b)=:\Gamma'_{ab},&
 0\preceq\Sigma&\preceq t\Id,
 \label{sq:an:frame-cap}\\
 \Tr_{\HS}\Sigma
 &=\Tr P=F(S,\eta_C(S))\nonumber\\
 &=\Tr(C\Gamma)\le\sqrt k,&\Tr R^2&\le\Tr\eta_C(S)\le k.
 \label{sq:an:frame-budgets}
\end{align}
Here \(\Tr_{\HS}\) denotes the trace of a
Hilbert-space operator.
At the optimizer,
\begin{equation}
 \Tr(CJ)\le\Tr_{\HS}(\Sigma\mathcal A^{-1}).
 \label{sq:an:observed-response}
\end{equation}
For an arbitrary positive \(S\) with \(\Tr S\le1\), the same
fixed-point, Gram, and trace-budget identities hold when \(\Gamma'\)
is defined by the displayed Gram. The local unconstrained response is
the quadratic form \(\Tr_{\HS}(\Sigma\mathcal A^{-1})\).
None of these algebraic assertions requires such a subnormalized \(S\)
to optimize a separate barrier.
\end{lemma}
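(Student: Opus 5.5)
The plan is to verify each identity directly in balanced coordinates, using Lemma~\ref{sq:an:balanced} and Lemma~\ref{sq:an:hessian}.

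\textbf{Gram matrix.} Since $\mathcal J_P$ is self-adjoint and positive on the full real matrix space,
\[
 \langle F_a,F_b\rangle=\tfrac12\langle D_a,(L_P+R_P)D_b\rangle=\tfrac12\Tr(D_aPD_b+D_aD_bP)=\Tr(PD_aD_b),
\]
using cyclicity and the symmetry of $D_a,D_b$.

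\textbf{Identification with $\Gamma$.} Unwinding $D_a=Z^{1/2}B'_aZ^{1/2}$ and $P=Z^{-1/2}SZ^{-1/2}$ gives $\Tr(PD_aD_b)=\Tr(SB'_aZB'_b)$. By \eqref{sq:an:gamma-original} and $B'_a=\mathcal A(C^{1/2}e_a)$, this equals $\Gamma(C^{1/2}e_a,C^{1/2}e_b)$, so $\Gamma'=C^{1/2}\Gamma C^{1/2}$. The cap $\Gamma|_{\ran C}\preceq tI$ and $C\preceq I$ then give $\Gamma'\preceq tI$. The frame/Gram correspondence recalled earlier transfers this to $\Sigma\preceq t\,\Id$.

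\textbf{Trace budgets.} We have $\Tr_{\HS}\Sigma=\Tr\Gamma'=\Tr(C\Gamma)=\sum_a\Tr(PD_a^2)=\Tr\Phi(P)=\Tr P$, by self-adjointness of $\Phi$ and the fixed point $\Phi(P)=P$. Next,
\[
 \Tr P=\Tr(SZ^{-1})=\Tr(S^{1/2}M^{1/2}(M^{1/2}SM^{1/2})^{-1/2}M^{1/2}S^{1/2}).
\]
Using $ZMZ=S$, one computes $\Tr(SZ^{-1})=\Tr(MZ)$ and $\Tr(SZ^{-1})+\Tr(MZ)=2F$, so $\Tr P=F(S,\eta_C(S))$. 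The bound $F\le\sqrt k$ follows from \eqref{sq:fo:fidelitytrace} and $\Tr\eta_C(S)\le k$.

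For $R^2$, write $R^2=Z^{-1/2}\sqrt S Z^{-1}\sqrt S Z^{-1/2}$, so $\Tr R^2=\Tr(Z^{-1}\sqrt S Z^{-1}\sqrt S)$. By Cauchy--Schwarz for the HS pairing, $\Tr(Z^{-1}\sqrt SZ^{-1}\sqrt S)\le\Tr(Z^{-1}SZ^{-1})=\Tr(WSW)=\Tr M=\Tr\eta_C(S)$.

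\textbf{Response bound \eqref{sq:an:observed-response}.} Write $C=\sum_a c_ac_a^T$ with $c_a=C^{1/2}e_a$, so $\Tr(CJ)=\sum_aJ(c_a,c_a)$ and $\mathcal A(c_a)=B'_a$. By \eqref{sq:an:optimized-center}, each term is at most $\tfrac12\langle B'_a,A^{-1}B'_a\rangle$ in the original coordinates.

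To change coordinates, use $dS=\mathcal G(Y)$. The curvature becomes $\mathcal G^*A\mathcal G$ and the forcing becomes $\mathcal G^*B'_a=D_a$. Then, with $Y=\mathcal J_P^{1/2}y$, the operator is $\mathcal A$ and the forcing is $\mathcal J_P^{1/2}D_a=\sqrt2F_a$. Hence
\[
 \tfrac12\langle B'_a,A^{-1}B'_a\rangle=\langle F_a,\mathcal A^{-1}F_a\rangle,
\]
and summing over $a$ gives $\Tr_{\HS}(\Sigma\mathcal A^{-1})$.

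\textbf{Main obstacle.} The delicate point is that $\mathcal A$ is defined on the full real matrix space, while the Hessian acts only on symmetric matrices. I would handle this by noting that the $F_a$ are symmetric and that every operator in $\mathcal A$ preserves the symmetric subspace. The inverse quadratic form evaluated on symmetric vectors therefore agrees with the restricted one. The singular-source case would be reduced to this one via compression to $K$, as in \eqref{sq:an:singular-effective-forcing}.

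The subnormalized assertion needs no separate argument, since the steps above use only the fixed point and the Gram algebra, never optimality.
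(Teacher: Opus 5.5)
Your proposal is correct and follows the paper's proof essentially step for step. It uses the same Gram identity, the same identification $\Gamma'=C^{1/2}\Gamma C^{1/2}\preceq tC\preceq tI$ transferred to $\Sigma$ by the frame--Gram correspondence, the same trace chain via $\Phi(P)=P$ and $ZMZ=S$, and the same coordinate changes $dS=\operatorname{Ad}_{Z^{1/2}}(Y)$, $Y=\mathcal J_P^{1/2}y$ applied to the unconstrained inverse form in \eqref{sq:an:optimized-center}. The differences are cosmetic: you bound $\Tr R^2\le\Tr(WSW)=\Tr\eta_C(S)$ by Hilbert--Schmidt Cauchy--Schwarz where the paper uses the exact identity $\Tr(SW^2)-\Tr R^2=\tfrac12\|[\sqrt S,W]\|_{\HS}^2$, and your singular-source remark is not needed because the lemma is posed for a faithful source (the paper treats that case separately in Proposition~\ref{sq:an:singular-transfer}, which relies on its Tsallis compression comparison and not only on the singular linearization).
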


\begin{proof}
For any real vector \(z\), the retained direction \(u=C^{1/2}z\)
lies in \(\ran C\). Thus the real Gram after factoring the covariance
satisfies
\[
 \Gamma'=C^{1/2}\Gamma C^{1/2}
 \preceq tC\preceq tI.
 \]
This equation uses only the restriction of \(\Gamma\) to \(\ran C\);
it does not presume differentiability in directions which create a new
coefficient support. The covariance derivative
\eqref{sq:an:gamma-original} gives
\[
 \Gamma'_{ab}
 =\Tr(SB'_aZB'_b)=\Tr(PD_aD_b).
 \]
On the other hand,
\[
 \langle F_a,F_b\rangle
 =\tfrac12\Tr\bigl(D_a(PD_b+D_bP)\bigr)
 =\Tr(PD_aD_b).
 \]
The nonzero eigenvalues of the frame operator equal those of its real
Gram matrix, proving \eqref{sq:an:frame-cap}.

Summing the diagonal Gram entries, using self-adjointness of \(\eta_C\),
and then the transport equation gives
\[
 \begin{aligned}
 \sum_a\Gamma'_{aa}&=\Tr(S\eta_C(Z))=\Tr(\eta_C(S)Z)\\
 &=F(S,\eta_C(S))=\Tr(SZ^{-1})=\Tr P.
 \end{aligned}
 \]
It is also \(\Tr(C\Gamma)\). Although the \(B'_a\) need not be
contractions, the original contraction realization implies
\(\eta_C(I)\preceq kI\). Indeed, for a physical vector \(u\), the
coefficient vector with physical entries \((\A_iu)_i\) satisfies
\[
 u^T\eta_C(I)u
 =\sum_{i,j}C_{ij}\langle \A_iu,\A_ju\rangle
 \le\sum_i\|\A_iu\|^2\le k\|u\|^2.
 \]
Consequently
\[
 \begin{aligned}
 \Tr\eta_C(S)&=\Tr(S\eta_C(I))\le k\Tr S,\\
 F(S,\eta_C(S))&\le\sqrt{\Tr S\,\Tr\eta_C(S)}\le\sqrt k\,\Tr S.
 \end{aligned}
 \]
These inequalities prove the claimed budgets when \(\Tr S\le1\).

For the \(R\)-budget, set \(W=Z^{-1}\) and write
\([X,Y]=XY-YX\) for the commutator of two matrices. Direct expansion gives
\[
 \begin{aligned}
 \Tr\eta_C(S)-\Tr R^2&=\Tr(SW^2)-\Tr(\sqrt S\,W\sqrt S\,W)\\
 &=\tfrac12\|[\sqrt S,W]\|_{\HS}^2\ge0.
 \end{aligned}
 \]

Finally the physical force \(B'_a\) acts on a density perturbation as
\[
 \Tr(B'_a\,dS)
 =\langle D_a,Y\rangle
 =\langle\mathcal J_P^{1/2}D_a,y\rangle.
 \]
One half of its unconstrained inverse-Hessian response is therefore
\(\langle F_a,\mathcal A^{-1}F_a\rangle\).
The optimized center response has the additional trace constraint, which
only decreases the inverse variational supremum by
Lemma~\ref{sq:an:hessian}. Summing the factored coefficient directions
proves \eqref{sq:an:observed-response}. All earlier identities in this
proof concern a current positive matrix and its source, so they remain
valid for the stated subnormalized matrices.
\end{proof}

\begin{lemma}[Inverse domination retaining all density coupling]
\label{sq:an:inverse}
For an operator \(T\) on a finite-dimensional real Hilbert space and
a strictly positive self-adjoint \(H_\theta\), set
\[
 \mathcal A=(\Id-T)^*(\Id-T)+H_\theta,\qquad
 K_0=TH_\theta^{-1}T^*.
 \]
Then
\begin{equation}
 \mathcal A^{-1}\preceq\Id+K_0.
 \label{sq:an:inverse-domination}
\end{equation}
\end{lemma}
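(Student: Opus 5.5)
The plan is to compare quadratic forms through the variational formula for the inverse of a positive operator, as already used in the proof of Lemma~\ref{sq:an:hessian}:
\[
 \langle v,\mathcal A^{-1}v\rangle=\sup_x\{2\langle v,x\rangle-\langle x,\mathcal Ax\rangle\}.
\]
First, $\mathcal A$ is strictly positive, because $(\Id-T)^*(\Id-T)\succeq0$ and $H_\theta\succ0$. Hence $\mathcal A^{-1}$ exists and the formula holds for every vector $v$. It then suffices to prove, for all $v$ and $x$,
\[
 2\langle v,x\rangle\le\langle x,\mathcal Ax\rangle+\langle v,(\Id+K_0)v\rangle .
\]
Taking the supremum over $x$ gives \eqref{sq:an:inverse-domination}.

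To prove this inequality, I split $x$ along the defect of $T$. Put $y=(\Id-T)x$, so that $x=y+Tx$ and
\[
 2\langle v,x\rangle=2\langle v,y\rangle+2\langle T^*v,x\rangle .
\]
The first term is bounded by the scalar Cauchy--Schwarz (AM--GM) inequality:
\[
 2\langle v,y\rangle\le\|y\|^2+\|v\|^2=\langle x,(\Id-T)^*(\Id-T)x\rangle+\langle v,v\rangle .
\]
The second term is bounded by the weighted Cauchy--Schwarz inequality in the $H_\theta$ inner product:
\[
 2\langle T^*v,x\rangle=2\langle H_\theta^{-1/2}T^*v,H_\theta^{1/2}x\rangle\le\langle x,H_\theta x\rangle+\langle v,TH_\theta^{-1}T^*v\rangle .
\]
Adding the two bounds gives exactly $\langle x,\mathcal Ax\rangle+\langle v,(\Id+K_0)v\rangle$, which is the required inequality.

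There is essentially no obstacle here. The only point needing care is the decomposition $x=(\Id-T)x+Tx$. It assigns the unit-weight piece to the $(\Id-T)^*(\Id-T)$ part of $\mathcal A$ and the $T$-transmitted piece to the regularizer $H_\theta$. This is what avoids any spectral-gap assumption on $\Id-T$, consistent with the remark after Lemma~\ref{sq:an:balanced} that $(\Id-\Phi)P=0$. No commutation between $T$ and $H_\theta$ is needed, and the argument holds verbatim on the full real matrix Hilbert space.
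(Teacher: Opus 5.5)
Your proof is correct and follows essentially the same route as the paper: both use the inverse variational formula with the split $\langle v,x\rangle=\langle v,(\Id-T)x\rangle+\langle T^*v,x\rangle$, then complete the square separately against $(\Id-T)^*(\Id-T)$ and against $H_\theta$. Your version simply writes out the two Cauchy--Schwarz/AM--GM bounds explicitly, where the paper bounds the two separate suprema.
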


\begin{proof}
For every force \(f\), split
\(\langle f,x\rangle=
\langle f,(\Id-T)x\rangle+\langle T^*f,x\rangle\) in the
inverse variational formula. Completing squares, separately in the
two resulting expressions, gives
\begin{align*}
 \langle f,\mathcal A^{-1}f\rangle
 &=\sup_x\{2\langle f,(\Id-T)x\rangle-\|(\Id-T)x\|^2
       \\
 &\hspace{9mm}+2\langle T^*f,x\rangle-\langle x,H_\theta x\rangle\}\\
 &\le\|f\|^2+\langle T^*f,H_\theta^{-1}T^*f\rangle
 =\langle f,(\Id+K_0)f\rangle.
\end{align*}
The same \(x\) occurs in both terms before the inequality. Each separate
supremum bounds its term from above; their sum therefore bounds the
original coupled optimization.
\end{proof}

\subsection{Singular physical sources}

If the covariance-weighted matrices $B'_a$ vanish on a physical subspace,
the fidelity term depends only on the complementary compression of
$S$. The regularizer still depends on the full $S$, including off-block
perturbations. Simply restricting the whole Hessian would therefore
require justification. We obtain the needed comparison by first
minimizing the full regularizer's quadratic energy over every
perturbation with a fixed compression. This allows the omitted
coordinates to respond optimally, and so preserves the direction of
the desired upper bound for the inverse response.

\begin{proposition}[Compression comparison for retained responses]
\label{sq:an:singular-transfer}
Let \(S=S(H,C)>0\) be the full optimizing density, and allow
both \(C\) and its physical source to be singular. Factor \(C\) as
above and put
\[
 K=\operatorname{span}_a\ran B'_a.
 \]
If \(K=\{0\}\), all covariance-weighted forces \(B'_a\) vanish and
\(\Tr(CJ)=0\). Otherwise let
\(\pi\) be compression onto \(K\), and set \(S_0=\pi(S)\).
The compressed source \(M_0=\eta_C(S_0)\) is faithful on \(K\).
Let \(A_{f,0}\) and \(A_{\theta,0}\) be the full, unconstrained
negative density Hessians of the compressed free and Tsallis terms at
\(S_0\). For a real symmetric force $B$ supported on $K$, write
$B_0=\pi(B)$. Then
\begin{equation}
 \tfrac12D_H^2E(H,C)[B,B]\le
 \frac12\langle B_0,(A_{f,0}+A_{\theta,0})^{-1}B_0\rangle.
 \label{sq:an:singular-response}
\end{equation}
The covariance derivative on \(\ran C\) is exactly the
compressed covariance derivative evaluated at \(S_0\).
All balanced identities and budgets in
Lemmas~\ref{sq:an:balanced}--\ref{sq:an:frame} apply to this compressed
problem with the original, unchanged \(\theta\) and
\(\Tr S_0\le1\). In particular, any upper bound for the sum of the
local unconstrained observed responses with these cap and budget
assumptions also bounds \(\Tr(CJ)\) for the original full problem.
\end{proposition}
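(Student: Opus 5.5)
The plan is to handle the trivial case first, reduce the full-space response to an unconstrained variational problem, and then compare the regularizer energy under compression by a Schur-complement argument.

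\emph{The case $K=\{0\}$.} Every $B'_a$ vanishes. Writing $\Tr(CJ)=\sum_a J[C^{1/2}e_a,C^{1/2}e_a]=\sum_a\tfrac12D_H^2E[B'_a,B'_a]$, each summand is a second derivative in the zero direction, so $\Tr(CJ)=0$.

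\emph{Reduction for $K\ne\{0\}$.} Lemma~\ref{sq:an:smooth} gives the fixed support $K$, with $M_0=\eta_C(S_0)=\eta_C(S)|_K$ faithful there. For a pure center perturbation ($\Delta=0$) I would start from the exact singular linearization \eqref{sq:an:singular-linearization}--\eqref{sq:an:singular-effective-forcing}. There $B_{\rm eff}=B$ and
\[
A=A_\theta^{\rm original}+\pi^*\mathcal L^*\mathcal K^{-1}\mathcal L\pi .
\]
As in Lemma~\ref{sq:an:hessian}, dropping the trace constraint only enlarges the inverse variational supremum, so
\[
\tfrac12D_H^2E[B,B]\le\tfrac12\sup_X\bigl\{2\langle B,X\rangle-\langle X,AX\rangle\bigr\}.
\]
Since $B$ is supported on $K$, we have $\langle B,X\rangle=\langle B_0,\pi X\rangle$. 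The free part of $\langle X,AX\rangle$ depends only on $X_0=\pi X$ and equals $\langle X_0,A_{f,0}X_0\rangle$: the operators $\mathcal K,\mathcal L$ are built from $S_0$, $M_0$ and $Z$ on $K$, exactly as for the compressed problem. So I would split the supremum as an outer supremum over $X_0$ and an inner infimum of $\langle X,A_\theta^{\rm original}X\rangle$ over extensions $X$ with $\pi X=X_0$. It then suffices to show
\[
\inf_{\pi X=X_0}\langle X,A_\theta^{\rm original}X\rangle\ \ge\ \langle X_0,A_{\theta,0}X_0\rangle .
\]
Given this, the supremum is at most $\langle B_0,(A_{f,0}+A_{\theta,0})^{-1}B_0\rangle$, which is \eqref{sq:an:singular-response}.

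\emph{The key step: a Schur-complement comparison.} This is the step I expect to be the main obstacle. By the standard Schur-complement identity, the constrained minimum of a positive quadratic form equals $\langle X_0,(\pi(A_\theta^{\rm original})^{-1}\pi^*)^{-1}X_0\rangle$. The claim is therefore equivalent to
\[
\pi\,(A_\theta^{\rm original})^{-1}\pi^*\preceq A_{\theta,0}^{-1}.
\]
Both sides have explicit inverses from \eqref{sq:an:tsallis-inverse}. Let $V$ be the isometry onto $K$, and let $B_0$ be a symmetric matrix on $K$ with $B=VB_0V^T$. Then
\[
\langle B,(A_\theta^{\rm original})^{-1}B\rangle=\tfrac2\theta\Tr(BSB\sqrt S)=\tfrac2\theta\Tr\bigl(B_0S_0B_0\,V^T\sqrt S\,V\bigr).
\]
Since $B_0S_0B_0\succeq0$ and $V^T\sqrt SV\preceq\sqrt{S_0}$ by \eqref{sq:fo:sqrtcompression}, this is at most $\tfrac2\theta\Tr(B_0S_0B_0\sqrt{S_0})=\langle B_0,A_{\theta,0}^{-1}B_0\rangle$. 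Some care is needed because $S_0$ need not have trace one, but the Tsallis inverse formula is purely local and does not use normalization.

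\emph{Remaining assertions.} For the covariance derivative, \eqref{sq:an:gamma-original} on the support reads $\Tr(Z\eta_{\dot C}(S))$ with $Z$ on $K$. Because every Kraus matrix is supported on $K$, we have $\eta_{\dot C}(S)=\eta_{\dot C}(S_0)$, so this is exactly the compressed derivative at $S_0$. The balanced normal form (Lemma~\ref{sq:an:balanced}) and the Gram and budget identities of Lemma~\ref{sq:an:frame} then apply verbatim to $(S_0,M_0,Z)$ on $K$, using that lemma's explicit clause for subnormalized $S_0$ with $\Tr S_0\le1$, and with $\theta$ unchanged. Finally, summing \eqref{sq:an:singular-response} over the forces $B'_a$ (all supported on $K$) bounds $\Tr(CJ)$ by the sum of local unconstrained observed responses of the compressed problem. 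Hence any bound on that sum under the cap and budget hypotheses transfers to the original full problem.
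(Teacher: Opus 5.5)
Your proposal is correct and follows essentially the same route as the paper: fidelity compression makes the free curvature $\pi^*A_{f,0}\pi$, the Lagrange/Schur shorting identity reduces the claim to $\pi(A_\theta^{\rm original})^{-1}\pi^*\preceq A_{\theta,0}^{-1}$, you verify that operator inequality through the explicit Tsallis inverse and $V^T\sqrt S\,V\preceq\sqrt{S_0}$, and the budgets transfer because $\Tr S_0\le1$ and $\eta_C(I_K)\preceq kI_K$. The differences are cosmetic: you get faithfulness of $M_0$ from Lemma~\ref{sq:an:smooth}, where the paper uses a direct kernel argument, and you get the covariance-derivative identity from $\eta_{\dot C}(S)=\eta_{\dot C}(S_0)$, where the paper differentiates the fidelity compression identity.
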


\begin{proof}
Let \(\Pi\) be the physical projection onto \(K\).
Symmetry implies \(B'_a=\Pi B'_a\Pi\). Thus
\(\eta_C(S)=\eta_C(\Pi S\Pi)\), supported on \(K\).
The compression \(S_0\) is positive definite there because \(S>0\).
For \(u\in K\),
\[
 u^TM_0u=\sum_a(B'_au)^*S_0(B'_au).
 \]
If this vanishes, every \(B'_au\) vanishes. Such a vector lies in
the orthogonal complement of the span of their ranges, hence is zero.
This proves \(M_0>0\) on \(K\).

The two possible orders of the fidelity square root have the same
nonzero eigenvalues. Since its second argument is supported on \(K\),
Lemma~\ref{sq:fo:fidelity} gives
\begin{equation}
 F(S,\eta_C(S))=F(S_0,\eta_C(S_0)).
 \label{sq:an:singular-fidelity}
\end{equation}
Consequently the full free density curvature is
\(\pi^*A_{f,0}\pi\). Every coefficient direction in \(\ran C\)
is a linear combination of the \(B'_a\), so its physical force is
supported on \(K\). The derivative of
\eqref{sq:an:singular-fidelity} with respect to such support-preserving
covariance directions proves the claimed equality of covariance
derivatives.

It is essential to retain the full Tsallis response, including
off-block density perturbations. Write
\(T_0=\pi(\sqrt S)\). By square-root concavity under compression,
Lemma~\ref{sq:fo:sqrt} gives \(T_0\preceq\sqrt{S_0}\).
The explicit inverse Hessian \eqref{sq:an:tsallis-inverse} gives
\begin{equation}
 \pi(A_\theta^{\rm original})^{-1}\pi^*
 =\theta^{-1}(L_{S_0}R_{T_0}+L_{T_0}R_{S_0})
 \preceq A_{\theta,0}^{-1}.
 \label{sq:an:compressed-tsallis}
\end{equation}
This last inequality is an inequality of Hilbert-space operators.
For a real symmetric \(X\) on \(K\), its difference quadratic is exactly
\[
 \frac2\theta
 \Tr\bigl(XS_0X(\sqrt{S_0}-T_0)\bigr)\ge0,
 \]
because both \(XS_0X\) and \(\sqrt{S_0}-T_0\) are positive.

We spell out the compression rule for quadratic minimization.
If \(A>0\) on the full real symmetric space, then
\begin{equation}
 \inf_{\pi X=Y}\langle X,AX\rangle
 =\left\langle Y,(\pi A^{-1}\pi^*)^{-1}Y\right\rangle.
 \label{sq:an:shorting-quadratic}
\end{equation}
Indeed a Lagrange multiplier \(\Lambda\) gives
\(AX=\pi^*\Lambda\), hence
\(Y=\pi A^{-1}\pi^*\Lambda\).
The compression is surjective and \(A^{-1}>0\), so this last operator
is positive definite. Substitution yields
\eqref{sq:an:shorting-quadratic}.

Apply this rule to the full Tsallis curvature. Inequality
\eqref{sq:an:compressed-tsallis} reverses under inversion, so the
minimal Tsallis energy at a fixed compressed perturbation \(Y\) is
at least \(\langle Y,A_{\theta,0}Y\rangle\).
Adding the free energy, which depends only on \(Y\), shows
\[
 \inf_{\pi X=Y}
 \langle X,(\pi^*A_{f,0}\pi+A_\theta^{\rm original})X\rangle
 \ge\langle Y,(A_{f,0}+A_{\theta,0})Y\rangle.
 \]
Apply the inverse variational formula to a force supported on \(K\).
Its linear term depends only on \(\pi X\), and the preceding lower
bound on energy gives
\[
 \left\langle B,
   (\pi^*A_{f,0}\pi+A_\theta^{\rm original})^{-1}B\right\rangle
 \le\langle B_0,(A_{f,0}+A_{\theta,0})^{-1}B_0\rangle.
 \]
The actual density trace constraint can only decrease the left-hand
response. This proves \eqref{sq:an:singular-response}.

Finally \(\Tr S_0\le\Tr S=1\), and the original contraction
realization implies \(\eta_C(I_K)\preceq kI_K\).
Thus
\[
 \Tr M_0\le k\Tr S_0\le k,\qquad
 F(S_0,M_0)\le
       \sqrt{\Tr S_0\,\Tr M_0}\le\sqrt k.
 \]
The transport on \(K\) is positive and invertible, so the balanced
formulas apply there. In particular
\(\Tr P=F(S_0,M_0)\) and
\(\Tr R^2\le\Tr M_0\), with the same commutator identity as before.
The covariance cap after factoring \(C\) is unchanged.
These are algebraic identities at $S_0$. The comparison therefore uses
the original center and regularizer strength throughout.
\end{proof}

\begin{remark}\label{sq:an:generality}
The response comparisons are valid for every retained label count $k$.
They use $C\preceq I_k$ and the source budget $\eta_C(I)\preceq kI$,
so they continue to apply after the live count becomes smaller than the
physical dimension. Compression retains the full density response
through quadratic minimization; it does not restart the density problem.
\end{remark}

\section{Bounding the response of the coefficient forces}
\label{sq:te:section}

We now prove the estimate promised in the overview:
\[
 \Gamma|_{\ran C}\preceq Lk^{-1/2}I
 \quad\Longrightarrow\quad
 \Tr(CJ)\le(2+12\sqrt L/\theta)\sqrt k.
\]
An inverse Hessian can be large in directions where the fidelity
curvature is small. The coefficient forces are special, however: they
are built from the same Kraus matrices as the source. Their Gram cap
and their total mass $\sum_a\|F_a\|_{\HS}^2=\Tr P\le\sqrt k$
allow us to bound their combined response without bounding the inverse
uniformly on every matrix direction.

The argument has four steps. We convert the real coefficient Gram cap
into two frame estimates on real matrix space; we bound forces that meet the small
spectrum of $P$ using only the regularizer; we estimate the remaining
forces through two explicit tensor maps; and we insert both estimates
into the full inverse quadratic form. The tensor calculation is needed
only in the third step. The last subsection assembles all constants.

We prove the response estimate using the Kraus frame.
Throughout this section, \(P,R\) are positive definite physical matrices,
\(D_1,\ldots,D_r\) are real symmetric matrices, and
\[
 \Phi(X)=\sum_{a=1}^rD_aXD_a,\qquad \Phi(P)=P.
\]
The matrices \(D_a\) are the balanced Kraus matrices of the preceding
analytic calculation; they need not individually be contractions.
We work on the full real matrix Hilbert space with inner product
$\langle X,Y\rangle=\Tr(X^TY)$. The symmetric density tangent is a
subspace; the tensor maps need the full space because their intermediate
matrices need not be symmetric. We distinguish physical trace $\Tr$
from matrix-Hilbert-space operator trace $\Tr_{\HS}$.
For physical matrices \(U,V\), let
\[
 L_U(X)=UX,\qquad R_V(X)=XV,\qquad
 \operatorname{Ad}_U(X)=UXU^*.
\]
In particular, \(\mathcal J_P=L_P+R_P\).

The analytic identities needed here are
\begin{align}
 F_a&=2^{-1/2}\mathcal J_P^{1/2}D_a,
 &\Sigma&=\sum_a|F_a\rangle\langle F_a|, \label{sq:te:frame-def}\\
 T&=\mathcal J_P^{-1/2}\Phi \mathcal J_P^{1/2},\nonumber\\
 H_\theta^{-1}
   &=\theta^{-1}\mathcal J_P^{-1/2}(L_PR_R+L_RR_P)\mathcal J_P^{-1/2}, \label{sq:te:operators}\\
 \mathcal A&=(\Id-T)^*(\Id-T)+H_\theta,
 &K_0&=TH_\theta^{-1}T^*. \label{sq:te:full-operator}
\end{align}
For the actual covariance, the preceding response calculation gives
\begin{equation}
 \Tr(CJ)\le \Tr_{\HS}(\Sigma\mathcal A^{-1}).
 \label{sq:te:actual-response}
\end{equation}
The global density trace constraint has only been dropped in the
favorable direction in this inequality.  In the application there are
\(k\) retained original contraction labels, and
\begin{equation}
 \Tr P\le\sqrt{k},\qquad \Tr R^2\le k.
 \label{sq:te:physical-budgets}
\end{equation}
Our argument below also applies to a subnormalized faithful density:
only the inequalities in \eqref{sq:te:physical-budgets}, rather than
normalization as an equality, are used.

\subsection{Two frame estimates on real matrix space}

The covariance cap is the Gram bound for the balanced matrices.
In the real setting, the same Gram controls both left- and right-weighted
synthesis maps directly. A second estimate bounds a Hilbert--Schmidt norm
by rearranging four matrix indices; no operator-norm invariance under that
rearrangement is claimed. We use vectorization
$\operatorname{vec}(X)_{(i,j)}=X_{ij}$, with the pair $(i,j)$ in
lexicographic order; this identifies the full real matrix space
isometrically with Euclidean space.

\begin{lemma}[Two bounds on the Kraus frame]
\label{sq:te:two-frame-lemma}
Suppose the real Gram matrix
\[
 G_{ab}=\Tr(PD_aD_b)
\]
satisfies \(G\preceq tI_r\), where \(t>0\).
Define
\[
 \mathcal C_\Phi=\sum_a|D_a\rangle\langle D_a|.
\]
Then, as operators on the full real matrix Hilbert space,
\begin{equation}
 \Sigma\preceq t\Id,\qquad
 \Tr_{\HS}\Sigma=\Tr P,\qquad
 \mathcal C_\Phi\preceq 2t\mathcal J_P^{-1}.
 \label{sq:te:choi-cap}
\end{equation}
The synthesis maps with respective columns
\(\operatorname{vec}(P^{1/2}D_a)\) and
\(\operatorname{vec}(D_aP^{1/2})\) have squared operator norm at most
\(2t\).  Finally,
\begin{equation}
 \|\Phi\operatorname{Ad}_{\sqrt P}\|_{\HS}^{\,2}
 \le 2t\Tr P.
 \label{sq:te:weighted-realignment}
\end{equation}
Here the norm in \eqref{sq:te:weighted-realignment} is the
Hilbert--Schmidt norm of a superoperator.
\end{lemma}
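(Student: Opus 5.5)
The plan is to reduce every assertion to the single Gram matrix $G$ and to the fact that Gram and frame operators share their nonzero spectrum, as recalled at the start of Section~\ref{sq:sec:potential}. First, by the computation in the proof of Lemma~\ref{sq:an:frame}, $\langle F_a,F_b\rangle=\tfrac12\Tr\bigl(D_a(PD_b+D_bP)\bigr)=\Tr(PD_aD_b)=G_{ab}$. So $G$ is exactly the Gram matrix of the family $F_a$ (in particular $G\succeq0$). Hence $\Sigma=\mathsf F\mathsf F^*$ has the same nonzero eigenvalues as $G\preceq tI_r$, which gives $\Sigma\preceq t\Id$. For the trace we compute
\[
 \Tr_{\HS}\Sigma=\sum_a\|F_a\|_{\HS}^2=\sum_a\Tr(PD_a^2)=\Tr\bigl(\Phi(P)\bigr)=\Tr P,
\]
using cyclicity and the fixed point $\Phi(P)=P$.

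For the Choi-type bound we note that $\Sigma=\tfrac12\,\mathcal J_P^{1/2}\mathcal C_\Phi\mathcal J_P^{1/2}$, since $F_a=2^{-1/2}\mathcal J_P^{1/2}D_a$ and $\mathcal J_P$ is self-adjoint. Because $P\succ0$, the multiplier $\mathcal J_P$ is strictly positive on the full real matrix space, with eigenvalues $p_i+p_j$ on matrix units in a $P$-eigenbasis. Congruence of $\Sigma\preceq t\Id$ by $\mathcal J_P^{-1/2}$ then yields $\mathcal C_\Phi\preceq2t\mathcal J_P^{-1}$.

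For the synthesis maps we compute their Gram matrices directly:
\[
 \langle P^{1/2}D_a,P^{1/2}D_b\rangle=\Tr(D_aPD_b)=G_{ba}=G_{ab},\qquad
 \langle D_aP^{1/2},D_bP^{1/2}\rangle=\Tr(P^{1/2}D_aD_bP^{1/2})=G_{ab}.
\]
Here we use real symmetry of the $D_a$ and $P$. The squared operator norm of each synthesis map equals $\|G\|_{\op}\le t\le2t$.

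The step requiring the most care is \eqref{sq:te:weighted-realignment}, where the superoperator must be written correctly in vectorized form. Put $A_a=D_aP^{1/2}$, so $A_a^T=P^{1/2}D_a$ and $\Phi\operatorname{Ad}_{\sqrt P}(X)=\sum_aA_aXA_a^T$. Under $\operatorname{vec}$ the map $X\mapsto A_aXA_a^T$ is the Kronecker product $A_a\otimes A_a$. Then
\[
 \|\Phi\operatorname{Ad}_{\sqrt P}\|_{\HS}^2=\sum_{a,b}\Tr\bigl((A_a\otimes A_a)^T(A_b\otimes A_b)\bigr)
 =\sum_{a,b}\bigl(\Tr(A_a^TA_b)\bigr)^2=\sum_{a,b}G_{ab}^2=\Tr G^2.
\]
Since $0\preceq G\preceq tI$, we get $\Tr G^2\le t\Tr G=t\Tr P\le2t\Tr P$. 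The only delicate point is confirming that $\Tr(A_a^TA_b)$ is again $G_{ab}$, not a differently ordered trace; this holds by the cyclicity identity used for the second synthesis map.
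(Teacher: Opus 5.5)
Your proposal is correct and follows the paper's proof essentially step for step. It uses the same Gram identity and frame--Gram spectral correspondence for $\Sigma\preceq t\Id$ and $\Tr_{\HS}\Sigma=\Tr P$, the same congruence by $\mathcal J_P^{-1/2}$ for the bound on $\mathcal C_\Phi$, and the same direct Gram computations for both synthesis maps. For the weighted estimate, your Kronecker-product identity $\|\Phi\operatorname{Ad}_{\sqrt P}\|_{\HS}^2=\sum_{a,b}G_{ab}^2\le t\Tr P$ computes the same quantity the paper obtains by realigning $\Phi\operatorname{Ad}_{\sqrt P}$ into the frame operator $\mathcal C_K$ of $K_a=D_aP^{1/2}$ and applying $\|\mathcal C_K\|_{\HS}^2\le\|\mathcal C_K\|_{\op}\Tr_{\HS}\mathcal C_K$, since $\|\mathcal C_K\|_{\HS}=\|G\|_{\HS}$; the difference is only bookkeeping.
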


\begin{proof}
Transposition and cyclicity of trace give
$\Tr(PD_aD_b)=\Tr(PD_bD_a)$, so the displayed Gram is real symmetric.
Directly,
\[
 \langle F_a,F_b\rangle
 =\tfrac12\Tr\{D_a(PD_b+D_bP)\}
 =\Tr(PD_aD_b)=G_{ab}.
\]
The synthesis map with columns $F_a$ has domain $\R^r$ and codomain
all real matrices. Its Gram is $G$, so the frame--Gram spectral
correspondence gives $\Sigma\preceq t\Id$ on that full space.
Moreover
\[
 \Tr_{\HS}\Sigma=\sum_a\Tr(PD_a^2)
 =\Tr\!\left(\sum_aD_aPD_a\right)=\Tr P.
\]
The exact congruence
\[
 \Sigma=\tfrac12\mathcal J_P^{1/2}\mathcal C_\Phi\mathcal J_P^{1/2}
\]
then gives $\mathcal C_\Phi\preceq2t\mathcal J_P^{-1}$.
Every equality here holds on full real matrix space: the frames vanish
on skew-symmetric inputs, and $\mathcal J_P$ preserves symmetric and
skew-symmetric subspaces.

For $K_a=D_aP^{1/2}$, its ordinary real Gram is
\[
 \Tr(K_a^TK_b)=\Tr(PD_aD_b)=G_{ab}.
\]
The Gram of $P^{1/2}D_a$ is $G^T=G$. Both synthesis maps therefore
have squared norm at most $t$, which in particular proves the stated
conservative bound $2t$. Right compression by an orthogonal physical
projection preserves this bound because it contracts the
Hilbert--Schmidt norm.

For the weighted estimate \eqref{sq:te:weighted-realignment}, put
\[
 \mathcal C_K=\sum_a|\operatorname{vec}K_a\rangle
                       \langle\operatorname{vec}K_a|.
\]
The preceding Gram estimate gives
\[
 \|\mathcal C_K\|_{\op}\le2t,\qquad
 \Tr_{\HS}\mathcal C_K
   =\sum_a\Tr(K_a^*K_a)
   =\Tr P.
\]
Let \(\Psi(X)=\sum_aK_aXK_a^*
                 =\Phi(P^{1/2}XP^{1/2})\).
Its matrix in the matrix-unit basis has entries
\[
 [\Psi]_{(i,k),(j,l)}
   =\sum_a(K_a)_{ij}(K_a)_{kl}
   =[\mathcal C_K]_{(i,j),(k,l)}.
\]
This is a permutation of four indices.  In particular,
\[
 \|\Psi\|_{\HS}^{\,2}
 =\sum_{i,j,k,l}
       \left|\sum_a(K_a)_{ij}(K_a)_{kl}\right|^2
 =\|\mathcal C_K\|_{\HS}^{\,2}.
\]
Since \(\mathcal C_K\succeq0\),
\[
 \|\mathcal C_K\|_{\HS}^{\,2}
 \le \|\mathcal C_K\|_{\op}\Tr_{\HS}\mathcal C_K
 \le2t\Tr P.
\]
This is exactly \eqref{sq:te:weighted-realignment}.
\end{proof}

\subsection{Forces meeting the small spectrum of the balanced density}

The inverse reference curvature becomes large near small eigenvalues
of $P$. On this part of the spectrum we use the square-root regularizer
directly. Its inverse quadratic form contains a factor of $P$, and
that factor makes the total contribution small. This is a bound on the
forces, so the full Hessian still includes all density directions.

Lemma~\ref{sq:an:balanced} proves contraction on the full real matrix
space, including nonsymmetric inputs, by a real symmetric block
amplification. In particular,
\begin{equation}
 \|\Phi(X)\|_{\HS}\le\|X\|_{\HS}
 \quad\hbox{for every matrix }X.
 \label{sq:te:phi-contraction}
\end{equation}

Fix \(\epsilon>0\), and use the physical spectral projections
\[
 \Pi_h=1_{\{P\ge\epsilon\}},\qquad
 \Pi_l=I-\Pi_h,\qquad P_l=\Pi_lP.
\]
The notation \(\Pi_h\) is unrelated to the discrepancy center \(H\).
Split the real symmetric forces as
\begin{equation}
 D_a=D_a^h+D_a^l,\qquad
 D_a^h=\Pi_hD_a\Pi_h,\qquad
 D_a^l=\Pi_lD_a+\Pi_hD_a\Pi_l.
 \label{sq:te:force-split}
\end{equation}
Although the last two summands need not be real symmetric separately,
their sum \(D_a^l=D_a-\Pi_hD_a\Pi_h\) is real symmetric.

\begin{lemma}[Low-spectrum forcing bound]
\label{sq:te:low-forcing}
For a real symmetric matrix \(D\), its Tsallis-only half-response is
\[
 q_\theta(D)=\theta^{-1}
      \|P^{1/2}DR^{1/2}\|_{\HS}^{\,2}.
\]
If \eqref{sq:te:physical-budgets} holds and \(\epsilon=c/\sqrt{k}\), then
\begin{equation}
 \sum_a q_\theta(D_a^l)\le\frac{4\sqrt{ck}}{\theta}.
 \label{sq:te:low-bound}
\end{equation}
\end{lemma}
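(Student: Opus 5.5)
The plan is to expand each squared norm as a trace, sum over \(a\) so that the Kraus sums collapse into \(\Phi\), and then use only three facts: \(\Phi(P)=P\), the Hilbert--Schmidt contraction \eqref{sq:te:phi-contraction}, and the budgets \eqref{sq:te:physical-budgets}. No frame estimate from Lemma~\ref{sq:te:two-frame-lemma} should be needed.

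\emph{Splitting.} Since \(P\) commutes with \(\Pi_h\) and \(\Pi_l\), write
\[
 P^{1/2}D_a^lR^{1/2}
 =P^{1/2}\Pi_lD_aR^{1/2}+P^{1/2}\Pi_hD_a\Pi_lR^{1/2}.
\]
The elementary inequality \(\|X+Y\|_{\HS}^2\le2\|X\|_{\HS}^2+2\|Y\|_{\HS}^2\) bounds \(\theta\,q_\theta(D_a^l)\) by twice the sum of the two squared norms. It therefore suffices to show that each of the two sums over \(a\) is at most \(\sqrt{ck}\).

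\emph{Scalar input.} The smallness comes from the low-spectrum part of \(P\). Every eigenvalue of \(P_l=\Pi_lP\) is below \(\epsilon\), so
\[
 \|P_l\|_{\HS}^2\le\epsilon\Tr P\le\frac{c}{\sqrt k}\sqrt k=c.
\]
Also \(\|R\|_{\HS}^2=\Tr R^2\le k\).

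\emph{First term.} Expand the square and sum over \(a\):
\[
 \sum_a\|P^{1/2}\Pi_lD_aR^{1/2}\|_{\HS}^2
 =\sum_a\Tr\bigl(P_lD_aRD_a\bigr)=\Tr\bigl(P_l\Phi(R)\bigr).
\]
Cauchy--Schwarz and the contraction \eqref{sq:te:phi-contraction} give
\[
 \Tr\bigl(P_l\Phi(R)\bigr)\le\|P_l\|_{\HS}\|\Phi(R)\|_{\HS}\le\|P_l\|_{\HS}\|R\|_{\HS}\le\sqrt{ck}.
\]

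\emph{Second term.} Set \(R_{ll}=\Pi_lR\Pi_l\succeq0\). Summing over \(a\) gives
\[
 \sum_a\|P^{1/2}\Pi_hD_a\Pi_lR^{1/2}\|_{\HS}^2
 =\Tr\bigl(\Pi_hP\Pi_h\,\Phi(R_{ll})\bigr).
\]
Because \(\Phi(R_{ll})\succeq0\) and \(\Pi_hP\Pi_h\preceq P\), this is at most \(\Tr(P\Phi(R_{ll}))\). Self-adjointness of \(\Phi\) and the fixed point \(\Phi(P)=P\) then give
\[
 \Tr\bigl(P\Phi(R_{ll})\bigr)=\Tr(PR_{ll})=\Tr(P_lR)\le\|P_l\|_{\HS}\|R\|_{\HS}\le\sqrt{ck}.
\]

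Combining the two terms yields \(\sum_aq_\theta(D_a^l)\le 2(\sqrt{ck}+\sqrt{ck})/\theta\), which is \eqref{sq:te:low-bound}.

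The main point to get right is the second term. A naive bound there would pair the whole of \(P\) with \(\Phi(R)\) and lose the factor \(\|P_l\|_{\HS}\). The fixed-point identity is what transfers the projection \(\Pi_l\) from \(R\) back onto \(P\). This requires \(\Pi_l\) to sit on both sides of \(R\) inside \(\Phi\); that holds because the cross piece \(\Pi_hD_a\Pi_l\) contains the factor \(\Pi_l\) adjacent to \(R^{1/2}\).
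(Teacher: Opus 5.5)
Your proof is correct and follows the paper's argument. It uses the same split $D_a^l=\Pi_lD_a+\Pi_hD_a\Pi_l$, the same squared triangle inequality, and the same two Cauchy--Schwarz estimates against $\|P_l\|_{\HS}\le\sqrt c$ and $\|R\|_{\HS}\le\sqrt k$.

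The only differences are cosmetic. In the second term you bound $\Tr(\Pi_hP\Pi_h\,\Phi(\Pi_lR\Pi_l))$ using $\Pi_hP\Pi_h\preceq P$, self-adjointness, and the fixed point; this is the adjoint form of the paper's step $0\preceq\Phi(P-P_l)\preceq P$ paired with $\Pi_lR\Pi_l$. The paper also includes the one-line check, from \eqref{sq:te:operators}, that $q_\theta(D)=\langle F(D),H_\theta^{-1}F(D)\rangle$, which you took as the definition.
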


\begin{proof}
Writing \(F(D)=2^{-1/2}\mathcal J_P^{1/2}D\), equation
\eqref{sq:te:operators} gives, for real symmetric \(D\),
\begin{align*}
 \langle F(D),H_\theta^{-1}F(D)\rangle
 &=\frac1{2\theta}\langle D,(L_PR_R+L_RR_P)D\rangle\\
 &=\frac1\theta\Tr(PDRD)
 =\frac1\theta\|P^{1/2}DR^{1/2}\|_{\HS}^{\,2}.
\end{align*}
In the last identity one uses \(D^*=D\).
Apply the squared triangle inequality to the last decomposition in
\eqref{sq:te:force-split}.  The two sums of squared norms are
\begin{align*}
 \sum_a\|P^{1/2}\Pi_lD_aR^{1/2}\|_{\HS}^{\,2}
 &=\sum_a\Tr(RD_aP_lD_a)
   =\Tr(R\Phi(P_l)),\\
 \sum_a\|P^{1/2}\Pi_hD_a\Pi_lR^{1/2}\|_{\HS}^{\,2}
 &=\sum_a\Tr(\Pi_lR\Pi_lD_a(P-P_l)D_a)\\
 &=\Tr(\Pi_lR\Pi_l\Phi(P-P_l)).
\end{align*}
Positivity and the fixed-point equation give
\[
 0\preceq\Phi(P-P_l)\preceq P.
\]
Since \(\Pi_lR\Pi_l\succeq0\), the second sum is at most
\(\Tr(\Pi_lR\Pi_lP)=\Tr(RP_l)\).  Therefore
\begin{equation}
 \sum_aq_\theta(D_a^l)
 \le\frac2\theta\{\Tr(R\Phi(P_l))+\Tr(RP_l)\}.
 \label{sq:te:low-intermediate}
\end{equation}
The threshold choice and \eqref{sq:te:physical-budgets} imply
\[
 \|P_l\|_{\HS}^2=\Tr P_l^2
 \le\epsilon\Tr P_l
 \le\epsilon\Tr P\le c.
\]
By \eqref{sq:te:phi-contraction},
\(\|\Phi(P_l)\|_{\HS}\le\sqrt c\), while
\(\|R\|_{\HS}\le\sqrt k\).
Hilbert--Schmidt Cauchy--Schwarz bounds each trace in
\eqref{sq:te:low-intermediate} by \(\sqrt{ck}\), proving
\eqref{sq:te:low-bound}.
\end{proof}

\subsection{A tensor estimate on the complementary spectrum}

After compressing to eigenvalues of $P$ at least $\epsilon$, the
weighted frame cap becomes an ordinary frame cap with constant
$t/\epsilon$. One of the tensor maps below uses this cap; the other
uses the weighted synthesis cap from the preceding lemma. Their
composition gives a physical matrix order inequality. An independent
trace estimate is then combined with it to control the Hilbert--Schmidt
norm. Both estimates are needed: the order inequality alone would
leave an unwanted dependence on the largest eigenvalue of $P$.

The next lemma does not assume the scalar budgets
\eqref{sq:te:physical-budgets}.  It uses only the fixed point and the
coefficient Gram cap, and hence separates the tensor estimate from its
discrepancy application.

\begin{lemma}[High-spectrum Kraus tensor bound]
\label{sq:te:tensor-lemma}
Under the assumptions of Lemma~\ref{sq:te:two-frame-lemma}, set
\[
 \beta=\frac{t}{\epsilon},\qquad
 W_h=\sum_a\Phi(D_a^h)P\Phi(D_a^h).
\]
Then
\begin{align}
 W_h&\preceq 2t\beta\,\Phi(\Pi_h)
       \preceq2\beta^2P, \label{sq:te:high-order}\\
 \Tr(PW_h)&\le2t\beta\Tr P, \label{sq:te:high-trace}\\
 \Tr W_h^2&\le4t\beta^3\Tr P. \label{sq:te:high-hs}
\end{align}
\end{lemma}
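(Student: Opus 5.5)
The plan is to prove the order inequality \eqref{sq:te:high-order} by testing $W_h$ against a fixed physical vector. Two factors appear, and each is controlled by one of the frame estimates in Lemma~\ref{sq:te:two-frame-lemma}. The trace and Hilbert--Schmidt bounds then follow from \eqref{sq:te:high-order} and the fixed point.

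\emph{Second inequality in \eqref{sq:te:high-order}.} Since $\Pi_h$ is the spectral projection of $P$ onto eigenvalues at least $\epsilon$, we have $\Pi_h\preceq P/\epsilon$. Positivity of $\Phi$ and $\Phi(P)=P$ then give $\Phi(\Pi_h)\preceq P/\epsilon$. Multiplying by $2t\beta$ gives $2t\beta\,\Phi(\Pi_h)\preceq 2\beta^2P$.

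\emph{First inequality.} Fix a physical vector $u$. Write $X_a=D_a^h=\Pi_hD_a\Pi_h$, and let $G_u$ be the linear map $X\mapsto P^{1/2}\Phi(X)u$ on matrices supported in the $\Pi_h$-block. Then
\[
 u^TW_hu=\sum_a\|G_uX_a\|^2=\Tr_{\HS}\Bigl(G_u\Bigl(\sum_a|X_a\rangle\langle X_a|\Bigr)G_u^*\Bigr).
\]
\emph{Step 1 (frame of the $X_a$).} The frame operator of the $X_a$ equals $\operatorname{Ad}_{\Pi_h}\mathcal C_\Phi\operatorname{Ad}_{\Pi_h}$. By \eqref{sq:te:choi-cap} it is at most $2t\,\operatorname{Ad}_{\Pi_h}\mathcal J_P^{-1}\operatorname{Ad}_{\Pi_h}$. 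On the $\Pi_h$-block, $\mathcal J_P\succeq2\epsilon$, because its matrix-unit multipliers are $p_i+p_j\ge2\epsilon$ and it preserves that block. Hence the frame operator is at most $\beta\,\Id$, and so $u^TW_hu\le\beta\|G_u\|_{\HS}^2$.

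\emph{Step 2 ($\|G_u\|_{\HS}^2$).} Expand over matrix units $e_ie_j^T$ with $i,j$ in the range of $\Pi_h$. Put $v_b=D_bu$ and $K_b=P^{1/2}D_b$. Then
\[
 \|G_u\|_{\HS}^2=\sum_{i,j}\Bigl\|\sum_bK_be_i\,(v_b)_j\Bigr\|^2
 =\sum_j\Bigl\|\sum_b(\Pi_hv_b)_j\,K_b\Pi_h\Bigr\|_{\HS}^2 .
\]
For each $j$, this is the synthesis map with columns $\operatorname{vec}(P^{1/2}D_b)$, right-compressed by $\Pi_h$, applied to the coefficient vector $\bigl((\Pi_hv_b)_j\bigr)_b$. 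By Lemma~\ref{sq:te:two-frame-lemma} its squared norm is at most $2t$. Summing over $j$ gives
\[
 \|G_u\|_{\HS}^2\le2t\sum_b\|\Pi_hD_bu\|^2=2t\,u^T\Phi(\Pi_h)u.
\]
Combining Steps 1 and 2 yields $W_h\preceq2t\beta\,\Phi(\Pi_h)$.

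The step I expect to need the most care is this index bookkeeping. The two frame caps must be applied to different tensor legs: the weighted Choi cap on the $X_a$ side, and the synthesis cap on the $P^{1/2}D_b$ side. One must also check that compressing to $\Pi_h$ is compatible with both.

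\emph{Trace bound \eqref{sq:te:high-trace}.} Pair the first inequality with $P\succeq0$ and use self-adjointness together with the fixed point:
\[
 \Tr(PW_h)\le2t\beta\Tr(P\Phi(\Pi_h))=2t\beta\Tr(\Phi(P)\Pi_h)=2t\beta\Tr(P\Pi_h)\le2t\beta\Tr P.
\]
\emph{Hilbert--Schmidt bound \eqref{sq:te:high-hs}.} Since $0\preceq W_h\preceq2\beta^2P$, we get
\[
 \Tr W_h^2=\Tr\bigl(W_h^{1/2}W_hW_h^{1/2}\bigr)\le2\beta^2\Tr\bigl(W_h^{1/2}PW_h^{1/2}\bigr)=2\beta^2\Tr(PW_h).
\]
Inserting \eqref{sq:te:high-trace} gives $\Tr W_h^2\le4t\beta^3\Tr P$.
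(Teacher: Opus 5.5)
Your proof is correct. For the order inequality \eqref{sq:te:high-order}, you use the same two caps as the paper, applied to the same tensor legs:
\begin{itemize}
\item the compressed Choi cap, which gives the bound $\beta$ on the block where both eigenvalues of $P$ are at least $\epsilon$;
\item the right-compressed synthesis cap $2t$ for the columns $\operatorname{vec}(P^{1/2}D_b\Pi_h)$.
\end{itemize}
The paper composes the maps $\mathsf F\otimes I$ and $I\otimes\mathsf G$ on the vector $(\Pi_hD_bv)_j$. You instead write $u^TW_hu=\Tr_{\HS}(G_u\mathcal C_hG_u^*)\le\beta\|G_u\|_{\HS}^2$ and bound $\|G_u\|_{\HS}^2$ column by column. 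This is only a difference in bookkeeping, and your index computation checks out. Your Hilbert--Schmidt step is also the paper's argument.

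The genuine difference is in the trace bound \eqref{sq:te:high-trace}. The paper treats it as an independent estimate. It writes $\Tr(PW_h)=\Tr_{\HS}(\mathcal C_h\Phi\operatorname{Ad}_P\Phi)$ and invokes the four-index realignment bound \eqref{sq:te:weighted-realignment}. You instead pair the intermediate inequality $W_h\preceq2t\beta\,\Phi(\Pi_h)$ with $P$, then use self-adjointness of $\Phi$ and the fixed point $\Phi(P)=P$.

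Your route has two advantages:
\begin{itemize}
\item It gives the slightly sharper bound $\Tr(PW_h)\le2t\beta\Tr(P\Pi_h)$.
\item It makes \eqref{sq:te:weighted-realignment} unnecessary. That estimate is used nowhere else, including in the rectangular reuse of this lemma.
\end{itemize}
The paper's remark that the order inequality alone would leave a dependence on $\|P\|_{\op}$ applies only to the cruder form $W_h\preceq2\beta^2P$. It does not apply to the intermediate form you exploit. The paper's route does give an exact trace identity that is independent of the tensor factorization, but for the lemma as stated your argument is shorter.
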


\begin{proof}
Let \(\mathcal E(X)=\Pi_hX\Pi_h\).  This orthogonal projection on
matrix space commutes with \(\mathcal J_P\).  On its range,
\(\mathcal J_P\succeq2\epsilon\Id\).  Compression of
\eqref{sq:te:choi-cap} therefore gives
\begin{equation}
 \mathcal C_h:=\sum_a|D_a^h\rangle\langle D_a^h|
   =\mathcal E\mathcal C_\Phi\mathcal E
   \preceq\beta\mathcal E.
 \label{sq:te:high-choi}
\end{equation}

We prove the order estimate by writing both flattenings explicitly.
Choose an eigenbasis of \(P\), so that \(\Pi_h\) is diagonal.
For a fixed real physical vector \(v\), put
\[
 M_b=P^{1/2}D_b\Pi_h,\qquad
 w_{bj}=(\Pi_hD_bv)_j.
\]
The linear map from the coefficient index \(b\) to the pair of
physical indices \((i,l)\) with entries
\[
 \mathsf F_{(i,l),b}=(M_b)_{il}
\]
has squared norm at most \(2t\), by the synthesis assertion of
Lemma~\ref{sq:te:two-frame-lemma} and right compression.
The second map has entries
\[
 \mathsf G_{a,(l,j)}=(D_a^h)_{lj}.
\]
It is the transpose of the real synthesis matrix with columns
$\operatorname{vec}(D_a^h)$. Hence \eqref{sq:te:high-choi} gives
$\|\mathsf G\|_{\op}^2\le\beta$.

View \(w\) as a vector with indices \((b,j)\).  Its squared norm is
\[
 \|w\|^2=\sum_b\|\Pi_hD_bv\|^2
        =v^T\Phi(\Pi_h)v.
\]
Apply \(\mathsf F\) to the \(b\) index, leaving \(j\) unchanged, and
then apply \(\mathsf G\) to the \((l,j)\) indices, leaving \(i\)
unchanged.  Equivalently these maps are
\(\mathsf F\otimes I\) followed by \(I\otimes\mathsf G\).
Their final entries are exactly
\begin{align*}
 s_{ia}
 &=\sum_{b,l,j}
       (P^{1/2}D_b\Pi_h)_{il}(D_a^h)_{lj}(\Pi_hD_bv)_j\\
 &=\left[P^{1/2}\sum_bD_bD_a^hD_bv\right]_i
   =[P^{1/2}\Phi(D_a^h)v]_i.
\end{align*}
Here \(\Pi_hD_a^h\Pi_h=D_a^h\) was used in the second line.
The squared norm of the composed map is at most \(2t\beta\).
Since every \(\Phi(D_a^h)\) is real symmetric, this proves
\[
 v^TW_hv=\sum_a\|P^{1/2}\Phi(D_a^h)v\|^2
 \le2t\beta\,v^T\Phi(\Pi_h)v.
\]
The inequality holds for every physical vector \(v\).
Also \(\epsilon\Pi_h\preceq P\), so positivity gives
\[
 \Phi(\Pi_h)\preceq\epsilon^{-1}\Phi(P)
                  =\epsilon^{-1}P.
\]
Both assertions in \eqref{sq:te:high-order} follow.

The trace bound uses the independent weighted frame estimate.
The exact identity is
\begin{align}
 \Tr(PW_h)
 &=\sum_a\Tr\{P\Phi(D_a^h)P\Phi(D_a^h)\}\notag\\
 &=\sum_a\langle D_a^h,
                  \Phi\operatorname{Ad}_P\Phi(D_a^h)\rangle\notag\\
 &=\Tr_{\HS}
          (\mathcal C_h\Phi\operatorname{Ad}_P\Phi).
 \label{sq:te:trace-identity}
\end{align}
We used self-adjointness of \(\Phi\) in the second line.
The operator \(\Phi\operatorname{Ad}_P\Phi\) is positive
semidefinite, even if \(\Phi\) has some negative eigenvalues as a
Hilbert-space operator, because
\[
 \langle X,\Phi\operatorname{Ad}_P\Phi(X)\rangle
     =\|P^{1/2}\Phi(X)P^{1/2}\|_{\HS}^{\,2}.
\]
It follows from \eqref{sq:te:high-choi} that
\begin{align*}
 \Tr(PW_h)
 &\le\beta\Tr_{\HS}
                 (\mathcal E\Phi\operatorname{Ad}_P\Phi)\\
 &\le\beta\Tr_{\HS}(\Phi\operatorname{Ad}_P\Phi)\\
 &=\beta\|\Phi\operatorname{Ad}_{\sqrt P}\|_{\HS}^{\,2}
 \le2t\beta\Tr P.
\end{align*}
The middle equality uses
\[
 (\Phi\operatorname{Ad}_{\sqrt P})
 (\Phi\operatorname{Ad}_{\sqrt P})^*
       =\Phi\operatorname{Ad}_P\Phi.
\]
This proves \eqref{sq:te:high-trace}.
Finally both \(W_h\) and \(2\beta^2P-W_h\) are positive semidefinite;
their product has nonnegative trace.  Hence
\[
 \Tr W_h^2\le2\beta^2\Tr(PW_h)
            \le4t\beta^3\Tr P,
\]
which proves \eqref{sq:te:high-hs} without assuming that \(P\) and \(W_h\)
commute.
\end{proof}

\subsection{The contribution from the complementary spectrum}

The preceding tensor estimate is useful because the remaining
inverse-Hessian contribution has an exact physical-trace expression.
We now identify it as $\theta^{-1}\Tr(RW_h)$. Once this identity is
written down, the two available Hilbert--Schmidt budgets complete the
estimate by Cauchy--Schwarz.

Let
\[
 F_a^h=2^{-1/2}\mathcal J_P^{1/2}D_a^h,\qquad
 \Sigma_h=\sum_a|F_a^h\rangle\langle F_a^h|.
\]
Let $\mathcal E(X)=\Pi_hX\Pi_h$ on full matrix space.
Because \(\mathcal E\) commutes with \(\mathcal J_P\),
\begin{equation}
 \Sigma_h=\mathcal E\Sigma\mathcal E
          =\tfrac12\mathcal J_P^{1/2}\mathcal C_h\mathcal J_P^{1/2},
 \qquad
 \Tr_{\HS}\Sigma_h\le\Tr P.
 \label{sq:te:high-frame}
\end{equation}
The full operator \(K_0\) in \eqref{sq:te:full-operator} is retained:
only the forcing frame is compressed.

\begin{lemma}[Exact high observed identity]
\label{sq:te:observed-identity}
One has
\begin{equation}
 \Tr_{\HS}(\Sigma_hK_0)=\theta^{-1}\Tr(RW_h).
 \label{sq:te:observed-exact}
\end{equation}
If \eqref{sq:te:physical-budgets} holds, \(t=L/\sqrt k\), and
\(\epsilon=c/\sqrt k\), then
\begin{equation}
 \Tr_{\HS}(\Sigma_hK_0)
      \le\frac{2L^2}{c^{3/2}\theta}\sqrt k.
 \label{sq:te:observed-bound}
\end{equation}
\end{lemma}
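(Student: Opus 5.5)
We prove the identity by unwinding $K_0=TH_\theta^{-1}T^*$ against the compressed frame $\Sigma_h$. Then we bound the physical trace $\Tr(RW_h)$ by Cauchy--Schwarz, using the two Hilbert--Schmidt budgets and the tensor estimate of Lemma~\ref{sq:te:tensor-lemma}.

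\textbf{The identity.} First write
\[
 \Tr_{\HS}(\Sigma_hK_0)=\sum_a\langle T^*F_a^h,H_\theta^{-1}T^*F_a^h\rangle .
\]
From $T=\mathcal J_P^{-1/2}\Phi\mathcal J_P^{1/2}$ and self-adjointness of $\Phi$ and $\mathcal J_P$, we get $T^*=\mathcal J_P^{1/2}\Phi\mathcal J_P^{-1/2}$. Hence
\[
 T^*F_a^h=2^{-1/2}\mathcal J_P^{1/2}\Phi(D_a^h)=F(\Phi(D_a^h)),
\]
in the notation of Lemma~\ref{sq:te:low-forcing}. Each $\Phi(D_a^h)$ is real symmetric, since $D_a^h$ and the Kraus matrices are. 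So the computation at the start of the proof of Lemma~\ref{sq:te:low-forcing} applies and gives
\[
 \langle F(X),H_\theta^{-1}F(X)\rangle=\theta^{-1}\Tr(PXRX),\qquad X=\Phi(D_a^h).
\]
Summing over $a$ and using cyclicity,
\[
 \sum_a\Tr\bigl(P\Phi(D_a^h)R\Phi(D_a^h)\bigr)=\Tr\Bigl(R\sum_a\Phi(D_a^h)P\Phi(D_a^h)\Bigr)=\Tr(RW_h).
\]
This proves \eqref{sq:te:observed-exact}. The only care needed is that the operators act on the full matrix space. The inputs $\Phi(D_a^h)$ are symmetric, so the symmetric-input formula is legitimate.

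\textbf{The bound.} Hilbert--Schmidt Cauchy--Schwarz gives
\[
 \Tr(RW_h)\le\|R\|_{\HS}\,\|W_h\|_{\HS}\le\sqrt k\,\sqrt{\Tr W_h^2},
\]
where the last step uses $\Tr R^2\le k$ from \eqref{sq:te:physical-budgets}. Lemma~\ref{sq:te:tensor-lemma}, inequality \eqref{sq:te:high-hs}, gives $\Tr W_h^2\le4t\beta^3\Tr P$ with $\beta=t/\epsilon$. Substituting $t=L/\sqrt k$, $\epsilon=c/\sqrt k$ gives $\beta=L/c$. With $\Tr P\le\sqrt k$,
\[
 \Tr W_h^2\le 4\frac{L}{\sqrt k}\frac{L^3}{c^3}\sqrt k=\frac{4L^4}{c^3}.
\]
Hence $\|W_h\|_{\HS}\le 2L^2c^{-3/2}$ and
\[
 \Tr_{\HS}(\Sigma_hK_0)\le\frac{2L^2}{c^{3/2}\theta}\sqrt k,
\]
which is \eqref{sq:te:observed-bound}.

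\textbf{Main obstacle.} The substantive work, the tensor estimate, is already done in Lemma~\ref{sq:te:tensor-lemma}. The remaining delicate point is the adjoint computation $T^*F_a^h=F(\Phi(D_a^h))$ on the full real matrix space. One must check that the factors $\mathcal J_P^{\pm1/2}$ cancel in this order, and that the formula for $H_\theta^{-1}$ in \eqref{sq:te:operators} may be evaluated on these symmetric matrices. We would verify this directly in a $P$-eigenbasis, where $\mathcal J_P$ is an entrywise multiplier.
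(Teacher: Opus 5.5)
Your proposal is correct and takes essentially the same route as the paper: you unwind $K_0=TH_\theta^{-1}T^*$ against the compressed frame and use the symmetric-input quadratic form $\theta^{-1}\Tr(PYRY)$ with $Y=\Phi(D_a^h)$. You then finish with Hilbert--Schmidt Cauchy--Schwarz and the bound \eqref{sq:te:high-hs}. The only cosmetic difference is that the paper cancels the $\mathcal J_P^{\pm1/2}$ factors at the operator level, writing $\Sigma_h=\tfrac12\mathcal J_P^{1/2}\mathcal C_h\mathcal J_P^{1/2}$ and using cyclicity of $\Tr_{\HS}$, whereas you cancel them one frame vector at a time via $T^*F_a^h=2^{-1/2}\mathcal J_P^{1/2}\Phi(D_a^h)$.
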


\begin{proof}
Set \(\mathcal B=L_PR_R+L_RR_P\).  Cancellation of the neighboring
Sylvester factors in \eqref{sq:te:operators} gives
\[
 K_0=\theta^{-1}\mathcal J_P^{-1/2}\Phi\mathcal B\Phi \mathcal J_P^{-1/2}.
\]
Using \eqref{sq:te:high-frame}, cyclicity of the Hilbert-space trace, and
self-adjointness of \(\Phi\), we obtain
\begin{align*}
 \Tr_{\HS}(\Sigma_hK_0)
 &=\frac1{2\theta}
         \Tr_{\HS}(\mathcal C_h\Phi\mathcal B\Phi)\\
 &=\frac1{2\theta}
         \sum_a\langle\Phi(D_a^h),\mathcal B\Phi(D_a^h)\rangle.
\end{align*}
For the real symmetric matrix \(Y_a=\Phi(D_a^h)\), the inner product in
the last line equals
\[
 \Tr(Y_aPY_aR)+\Tr(Y_aRY_aP)=2\Tr(RY_aPY_a).
\]
This proves \eqref{sq:te:observed-exact}, including its factor.
Lemma~\ref{sq:te:tensor-lemma} and the scalar budgets imply
\[
 \|W_h\|_{\HS}^{\,2}
 \le4t(t/\epsilon)^3\Tr P
 \le\frac{4L^4}{c^3}.
\]
Hilbert--Schmidt Cauchy--Schwarz, \(\|R\|_{\HS}\le\sqrt k\), and
\eqref{sq:te:observed-exact} now give \eqref{sq:te:observed-bound}.
\end{proof}

\subsection{The cap-to-response estimate}

We can now state the analytic theorem without balanced-coordinate
notation. Its hypotheses concern the covariance derivative at the
current optimizer, and its conclusion concerns the full center
response at that same optimizer. The cutoff parameter in the proof
is a device for estimating the response; the algorithm never computes
it or changes its covariance using that cutoff.

\begin{theorem}[Observed cap-to-response estimate]
\label{sq:te:cap-response-theorem}
Suppose \(k\ge1\) original contraction labels are retained and the
coefficient covariance satisfies \(0\preceq C\preceq I_k\).
Assume its support-preserving covariance derivative satisfies
\[
 \Gamma\big|_{\ran C}\preceq\frac{L}{\sqrt{k}}I
 \qquad(L>0).
\]
Its full optimized center response obeys
\begin{equation}
 \boxed{\displaystyle
 \Tr(CJ)\le
       \left(2+\frac{12\sqrt L}{\theta}\right)\sqrt{k}.}
 \label{sq:br:cap-response}
\end{equation}
More generally, before selecting the cutoff parameter, every \(c>0\)
gives
\begin{equation}
 \Tr(CJ)\le
 \left(2+\frac{8\sqrt c+4L^2c^{-3/2}}{\theta}\right)\sqrt{k}.
 \label{sq:te:bridge-parameter}
\end{equation}
\end{theorem}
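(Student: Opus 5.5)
We reduce the full response to the balanced quadratic form and then split the forces by the spectrum of $P$.

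\emph{Reduction.} If the physical source is singular, Proposition~\ref{sq:an:singular-transfer} replaces the problem by its compression to $K$. The compressed problem has the same $\theta$, the same cap, and $\Tr S_0\le1$; if $K=\{0\}$ then $\Tr(CJ)=0$ and there is nothing to prove. So we may assume a faithful source. Then \eqref{sq:te:actual-response} gives $\Tr(CJ)\le\Tr_{\HS}(\Sigma\mathcal A^{-1})$. The cap becomes $G=\Gamma'=C^{1/2}\Gamma C^{1/2}\preceq tI$ with $t=L/\sqrt k$, and Lemma~\ref{sq:an:frame} supplies the budgets \eqref{sq:te:physical-budgets}. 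Lemma~\ref{sq:an:inverse} then gives
\[
\Tr_{\HS}(\Sigma\mathcal A^{-1})\le\Tr_{\HS}\Sigma+\Tr_{\HS}(\Sigma K_0).
\]
The first term is $\Tr P\le\sqrt k$.

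\emph{Splitting the forces.} Fix $\epsilon=c/\sqrt k$ and split $F_a=F_a^h+F_a^l$ following \eqref{sq:te:force-split}. Since $K_0\succeq0$, the quadratic form $f\mapsto\langle f,K_0f\rangle$ obeys the squared triangle inequality. Summing over $a$ gives
\[
\Tr_{\HS}(\Sigma K_0)\le2\Tr_{\HS}(\Sigma_hK_0)+2\sum_a\langle F_a^l,K_0F_a^l\rangle .
\]
The high term is bounded by Lemma~\ref{sq:te:observed-identity}: $2\Tr_{\HS}(\Sigma_hK_0)\le 4L^2c^{-3/2}\sqrt k/\theta$.

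\emph{The low term.} This is the delicate step, because the estimate must route through $K_0=TH_\theta^{-1}T^*$, whereas Lemma~\ref{sq:te:low-forcing} bounds the bare Tsallis response $\langle F(D),H_\theta^{-1}F(D)\rangle=q_\theta(D)$. My first attempt is to avoid $K_0$ on the low part entirely. Instead of using $\Id+K_0$ for all forces, I apply the inverse variational formula with the split $\langle f^l,x\rangle$ charged wholly to $H_\theta$, which gives
\[
\langle f,\mathcal A^{-1}f\rangle\le 2\langle f^h,\mathcal A^{-1}f^h\rangle+2\langle f^l,H_\theta^{-1}f^l\rangle ,
\]
using $\mathcal A\succeq H_\theta$. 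Then Lemma~\ref{sq:an:inverse} applies only to $f^h$. The low part contributes $2\sum_aq_\theta(D_a^l)\le8\sqrt{ck}/\theta$. The high part contributes at most $2(\Tr P+\Tr_{\HS}(\Sigma_hK_0))$.

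\emph{Assembling the constants.} The result is
\[
\Tr(CJ)\le\Bigl(2+\frac{8\sqrt c+4L^2c^{-3/2}}{\theta}\Bigr)\sqrt k ,
\]
which is \eqref{sq:te:bridge-parameter}. To reach \eqref{sq:br:cap-response}, optimize the cutoff: the choice $c=L$ gives $8\sqrt L+4\sqrt L=12\sqrt L$.

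The main thing to check is the bookkeeping of the factors of $2$. Applying the squared triangle inequality to $\mathcal A^{-1}$ rather than to $K_0$ doubles $\Tr_{\HS}\Sigma_h$, and the constant $2$ in \eqref{sq:br:cap-response} only survives because \eqref{sq:te:high-frame} gives $\Tr_{\HS}\Sigma_h\le\Tr P$ and the low Tsallis sum already carries its factor in Lemma~\ref{sq:te:low-forcing}. If the leading constant still comes out as $4$ rather than $2$, I would use a weighted split, $(1+\delta)$ on the high forces and $(1+\delta^{-1})$ on the low forces, or apply the splitting argument of Lemma~\ref{sq:an:inverse} directly to $f=f^h+f^l$. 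In that direct version, the $(\Id-T)$ part pairs with $f^h$ only, and $H_\theta$ absorbs both $T^*f^h$ and $f^l$.
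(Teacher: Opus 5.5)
Your final argument is correct and is exactly the paper's proof. You apply the squared triangle inequality in the $\mathcal A^{-1}$-norm, bound the high forces with $\mathcal A^{-1}\preceq\Id+K_0$ and the low forces with $\mathcal A^{-1}\preceq H_\theta^{-1}$, and handle the singular case via Proposition~\ref{sq:an:singular-transfer}. The leading constant comes out as exactly $2$, from $2\Tr_{\HS}\Sigma_h\le2\Tr P\le2\sqrt k$, so the opening detour through $\Id+K_0$ on all forces and the weighted-split fallback are unnecessary.
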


\begin{proof}
First suppose the physical source is faithful.
After factoring the real coefficient covariance, the balanced force Gram is
the congruence \(C^{1/2}\Gamma C^{1/2}\).  The assumed cap only needs
to hold on \(\ran C\), since \(C^{1/2}z\in\ran C\).  Because \(C\preceq I\),
\[
 C^{1/2}\Gamma C^{1/2}
 \preceq\frac{L}{\sqrt k}C
 \preceq\frac{L}{\sqrt k}I.
\]
Thus the preceding lemmas apply with \(t=L/\sqrt k\) to the
factored frame.  The scalar budgets
\eqref{sq:te:physical-budgets} also refer to this same covariance.

Lemma~\ref{sq:an:inverse} gives
\begin{equation}\label{sq:te:inverse-comparison}
 \mathcal A^{-1}\preceq\Id+K_0.
\end{equation}
Also \(\mathcal A\succeq H_\theta\), and hence
\(\mathcal A^{-1}\preceq H_\theta^{-1}\).

Split each force as
\[
 F_a=F_a^h+F_a^l,\qquad
 F_a^l=2^{-1/2}\mathcal J_P^{1/2}D_a^l.
\]
The squared triangle inequality in the norm defined by
\(\mathcal A^{-1}\) gives
\begin{align*}
 \Tr(CJ)
 &\le\sum_a\langle F_a,\mathcal A^{-1}F_a\rangle\\
 &\le2\sum_a\langle F_a^h,\mathcal A^{-1}F_a^h\rangle
       +2\sum_a\langle F_a^l,\mathcal A^{-1}F_a^l\rangle\\
 &\le2\Tr_{\HS}\Sigma_h
       +2\Tr_{\HS}(\Sigma_hK_0)
       +2\sum_aq_\theta(D_a^l).
\end{align*}
The last line uses \eqref{sq:te:inverse-comparison} only for the high
forces, and the Tsallis-only comparison only for the low forces.
The force split leaves the inverse quadratic form on the full density
space. Its cross term has been bounded by the squared triangle inequality.
Choose \(\epsilon=c/\sqrt k\).  Equations
\eqref{sq:te:high-frame}, \eqref{sq:te:observed-bound}, and
\eqref{sq:te:low-bound} yield
\[
 \Tr(CJ)
 \le2\sqrt k+\frac{4L^2}{c^{3/2}\theta}\sqrt k
                 +\frac{8\sqrt c}{\theta}\sqrt k,
\]
which is \eqref{sq:te:bridge-parameter}.
Taking \(c=L\) proves \eqref{sq:br:cap-response}.

For a singular physical source, apply the compression comparison
of Proposition~\ref{sq:an:singular-transfer}.
It bounds the actual response on the covariance-weighted forces by the local
unconstrained response on the source support, at a faithful density
\(S_0\) with \(\Tr S_0\le1\).
The local covariance derivative is the same actual retained derivative,
so the Gram cap is unchanged.  That lemma also preserves
\(\Tr P\le\sqrt k\) and \(\Tr R^2\le k\), using the same retained
label count \(k\) and the same \(\theta\).
All inequalities above therefore apply to this local response.
The local estimate applies at this subnormalized $S_0$ by its stated
hypotheses.
If the source support is zero, all covariance-weighted forces \(B'_a\)
vanish and \(\Tr(CJ)=0\). This proves the theorem for every covariance
\(0\preceq C\preceq I_k\).
\end{proof}

With $L=4096$ and $\theta=1$, the coefficient in the bound is $770$.
The spectral split above is an estimating device for the response; the
algorithm enforces only the covariance derivative cap. We now use that
cap to construct a walk whose progress can be charged to covariance
withdrawal.

\section{From local movement to a full signing}
\label{sq:ms:short-section}

We have proved the response estimate needed for a prepared covariance.
We now turn it into a signing. The argument separates pathwise radial
progress from expected spectral cost: the former limits the number of
accepted epochs, while the latter makes acceptance sufficiently likely.
We first choose a legal movement covariance, then specify the finite
algorithm and account for preparation, rounding, acceptance, and resets.

\subsection{A large subspace on which the covariance is at least one half}
For a subspace $U$, write $P_U$ for its orthogonal projection. At a
prepared state, retain the spectral subspace
\[
 U=\ran\mathbf 1_{[1/2,1]}(C).
\]
On $U$, the stored covariance dominates one half of the identity. We
intersect $U$ with the legal movement subspace $V$, and use one half of
the projection onto that intersection. This makes every retained
movement direction equally likely. The trace ledger will show that
$C$ remains sufficiently close to its initial identity for this
intersection to have large dimension.

\begin{lemma}[The scaled projection covariance]\label{sq:fi:short}
Let $0\preceq C\preceq I_\ell$, let $V\subseteq\R^\ell$ be a
subspace, and put
\[
 U=\ran\mathbf 1_{[1/2,1]}(C),\qquad W=U\cap V,\qquad Q=\tfrac12P_W.
\]
Then $0\preceq Q\preceq C$, $\ran Q\subseteq V$, and
\begin{equation}\label{sq:fi:short-trace}
 \Tr Q=\tfrac12\dim W
 \ge\tfrac12\bigl(2\Tr C-\rank C-\operatorname{codim}V\bigr)
 \ge\tfrac12\bigl(2\Tr C-\ell-\operatorname{codim}V\bigr).
\end{equation}
\end{lemma}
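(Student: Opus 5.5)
The plan is to treat the three assertions separately: order, range, and dimension count.

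\emph{Order.} Since $C$ is symmetric with $0\preceq C\preceq I_\ell$, its spectral projection $P_U=\mathbf 1_{[1/2,1]}(C)$ commutes with $C$. Functional calculus then gives $C\succeq \tfrac12 P_U$, because $\lambda\ge\tfrac12\mathbf 1_{[1/2,1]}(\lambda)$ for every $\lambda\in[0,1]$. As $W\subseteq U$, we have $P_W\preceq P_U$. Hence
\[
0\preceq Q=\tfrac12P_W\preceq\tfrac12P_U\preceq C .
\]

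\emph{Range and trace.} The containment $\ran Q=W\subseteq V$ holds by construction. The identity $\Tr Q=\tfrac12\dim W$ holds because $P_W$ is an orthogonal projection.

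\emph{Dimension count.} This is the only step with content. First, the subspace-intersection inequality gives
\[
\dim W\ge\dim U+\dim V-\ell=\dim U-\operatorname{codim}V .
\]
It remains to prove $\dim U\ge 2\Tr C-\rank C$. Write the eigenvalues of $C$ as $\lambda_1,\dots,\lambda_\ell\in[0,1]$ and use the scalar bound
\[
\lambda\le\mathbf 1_{\{\lambda\ge1/2\}}+\tfrac12\mathbf 1_{\{0<\lambda<1/2\}} .
\]
Summing over eigenvalues gives
\[
\Tr C\le\dim U+\tfrac12\bigl(\rank C-\dim U\bigr)=\tfrac12\dim U+\tfrac12\rank C ,
\]
which rearranges to $\dim U\ge 2\Tr C-\rank C$. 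Combining the two inequalities yields the first bound in \eqref{sq:fi:short-trace}, and the second follows from $\rank C\le\ell$.

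No real obstacle is expected. The only care needed is the scalar bookkeeping, in particular that eigenvalues in $(0,1/2)$ are counted by $\rank C$.
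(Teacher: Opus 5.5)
Your proposal is correct and follows essentially the paper's proof. Both use $C\succeq\tfrac12P_U$ together with $P_W\preceq P_U$ for the order bound. For the dimension count, both combine a per-eigenvalue inequality with the codimension bound for intersecting with $V$; your inequality $\lambda\le\mathbf 1_{\{\lambda\ge1/2\}}+\tfrac12\mathbf 1_{\{0<\lambda<1/2\}}$ is just a rearrangement of the paper's $\mathbf 1_{\{\lambda\ge1/2\}}\ge2\lambda-1$ on the positive spectrum.
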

\begin{proof}
The spectral theorem gives $C\succeq\tfrac12P_U$. Since $W\subseteq U$,
$P_W\preceq P_U$, proving the order bound. For each positive eigenvalue
$\lambda\in(0,1]$,
\[
 \mathbf 1_{\{\lambda\ge1/2\}}\ge2\lambda-1.
\]
Summing over the positive spectrum gives
$\dim U\ge2\Tr C-\rank C$. Intersecting with $V$ removes at most
$\operatorname{codim}V$ dimensions. Taking the trace of
$Q=\tfrac12P_W$ proves the remaining assertions.
\end{proof}

The same estimate can be used when every original coordinate is kept in
the coefficient space throughout the proof. The relevant dimension is
the number of labels live at the start of the epoch.

\begin{lemma}[The bound relative to the live labels]\label{walk:live-flat-trace}
Suppose $0\preceq C\preceq I_n$, $\rank C\le\ell$, and
$Ce_i=0$ for every $i$ in the initially frozen set $F_0$. Let $F$ be
the current frozen set and define
\[
 V=\{z\in\R^n:z_i=0\ (i\in F),\ z^Tx=0\},\qquad
 Q=\tfrac12P_{\ran\mathbf 1_{[1/2,1]}(C)\cap V}.
\]
Then
\begin{equation}\label{walk:live-flat-bound}
 \Tr Q\ge\tfrac12\bigl(2\Tr C-\ell-|F\setminus F_0|-1\bigr).
\end{equation}
\end{lemma}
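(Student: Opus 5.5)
The plan is to prove this exactly as Lemma~\ref{sq:fi:short} was proved, working directly in $\R^n$ rather than $\R^\ell$. There are two counts to make: a lower bound on $\dim U$ for $U=\ran\mathbf 1_{[1/2,1]}(C)$, and a bound on how many dimensions are lost when $U$ is intersected with $V$.

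First, the spectral theorem gives $C\succeq\tfrac12P_U$, and since $P_W\preceq P_U$ for $W=U\cap V$, we get $0\preceq Q\preceq C$. For the dimension of $U$, apply the scalar inequality $\mathbf 1_{\{\lambda\ge1/2\}}\ge2\lambda-1$ to each positive eigenvalue $\lambda\in(0,1]$ of $C$. Summing gives $\dim U\ge2\Tr C-\rank C\ge2\Tr C-\ell$, using the hypothesis $\rank C\le\ell$.

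The intersection step is the main point, because $V$ has codimension up to $|F|+1$ in $\R^n$, which could be far larger than $|F\setminus F_0|+1$. I would write $V=V_1\cap V_2$, where
\[
 V_1=\{z:z_i=0\ (i\in F\setminus F_0)\}\cap\{z:z^Tx=0\},\qquad
 V_2=\{z:z_i=0\ (i\in F\cap F_0)\}.
\]
The hypothesis $Ce_i=0$ for $i\in F_0$ shows that $\ran C\perp e_i$ for these $i$. Hence $U\subseteq\ran C\subseteq V_2$, and so $U\cap V=U\cap V_1$. This is where the initially frozen coordinates are shown to cost nothing.

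The subspace $V_1$ is cut out by at most $|F\setminus F_0|+1$ linear conditions. Intersecting $U$ with each hyperplane lowers its dimension by at most one, so
\[
 \dim W\ge\dim U-|F\setminus F_0|-1.
\]
Taking the trace of $Q=\tfrac12P_W$ gives $\Tr Q=\tfrac12\dim W$, which is \eqref{walk:live-flat-bound}. Finally, $\ran Q=W\subseteq V$, as required.
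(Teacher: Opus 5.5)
Your proof is correct and follows essentially the same route as the paper. The eigenvalue count gives $\dim U\ge2\Tr C-\rank C\ge2\Tr C-\ell$, the kernel hypothesis places $U$ inside the subspace where the initially frozen equations already hold, and only the remaining $|F\setminus F_0|+1$ equations can remove dimensions.
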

\begin{proof}
The high spectral subspace of $C$ is perpendicular to every $e_i$ with
$i\in F_0$, because those vectors belong to $\ker C$. Within that
subspace, only the equations for $F\setminus F_0$ and the single
radial equation can remove dimensions. Their total rank is at most
$|F\setminus F_0|+1$. The preceding eigenvalue count, with
$\rank C\le\ell$, gives \eqref{walk:live-flat-bound}.
\end{proof}

\begin{lemma}[Computing the legal projection]\label{sq:lem:scalarshort}
Suppose $V=\{z:v_j^Tz=0,\ 1\le j\le b\}$. Start with
$P_0=\mathbf 1_{[1/2,1]}(C)$. For each $j$, put
$a_j=v_j^TP_{j-1}v_j$. If $a_j=0$, leave the matrix unchanged;
otherwise set
\[
 P_j=P_{j-1}-\frac{(P_{j-1}v_j)(P_{j-1}v_j)^T}{a_j}.
\]
Then $P_b=P_{U\cap V}$ and the movement covariance is $Q=P_b/2$.
After forming $P_0$ by EVD, the constraint scan uses
$O(b(\ell+1)^2)$ scalar operations and divides only by positive numbers.
\end{lemma}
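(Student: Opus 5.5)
The plan is an induction on $j$ with the invariant that $P_j$ is the orthogonal projection onto $U_j:=U\cap\{z:v_i^Tz=0,\ i\le j\}$. The base case $j=0$ holds by definition: $P_0=\mathbf 1_{[1/2,1]}(C)=P_U$. At the end, $U_b=U\cap V$, so $P_b=P_{U\cap V}$. The identity $Q=P_b/2$ is then the definition of the movement covariance in Lemma~\ref{sq:fi:short}.

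For the inductive step, assume $P_{j-1}=P_{U_{j-1}}$ and put $w=P_{j-1}v_j$. Then $a_j=v_j^TP_{j-1}v_j=\|w\|^2$, because $P_{j-1}$ is a symmetric idempotent.

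\emph{Case $a_j=0$.} Then $w=0$, so $v_j\perp U_{j-1}$. The constraint $v_j^Tz=0$ holds automatically on $U_{j-1}$, hence $U_j=U_{j-1}$ and leaving the matrix unchanged is correct.

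\emph{Case $a_j>0$.} We have $w\in U_{j-1}$. Inside $U_{j-1}$, a vector $z$ satisfies $v_j^Tz=0$ exactly when $w^Tz=(P_{j-1}v_j)^Tz=v_j^TP_{j-1}z=0$. Thus $U_j=U_{j-1}\cap w^\perp$. Since $w\in U_{j-1}$, the projection onto $U_j$ is $P_{U_{j-1}}-ww^T/\|w\|^2$, which is the displayed update. To verify this, check that the right side is symmetric and idempotent, using $P_{j-1}w=w$. Its range lies in $U_{j-1}$ and is orthogonal to $w$. Its trace is $\dim U_{j-1}-1=\dim U_j$, and this equality of dimensions pins down the range.

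For the operation count, work in coordinates on $\R^\ell$, or on the epoch's coefficient space of dimension at most $\ell+1$ allowing for the radial vector. Each step then consists of one matrix--vector product $P_{j-1}v_j$, costing $O((\ell+1)^2)$ operations, one inner product $a_j=v_j^Tw$, costing $O(\ell+1)$, and one scaled rank-one update, costing $O((\ell+1)^2)$. Over the $b$ constraints this totals $O(b(\ell+1)^2)$. The only division is by $a_j$, and it is performed only when $a_j>0$, so the procedure divides only by positive numbers.

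I expect no step to be a real obstacle. The only point needing care is in the case $a_j>0$: the constraint $v_j$ must be replaced by its projection $w=P_{j-1}v_j$, and one must argue that the two constraints agree on $U_{j-1}$ rather than on all of space.
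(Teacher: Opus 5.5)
Your proof is correct and follows essentially the same argument as the paper: an induction showing that each $P_j$ is the orthogonal projection onto the current intersection, using $a_j=\|P_{j-1}v_j\|^2$ and the identity $w^Tz=v_j^Tz$ on the current range, followed by a per-step count of one matrix--vector product, one quadratic form, and one rank-one update. One small correction: the radial vector $x_{I_0}$ adds a constraint, not a dimension, so working in coordinates on $\R^\ell$ already suffices for the operation count.
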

\begin{proof}
Inductively $P_{j-1}$ is an orthogonal projection. Thus
$a_j=\|P_{j-1}v_j\|_2^2$. If it vanishes, every vector in the current
range already satisfies the next equation. Otherwise
$w=P_{j-1}v_j$ is a nonzero vector in that range, and
$P_{j-1}-ww^T/\|w\|_2^2$ projects onto the part perpendicular to $w$.
For a vector $z$ in the current range, $w^Tz=v_j^Tz$.
The update therefore intersects that range with $v_j^\perp$.
This proves the formula by induction, including repeated or dependent
constraints. Each update requires one matrix--vector product, one
quadratic form, and one rank-one subtraction.
\end{proof}

\subsection{Analytic parameters and routine contracts}\label{walk:contracts}
The analysis below uses the finite algorithm specified in
Section~\ref{intro:algorithm}. We first state the parameter relationships
and routine guarantees it needs. This separates the progress and
probability argument from the precision chosen to implement it.

For an input with $n,m\ge1$, put $s=n+D+2$, $L=4096$,
$\delta=2^{-13}$, $a_0=(16384n)^{-1}$, and $\epsilon_A=10^{-4}$.
The rounding margin is $a_0$. Let $B\ge2$ be a response constant, meaning
that a covariance derivative cap $L/\sqrt\ell$ implies
$\Tr(CJ)\le B\sqrt\ell$ for each epoch size $\ell\le n$.
For the square profile, $B=770$; the rectangular value is proved in
Section~\ref{rect:parameters}. Define
\[
 \tau=(B+1)^{-1},\quad K=\lceil64(B+1)+129\rceil,
 \quad M=K(n+1),\quad r=M+b+1,
\]
where $b\ge1$ is the requested confidence parameter.
The trial clock has horizon $\tau$, a phase uses at most $K$ accepted
epochs, and every epoch request has at most $r$ trials.

Let $M_C$ bound the second covariance derivative on legal preparation
segments and $M_4$ bound the fourth derivative on movement curves, as
in Theorem~\ref{reg:main}. Choose
\[
 h=\min\{a_0/(2\sqrt n),\ 1/2,\ \sqrt{\tau/2},
                        \sqrt{2^{-40}/M_4}\},
 \qquad
 t_\ell=L/\sqrt\ell,\qquad
 \alpha_\ell=\min\{\delta/2,t_\ell/(2M_C)\}.
\]
An epoch retains its $\ell\ge32$ initial live labels. Its clock $T$ starts
at zero and increases by $h^2$ after each movement. The trial stops when
$T+h^2>\tau$, at least $\ell/64$ labels have frozen, or the total
preparation loss $d$ exceeds $\ell/64$. After each movement it rounds
coordinates within $a_0$ of a face. The potential and supporting-plane
reports used for its terminal test are defined in
\eqref{walk:reported-excess}; the thresholds are $16.5\sqrt\ell$ and
$4.5\sqrt\ell$, together with $d\le\ell/64$.

The following are the contracts later verified for the numerical routines.
\begin{itemize}[leftmargin=*,itemsep=3pt]
\item Preparation holds $H$ fixed and decreases $C$. Its paid decrements
have trace $\alpha_\ell$ and lower the potential by at least
$t_\ell\alpha_\ell/2$. The total trace of discarded small eigenvalues
is at most $\ell/1024$. At a movement, the prepared covariance has its
positive eigenvalues at least $2\delta$ and satisfies the cap.
\item For the prepared covariance, retain its eigenspace $U$ with
eigenvalues at least $1/2$ and intersect it with the frozen and radial
constraints to obtain $W$. The movement covariance is $Q=P_W/2$.
Writing $k=\dim W$, the routine $\proc{Sample}(Q)$ chooses
$\pm\sqrt{k/2}\,u_j$ uniformly from an orthonormal EVD basis of $W$
and the two signs. Its covariance is $Q$ and its squared length is
$\Tr Q=k/2$.
\item Source-free and endpoint value reports have error at most
$\epsilon_A/3$. The reported source-free density at a saved center is
within $\epsilon_A/(3s^2)$ of its maximizing density in Hilbert--Schmidt
norm. These reports define the acceptance test in Algorithm~\ref{alg:epoch}.
\end{itemize}
The proof below shows what these guarantees accomplish. Sections
\ref{query:section}--\ref{impl:main} prove that the stated SDP algorithm
provides them with polynomial work, so they are intermediate interfaces
in the proof rather than assumptions of the final theorem.
\space

\subsection{One small step and its natural time scale}\label{walk:small-step}

The algorithm has two separate obligations. It must make enough progress
toward the cube's vertices, and the states it accepts must have affordable
spectral cost. Orthogonality handles the first pathwise; conditional
moment bounds and the acceptance test handle the second. We establish
these facts once for the two profiles. In this section $E$ denotes
either profile in \eqref{intro:potential}, and $B\ge2$ is its response
constant from the scalar recipe. For the square profile the preceding
matrix estimates prove
\begin{equation}\label{walk:response-contract}
 \Gamma|_{\ran C}\preceq t_\ell I
 \quad\Longrightarrow\quad 0\preceq J,\qquad
 \Tr(CJ)\le B\sqrt\ell.
\end{equation}
For the rectangular profile this implication will be supplied by
Proposition~\ref{rect:ridge-response}. The other shared facts are
\begin{equation}\label{walk:potential-contract}
 \begin{gathered}
 f\text{ is convex},\quad E(H,C)\ge f(H)\ge\lambda_{\max}(H),\\
 0\le E(H,C)-f(H)\le2\sqrt\ell,\qquad
 |E(H,C)-E(H',C)|\le\|H-H'\|_{\op}.
 \end{gathered}
\end{equation}
Covariance monotonicity implies $\Gamma\succeq0$ on its supported
domain. All centers include the contributions of frozen coordinates.

\begin{lemma}[Symmetric movement and matched withdrawal]\label{walk:step}
Let $Q=\tfrac12P_W\preceq C\preceq I_\ell$, write
$k=\dim W$ and $w=\Tr Q=k/2>0$, and suppose $W$ vanishes on frozen
labels and is perpendicular to the current retained vector $x_{I_0}$.
The uniform signed sample $\xi=\proc{Sample}(Q)$
satisfies
\[
 \E\xi=0,\quad \E\xi\xi^T=Q,\quad \|\xi\|_2^2=w,
 \quad \xi\in\ran Q.
\]
For sufficiently small $h$, the step before rounding is
\[
 x'_{I_0}=x_{I_0}+h\xi,\qquad C'=C-h^2Q.
\]
It preserves the covariance range and gives, for every sample,
\begin{equation}\label{walk:radial}
 \|x'\|_2^2-\|x\|_2^2=h^2\Tr Q=\Tr(C-C').
\end{equation}
If $|e_\xi^{(4)}(u)|\le M_4$ for $|u|\le h$, where
$e_\xi(u)=E(H+u\A(\xi),C-u^2Q)$, then
\begin{equation}\label{walk:drift}
 \left|\E E(H+h\A(\xi),C-h^2Q)-E(H,C)
       -h^2\Tr\bigl(Q(J-\Gamma)\bigr)\right|
 \le M_4h^4/24.
\end{equation}
\end{lemma}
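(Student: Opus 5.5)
The proof splits into three parts: the moments of the sample, the pathwise radial identity, and a one-variable Taylor expansion of the scalar curve $e_\xi$.

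\emph{Moments of the sample.} By the contract for $\proc{Sample}(Q)$, $\xi=\pm\sqrt{k/2}\,u_j$ with the sign and $j$ uniform. Symmetry of the sign gives $\E\xi=0$. Orthonormality of the $u_j$ gives
\[
 \E\xi\xi^T=\frac1k\sum_j\frac k2\,u_ju_j^T=\tfrac12P_W=Q .
\]
Also $\|\xi\|_2^2=k/2=w$, and $\xi\in W=\ran Q$.

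\emph{Covariance range and radial identity.} Since $Q\preceq C$ and $\ran Q\subseteq\ran C$, the matrix $C-h^2Q$ is still positive definite on $\ran C$ once $h^2<1$. On $\ran Q$ its eigenvalues are at least $(1-h^2)$ times those of $C$. So the range is preserved and the state stays on the same coefficient face, where Lemma~\ref{sq:an:smooth} gives smoothness. Because $\xi\perp x_{I_0}$ and $\xi$ vanishes on frozen labels, expanding the square gives
\[
 \|x'\|^2-\|x\|^2=2h\langle x_{I_0},\xi\rangle+h^2\|\xi\|^2=h^2w=h^2\Tr Q=\Tr(C-C') .
\]

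\emph{Drift.} Fix a sample $\xi$. Taylor's theorem with Lagrange remainder gives
\[
 e_\xi(h)=e_\xi(0)+he_\xi'(0)+\tfrac{h^2}2e_\xi''(0)+\tfrac{h^3}6e_\xi'''(0)+R,\qquad |R|\le M_4h^4/24 .
\]
By the chain rule on the face, $e_\xi'(0)=D_HE[\A(\xi)]$. Since the covariance path $C-u^2Q$ has zero velocity at $u=0$, the second derivative is
\[
 e_\xi''(0)=D_H^2E[\A(\xi),\A(\xi)]-2D_CE[Q]=2J[\xi,\xi]-2\Tr(Q\Gamma).
\]
The third derivative is a sum of terms odd in $\xi$: the cubic $D_H^3E[\A(\xi)^{\otimes3}]$ and the mixed term $D_HD_CE[\A(\xi),Q]$. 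Here $Q$ does not depend on the sign of $\xi$, and the mixed term comes with a $\xi\mapsto-\xi$-odd coefficient. The first-derivative term is likewise odd.

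Averaging over the symmetric sample therefore cancels the first- and third-order terms. Using $\E J[\xi,\xi]=\Tr(QJ)$, one gets $\E e_\xi(h)-E(H,C)=h^2\Tr(Q(J-\Gamma))+\E R$. Bounding $|\E R|\le M_4h^4/24$ gives \eqref{walk:drift}.

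The main point needing care is the third-order cancellation: every third-derivative term must be an odd function of $\xi$. Writing $e_{-\xi}(u)=e_\xi(-u)$ makes this immediate, because the covariance argument $C-u^2Q$ is even in $u$. Hence odd Taylor coefficients of $e_\xi$ and $e_{-\xi}$ are negatives of each other and cancel in the average over signs.

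The other care point is justifying differentiability along $u\in[-h,h]$. For this I would use the smoothness on the fixed coefficient face from Lemma~\ref{sq:an:smooth}, with $h$ small enough that $C-u^2Q$ stays on that face.
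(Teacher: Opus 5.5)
Your proposal is correct and follows the paper's proof essentially step for step. It uses the same moment computation for the uniform signed basis sample, the radial identity from $\xi\perp x_{I_0}$, range preservation, and a per-sample Taylor expansion through degree three in which $e_{-\xi}(u)=e_\xi(-u)$ (the covariance path being even in $u$) cancels the odd terms, leaving $h^2\Tr(Q(J-\Gamma))$ plus the $M_4h^4/24$ remainder; the only phrasing to tighten is the range argument, where ``eigenvalues on $\ran Q$'' should be replaced by the global order bound $C\succeq C-h^2Q\succeq(1-h^2)C$, which is what the paper uses.
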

\begin{proof}
Let $u_1,\ldots,u_k$ be the orthonormal basis of $W$ used by the
sampler. The two signs cancel the mean, and
\[
 \E\xi\xi^T=\frac1k\sum_{j=1}^k\frac k2u_ju_j^T
 =\tfrac12P_W=Q.
\]
Every sampled vector has length $\sqrt w$ and belongs to $W$.
Orthogonality to
$x_{I_0}$ gives \eqref{walk:radial}. Also
$C\succeq C-h^2Q\succeq(1-h^2)C$, so $h<1$ preserves its
kernel and range. A sufficiently small $h$ keeps both signs of every
sample inside the cube.

At zero the covariance velocity is zero and its acceleration is $-2Q$.
The chain rule and the envelope identities therefore give
\[
 e_\xi'(0)=\Tr(S\A(\xi)),\qquad
 e_\xi''(0)=2J[\xi,\xi]-2\Tr(\Gamma Q),
\]
where $S$ is the current optimizing density. The second derivative
includes the response of that optimizer. Since
$e_{-\xi}(u)=e_\xi(-u)$, Taylor expansion through degree three and
symmetry cancel the linear and cubic terms. Averaging
$J[\xi,\xi]$ gives $\Tr(QJ)$. The fourth-order remainder for each
sample is bounded by $M_4h^4/24$, proving the claim.
Fixed-face smoothness justifies the derivatives even when the physical
source is singular; Appendix~\ref{reg:section} gives the uniform bound.
\end{proof}

The natural clock advances by $h^2$, not $h$. Dividing
\eqref{walk:radial} by this clock increment gives radial rate $\Tr Q$;
dividing \eqref{walk:drift} gives potential drift
$\Tr(Q(J-\Gamma))+O(M_4h^2)$. This identifies the small-step time scale: radial progress is pathwise,
while spectral control comes from averaging the two signs. The rest of
the argument turns those distinct forms of control into a useful
accepted epoch.

\subsection{Covariance loss and pathwise progress}\label{walk:invariant-section}

Preparation consumes covariance without increasing $\|x\|_2^2$.
To show that enough covariance survives for movement, we distinguish
that consumption from the matching withdrawal attached to each sample.
The following preparation contract supplies the distinction.
Section~\ref{impl:main} verifies them for the routines just specified.
Cleanup removes eigenvalues below $2\delta$, and their cumulative trace
is at most $\ell/1024$. Each other removal has the form
$C\leftarrow C-\alpha_\ell vv^T$ for a supported unit vector and
decreases $E$ by at least $t_\ell\alpha_\ell/2$. On return, either
the epoch has excessive covariance loss or the cap in
\eqref{walk:response-contract} holds. Every positive covariance
eigenvalue at a movement is at least $2\delta$. The latter statement
holds because cleanup is also performed on the final pass through
\proc{Prepare}.

For the proof only, separate the recorded loss $d$ into paid loss $a$
and dust $u$. Let $v$ be total movement withdrawal, and let $r_c$ be
the sum of the absolute coordinate changes caused by rounding. These
four nonnegative scalars start at zero. The algorithm need store only
their sum $d=a+u$. Also write $y$ for the sum of centered coefficient
movements, extended by zero outside $I_0$, so the stored matrix $Y$
is $\A(y_{I_0})$.

\begin{lemma}[Invariant and terminal alternatives]\label{walk:invariant}
Every state of Algorithm~\ref{alg:epoch} is in the cube and preserves
frozen signs. It satisfies
\begin{align}
 \Tr C+a+u+v&=\ell,& \ell T/16\le v&\le\ell T,\label{walk:ledger}\\
 u&\le\ell/1024,& r_c&\le a_0\ell,\label{walk:rounding}\\
 \|x\|_2^2&\ge\|x_*\|_2^2+v,&
 \|x-x_*-y\|_1&\le r_c.\label{walk:progress}
\end{align}
Before each movement, $\Tr Q\ge\ell/16>0$. A trial has at most
$\lfloor\tau/h^2\rfloor$ movements. If it terminates with
$d\le\ell/64$, then either at least $\ell/64$ new labels have frozen,
or its elapsed time satisfies $T>\tau-h^2\ge\tau/2$.
\end{lemma}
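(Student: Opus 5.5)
We argue by induction over the operations of a trial (preparation decrements, cleanup, movement, rounding). At the epoch start $x=x_*$, $C=I_\ell$ and $T=a=u=v=r_c=y=0$, so every claim holds trivially. Three facts are preserved throughout: $\rank C\le\ell$, $C$ vanishes on initially frozen labels, and $0\preceq C\preceq I$.

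The ledger is checked operation by operation.
\begin{itemize}
\item A paid decrement $C\gets C-\alpha_\ell vv^T$ lowers $\Tr C$ by $\alpha_\ell$ and raises $a$ by $\alpha_\ell$.
\item Cleanup removes eigenvalues below $2\delta$; it lowers $\Tr C$ by exactly the amount added to $u$. The preparation contract gives $u\le\ell/1024$.
\item A movement lowers $\Tr C$ by $h^2\Tr Q$ and raises $v$ by the same amount. It leaves $x$ in the cube, since $h\le a_0/(2\sqrt n)$ and $\|\xi\|\le\sqrt{n/2}$ give coordinate changes below $a_0$. It preserves frozen signs, since $W$ vanishes on frozen labels.
\item Rounding only moves coordinates within $a_0$ of a face to that face, so it keeps $x$ in the cube and only adds frozen labels.
\end{itemize}
Hence $\Tr C+a+u+v=\ell$. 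For the clock bounds, each movement adds $h^2$ to $T$ and $h^2\Tr Q$ to $v$. Lemma~\ref{sq:fi:short} gives $\Tr Q\le\tfrac12\ell$, so $v\le\ell T$. The lower bound $v\ge\ell T/16$ follows once we show $\Tr Q\ge\ell/16$ before each movement.

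That lower bound is the main point. A movement occurs only if the stopping rules have not fired, so $d=a+u\le\ell/64$ and fewer than $\ell/64$ labels have frozen. We also need a bound on $v$. Since a movement is allowed only when $T+h^2\le\tau$, before the step we have $v\le\ell T\le\ell\tau$, and $\tau=(B+1)^{-1}\le1/3$. This is too weak, so we strengthen the argument.

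Lemma~\ref{walk:live-flat-trace} gives
\[
 2\Tr Q\ge2\Tr C-\ell-|F\setminus F_0|-1=\ell-2(a+u+v)-|F\setminus F_0|-1.
\]
With $a+u\le\ell/64$, $|F\setminus F_0|<\ell/64$, $1\le\ell/32$ (from $\ell\ge32$) and $v\le\ell\tau$, this yields
\[
 2\Tr Q\ge\ell\bigl(1-\tfrac1{32}-\tfrac1{64}-\tfrac1{32}-2\tau\bigr).
\]
Since $B\ge2$ we have $\tau\le1/3$, and then the right side is not obviously at least $\ell/8$. So we instead use the sharper estimate $v\le\tfrac12\ell T$ coming from $\Tr Q\le\ell/2$. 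With $B\ge2$ this gives $v\le\ell/6$, hence $2\Tr Q\ge\ell(1-\tfrac5{64}-\tfrac13)>\ell/8$, i.e.\ $\Tr Q\ge\ell/16$. The constant bookkeeping here, namely whether $B\ge2$ makes $2v\le\ell\tau$ small enough, is the step I expect to need care; if necessary one uses $B=770$ directly, where $\tau<10^{-2}$.

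For \eqref{walk:rounding}, each rounding changes one coordinate by at most $a_0$. Once frozen, a label is never moved again and never rounded again. Only the $\ell$ retained labels can be rounded, so $r_c\le a_0\ell$.

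For \eqref{walk:progress}, \eqref{walk:radial} shows that each movement adds exactly $h^2\Tr Q$ to $\|x\|^2$, which is the increment of $v$. A rounding pushes $|x_i|$ up to $1$ and so cannot decrease $\|x\|^2$; preparation does not change $x$. Also $x-x_*-y$ changes only through rounding, and each rounding changes it by at most the rounding amount in $\ell_1$.

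Finally, each movement advances $T$ by $h^2$, and a trial stops once $T+h^2>\tau$. Hence there are at most $\lfloor\tau/h^2\rfloor$ movements. Suppose a trial terminates with $d\le\ell/64$. The loss rule did not fire, so either at least $\ell/64$ labels froze or the clock rule fired, i.e.\ $T>\tau-h^2$. The choice $h^2\le\tau/2$ gives $\tau-h^2\ge\tau/2$.
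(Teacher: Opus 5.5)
Your proof is correct and follows the paper's argument: you track the trace ledger operation by operation, then apply Lemma~\ref{walk:live-flat-trace} (equivalently Lemma~\ref{sq:fi:short}) with the stopping-rule bounds to get $\Tr Q\ge\ell/16$, and you do the same rounding, progress and clock bookkeeping. Two small corrections. First, your first estimate was not too weak: for $\tau\le1/3$ it already gives $2\Tr Q\ge\ell(59/64-2/3)=49\ell/192>\ell/8$, which is exactly the paper's computation, so the detour through $v\le\ell T/2$ is valid but unnecessary. Second, state explicitly that every live coordinate stays farther than $a_0$ from the faces (this follows from the epoch precondition and the rounding rule), since your cube-feasibility step relies on it.
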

\begin{proof}
Preparation holds the point fixed and subtracts exactly the covariance
trace charged to $a+u$. A movement subtracts $h^2\Tr Q$, charged to
$v$, so the trace identity is preserved. Since $Q\preceq I_\ell$,
$v\le\ell T$. Before a movement the stopping tests give
$a+u\le\ell/64$ and fewer than $\ell/64$ new frozen coordinates.
The subspace $V$ has codimension at most that number plus one.
The full trace identity, together with $v\le\ell T\le\ell\tau$, gives
\[
 \Tr C=\ell-a-u-v\ge\ell(1-1/64-\tau).
\]
Consequently Lemma~\ref{sq:fi:short} gives
\begin{equation}\label{walk:flat-positive}
 \Tr Q\ge\tfrac12\bigl(2\Tr C-\ell-\ell/64-1\bigr)
 \ge\tfrac12\bigl(\ell(1-2\tau-3/64)-1\bigr)
 \ge\ell/16>0,
\end{equation}
because $\tau=1/(B+1)\le1/3$ and $\ell\ge32$. This establishes
normalization of the next draw and, by summation, $v\ge\ell T/16$.
The estimate uses the covariance remaining near its initial trace;
a bound as weak as $\Tr C\ge\ell/8$ would not justify this cutoff.
If the proof retains all original labels in its matrix coordinates,
Lemma~\ref{walk:live-flat-trace} gives the identical estimate: initially
frozen coordinates already lie in the covariance kernel, so only new
frozen labels and the radial equation are charged.

Initially every live coordinate is farther than $a_0$ from a face.
This remains true after each rounding operation. As
$h\|\xi\|_2\le h\sqrt n\le a_0/2$, a movement cannot leave the
cube. Frozen coordinates do not move. Rounding increases absolute
values, costs at most $a_0$ per newly frozen coordinate, and happens
at most once to each label. It therefore proves the rounding bound
and, with \eqref{walk:radial}, both progress statements.
The covariance floor is preserved between preparations since
$C-h^2Q\succeq(1-h^2)C$ and $h^2\le1/2$.

Each movement adds exactly $h^2$ to $T$, whose stopping rule keeps
$T\le\tau$. Preparation is finite by its trace budget, as proved in
Section~\ref{impl:runtime}. The trial thus terminates. If neither
loss nor freezes triggered stopping, $T+h^2>\tau$; the recipe has
$h^2\le\tau/2$, which proves the alternative.
\end{proof}

\subsection{The supporting-plane excess}\label{walk:acceptance-section}

Fix a saved epoch state $x_*$ and put $H_*=H(x_*)$. Let $S_*$ be
the exact maximizing density of $f(H_*)$. This is a mathematical
reference density; the algorithm uses its feasible approximation.
For a trial state define
\begin{equation}\label{walk:excess}
 \Psi=E(H,C)-f(H_*)-\Tr(S_*(H-H_*)),\qquad
 M_* =\Tr(S_*Y).
\end{equation}
The subscript in $M_*$ distinguishes this martingale from the global
integer trial-budget parameter $M$. Convexity of $f$ gives $\Psi\ge0$:
the affine function $f(H_*)+\Tr(S_*(H-H_*))$ is a supporting plane
of $f$, and $E\ge f$. This subtraction is useful even when the
saved center is large, because initially $\Psi\le2\sqrt\ell$.

Let $\widehat f_*$ and $\widehat S_*$ be the reported source-free value
and density at $H_*$, and let $\widehat E$ be the reported endpoint
value. The algorithm computes
\begin{equation}\label{walk:reported-excess}
 \widehat\Psi=\widehat E-\widehat f_*
       -\Tr\bigl(\widehat S_*(H-H_*)\bigr),\qquad
 \widehat M=\Tr(\widehat S_*Y).
\end{equation}
It accepts when $d\le\ell/64$, $\widehat\Psi\le16.5\sqrt\ell$, and
$\widehat M\le4.5\sqrt\ell$. We first prove a stronger good event for
the exact quantities and then check that report errors fit within the
acceptance margins.

\begin{lemma}[Moment bounds for the stopped trial]\label{walk:moments}
At the terminal state of a trial,
\begin{equation}\label{walk:moments-eq}
 \E\left[\Psi+\frac{2048}{\sqrt\ell}a\right]
 \le\frac{151}{50}\sqrt\ell,
 \qquad \E M_*^2\le\ell.
\end{equation}
With probability at least $499/800$, the terminal state satisfies
\begin{equation}\label{walk:good-event}
 d\le\ell/64,\qquad \Psi\le16\sqrt\ell,
 \qquad M_*\le4\sqrt\ell.
\end{equation}
\end{lemma}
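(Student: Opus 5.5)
The plan is a supermartingale argument for $\Psi$ with the paid loss $a$ folded in, an orthogonal-increment argument for $M_*$, and then Markov's inequality for the good event.

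\textbf{Step 1: the potential drift is nonpositive up to small errors.} Consider the process
\[
 \Psi+\frac{2048}{\sqrt\ell}a-\sum(\text{remainders}).
\]

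\emph{Preparation steps.} Here $H$ is fixed, so the supporting-plane part of $\Psi$ does not change. A paid decrement of trace $\alpha_\ell$ lowers $E$ by at least $t_\ell\alpha_\ell/2=2048\alpha_\ell/\sqrt\ell$. It raises $\frac{2048}{\sqrt\ell}a$ by exactly that amount, so the combination does not increase. Dust removal lowers $C$, so it does not increase $E$, by monotonicity in $C$ (Lemma~\ref{sq:fo:barrier}).

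\emph{Movement steps.} The supporting-plane term $\Tr(S_*(H-H_*))$ changes by a mean-zero amount, because $\E\xi=0$. Lemma~\ref{walk:step} then gives a conditional increment of at most
\[
 h^2\Tr\bigl(Q(J-\Gamma)\bigr)+M_4h^4/24 .
\]
I use $\Gamma\succeq0$ to drop $-\Tr(Q\Gamma)$. From $Q\preceq C$ and $J\succeq0$ I get $\Tr(QJ)\le\Tr(CJ)\le B\sqrt\ell$, using the response contract \eqref{walk:response-contract} at the prepared state. Summing over at most $\tau/h^2$ movements gives at most $B\sqrt\ell\,\tau+\tau M_4h^2/24$. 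Since $\tau=1/(B+1)$, this is below $\sqrt\ell$ plus a negligible term, by the choice $h^2\le 2^{-40}/M_4$.

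\emph{Rounding.} Rounding changes $H$ by at most $r_c\le a_0\ell$ in operator norm. Both $E$ and the plane term are $1$-Lipschitz, so $\Psi$ changes by at most $2a_0\ell=\ell/(8192n)\le$ a tiny multiple of $\sqrt\ell$.

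\emph{Assembly.} The initial value is at most $2\sqrt\ell$. Since the trial length is bounded, optional stopping applies. Adding the initial value, the movement bound and the rounding error gives $\E[\Psi+2048a/\sqrt\ell]\le(2+1+\text{small})\sqrt\ell\le\frac{151}{50}\sqrt\ell$. The slack $0.02$ has to absorb the remainder and rounding terms, so I will check the constants carefully.

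\textbf{Step 2: the martingale $M_*$.} $M_*=\Tr(S_*Y)$ is a sum of increments $h\Tr(S_*\A(\xi))$ with zero conditional mean, so these increments are orthogonal. Their conditional second moment is $h^2\,\E\,\Tr(S_*\A(\xi))^2=h^2\,g^TQg$, where $g_i=\Tr(S_*\A_i)$ and $|g_i|\le1$. Hence $g^TQg\le\|g_{I_0}\|^2/2\le\ell/2$. Summing over at most $\tau/h^2$ steps with $\tau\le1/3$ gives $\E M_*^2\le\ell\tau/2\le\ell$.

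\textbf{Step 3: the good event, and the main obstacle.} Markov's inequality gives $\Pr[\Psi>16\sqrt\ell]\le 151/800$. Chebyshev's inequality gives $\Pr[M_*>4\sqrt\ell]\le1/16=50/800$. For $d>\ell/64$, note that $u\le\ell/1024$, so excessive loss forces $a\ge\ell/64-\ell/1024$. This gives
\[
 \frac{2048}{\sqrt\ell}\,a\ge 2048\cdot\frac{15}{1024}\sqrt\ell=30\sqrt\ell .
\]
Markov's inequality on the same expectation therefore bounds this probability by $151/1500$, roughly $80/800$. Summing the three failure probabilities leaves at least $1-281/800\cdots$. I will then tighten the bookkeeping, for instance by using the joint event through $\Psi+2048a/\sqrt\ell$ directly, to reach $499/800$.

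The main obstacle I expect is exactly this constant bookkeeping. A further point is justifying that the drift lemma applies at every movement. That requires the cap to hold after preparation, which the preparation contract guarantees unless the loss stops the trial first.
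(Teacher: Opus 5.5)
Your proposal is correct and follows the paper's proof essentially step by step. Paid cuts offset the $2048a/\sqrt\ell$ term, dust is harmless by monotonicity, and centered movements together with the response contract and $Q\preceq C$ give drift at most $h^2(B\sqrt\ell+2^{-40})$. Rounding is charged pathwise by $2a_0\ell$, $M_*$ has orthogonal increments, and the good event follows from Markov's inequality (which needs $\Psi\ge0$; record this from the supporting-plane convexity) and Chebyshev's inequality.

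No extra tightening is needed at the end. Your three failure bounds already sum to $151/1500+151/800+1/16\approx281.5/800<301/800$, so the success probability exceeds $499/800$. The paper simply rounds $151/1500$ up to $1/8$ to land exactly on $301/800$.
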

\begin{proof}
A paid cut decreases $\Psi$ by at least
$t_\ell\alpha_\ell/2=2048\alpha_\ell/\sqrt\ell$; cleanup cannot
increase it. After preparation, \eqref{walk:response-contract},
$0\preceq Q\preceq C$, and $\Gamma\succeq0$ imply
\[
 \E[\Delta E\mid\text{state before sampling}]
 \le h^2B\sqrt\ell+M_4h^4/24
 \le h^2(B\sqrt\ell+2^{-40}).
\]
The fixed supporting plane has zero conditional mean change on a
centered movement. Rounding of total coordinate size $b_s$ changes
both $E$ and the plane by at most $b_s$, since the atoms are
contractions and $S_*$ is a density. Summing over the finite movement
horizon, with stopped states made absorbing, yields
\[
 \E\left[\Psi+\frac{2048}{\sqrt\ell}a\right]
 \le2\sqrt\ell+(B\sqrt\ell+2^{-40})\tau+2a_0\ell
 \le\frac{151}{50}\sqrt\ell.
\]
Here $B\tau<1$, $2a_0\ell\le1/8192$, and $\ell\ge32$.

For the other moment, set $g_i=\Tr(S_*\A_i)$ on the retained
labels, so $|g_i|\le1$. The increment of $M_*$ is $hg^T\xi$.
It has conditional mean zero and variance
$h^2g^TQg\le h^2\ell$. The cross term in its squared running sum
has zero conditional expectation. Summing these equalities over the
same finite horizon gives $\E M_*^2\le\ell\E T\le\ell$.

If $d>\ell/64$, the dust bound implies $a>15\ell/1024$, hence
$(2048/\sqrt\ell)a>30\sqrt\ell$. Markov's inequality bounds
this failure probability by $151/1500<1/8$. The two other bad events
have probabilities at most $151/800$ and $1/16$, respectively,
using nonnegativity of $\Psi$ and the second-moment bound for $M_*$.
Their union has probability at most
$1/8+151/800+1/16=301/800$, proving \eqref{walk:good-event}.
\end{proof}

The preceding good event is stated using the exact density, but the
algorithm sees only feasible approximate densities. The acceptance
thresholds have room for their error. Conversely, passing those thresholds
gives a deterministic spectral bound for the returned endpoint. This is
why a favorable expectation suffices despite individual movements being
allowed to increase the potential.

\begin{lemma}[The observable test]\label{walk:acceptance}
The algorithm's reports satisfy
\[
 |\widehat\Psi-\Psi|\le\sqrt\ell/100,
 \qquad |\widehat M-M_*|\le\sqrt\ell/100.
\]
Each trial is accepted with probability at least $499/800$.
An accepted trial increases the potential by at most $25\sqrt\ell$.
Resetting its terminal covariance to identity on the remaining live
labels adds at most $2\sqrt\ell$. Thus the accepted trial and reset
together increase the potential by at most $27\sqrt\ell$ relative
to the saved epoch state.
\end{lemma}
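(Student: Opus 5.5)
The proof has three parts: report errors, acceptance probability, and the potential increase of an accepted trial.

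\emph{Report errors.} The contracts in Section~\ref{walk:contracts} give $|\widehat E-E|\le\epsilon_A/3$ and $|\widehat f_*-f(H_*)|\le\epsilon_A/3$. They also give $\|\widehat S_*-S_*\|_{\HS}\le\epsilon_A/(3s^2)$.

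The linear plane term needs $\|H-H_*\|_{\op}$ and $\|Y\|_{\op}$. Every state lies in the cube, so each coordinate changes by at most $2$. Since the atoms are contractions, $\|H-H_*\|_{\op}\le 2n$. Lemma~\ref{walk:invariant} gives $\|y\|_1\le\|x-x_*\|_1+r_c\le 2n+a_0\ell$, so $\|Y\|_{\op}\le 3n$ as well.

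The trace pairing gives $|\Tr((\widehat S_*-S_*)X)|\le\|\widehat S_*-S_*\|_{\HS}\|X\|_{\HS}\le\sqrt D\|X\|_{\op}\,\epsilon_A/(3s^2)$. Because $s=n+D+2$ dominates $\sqrt D\cdot 3n$, this is at most $\epsilon_A/3$. Summing the three contributions, each of $|\widehat\Psi-\Psi|$ and $|\widehat M-M_*|$ is at most $\epsilon_A=10^{-4}$, which is well below $\sqrt\ell/100$.

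\emph{Acceptance probability.} On the good event \eqref{walk:good-event} of Lemma~\ref{walk:moments}, we have $d\le\ell/64$, $\widehat\Psi\le16\sqrt\ell+\sqrt\ell/100<16.5\sqrt\ell$, and $\widehat M\le4\sqrt\ell+\sqrt\ell/100<4.5\sqrt\ell$. So the trial passes the test with probability at least $499/800$.

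\emph{Potential increase of an accepted trial.} Acceptance together with the error bounds gives the exact bounds $\Psi\le16.51\sqrt\ell$ and $M_*\le4.51\sqrt\ell$. By the definition of $\Psi$,
\[
E(H,C)=\Psi+f(H_*)+\Tr(S_*(H-H_*)).
\]
We bound each term on the right against the saved potential.

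First, $f(H_*)\le E_{\rm saved}$ by \eqref{walk:potential-contract}; the saved potential is measured at the reset identity covariance, but the inequality $E\ge f$ holds at any covariance.

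Second, write $H-H_*=Y+\mathcal A(x-x_*-y)$. The $Y$ part contributes $\Tr(S_*Y)=M_*$. The remainder contributes at most $\|x-x_*-y\|_1\le r_c\le a_0\ell$, which is negligible.

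Combining, $E(H,C)-E_{\rm saved}\le 16.51\sqrt\ell+4.51\sqrt\ell+o(1)\le25\sqrt\ell$.

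\emph{Reset.} The reset cost follows from the last assertion of Lemma~\ref{sq:fo:barrier}, or from \eqref{walk:potential-contract} in the rectangular case. The old energy is at least $f(H)$, and the energy at identity covariance on the remaining live labels is at most $f(H)+2\sqrt{\ell'}\le f(H)+2\sqrt\ell$. Adding this to the trial bound gives the total of $27\sqrt\ell$.

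The main obstacle is the report-error step. It requires an operator-norm bound on $H-H_*$ and $Y$ that is polynomial in $n$, together with checking that the $s^2$ scaling of the density tolerance absorbs the $\sqrt D$ loss from passing between Hilbert--Schmidt and operator norms. The remaining steps are direct bookkeeping.
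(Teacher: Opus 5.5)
Your proof is correct and follows the paper's argument essentially step for step. You bound the report errors by pairing $H-H_*$ and $Y$ against the density error in Hilbert--Schmidt norm, use the good event of Lemma~\ref{walk:moments} for acceptance, drop the nonnegative initial excess and split $H-H_*=Y+\A(x-x_*-y)$ for the $25\sqrt\ell$ bound, and use the source-free comparison for the reset. The only slip is wording: the domination you need is $s^2\ge3n\sqrt D$, not $s\ge3n\sqrt D$ (the latter is false in general), and it holds because $(n+D+2)^2\ge4n(D+2)$.
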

\begin{proof}
The report contract in Section~\ref{walk:contracts} gives source-free
and endpoint value errors at most $\epsilon_A/3$, and source-free density
error at most $\epsilon_A/(3s^2)$. These bounds are obtained from primal
objective gaps in Lemma~\ref{query:gap-density} and implemented in
Lemma~\ref{impl:acceptance}.

The invariant bounds $\|H-H_*\|_{\op}\le2\ell$ and
$\|Y\|_{\op}\le2\ell+a_0\ell$. Multiplication by $\sqrt D$
bounds their Hilbert--Schmidt norms by $3s^2$. Pairing either
matrix with the density error therefore costs at most $\epsilon_A$.
Thus the error in $\widehat\Psi$ is at most $5\epsilon_A/3$ and
that in $\widehat M$ at most $\epsilon_A$, both smaller than
$\sqrt\ell/100$. The event \eqref{walk:good-event} passes the
thresholds $16.5\sqrt\ell$ and $4.5\sqrt\ell$.

Conversely an accepted trial has
$\Psi\le16.51\sqrt\ell$ and $M_*\le4.51\sqrt\ell$.
Rounding contributes at most $a_0\ell$ to the supporting plane.
Subtracting the initial excess, which is nonnegative, shows
\[
 E(H,C)-E(H_*,I_\ell)
 \le\Psi+M_*+a_0\ell<25\sqrt\ell.
\]
At the new center, replacing any covariance by identity on
$\ell'\le\ell$ remaining labels costs at most $2\sqrt{\ell'}$:
both versions are at least $f(H)$, and the new one is at most
$f(H)+2\sqrt{\ell'}$. This source-free comparison also accounts for the change of retained
labels between the two covariance matrices.
\end{proof}

\subsection{Phases and the square discrepancy bound}\label{walk:phases}

A single accepted epoch need not freeze a constant fraction of its
labels. If it does not, it must have run for a definite amount of its
natural clock, and then its radial progress is substantial. The squared
norm cannot increase indefinitely inside a cube. Grouping accepted epochs
into phases therefore gives a geometric decrease in the live count.

\begin{lemma}[A phase halves its live count]\label{walk:phase}
If no requested epoch exhausts its retries, a phase starting with $k$
live labels halves that count within $K$ accepted epochs, or rounds
the last fewer than $32$ labels and finishes. Its potential increase
is at most $(27K+64)\sqrt k$.
\end{lemma}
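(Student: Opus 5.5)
We combine the pathwise alternatives of Lemma~\ref{walk:invariant} with the per-epoch cost bound of Lemma~\ref{walk:acceptance}. The live count of the current epoch is $\ell\le k$. Every accepted epoch either freezes at least $\ell/64$ new labels, or runs until $T>\tau/2$. The two cases are treated separately.

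\emph{Freezing case.} While the live count exceeds $k/2$, we have $\ell>k/2$. Each such epoch therefore freezes more than $k/128$ labels. Fewer than $64$ of them can occur before the count drops to $k/2$.

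\emph{Radial case.} Here the ledger \eqref{walk:ledger} gives movement withdrawal $v\ge\ell T/16\ge\ell\tau/32>k\tau/64$. The progress inequality \eqref{walk:progress} then raises $\|x\|_2^2$ by at least that amount. The phase starts with $\|x\|_2^2\ge n-k$, since frozen coordinates have $x_i^2=1$. Inside the cube the live coordinates can add at most $k$ more. Hence there are at most $64/\tau=64(B+1)$ radial epochs before the count halves.

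\emph{Counting.} Rounding only increases $|x_i|$, so $\|x\|_2^2$ never decreases across any accepted epoch. Freezing epochs therefore do not undo radial progress. Altogether at most $64(B+1)+64<K$ accepted epochs occur before the live count is at most $k/2$.

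One bookkeeping point needs care. The epoch rule requires $\ell\ge32$. If the count falls below $32$ during the phase, the remaining fewer than $32$ live coordinates are rounded directly to signs. This costs at most $32$ in operator norm by the Lipschitz property in \eqref{walk:potential-contract}. We absorb that cost into the $64\sqrt k$ term, using $32\le 32\sqrt k$ for $k\ge1$, or $k\ge32$ if the phase started large.

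\emph{Potential bound.} Lemma~\ref{walk:acceptance} gives at most $27\sqrt\ell\le27\sqrt k$ per accepted epoch, including the covariance reset. Over at most $K$ epochs this is $27K\sqrt k$. The final direct rounding, or the initial fresh identity covariance of the phase at cost $2\sqrt k$ by Lemma~\ref{sq:fo:barrier}, fits within the additive $64\sqrt k$. Rejected trials restart from $x_*$ and contribute nothing.

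The main obstacle is the radial count. One must check that $\|x\|_2^2$ really gains $\Omega(k\tau)$ per nonfreezing epoch relative to a quantity bounded by $n$. In particular, rounding and resets must never decrease it, and the squared-norm budget must be charged only to live coordinates. I would make this precise by tracking $\Phi=\|x\|_2^2-(\#\text{frozen})$. This quantity is at most the number of live coordinates times one, is nonnegative, and is nondecreasing. It increases by at least $k\tau/64$ per radial epoch. A freezing step converts a live coordinate with $x_i^2\ge(1-a_0)^2$ to a frozen one, which decreases $\Phi$ by at most $2a_0$. Over the $k$ labels this loss totals at most $2a_0k\le k/8192$, which the counting easily absorbs.
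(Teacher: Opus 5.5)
Your argument is correct and is essentially the paper's proof. You split accepted epochs into freeze-ended and time-ended via Lemma~\ref{walk:invariant}. You bound the freeze-ended ones by the more than $k/128$ labels each fixes. You bound the time-ended ones using three facts: $\|x\|_2^2$ is nondecreasing across accepted epochs, it gains more than $k\tau/64$ in each, and it has room at most $k$ in the phase. Your observation that $\|x\|_2^2$ starts at least $n-k$ and never exceeds $n$ is the same as the paper's restriction to the $k$ labels live at the phase start. The cost is then $27\sqrt k$ per accepted epoch plus the final rounding.

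A few differences from the paper:
\begin{itemize}
\item Your constants are slightly sharper than the paper's: fewer than $64$ rather than $128$ freeze-ended epochs, and a rounding cost of at most $32$ rather than $64$.
\item The extra $2\sqrt k$ for an initial identity covariance is unnecessary, because Lemma~\ref{walk:acceptance} already charges the reset at the end of each accepted epoch. It is harmless.
\item For the final rounding you should also note that dropping the covariance on the last labels cannot raise the potential, by monotonicity in $C$. The Lipschitz bound covers only the center change.
\end{itemize}

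Delete the closing paragraph with $\Phi=\|x\|_2^2-\#\{\text{frozen}\}$; its claims are false. Since $\Phi=\sum_{\text{live}}x_i^2$, freezing coordinate $i$ lowers $\Phi$ by $x_i^2\ge(1-a_0)^2$, not by at most $2a_0$. You conflated this with the increase $1-x_i^2\le 2a_0$ of $\|x\|_2^2$. So $\Phi$ is not nondecreasing. A time-ended epoch that freezes some labels need not raise $\Phi$ by $k\tau/64$ at all. Repairing that bookkeeping would charge the phase's freezes against the radial budget and give a weaker count of time-ended epochs. The quantity that works is $\|x\|_2^2$ itself, exactly as in your radial and counting paragraphs.
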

\begin{proof}
Before the phase completes, every epoch starts with $\ell>k/2$.
Classify its accepted endpoint as freeze-ended if at least $\ell/64$
new labels froze; classify all other accepted endpoints as time-ended.
Each freeze-ended epoch fixes more than $k/128$ labels. Fewer than
$128$ such epochs can occur before completion. By
Lemma~\ref{walk:invariant}, a time-ended epoch has $T\ge\tau/2$
and increases squared norm by at least
$\ell\tau/32>k\tau/64$. Restrict to the $k$ labels that were live
at the start of this phase. All other coordinates are fixed throughout
the phase, so this restriction has exactly the same norm increments.
Its squared norm is nondecreasing and at most $k$, including its labels
that have since frozen.
Fewer than $64/\tau=64(B+1)$ time-ended epochs can therefore
occur in an unfinished phase. The chosen $K$ exceeds the sum of
these bounds.

Each accepted epoch with covariance replacement costs at most
$27\sqrt\ell\le27\sqrt k$. Rounding the last fewer than $32$
coordinates changes the center by norm at most $64$; eliminating
their covariance cannot increase the potential. For $k\ge1$ this
adds at most $64\sqrt k$, proving the claim.
\end{proof}

\begin{proposition}[The common signing estimate]\label{walk:assembly}
Assume \eqref{walk:response-contract} and
\eqref{walk:potential-contract}, and the quantitative routine contracts
used above. Every returned signing satisfies
\begin{equation}\label{walk:global-bound}
 \left\|\sum_i\sigma_i A_i\right\|_{\op}
 \le E(0,I_n)+4(27K+64)\sqrt n.
\end{equation}
The probability of returning failure is at most $2^{-b}$.
\end{proposition}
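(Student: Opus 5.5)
The proof chains phases through Lemma~\ref{walk:phase}, uses the potential as a running upper bound on the discrepancy, and turns the retry budget into a failure bound.

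\emph{Phase structure.} The algorithm starts at $x=0$ with $n$ live labels. Each phase, by Lemma~\ref{walk:phase}, either halves the live count or rounds the last fewer than $32$ labels. The live counts at phase starts therefore satisfy $k_j\le 2^{-j}n$ (up to the final phase). Write $x^{(j)}$ for the accepted state at the start of phase $j$. Its potential is measured with identity covariance on the $k_j$ live labels, and each epoch reset is already paid inside the $27\sqrt\ell$ charge of Lemma~\ref{walk:acceptance}.

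\emph{Summing the potential.} By Lemma~\ref{walk:phase}, phase $j$ raises the potential by at most $(27K+64)\sqrt{k_j}$. Telescoping from the initial value $E(0,I_n)$ gives a final value of at most
\[
 E(0,I_n)+(27K+64)\sum_j\sqrt{2^{-j}n}\le E(0,I_n)+(27K+64)\frac{\sqrt n}{1-2^{-1/2}}.
\]
Since $1/(1-2^{-1/2})<4$, this is below the bound in \eqref{walk:global-bound}. At the end every coordinate is frozen, so $x=\sigma$ is a signing and the covariance is empty. By \eqref{walk:potential-contract}, the final potential is at least $\lambda_{\max}(H(\sigma))=\|\sum_i\sigma_iA_i\|_{\op}$, which proves \eqref{walk:global-bound}.

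One must check that the charge recorded for the last sub-$32$ rounding (at most $64\sqrt k$) and the removal of covariance are compatible with monotonicity. Removing covariance cannot increase $E$, since $E$ is order preserving in $C$. Rounding changes $E$ by at most the operator-norm change of the center, since $E$ is $1$-Lipschitz.

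\emph{Failure probability.} Failure occurs only if some epoch request exhausts its $r$ trials. Each trial is accepted with probability at least $499/800>1/2$ by Lemma~\ref{walk:acceptance}, independently of earlier trials conditional on the saved state. Hence a single request fails with probability at most $2^{-r}$. There are at most $\lceil\log_2 n\rceil+1\le n+1$ phases, each with at most $K$ accepted epochs, so at most $M=K(n+1)$ requests are made. A union bound gives failure probability at most
\[
 M2^{-r}=M2^{-M-b-1}\le2^{-b},
\]
using $M\le2^{M+1}$. The main care needed is here. Requests form an adaptively chosen sequence, so the union bound should be applied to the event that the $i$-th request (if it occurs) fails, conditioning on the history. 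It must also be made explicit that the acceptance probability bound holds conditionally on the saved state, which Lemma~\ref{walk:moments} provides since its bounds are uniform in $x_*$.
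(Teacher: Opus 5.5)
Your proposal is correct and follows the paper's own argument: you sum the per-phase charges $(27K+64)\sqrt{k_j}$ from Lemma~\ref{walk:phase} over geometrically decreasing live counts, bound the discrepancy by the terminal potential through the two-sided encoding, and combine the conditional per-trial rejection bound $301/800<1/2$ with a union bound over at most $M=K(n+1)$ adaptive requests. Using $M\le 2^{M+1}$ and the explicit constant $1/(1-2^{-1/2})<4$ instead of the paper's $M\le 2^M$ and its direct bound $\sum_j\sqrt{n2^{-j}}<4\sqrt n$ changes nothing of substance.
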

\begin{proof}
Every completed phase halves an integer live count, so $n+1$ phases
suffice. At boundaries the potential is evaluated at the full center
and identity on precisely the remaining labels. The change of label
set is accounted for in Lemma~\ref{walk:acceptance}; frozen terms
stay in the center. The phase costs sum to at most
\[
 (27K+64)\sum_{j\ge0}\sqrt{n2^{-j}}
 <4(27K+64)\sqrt n.
\]
At termination the covariance is zero and all coordinates are signs.
The two-sided encoding identity and \eqref{walk:potential-contract} prove
\eqref{walk:global-bound} pathwise.

There are at most $M=K(n+1)$ adaptive epoch requests. At each request,
conditional on the saved state and all preceding history, a trial has
rejection probability at most $301/800<1/2$. Fresh draws on each
retry give failure probability at most $2^{-r}$. The union bound
over the at most $M$ requests applies to these conditional bounds.
With $r=M+b+1$ and $M\le2^M$, total failure is at most
$M2^{-M-b-1}\le2^{-b}$. A positive success probability also proves
the existence statement.
\end{proof}

\begin{corollary}[The square part of Theorem~\ref{main:combined}]
\label{walk:square-conclusion}
For $m\le n$, Algorithm~\ref{alg:full} has the discrepancy guarantee
$10^8\sqrt n$ and failure probability at most $2^{-b}$.
\end{corollary}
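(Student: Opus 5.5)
The plan is to instantiate Proposition~\ref{walk:assembly} with the square profile \eqref{sq:fo:theta} and then evaluate its right-hand side numerically. First I verify the two hypotheses of that proposition. The response contract \eqref{walk:response-contract} follows from Theorem~\ref{sq:te:cap-response-theorem} with $L=4096$ and $\theta=1$. That theorem gives $\Tr(CJ)\le(2+12\cdot64)\sqrt\ell=770\sqrt\ell$ for every epoch size $\ell\le n$. It applies for all $k$ by Remark~\ref{sq:an:generality}, including singular sources via Proposition~\ref{sq:an:singular-transfer}. The positivity $J\succeq0$ comes from convexity of $E(\cdot,C)$ in Lemma~\ref{sq:fo:barrier}. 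The potential contract \eqref{walk:potential-contract} is likewise contained in Lemma~\ref{sq:fo:barrier}: convexity, $E\ge f\ge\lambda_{\max}$, the $2\sqrt\ell$ excess bound, and the $1$-Lipschitz property. So $B=770$ is an admissible response constant.

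Second, I bound the constants. With $B=770$ we have $K=\lceil64\cdot771+129\rceil=49473$. Lemma~\ref{sq:fo:barrier} gives $E(0,I_n)\le5\sqrt n$ in square mode, which uses $D=2m\le2n$. Then \eqref{walk:global-bound} yields
\[
\Bigl\|\sum_i\sigma_iA_i\Bigr\|\le\bigl(5+4(27\cdot49473+64)\bigr)\sqrt n,
\]
and the coefficient here is roughly $5.35\times10^6<10^8$. The failure probability $2^{-b}$ is stated directly in Proposition~\ref{walk:assembly}.

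The remaining issue, and the only real obstacle, is that Proposition~\ref{walk:assembly} also assumes the quantitative routine contracts of Section~\ref{walk:contracts}. These are the preparation guarantees, the sampler, the report accuracies, and the derivative bounds $M_C,M_4$ behind the choice of $h$. For this corollary I would cite their verification in Sections~\ref{query:section}--\ref{impl:main} and Appendix~\ref{reg:section} (Theorem~\ref{reg:main}) for the square profile. I would also check that Algorithm~\ref{alg:full} is exactly the epoch and phase scheme analyzed there, with the parameters $L,\delta,a_0,\epsilon_A,\tau,K,r$ as fixed above. The degenerate cases $n=0$ or $m=0$ are trivial, as remarked after Theorem~\ref{thm:main}.
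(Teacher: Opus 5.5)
Your proposal is correct and matches the paper's proof. Both arguments instantiate Proposition~\ref{walk:assembly} with $B=770$ (Theorem~\ref{sq:te:cap-response-theorem} at $L=4096$, $\theta=1$) and $K=49{,}473$, bound $E(0,I_n)\le5\sqrt n$ using $D=2m\le2n$, substitute into \eqref{walk:global-bound}, and defer the numerical contracts to Sections~\ref{query:section}--\ref{impl:main}. Your explicit checks of the response and potential contracts, including $J\succeq0$ from convexity, just spell out what the paper leaves implicit.
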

\begin{proof}
The square-root response estimate gives $B=770$ and $K=49{,}473$.
The initial potential is at most
$2\sqrt n+2\sqrt D\le5\sqrt n$, since $D=2m\le2n$.
Substitution in \eqref{walk:global-bound} gives
$[5+4(27\cdot49{,}473+64)]\sqrt n<10^8\sqrt n$.
Sections~\ref{query:section}--\ref{impl:main} establish the numerical
contracts used here and bound the work on every execution.
\end{proof}

\section{The rectangular extension: changing the regularizer}\label{rect:extension}

The square-root potential pays $2\sqrt D$ at zero center, which is
appropriate when $D$ is comparable to $n$. For large $D/n$, a suitable
generalized Tsallis power regularizer replaces this by $\theta D^q/(1-q)$.
The cost is an increased
response bound, proportional to $k^{1-q}/(\theta q)$ on $k$ labels.
The aim of this part is to prove that tradeoff and choose $q$ and
$\theta$ to balance its two sides.

The square proof has separated the walk from its analytic input. The walk
needs a bounded covariance derivative to control the sum of center
curvatures over the remaining covariance. Its geometric constraints,
matched covariance withdrawal, and progress identity do not depend on
the square-root exponent. We now change that exponent and prove the
corresponding response estimate. The new ingredient is a trace
interpolation that keeps the transport and density coupled.

Throughout this section $n$ is the number of original matrices, $m\ge n$
is their dimension, and $D=2m$. The lifted atoms $\A_i$, center $H$, and
covariance map $\eta_C$ retain their earlier definitions. A local count
$k$ means the number of original labels in an epoch, including labels
that freeze during that epoch; it is not the rank of $C$. A local covariance
satisfies $0\preceq C\preceq I_k$. All matrices
and coefficient covariances in the real calculation have real entries.
The full space $\R^{D\times D}$ has inner product
$\langle X,Y\rangle=\Tr(X^TY)$; the symmetric matrices form a subspace.
Left-right multiplication estimates are made on the full space, and
inverse responses are restricted to symmetric perturbations when needed.

Fix $0<q\le1/2$ and $\theta>0$. Define
\[
 r_{q,\theta}(S)=\frac{\theta}{1-q}\Tr S^{1-q},\qquad
 E_{q,\theta}(H,C)=\max_{S\succeq0,\ \Tr S=1}
 \{\Tr(HS)+2F(S,\eta_C(S))+r_{q,\theta}(S)\},
\]
where $F$ is the unsquared fidelity defined in
Lemma~\ref{sq:fo:fidelity}. Write $E=E_{q,\theta}$ and $f(H)=E(H,0)$
within the base-power analysis below. At $q=1/2$ and $\theta=1$ this
is exactly the square potential. The small square-root term needed
for the rectangular implementation will be added after proving the
pointwise response estimate.

For a coefficient vector $z$, put $\A(z)=\sum_i z_i\A_i$. At a fixed state,
let $J$ and $\Gamma$ denote the real symmetric coefficient forms
\[
 J[z,z]=\tfrac12D_H^2E(H,C)[\A(z),\A(z)],\qquad
 D_CE(H,C)[V]=\Tr(\Gamma V).
\]
Covariance perturbations $V$ are restricted to the current range of $C$.
The physical source support can be smaller than the full physical space;
we will keep the density on the full space throughout.

\subsection{The power potential and its optimizer}

We first justify the matrix-power derivatives used in this part. The
calculation below differentiates a resolvent integral, so it also covers
repeated eigenvalues without choosing differentiable eigenvectors.

\begin{lemma}[Differentiating powers on the positive cone]
\label{rect:power-differentiation}
For $-1<a<1$, $a\ne0$, the map $S\mapsto S^a$ is smooth on the
positive definite cone. In an eigenbasis $S=\operatorname{diag}(s_i)$,
its derivative in a symmetric direction $X$ is
\[
 (D(S^a)[X])_{ij}=d_a(s_i,s_j)X_{ij},\qquad
 d_a(u,v)=
 \begin{cases}(u^a-v^a)/(u-v),&u\ne v,\\
 au^{a-1},&u=v.
 \end{cases}
\]
In particular, $D(\Tr S^a)[X]=a\Tr(S^{a-1}X)$.
\end{lemma}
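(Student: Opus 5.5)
The plan is to write $S^a$ as a resolvent integral, following the square-root argument in Lemma~\ref{sq:fo:sqrt}. For $0<a<1$ and $u>0$, scalar integration gives
\[
 u^a=\frac{\sin(\pi a)}{\pi}\int_0^\infty\frac{u}{u+t}\,t^{a-1}\,dt,
\]
and for $-1<a<0$ put $b=-a\in(0,1)$ and use
\[
 u^{-b}=\frac{\sin(\pi b)}{\pi}\int_0^\infty\frac{1}{u+t}\,t^{-b}\,dt.
\]
By the spectral theorem these identities hold for positive definite $S$, with $u/(u+t)$ replaced by $I-t(S+tI)^{-1}$ and $1/(u+t)$ by $(S+tI)^{-1}$. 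The integrals converge in operator norm, locally uniformly on the open cone, since the integrands are $O(t^{a-1})$ near $0$ and $O(t^{a-2})$ (respectively $O(t^{-b-1})$) at infinity, uniformly for $S$ in a compact subset of the cone.

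Next I will differentiate under the integral. The map $S\mapsto(S+tI)^{-1}$ is smooth on the cone, and all its derivatives are explicit finite products of resolvents, e.g.\ $D(S+tI)^{-1}[X]=-(S+tI)^{-1}X(S+tI)^{-1}$. For $S$ in a compact subset with $\lambda_{\min}(S)\ge\mu>0$, each $j$-th derivative is bounded in norm by $C_j(\mu+t)^{-j-1}$. After multiplication by the weight, these bounds are integrable on $(0,\infty)$ for every $j\ge1$. Dominated differentiation therefore gives smoothness of every order, and no differentiable eigenvectors are needed, which handles repeated eigenvalues.

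For the entrywise formula, fix an eigenbasis of the base point $S$. There $(S+tI)^{-1}X(S+tI)^{-1}$ has entries $X_{ij}/((s_i+t)(s_j+t))$, so the derivative is $X_{ij}$ times the scalar integral
\[
 c\int_0^\infty\frac{t\cdot t^{a-1}}{(s_i+t)(s_j+t)}\,dt
\]
in the case $0<a<1$, and the analogous integral with a sign change and weight $t^{-b}$ in the negative case. The partial-fraction identity
\[
 \frac{1}{(s_i+t)(s_j+t)}=\frac{1}{s_j-s_i}\Bigl(\frac1{s_i+t}-\frac1{s_j+t}\Bigr)
\]
for $s_i\ne s_j$ recovers the scalar representations of $u^a$, which gives $(s_i^a-s_j^a)/(s_i-s_j)$. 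When $s_i=s_j$, the integral is the $u$-derivative of the scalar representation, so it equals $as_i^{a-1}$. This case can also be obtained by continuity of $d_a$.

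The trace formula then follows by taking the diagonal: $\Tr D(S^a)[X]=\sum_i as_i^{a-1}X_{ii}=a\Tr(S^{a-1}X)$. The main obstacle is only bookkeeping in the domination estimates near $t=0$ and $t=\infty$, especially with $a<0$. I expect these to be routine once $\lambda_{\min}$ is bounded below locally.
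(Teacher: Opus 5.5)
Your proposal is correct and follows essentially the same route as the paper: resolvent-integral representations of $S^a$ and $S^{-b}$, differentiation under the integral using resolvent bounds that are uniform on a neighborhood with a positive spectral floor, and an eigenbasis computation in which the divided difference comes from subtracting the two scalar integrals, with the equal-eigenvalue case obtained by scalar differentiation. The only slip is bookkeeping: for $a=-b<0$ the integrand near $t=0$ is $O(t^{-b})$, not $O(t^{a-1})=O(t^{-b-1})$, and the former bound is the integrable one you need. (The paper normalizes with a constant $c_b$ defined by an integral rather than $\sin(\pi a)/\pi$, which changes nothing.)
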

\begin{proof}
For $0<b<1$, define the positive scalar
\[
 c_b=\left(\int_0^\infty\frac{u^{b-1}}{1+u}\,du\right)^{-1}.
\]
The integral is finite at both endpoints. Substitution $t=su$ for a
scalar $s>0$, followed by spectral calculus, gives
\[
 S^b=c_b\int_0^\infty t^{b-1}S(S+tI)^{-1}\,dt,
 \qquad
 S^{-b}=c_{1-b}\int_0^\infty t^{-b}(S+tI)^{-1}\,dt.
\]
On a neighborhood with a common positive spectral lower bound, these
integrals and all their derivatives converge uniformly. For example,
the derivative of the matrix factor $S(S+tI)^{-1}$ in the first integrand is
$t(S+tI)^{-1}X(S+tI)^{-1}$, while the derivative of the second
resolvent is $-(S+tI)^{-1}X(S+tI)^{-1}$. Higher derivatives insert
additional resolvents, which give integrable bounds at infinity and
remain uniformly bounded near zero. Differentiation under the integral
therefore proves smoothness. In an eigenbasis the resolvent products
multiply $X_{ij}$ by $1/((s_i+t)(s_j+t))$. Subtracting the two scalar
integral formulas and dividing by $s_i-s_j$ gives the claimed divided
difference; if the eigenvalues agree, scalar differentiation gives its
continuous value. Taking the trace proves the final assertion.
\end{proof}

\begin{lemma}[Potential facts and fixed-face regularity]\label{rect:basic}
The optimizing density is unique and positive definite. The functions
$E$ and $f$ are convex and $1$-Lipschitz in $H$, and $E$ is monotone
in $C$. Moreover,
\begin{gather*}
 0\le E(H,C)-f(H)\le2\sqrt k,\qquad E(H,C)\ge\lambda_{\max}(H),\\
 E(0,I_n)\le2\sqrt n+\frac\theta{1-q}D^q.
\end{gather*}
On a fixed coefficient support, the potential and its optimizer are
smooth while the coefficient covariance is positive definite on that support.
If $v=\|\eta_C(I)\|$, the density satisfies
\begin{equation}\label{rect:density-floor}
 S\succeq
 \left(\frac{\theta}{2\|H\|+2\sqrt v+\theta D^q}\right)^{1/q}I.
\end{equation}

\end{lemma}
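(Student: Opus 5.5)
The plan is to transcribe the square-root arguments of Lemmas~\ref{sq:fo:optimizer}, \ref{sq:fo:barrier} and \ref{sq:an:smooth}, replacing the regularizer $2\theta\Tr\sqrt S$ by $r_{q,\theta}(S)=\frac{\theta}{1-q}\Tr S^{1-q}$, with Lemma~\ref{rect:power-differentiation} supplying the needed derivatives.

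\textbf{Strict concavity and existence.} First show that $S\mapsto\Tr S^{1-q}$ is strictly concave on the density set.
\begin{itemize}
\item Lemma~\ref{rect:power-differentiation} with $a=-q$ computes the negative Hessian of $r_{q,\theta}$ on the positive definite cone. In an eigenbasis it is the multiplier
\[
 \theta\,\frac{s_i^{-q}-s_j^{-q}}{s_j-s_i}
\]
on the entry $X_{ij}$, with diagonal value $\theta q s_i^{-q-1}$.
\item Every such multiplier is strictly positive, because $u\mapsto u^{-q}$ is strictly decreasing.
\item Strict concavity on the whole closed set then follows by restricting to $\ran(S+T)$, exactly as at the end of Lemma~\ref{sq:fo:sqrt}.
\end{itemize}
Fidelity is jointly concave, so the objective is strictly concave and continuous on a compact set. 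This gives existence and uniqueness of the maximizing density.

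\textbf{Faithfulness.} If the maximizer $S$ had a zero eigenvalue, perturb to $S_\epsilon=(1-\epsilon)S+\epsilon I/D$.
\begin{itemize}
\item Each zero eigenvalue contributes $(\epsilon/D)^{1-q}$ to the regularizer. Since $1-q<1$, this gain dominates.
\item The linear term and the fidelity term each lose at most $O(\epsilon)$, the latter by joint concavity.
\end{itemize}
The net gain contradicts optimality, so $S\succ0$. Continuity of the optimizer follows from the same compactness argument as in Lemma~\ref{sq:fo:optimizer}.

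\textbf{Convexity, Lipschitz bounds, monotonicity and the elementary estimates.} These are copied from Lemma~\ref{sq:fo:barrier}.
\begin{itemize}
\item $E$ and $f$ are suprema of affine functions of $H$ with density coefficients. This gives convexity and the $1$-Lipschitz bound in operator norm.
\item Monotonicity in $C$ follows from factoring $C_2-C_1$ together with monotonicity of fidelity.
\item The fidelity bound \eqref{sq:fo:fidelitytrace} and $\Tr\eta_C(S)\le k$ give $0\le E-f\le 2\sqrt k$.
\item Nonnegativity of the regularizer, tested on a top eigenvector, gives $E\ge\lambda_{\max}(H)$.
\item At zero center, concavity (Jensen) gives $\Tr S^{1-q}\le D^q$ on densities. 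Hence $E(0,I_n)\le 2\sqrt n+\theta D^q/(1-q)$.
\end{itemize}

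\textbf{Fixed-face smoothness.} As in Lemma~\ref{sq:an:smooth}, the physical support $K(V)$ of $\eta_C(S)$ is fixed on a coefficient face, and the fidelity term depends only on the compression to $K(V)$. On that support it is smooth, because both arguments are faithful and the minimizing transport has an explicit square-root formula. The regularizer is smooth by Lemma~\ref{rect:power-differentiation}. The negative Hessian of the objective is strictly positive definite on the trace-zero tangent, since the regularizer part is strictly positive and the free part is positive semidefinite. The implicit function theorem (Theorem~\ref{reg:ift}) then gives smooth dependence on $(H,C)$.

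\textbf{The density floor \eqref{rect:density-floor}.} This is the step I expect to need the most care, namely getting the constant exactly. Write $G=\nabla_S[2F(S,\eta_C(S))]$; then $G\succeq0$ and, by homogeneity, $\Tr(SG)=2F$.
\begin{enumerate}
\item The stationarity equation is $H+G+\theta S^{-q}=\lambda I$.
\item Pairing it with $S$ gives $\lambda=\Tr(HS)+2F+\theta\Tr S^{1-q}$.
\item Using $\Tr S^{1-q}\le D^q$ and $F\le\sqrt v$, this yields $\lambda\le\|H\|+2\sqrt v+\theta D^q$.
\item Since $G\succeq0$, we get $\theta S^{-q}\preceq(\lambda+\|H\|)I$.
\item Raising to the power $1/q$ (the scalar map is monotone on commuting spectra of $S$) gives
\[
 S\succeq\left(\frac{\theta}{2\|H\|+2\sqrt v+\theta D^q}\right)^{1/q}I .
\]
\end{enumerate}
This is exactly \eqref{rect:density-floor}.
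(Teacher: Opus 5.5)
Your proposal is correct and follows the paper's proof essentially step for step. The steps are: strict concavity from the positive divided-difference Hessian given by Lemma~\ref{rect:power-differentiation}; faithfulness by mixing with $I/D$, where an order $\epsilon^{1-q}$ gain beats $O(\epsilon)$ losses; the elementary bounds transcribed from the square case, with Jensen giving $\Tr S^{1-q}\le D^q$; smoothness by the implicit function theorem on the trace-zero tangent; and the density floor from pairing $H+G+\theta S^{-q}=\lambda I$ with $S$ and then dropping $G\succeq0$. One detail is worth adding: in the mixing step, the regularizer terms at the positive eigenvalues of $S$ also change by only $O(\epsilon)$ (for instance by scalar concavity of $x\mapsto x^{1-q}$), just as in Lemma~\ref{sq:fo:optimizer}.
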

\begin{proof}
For $a\in(0,1)$, write $s_i$ for the eigenvalues of a faithful $S$.
Lemma~\ref{rect:power-differentiation} shows that
$-D^2(\Tr S^a)$ is positive definite on the faithful cone: its diagonal
matrix-space coefficients are $a(1-a)s_i^{a-2}$, and its off-diagonal
coefficients are the positive divided differences of $-a s^{a-1}$.
This also proves strict concavity on every nonconstant segment after
restriction to the sum of its endpoint supports.
The objective in the definition of $E_{q,\theta}$ is thus strictly concave.
At a singular density, mixing with $I/D$ gains order $u^{1-q}$ in the
regularizer. The linear loss is $O(u)$ and, by fidelity concavity,
so is any free-term loss. Thus a maximizer cannot be singular.
Existence follows from compactness, and the unique maximizer is continuous
by taking subsequences in this same compact density set.

Convexity and the Lipschitz bound follow by maximizing affine functions
whose linear coefficients are densities. The source order and fidelity
monotonicity give monotonicity in $C$. The trace bound in
Lemma~\ref{sq:fo:fidelity} gives the $2\sqrt k$ excess, while testing a
pure top eigenstate gives $E\ge\lambda_{\max}$. Jensen's scalar
inequality gives $\Tr S^{1-q}\le D^q$, proving the initial bound.

On a fixed coefficient support $V$, the physical source support is
$K(V)=\operatorname{span}_{z\in V}\ran(\sum_i z_i\A_i)$.
Symmetry of the factors and the faithfulness of $S$ show that this
support is independent of the particular positive coefficient covariance on $V$.
The fidelity compression identity makes the free term smooth on this
fixed physical support. Write its density gradient as $G$. Monotonicity
gives $G\succeq0$, and degree-one homogeneity gives
$\Tr(SG)=2F(S,\eta_C(S))$. Stationarity, with scalar trace multiplier $\lambda$, is
\[
 H+G+\theta S^{-q}=\lambda I.
\]
Pairing with $S$ shows
$\lambda\le\|H\|+2\sqrt v+\theta D^q$. Dropping $G\succeq0$
then proves \eqref{rect:density-floor}.
The strictly positive regularizer curvature makes the density stationarity
equations invertible on the trace-zero tangent. The implicit-function
theorem gives smoothness. The quantitative estimates used by the algorithm will instead follow
from the additional ridge in the implementation section.
\end{proof}

\subsection{What changes in the response equations}\label{rect:response-equations}

The transport equation is unchanged. At a faithful source
$M=\eta_C(S)$, let $Z\succ0$ solve $ZMZ=S$ and set $W=Z^{-1}$.
The density equation and trace constraint are now
\[
 H+W+\eta_C(Z)+\theta S^{-q}=\lambda I,\qquad \Tr S=1.
\]
To see exactly where the new power enters, take velocities
$B=\dot H$, $V=\dot C$, $X=\dot S$, and $U=\dot Z$. Set
\[
 A_r=-D(\theta S^{-q}),\qquad
 \mathcal L(X)=WXW-\eta_C(X),\qquad
 \mathcal K(U)=WUM+MUW.
\]
Here $A_r,\mathcal L,\mathcal K$ are operators on symmetric matrix
space. Differentiating $ZMZ=S$, the density equation, and the trace
constraint gives
\begin{equation}\label{rect:coupled}
 \mathcal K U=\mathcal L X-\eta_V(S),\qquad
 A_rX+\mathcal L^*U+\dot\lambda I=B+\eta_V(Z),\qquad \Tr X=0.
\end{equation}
For example $D(Z^{-1})[U]=-WUW$, so the direct covariance velocity
and the transport response produce both terms involving $\eta_V$.
Eliminating $U$ gives the same inverse-response formula as in the
square calculation, with $A_r$ in place of the square-root curvature.
If $\iota$ includes the trace-zero symmetric matrices into all symmetric
matrices, put
\[
 A_0=\iota^*(A_r+\mathcal L^*\mathcal K^{-1}\mathcal L)\iota,
 \qquad B_0=\iota^*B.
\]
Then a pure center velocity satisfies
\begin{equation}\label{rect:optimized-response}
 \dot S=\iota A_0^{-1}B_0,\qquad
 D_H^2E[B,B]=\langle B_0,A_0^{-1}B_0\rangle,
 \qquad D_CE[V]=\Tr(Z\eta_V(S)).
\end{equation}
Positive regularizer curvature makes $A_0$ invertible. If the physical
source is singular, the transport and $\mathcal K$ are compressed to
its fixed support, exactly as in the square linearization; the full
regularizer, density, and trace constraint are retained. The response
comparison for that case is proved in Section~\ref{rect:compression-section}.

The changes of coordinates in Lemmas~\ref{sq:an:balanced}
and~\ref{sq:an:frame} involve the transport equation and the free
curvature only. Thus they apply with any positive definite regularizer
curvature. We record the resulting interface, to identify the operators
that will be estimated below.

\begin{lemma}[The shared balanced response with a new regularizer]
\label{rect:free-response}
Factor $\eta_C(X)=\sum_a B'_aXB'_a$, where
$B'_a=\sum_i(C^{1/2})_{ia}\A_i$, and let $S\succ0$, $\Tr S\le1$,
have faithful source. Define
\[
 P=W^{1/2}SW^{1/2},\qquad D_a=Z^{1/2}B'_aZ^{1/2},\qquad
 \Phi(X)=\sum_aD_aXD_a,\qquad \mathcal J_P=L_P+R_P.
\]
In the coordinate $Y=W^{1/2}(dS)W^{1/2}$ the free negative density
Hessian is $(I-\Phi)\mathcal J_P^{-1}(I-\Phi)$.
The map $\Phi$ is self-adjoint, positive, and a Hilbert--Schmidt
contraction, and $\Phi(P)=P$. If $A_{r,\mathrm{bal}}$ is the negative
regularizer Hessian in this coordinate, define
\[
 T=\mathcal J_P^{-1/2}\Phi\mathcal J_P^{1/2},\qquad
 H_r=\mathcal J_P^{1/2}A_{r,\mathrm{bal}}\mathcal J_P^{1/2},\qquad
 \mathcal A=(I-T)^*(I-T)+H_r,
\]
\[
 F_a=2^{-1/2}\mathcal J_P^{1/2}D_a,\qquad
 \Sigma=\sum_a|F_a\rangle\langle F_a|.
\]
At the optimizing density,
\begin{equation}\label{rect:gram-response}
 (C^{1/2}\Gamma C^{1/2})_{ab}=\Tr(PD_aD_b),\qquad
 \Tr(CJ)\le\Tr_{\HS}(\Sigma\mathcal A^{-1}).
\end{equation}
The right side also defines the local unconstrained response at any
of the stated subnormalized densities.
\end{lemma}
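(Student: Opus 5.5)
The proof will mostly re-run the square-case computations of Lemmas~\ref{sq:an:hessian}, \ref{sq:an:balanced} and~\ref{sq:an:frame}. Along the way we record that only the transport equation $ZMZ=S$, the free term, and positivity of the regularizer curvature were used; the explicit form of the Tsallis--$1/2$ Hessian was not. The fixed-point and contraction assertions come first. Since $P=W^{1/2}SW^{1/2}=Z^{-1/2}SZ^{-1/2}$, the transport identity gives $\Phi(P)=Z^{1/2}\eta_C(S)Z^{1/2}=P$. This uses no normalization of $S$, so it holds for $\Tr S\le1$. Self-adjointness and complete positivity of $\Phi$ follow from the real symmetric Kraus factors $D_a$. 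The Hilbert--Schmidt contraction follows from the order argument of Lemma~\ref{sq:an:balanced}: a symmetric eigenmatrix satisfies $-bP\preceq X\preceq bP$ with $P\succ0$, which rules out $|\lambda|>1$. For nonsymmetric eigenmatrices we apply the same argument to the block dilation, and the spectral theorem for the self-adjoint $\Phi$ on the full real matrix space then gives $\|\Phi\|\le1$.

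Next comes the free curvature. From the linearization \eqref{rect:coupled} at fixed $C$, the transport response is $U=\mathcal K^{-1}\mathcal L X$, so the free negative Hessian is $\mathcal L^*\mathcal K^{-1}\mathcal L$, exactly as in \eqref{sq:an:free-original}. With $\mathcal G=\operatorname{Ad}_{Z^{1/2}}$ we use the same intertwinings,
\[
\mathcal K=\operatorname{Ad}_{Z^{-1/2}}\mathcal J_P\operatorname{Ad}_{Z^{-1/2}},\qquad \mathcal L\mathcal G=\operatorname{Ad}_{Z^{-1/2}}(I-\Phi).
\]
These are checked by expansion using $M=Z^{-1/2}PZ^{-1/2}$. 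Substituting them gives $(I-\Phi)\mathcal J_P^{-1}(I-\Phi)$ in the $Y$ coordinate. Conjugating by $\mathcal J_P^{1/2}$ gives $\mathcal A=(I-T)^*(I-T)+H_r$, where $H_r=\mathcal J_P^{1/2}\mathcal G^*A_r\mathcal G\,\mathcal J_P^{1/2}$ is strictly positive, because $A_r$ is positive definite by Lemma~\ref{rect:basic}.

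For the Gram identity we use the covariance-derivative formula in \eqref{rect:optimized-response}, $D_CE[V]=\Tr(Z\eta_V(S))$. Evaluating it on $V=C^{1/2}(e_ae_b^T+e_be_a^T)C^{1/2}/2$ gives $(C^{1/2}\Gamma C^{1/2})_{ab}=\Tr(SB'_aZB'_b)=\Tr(PD_aD_b)=\langle F_a,F_b\rangle$.

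For the response bound, the envelope identity gives $D_HE[B]=\Tr(SB)$, and \eqref{rect:optimized-response} gives $J[z,z]=\tfrac12\langle B_0,A_0^{-1}B_0\rangle$ with $B=\A(z)$. The variational inverse formula shows that removing the trace-zero restriction can only increase this, so $J[z,z]\le\tfrac12\langle B,(A_r+\mathcal L^*\mathcal K^{-1}\mathcal L)^{-1}B\rangle$. A force $B'_a$ pairs with $dS$ as $\langle D_a,Y\rangle=\langle\mathcal J_P^{1/2}D_a,y\rangle$, so half its unconstrained response equals $\langle F_a,\mathcal A^{-1}F_a\rangle$. We sum over the factored directions $z=C^{1/2}e_a$, using $\Tr(CJ)=\sum_aJ[C^{1/2}e_a,C^{1/2}e_a]$ and $\A(C^{1/2}e_a)=B'_a$, to obtain \eqref{rect:gram-response}. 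For subnormalized $S$ the right-hand side is simply taken as the definition of the local response; every identity used above was algebraic in $S$.

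The main care point is the bookkeeping of the balanced change of variables for a general regularizer. We must make sure that $H_r$ is the congruence of the original $A_r$ and that no commutation of $\mathcal J_P$ with $\Phi$ or with $A_r$ is ever assumed. We also need the factor $2^{-1/2}$ to be consistent with the half-Hessian normalization of $J$. Beyond this, no estimate is new.
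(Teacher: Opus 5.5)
Your proposal is correct and follows essentially the same route as the paper. The paper's proof likewise notes that the balanced transport identities, the fixed-point contraction argument, and the envelope formula $D_CE[V]=\Tr(Z\eta_V(S))$ never use the square-root curvature, and obtains the response bound by the same pairing $\langle D_a,Y\rangle=\langle\mathcal J_P^{1/2}D_a,y\rangle$, dropping the trace constraint, and summing over the factored directions $C^{1/2}e_a$; you spell out explicitly the bookkeeping (the intertwinings, the choice of $V$, the factor $2^{-1/2}$) that the paper simply cites from Lemmas~\ref{sq:an:balanced} and~\ref{sq:an:frame}.
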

\begin{proof}
The coordinate identities are precisely the transport identities of
Lemma~\ref{sq:an:balanced}; its derivation does not use the regularizer.
The contraction assertion, including nonsymmetric real matrix directions,
is the fixed-point argument in Lemma~\ref{sq:an:balanced}.
The covariance envelope identity in \eqref{rect:optimized-response}
gives the Gram formula. Its entries are real, and symmetric because
transposition and cyclicity of trace give
$\Tr(PD_aD_b)=\Tr(PD_bD_a)$ for real symmetric $P,D_a,D_b$.
For the response formula, the physical force $B'_a$ pairs with
$dS$ as $\langle D_a,Y\rangle$. The further substitution
$Y=\mathcal J_P^{1/2}y$ turns its half-response into
$\langle F_a,\mathcal A^{-1}F_a\rangle$ after dropping the trace
constraint. Dropping a constraint only enlarges the inverse variational
supremum. Summation proves the claim as in Lemma~\ref{sq:an:frame}.
The unconstrained formula is an identity at every stated density $S$.
\end{proof}

\section{General Tsallis curvature and correlated interpolation}
\label{rect:general-tsallis}

We now return to the regularizer. In the square-root case the inverse
curvature has an exact two-term expression in left and right multiplication.
For the rectangular rate we need a variable exponent. A scalar divided
difference comparison supplies the replacement, and a weighted H\"older
inequality interpolates the resulting physical trace budgets. The
interpolation is performed before any information about the common
optimizing density is discarded.

Fix \(0<q\le 1/2\) and \(\theta>0\). We use the concave regularizer
\[
 r_{q,\theta}(S)=\frac{\theta}{1-q}\Tr S^{1-q}.
\]
At \(q=1/2\), its normalization is exactly
\(r_{1/2,\theta}(S)=2\theta\Tr\sqrt S\).

\begin{lemma}[Inverse power curvature]
\label{rect:power-inverse}
At \(S\succ0\), let \(A_{q,\theta}^{\mathrm{orig}}\) be the negative
Hessian of \(r_{q,\theta}\). In an eigenbasis of \(S\), its inverse has
entrywise multiplier \(\theta^{-1}h_q(s_i,s_j)\), where
\[
 h_q(a,b)=\frac{a-b}{b^{-q}-a^{-q}},\qquad
 h_q(a,a)=\frac{a^{1+q}}q.
\]
As positive operators on matrix Hilbert space,
\begin{equation}
 (A_{q,\theta}^{\mathrm{orig}})^{-1}
 \preceq\frac1{2\theta q}
       (L_S R_{S^q}+L_{S^q}R_S).
 \label{rect:power-inverse-bound}
\end{equation}
Equality holds in \eqref{rect:power-inverse-bound} when \(q=1/2\).
\end{lemma}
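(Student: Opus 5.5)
The plan is to compute the Hessian explicitly with Lemma~\ref{rect:power-differentiation}, invert it entrywise, and reduce the operator inequality to a scalar inequality between divided differences. The gradient of \(r_{q,\theta}\) is \(\theta S^{-q}\), since \(D(\Tr S^{1-q})[X]=(1-q)\Tr(S^{-q}X)\). Applying Lemma~\ref{rect:power-differentiation} again, with \(a=-q\), gives
\[
 A_{q,\theta}^{\mathrm{orig}}(X)=-\theta\,D(S^{-q})[X],
\]
and in an eigenbasis of \(S\) this acts by multiplying \(X_{ij}\) by \(-\theta d_{-q}(s_i,s_j)\). Here
\[
 -d_{-q}(a,b)=\frac{b^{-q}-a^{-q}}{a-b}>0,
\]
because \(s\mapsto s^{-q}\) is strictly decreasing. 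On the diagonal the multiplier is \(q a^{-q-1}\). Every such multiplier is strictly positive, so \(A_{q,\theta}^{\mathrm{orig}}\) is a positive diagonal operator in the matrix-unit basis, on the full real matrix space as the convention of Section~\ref{sq:an:section} requires. Its inverse is therefore the entrywise multiplier \(\theta^{-1}h_q(s_i,s_j)\), which proves the first assertion.

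The right side of \eqref{rect:power-inverse-bound} is also diagonal in the same matrix-unit basis. Its multiplier is
\[
 \frac{1}{2\theta q}\bigl(s_i s_j^q+s_i^q s_j\bigr).
\]
The operator inequality is thus equivalent to the scalar inequality, for all \(a,b>0\),
\[
 h_q(a,b)\le\frac{ab^q+a^qb}{2q}.
\]
At \(a=b\) the two sides coincide, both equal to \(a^{1+q}/q\). By symmetry and homogeneity of degree \(1+q\), I set \(b=1\), \(a=x>1\), and rewrite the inequality as
\[
 2q(x-1)\le(x+x^q)(1-x^{-q})=x-x^{1-q}+x^q-1.
\]

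The scalar inequality is the main step. I would substitute \(x=e^{2u}\) with \(u>0\) and divide through by \(e^{u}\). This turns it into
\[
 4q\sinh u\le 2\sinh u-2\sinh\bigl((1-2q)u\bigr),
\]
that is, \(\sinh((1-2q)u)\le(1-2q)\sinh u\). This holds for \(0\le 1-2q<1\), because \(\sinh\) is convex on \([0,\infty)\) with \(\sinh0=0\), so \(\sinh(\lambda u)\le\lambda\sinh u\) for \(\lambda\in[0,1]\).

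The equality case is \(q=1/2\). There \(\lambda=0\) and both sides vanish identically. This is consistent with \eqref{sq:an:tsallis-inverse}: for \(q=1/2\), \(h_{1/2}(a,b)=\sqrt{ab}(\sqrt a+\sqrt b)\), which is exactly the multiplier of \(L_SR_{\sqrt S}+L_{\sqrt S}R_S\). Apart from this, the only care needed is bookkeeping: the full matrix space and its symmetric subspace are both invariant under these diagonal multipliers, so the inequality holds on either space.
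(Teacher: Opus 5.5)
Your proposal is correct and follows essentially the same route as the paper: you invert the positive divided-difference multiplier of $-D(\theta S^{-q})$ entrywise, reduce the operator comparison to the scalar bound $h_q(a,b)\le(ab^q+a^qb)/(2q)$ by symmetry and homogeneity, and then substitute an exponential to get $\sinh(rx)\le r\sinh x$ with $r=1-2q$, which follows from convexity of $\sinh$. Your explicit check that the $q=1/2$ multiplier reproduces \eqref{sq:an:tsallis-inverse} is a consistency remark and does not change the argument.
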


\begin{proof}
The gradient of \(r_{q,\theta}\) is \(\theta S^{-q}\). Its negative
derivative acts by the positive divided difference
\[
 \theta\frac{s_j^{-q}-s_i^{-q}}{s_i-s_j},
\]
with diagonal value \(\theta q s_i^{-1-q}\). This proves the inverse
formula. It remains to show, for \(a,b>0\),
\begin{equation}
 h_q(a,b)\le\frac{a b^q+a^q b}{2q}.
 \label{rect:scalar-kernel-bound}
\end{equation}
The case \(a=b\) is equality. By symmetry and homogeneity, the other
cases reduce to \(t=a/b>1\). Inequality
\eqref{rect:scalar-kernel-bound} is equivalent to
\[
 2q(t-1)\le(t^q-1)(t^{1-q}+1),
\]
or to \(t^{1-q}-t^q\le(1-2q)(t-1)\). Write
\(r=1-2q\in[0,1)\) and \(t=e^{2x}\), \(x>0\). Cancellation reduces
the last inequality to \(\sinh(rx)\le r\sinh x\), which follows from
convexity of \(\sinh\) on \([0,\infty)\) and \(\sinh0=0\).
The operators in \eqref{rect:power-inverse-bound} are diagonal in the
same matrix-unit basis, so the scalar inequalities imply their stated
positive-operator comparison.
\end{proof}

\subsection{Balanced quantities and an interpolation inequality}

Let \(C\preceq I_k\) be a coefficient covariance on \(k\) original real symmetric
contractions. Factor its real covariance and write
\[
 \eta_C(X)=\sum_a B'_aXB'_a.
\]
Suppose \(S\succ0\), \(\Tr S\le1\), and its physical source
\(M=\eta_C(S)\) is faithful. Let \(Z\succ0\) solve \(ZMZ=S\), and set
\begin{equation}
 \begin{gathered}
 W=Z^{-1},\qquad P=W^{1/2}SW^{1/2},\\
 D_a=Z^{1/2}B'_aZ^{1/2},\qquad
 \Phi(X)=\sum_aD_aXD_a,\\
 R_q=W^{1/2}S^qW^{1/2},\qquad
 R=R_{1/2}=W^{1/2}\sqrt S W^{1/2}.
 \end{gathered}
 \label{rect:balanced-definitions}
\end{equation}
The balanced free-curvature identities give \(\Phi(P)=P\);
\(\Phi\) is self-adjoint and a Hilbert--Schmidt contraction. The
contraction realization gives \(\eta_C(I)\preceq kI\), and hence
\begin{equation}
 \Tr P\le\sqrt k,\qquad
 \Tr(WP)=\Tr(SW^2)=\Tr M\le k,\qquad
 \Tr R^2\le k.
 \label{rect:endpoint-budgets}
\end{equation}
Indeed, the first inequality is the fidelity trace bound. For the last,
\[
 \Tr M-\Tr R^2
 =\Tr(SW^2)-\Tr(\sqrt S W\sqrt S W)
 =\tfrac12\|[\sqrt S,W]\|_{\mathrm{HS}}^2\ge0.
\]

Let \(\mathcal G=\operatorname{Ad}_{Z^{1/2}}\), so that
\(dS=\mathcal G(Y)\) is the balanced density coordinate. Write
\(A_{q,\theta}=\mathcal G^*A_{q,\theta}^{\mathrm{orig}}\mathcal G\).
Congruencing \eqref{rect:power-inverse-bound} by \(\mathcal G^{-1}\)
gives
\begin{equation}
 A_{q,\theta}^{-1}
 \preceq\frac1{2\theta q}
       (L_P R_{R_q}+L_{R_q}R_P).
 \label{rect:balanced-inverse-bound}
\end{equation}

\begin{lemma}[Correlated trace interpolation]
\label{rect:trace-interpolation}
For every physical matrix \(X\succeq0\),
\begin{equation}
 \Tr(R_qX)
 \le [\Tr(WX)]^{1-2q}[\Tr(RX)]^{2q}.
 \label{rect:trace-interpolation-bound}
\end{equation}
At \(q=1/2\), the right side is interpreted as \(\Tr(RX)\).
\end{lemma}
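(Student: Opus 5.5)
Write $\Tr(R_qX)=\Tr(S^q Y)$ with $Y=W^{1/2}XW^{1/2}\succeq0$. In the same way, $\Tr(RX)=\Tr(S^{1/2}Y)$ and $\Tr(WX)=\Tr(W^{1/2}YW^{1/2})$. The last quantity is not of the form $\Tr(S^0Y)=\Tr Y$, so the plan is to work instead in coordinates where all three quantities become traces against powers of a single positive matrix. The natural choice is an eigenbasis of $S$. Put $Y'=W^{1/2}XW^{1/2}$, so that
\[
 \Tr(R_qX)=\Tr(S^qY'),\qquad \Tr(RX)=\Tr(S^{1/2}Y'),
\]
with $Y'\succeq0$. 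For the $W$ term, write $\Tr(WX)=\Tr(W^{1/2}XW^{1/2})=\Tr Y'$. Indeed $\Tr(WX)=\Tr(W^{1/2}XW^{1/2})$ by cyclicity, so $\Tr(WX)=\Tr(S^0Y')$.

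Hence all three traces equal $\Tr(S^{p}Y')$ for the exponents $p=q$, $p=0$, and $p=1/2$. The inequality becomes
\[
 \Tr(S^qY')\le[\Tr(S^0Y')]^{1-2q}[\Tr(S^{1/2}Y')]^{2q},
\]
which is log-convexity of $p\mapsto\Tr(S^pY')$ at $q=(1-2q)\cdot0+2q\cdot\tfrac12$. To prove it, diagonalize $S=\sum_i s_ie_ie_i^T$ and set $y_i=e_i^TY'e_i\ge0$. Then $\Tr(S^pY')=\sum_i s_i^py_i$. Applying H\"older with exponents $1/(1-2q)$ and $1/(2q)$ to $s_i^qy_i=y_i^{1-2q}\,(s_i^{1/2}y_i)^{2q}$ gives exactly the bound. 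When $q=1/2$ the statement is an identity, and for $0<q<1/2$ both exponents lie in $(1,\infty)$.

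The only point to check is the cyclicity step, namely that $R_q=W^{1/2}S^qW^{1/2}$ pairs with $X$ as $S^q$ pairs with $W^{1/2}XW^{1/2}$. Indeed $\Tr(W^{1/2}S^qW^{1/2}X)=\Tr(S^q\,W^{1/2}XW^{1/2})$, so the step is immediate. The only remaining issue is degenerate terms with $y_i=0$ in H\"older; these contribute zero to both sides. I therefore expect no real obstacle: the argument is weighted H\"older on the diagonal of $Y'$ in the $S$-eigenbasis.
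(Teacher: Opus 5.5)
Your argument is correct and is essentially the paper's proof: it sets $Y=W^{1/2}XW^{1/2}$, takes the diagonal entries $y_i$ of $Y$ in an eigenbasis of $S$, and applies weighted H\"older to $\sum_i y_i s_i^{q}=\sum_i y_i^{1-2q}(s_i^{1/2}y_i)^{2q}$. Your opening claim that $\Tr(WX)=\Tr(W^{1/2}YW^{1/2})$ is false, since that trace equals $\Tr(W^2X)$, but you correct it immediately to $\Tr(WX)=\Tr Y'$, and $Y'$ is the same matrix as $Y$.
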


\begin{proof}
Set \(Y=W^{1/2}XW^{1/2}\succeq0\). In an eigenbasis of \(S\), let
\(y_i=Y_{ii}\ge0\). Weighted scalar H\"older gives
\[
 \Tr(R_qX)=\sum_i y_i(s_i^{1/2})^{2q}
 \le\left(\sum_i y_i\right)^{1-2q}
      \left(\sum_i y_i s_i^{1/2}\right)^{2q}.
\]
These sums are \(\Tr(WX)\) and \(\Tr(RX)\). If the first sum is
zero, then \(Y=0\) and the assertion is immediate. 
\end{proof}

\subsection{The low and high observed forces}

All parts of the response bound are now available. We split each force
into the corner where both eigenvalues of $P$ exceed a cutoff and its
complement. The complement is inexpensive because one of its eigenvalues
is small. The high corner is controlled by the tensor estimate. The
inverse response is a positive quadratic form, so the cross term is
bounded by the two diagonal terms. This is the same force split as in
the square proof, with a new inverse-curvature weight.

Assume the real balanced Gram satisfies
\[
 [\Tr(PD_aD_b)]_{ab}\preceq tI,\qquad t=L/\sqrt k.
\]
Fix \(c>0\), and set
\[
 \varepsilon=c/\sqrt k,\quad
 \Pi_h=1_{\{P\ge\varepsilon\}},\quad
 \Pi_l=I-\Pi_h,\quad P_l=\Pi_lP,
\]
\[
 D_a^h=\Pi_hD_a\Pi_h,\qquad
 D_a^l=D_a-D_a^h=\Pi_lD_a+\Pi_hD_a\Pi_l.
\]
Only the forcing vectors are split; all curvature inverses below act
on their full spaces.

\begin{lemma}[Low forcing for general powers]
\label{rect:low-forcing}
For real symmetric \(D\), define the unconstrained regularizer-only
half-response
\[
 Q_{q,\theta}(D)=\tfrac12\langle D,A_{q,\theta}^{-1}D\rangle.
\]
Then
\begin{equation}
 \sum_aQ_{q,\theta}(D_a^l)
 \le\frac{2c^q}{\theta q}k^{1-q}.
 \label{rect:low-forcing-bound}
\end{equation}
\end{lemma}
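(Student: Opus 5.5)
We follow the square-case proof of Lemma~\ref{sq:te:low-forcing}, replacing its exact two-term inverse with the domination \eqref{rect:balanced-inverse-bound}. Then interpolate the resulting $R_q$ traces between the $W$ and $R$ budgets via Lemma~\ref{rect:trace-interpolation}.

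\emph{Step 1: reduce to physical traces.} For real symmetric $D$, \eqref{rect:balanced-inverse-bound} gives
\[
 Q_{q,\theta}(D)\le\frac1{4\theta q}\langle D,(L_PR_{R_q}+L_{R_q}R_P)D\rangle
 =\frac1{2\theta q}\Tr(PDR_qD).
\]
We apply this to $D=D_a^l=\Pi_lD_a+\Pi_hD_a\Pi_l$. The quantity $\Tr(PDR_qD)=\|P^{1/2}DR_q^{1/2}\|_{\HS}^2$ is a squared norm, so the squared triangle inequality bounds it by twice the sum of the two pieces.

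\emph{Step 2: sum the two pieces.} As in the square case,
\[
 \sum_a\|P^{1/2}\Pi_lD_aR_q^{1/2}\|_{\HS}^2=\Tr(R_q\Phi(P_l)).
\]
For the second piece,
\[
 \sum_a\|P^{1/2}\Pi_hD_a\Pi_lR_q^{1/2}\|_{\HS}^2=\Tr\bigl(\Pi_lR_q\Pi_l\,\Phi(P-P_l)\bigr)\le\Tr(R_qP_l).
\]
This uses $0\preceq\Phi(P-P_l)\preceq P$, which follows from positivity and $\Phi(P)=P$. Hence
\[
 \sum_aQ_{q,\theta}(D_a^l)\le\frac1{\theta q}\bigl\{\Tr(R_q\Phi(P_l))+\Tr(R_qP_l)\bigr\}.
\]

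\emph{Step 3: interpolate each trace.} Both $X=\Phi(P_l)$ and $X=P_l$ are positive semidefinite, so Lemma~\ref{rect:trace-interpolation} gives
\[
 \Tr(R_qX)\le[\Tr(WX)]^{1-2q}[\Tr(RX)]^{2q}.
\]
\begin{itemize}
\item For the $R$ factor, use Cauchy--Schwarz as in the square case: $\|P_l\|_{\HS}^2\le\varepsilon\Tr P\le c$, the contraction property of $\Phi$, and $\|R\|_{\HS}\le\sqrt k$. These give $\Tr(RX)\le\sqrt{ck}$.
\item For the $W$ factor, use $X\preceq P$. For $X=P_l$ this holds since $P_l\preceq P$; for $X=\Phi(P_l)$ it follows from positivity and the fixed point. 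Then \eqref{rect:endpoint-budgets} gives $\Tr(WX)\le\Tr(WP)\le k$.
\end{itemize}
Each trace is therefore at most
\[
 k^{1-2q}(ck)^{q}=c^qk^{1-q}.
\]
Summing the two traces yields $2c^qk^{1-q}/(\theta q)$, which is \eqref{rect:low-forcing-bound}.

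\emph{Main obstacle.} The delicate point is Step 3. The $W$-budget cannot be estimated crudely, because $W$ may have large norm; positivity against $P$ must be used to reach $\Tr(WP)=\Tr M\le k$. The interpolation must also be applied to the full positive matrices $X$ before any spectral information is discarded, since the $R$-factor smallness comes only from the HS bound on $P_l$. One must also check that the constants in the operator domination carry through with exactly the stated factor $2$. When $q=1/2$ the bound reduces to Lemma~\ref{sq:te:low-forcing}.
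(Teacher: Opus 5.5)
Your proposal is correct and follows essentially the same route as the paper: the domination \eqref{rect:balanced-inverse-bound} reduces each half-response to $\frac{1}{2\theta q}\Tr(PDR_qD)$, the squared triangle inequality together with $0\preceq\Phi(P-P_l)\preceq P$ leaves the two traces $\Tr(R_q\Phi(P_l))$ and $\Tr(R_qP_l)$, and Lemma~\ref{rect:trace-interpolation} with the $W$- and $R$-budgets bounds each by $c^qk^{1-q}$. Your justification of $\Tr(WX)\le\Tr(WP)$ through $X\preceq P$ (using $\Phi(P_l)\preceq\Phi(P)=P$) spells out a step that the paper states without comment.
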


\begin{proof}
By \eqref{rect:balanced-inverse-bound},
\[
 Q_{q,\theta}(D)\le\frac1{2\theta q}\Tr(R_qDPD).
\]
The squared triangle inequality applied to
\(P^{1/2}D_a^lR_q^{1/2}\) gives
\begin{align*}
 \sum_a\Tr(R_qD_a^lPD_a^l)
 &\le2\Tr(R_q\Phi(P_l))\\
 &\quad+2\Tr\{\Pi_lR_q\Pi_l\Phi(P-P_l)\}\\
 &\le2\{\Tr(R_q\Phi(P_l))+\Tr(R_qP_l)\}.
\end{align*}
The last step follows from \(0\preceq\Phi(P-P_l)\preceq P\);
it does not require \(R_q\) to commute with \(P\).

The two positive matrices \(X=\Phi(P_l)\) and \(X=P_l\) satisfy
\[
 \Tr(WX)\le\Tr(WP)\le k.
\]
Moreover,
\[
 \|P_l\|_{\mathrm{HS}}^2\le\varepsilon\Tr P\le c,
 \qquad \|\Phi(P_l)\|_{\mathrm{HS}}\le\sqrt c.
\]
Consequently \(\Tr(RX)\le\sqrt{ck}\) for either choice of \(X\), by
\eqref{rect:endpoint-budgets} and Hilbert--Schmidt Cauchy--Schwarz.
Lemma~\ref{rect:trace-interpolation} bounds each resulting trace by
\[
 \Tr(R_qX)\le k^{1-2q}(\sqrt{ck})^{2q}=c^qk^{1-q}.
\]
Combining the factors proves \eqref{rect:low-forcing-bound}.
\end{proof}

The high-force estimates from the square proof depend only on the
source geometry, so they can be reused. Put
\[
 \mathcal J_P=L_P+R_P,\quad
 F_a^h=2^{-1/2}\mathcal J_P^{1/2}D_a^h,\quad
 \Sigma_h=\sum_a|F_a^h\rangle\langle F_a^h|,
\]
\[
 \beta=t/\varepsilon=L/c,\qquad
 W_h=\sum_a\Phi(D_a^h)P\Phi(D_a^h).
\]
Lemma~\ref{sq:te:tensor-lemma} and identity~\eqref{sq:te:high-frame} give
\begin{equation}
 \begin{gathered}
 \Tr_{\mathrm{HS}}\Sigma_h\le\Tr P,\qquad
 0\preceq W_h\preceq2\beta^2P,\\
 \Tr(PW_h)\le2t\beta\Tr P,\qquad
 \Tr W_h^2\le4t\beta^3\Tr P.
 \end{gathered}
 \label{rect:old-tensor-bounds}
\end{equation}
These statements use the actual real Gram cap, the self-adjoint Kraus
representation, and
\(\Phi(P)=P\). In particular they do not involve the choice of density
regularizer. By \eqref{rect:endpoint-budgets}, they imply
\begin{equation}
 \Tr_{\mathrm{HS}}\Sigma_h\le\sqrt k,\qquad
 \|W_h\|_{\mathrm{HS}}\le2L^2c^{-3/2}.
 \label{rect:high-numeric-bounds}
\end{equation}

Use the full whitened curvatures
\[
 T=\mathcal J_P^{-1/2}\Phi\mathcal J_P^{1/2},\quad
 H_{q,\theta}=\mathcal J_P^{1/2}A_{q,\theta}\mathcal J_P^{1/2},
\]
\[
 \mathcal A=(\mathrm{Id}-T)^*(\mathrm{Id}-T)+H_{q,\theta},\qquad
 K_q=T H_{q,\theta}^{-1}T^*.
\]

\begin{lemma}[High observed response for general powers]
\label{rect:high-response}
One has
\begin{equation}
 \Tr_{\mathrm{HS}}(\Sigma_hK_q)
 \le\frac{L^2c^{q-2}}{\theta q}k^{1-q}.
 \label{rect:high-response-bound}
\end{equation}
\end{lemma}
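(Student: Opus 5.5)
The proof follows Lemma~\ref{sq:te:observed-identity}, with \eqref{rect:balanced-inverse-bound} in place of the exact square-root inverse. Then the correlated interpolation of Lemma~\ref{rect:trace-interpolation} replaces the final Cauchy--Schwarz step.

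\emph{Step 1: reduce to a physical trace.} Cancelling the neighboring Sylvester factors gives $H_{q,\theta}^{-1}=\mathcal J_P^{-1/2}A_{q,\theta}^{-1}\mathcal J_P^{-1/2}$. Hence
\[
 K_q=\mathcal J_P^{-1/2}\Phi A_{q,\theta}^{-1}\Phi\mathcal J_P^{-1/2}.
\]
By \eqref{sq:te:high-frame} and cyclicity of $\Tr_{\HS}$,
\[
 \Tr_{\HS}(\Sigma_hK_q)=\tfrac12\sum_a\langle Y_a,A_{q,\theta}^{-1}Y_a\rangle,
 \qquad Y_a=\Phi(D_a^h).
\]
Each $Y_a$ is real symmetric. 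The operator comparison \eqref{rect:balanced-inverse-bound} holds on the full matrix space, since both sides are diagonal in the matrix-unit basis. It bounds each term by
\[
 \frac1{2\theta q}\bigl(\Tr(Y_aPY_aR_q)+\Tr(Y_aR_qY_aP)\bigr)
 =\frac1{\theta q}\Tr(R_qY_aPY_a).
\]
Summing over $a$ gives
\[
 \Tr_{\HS}(\Sigma_hK_q)\le\frac1{2\theta q}\Tr(R_qW_h).
\]
This is the analogue of \eqref{sq:te:observed-exact}, with the factor $1/2$ retained.

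\emph{Step 2: interpolate.} Apply Lemma~\ref{rect:trace-interpolation} with $X=W_h\succeq0$:
\[
 \Tr(R_qW_h)\le[\Tr(WW_h)]^{1-2q}[\Tr(RW_h)]^{2q}.
\]
For the $R$-factor, use Hilbert--Schmidt Cauchy--Schwarz together with \eqref{rect:high-numeric-bounds} and $\Tr R^2\le k$. This gives $\Tr(RW_h)\le 2L^2c^{-3/2}\sqrt k$.

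For the $W$-factor, use the order bound $W_h\preceq2\beta^2P$ from \eqref{rect:old-tensor-bounds}. Since $W\succ0$, this gives $\Tr(WW_h)\le2\beta^2\Tr(WP)\le2L^2c^{-2}k$, using \eqref{rect:endpoint-budgets}.

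\emph{Step 3: combine.} Multiplying the two factors gives
\[
 \Tr(R_qW_h)\le 2L^2\,c^{-2(1-2q)-3q}\,k^{1-2q+q}=2L^2c^{q-2}k^{1-q}.
\]
Dividing by $2\theta q$ yields exactly \eqref{rect:high-response-bound}. The exponent of $c$ is $-2+4q-3q=q-2$, as claimed. As a consistency check, at $q=1/2$ the bound reduces to $\theta^{-1}\Tr(RW_h)$, matching Lemma~\ref{sq:te:observed-identity}.

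The main point is to interpolate between the $W$-weighted and $R$-weighted traces of the same positive matrix $W_h$. Bounding $\Tr(R_qW_h)$ through $\|R_q\|_{\HS}$ directly would lose the rate. The order inequality $W_h\preceq2\beta^2P$ is what makes the $W$-weighted budget $\Tr(WP)\le k$ usable, and matching the powers of $c$ and $k$ is then routine.
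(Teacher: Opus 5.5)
Your proof is correct and follows the paper's argument, only in reverse order: the same Sylvester cancellation $K_q=\mathcal J_P^{-1/2}\Phi A_{q,\theta}^{-1}\Phi\mathcal J_P^{-1/2}$, the comparison \eqref{rect:balanced-inverse-bound} giving $\Tr_{\HS}(\Sigma_hK_q)\le(2\theta q)^{-1}\Tr(R_qW_h)$, and Lemma~\ref{rect:trace-interpolation} with the budgets $\Tr(WW_h)\le2L^2c^{-2}k$ and $\Tr(RW_h)\le2L^2c^{-3/2}\sqrt k$. One small correction: the two sides of \eqref{rect:balanced-inverse-bound} are matrix-unit multipliers only in the original coordinates (an eigenbasis of $S$), not in the balanced ones, so the full-space balanced comparison should be justified by congruence with $\operatorname{Ad}_{W^{1/2}}$ rather than by diagonality.
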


\begin{proof}
The order bound in \eqref{rect:old-tensor-bounds} and the trace budget
give
\[
 \Tr(WW_h)\le2L^2c^{-2}k.
\]
The Hilbert--Schmidt bound in \eqref{rect:high-numeric-bounds} gives
\[
 \Tr(RW_h)\le2L^2c^{-3/2}\sqrt k.
\]
Since \(W_h\succeq0\), Lemma~\ref{rect:trace-interpolation} implies
\begin{align*}
 \Tr(R_qW_h)
 &\le(2L^2c^{-2}k)^{1-2q}
      (2L^2c^{-3/2}\sqrt k)^{2q}\\
 &=2L^2c^{q-2}k^{1-q}.
\end{align*}
Cancellation of the neighboring Sylvester factors is exact for the
true inverse curvature:
\[
 K_q=\mathcal J_P^{-1/2}\Phi A_{q,\theta}^{-1}
                         \Phi\mathcal J_P^{-1/2}.
\]
Thus, using self-adjointness of \(\Phi\),
\begin{align*}
 \Tr_{\mathrm{HS}}(\Sigma_hK_q)
 &=\tfrac12\sum_a
   \langle\Phi(D_a^h),A_{q,\theta}^{-1}\Phi(D_a^h)\rangle\\
 &\le\frac1{2\theta q}\Tr(R_qW_h).
\end{align*}
The last inequality is \eqref{rect:balanced-inverse-bound}; combining
it with the preceding trace estimate proves the result.
\end{proof}

\subsection{The faithful-source cap-to-response bound}

\begin{theorem}[General Tsallis cap-to-response estimate]
\label{rect:faithful-cap-response}
For the potential
\[
 E_{q,\theta}(H,C)=
 \max_{S\succeq0,\,\Tr S=1}
 \left\{\Tr(HS)+2F(S,\eta_C(S))
                    +\frac{\theta}{1-q}\Tr S^{1-q}\right\},
\]
suppose \(k\ge1\), \(0\preceq C\preceq I_k\), and the actual positive
optimizing density has a faithful physical source. Let \(J\) be one half of the actual optimized center Hessian
and let \(\Gamma\) be the actual covariance derivative. If
\[
 \Gamma|_{\operatorname{ran}C}\preceq\frac L{\sqrt k}I,
\]
then every \(c>0\) gives
\begin{equation}
 \Tr(CJ)\le2\sqrt k+
       \frac{4c^q+2L^2c^{q-2}}{\theta q}k^{1-q}.
 \label{rect:cap-response-parameter}
\end{equation}
In particular, choosing \(c=L\) yields
\begin{equation}
 \boxed{\displaystyle
 \Tr(CJ)\le2\sqrt k+\frac{6L^q}{\theta q}k^{1-q}.}
 \label{rect:cap-response}
\end{equation}
The same bound holds for the local unconstrained response at any
faithful subnormalized density with the same actual contraction
realization and Gram cap.
\end{theorem}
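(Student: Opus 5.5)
The plan mirrors the proof of Theorem~\ref{sq:te:cap-response-theorem}, with the square-root inverse curvature replaced by the operator bound \eqref{rect:balanced-inverse-bound}.

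\emph{Setting up the frame.} Lemma~\ref{rect:free-response} gives
\[
 \Tr(CJ)\le\Tr_{\HS}(\Sigma\mathcal A^{-1}),
\]
and the factored Gram $C^{1/2}\Gamma C^{1/2}=[\Tr(PD_aD_b)]_{ab}$. Since the cap holds on $\ran C$ and $C\preceq I$, this Gram is at most $(L/\sqrt k)C\preceq tI$ with $t=L/\sqrt k$. Hence Lemma~\ref{sq:te:two-frame-lemma} applies to the balanced Kraus frame, and so do the tensor bounds \eqref{rect:old-tensor-bounds}. Both use only the fixed point, self-adjointness, and the Gram cap, not the regularizer.

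\emph{Splitting the forces.} Fix the cutoff $\varepsilon=c/\sqrt k$ and write $F_a=F_a^h+F_a^l$. The squared triangle inequality in the $\mathcal A^{-1}$ norm gives
\[
 \Tr(CJ)\le2\sum_a\langle F_a^h,\mathcal A^{-1}F_a^h\rangle+2\sum_a\langle F_a^l,\mathcal A^{-1}F_a^l\rangle .
\]
\emph{High forces.} Apply Lemma~\ref{sq:an:inverse} with $H_{q,\theta}$ in place of $H_\theta$, which gives $\mathcal A^{-1}\preceq\Id+K_q$. The high term is then at most
\[
 2\Tr_{\HS}\Sigma_h+2\Tr_{\HS}(\Sigma_hK_q)\le2\sqrt k+\frac{2L^2c^{q-2}}{\theta q}k^{1-q},
\]
by \eqref{rect:high-numeric-bounds} and Lemma~\ref{rect:high-response}.

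\emph{Low forces.} Use $\mathcal A\succeq H_{q,\theta}$, so $\mathcal A^{-1}\preceq H_{q,\theta}^{-1}$. Unwinding the whitening, $\langle F,H_{q,\theta}^{-1}F\rangle$ with $F=2^{-1/2}\mathcal J_P^{1/2}D$ equals $\tfrac12\langle D,A_{q,\theta}^{-1}D\rangle=Q_{q,\theta}(D)$. Lemma~\ref{rect:low-forcing} then bounds the low term by $2\cdot 2c^qk^{1-q}/(\theta q)$. Adding the two parts gives \eqref{rect:cap-response-parameter}.

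\emph{Choosing $c$.} Take $c=L$. Then $4L^q+2L^2L^{q-2}=6L^q$, which is \eqref{rect:cap-response}. The subnormalized statement follows because every ingredient used is stated for faithful subnormalized $S$ with $\Tr S\le1$: the budgets \eqref{rect:endpoint-budgets}, the Gram identity, and the local unconstrained response formula in Lemma~\ref{rect:free-response}.

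\emph{Main obstacle.} The delicate point is the bookkeeping of the whitened coordinates, not any new estimate. One must check that the whitened regularizer curvature $H_{q,\theta}$ is strictly positive, so that Lemma~\ref{sq:an:inverse} applies. One must also check that the identity $\langle F,H_{q,\theta}^{-1}F\rangle=Q_{q,\theta}(D)$ holds on the full real matrix space. This needs the multiplier extension of $A_{q,\theta}$ to nonsymmetric inputs, since $D_a^l$ is symmetric but its pieces are not. Both facts follow because the curvature acts as an entrywise positive divided-difference multiplier in an $S$-eigenbasis, conjugated by $\mathcal G$. With these verified, the argument is a line-by-line transcription of the square proof, with $k^{1-q}/q$ in place of $\sqrt k$.
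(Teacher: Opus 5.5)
Your proposal is correct and follows the paper's proof essentially line by line. It uses the same steps: the factored Gram bound $C^{1/2}\Gamma C^{1/2}\preceq(L/\sqrt k)C\preceq(L/\sqrt k)I$, dropping the trace constraint, the split $F_a=F_a^h+F_a^l$ with the squared triangle inequality, $\mathcal A^{-1}\preceq\Id+K_q$ on the high forces and $\mathcal A^{-1}\preceq H_{q,\theta}^{-1}$ on the low forces, and finally $c=L$. One small remark: $D_a^h$ and $D_a^l$ are themselves symmetric, and the nonsymmetric pieces appear only through ordinary Hilbert--Schmidt norms inside Lemma~\ref{rect:low-forcing}, so the multiplier-extension issue you flag as the main obstacle is harmless rather than delicate.
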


\begin{proof}
The covariance derivative is unchanged in form by the regularizer:
\[
 \Gamma(u,u)=\Tr(S\A(u)Z\A(u)).
\]
After factoring the coefficient covariance,
its balanced force Gram is \(C^{1/2}\Gamma C^{1/2}\). The assumed cap
and \(C\preceq I\) therefore imply
\[
 C^{1/2}\Gamma C^{1/2}\preceq(L/\sqrt k)C\preceq(L/\sqrt k)I.
\]
All preceding force estimates consequently apply.

Lemma~\ref{sq:an:inverse}, applied with the power regularizer, and
$\mathcal A\succeq H_{q,\theta}$ give
\[
 \mathcal A^{-1}\preceq I+K_q,\qquad
 \mathcal A^{-1}\preceq H_{q,\theta}^{-1}.
\]
The actual density response is initially restricted to the trace-zero
tangent. Dropping that constraint only increases the inverse
variational supremum, giving
\[
 \Tr(CJ)\le\sum_a\langle F_a,\mathcal A^{-1}F_a\rangle,
 \qquad F_a=2^{-1/2}\mathcal J_P^{1/2}D_a.
\]
Split \(F_a=F_a^h+F_a^l\), corresponding to the displayed force split.
The squared triangle inequality in the norm defined by the full
\(\mathcal A^{-1}\), followed by the two inverse bounds, yields
\begin{align*}
 \Tr(CJ)
 &\le2\Tr_{\mathrm{HS}}\Sigma_h
       +2\Tr_{\mathrm{HS}}(\Sigma_hK_q)
       +2\sum_aQ_{q,\theta}(D_a^l)\\
 &\le2\sqrt k+
       \frac{2L^2c^{q-2}+4c^q}{\theta q}k^{1-q}.
\end{align*}
Here \(\langle F_a^l,H_{q,\theta}^{-1}F_a^l\rangle
=Q_{q,\theta}(D_a^l)\) by the definitions of the whitened coordinate
and half-response. This proves \eqref{rect:cap-response-parameter};
\(c=L\) proves \eqref{rect:cap-response}.

Every algebraic and trace estimate also holds at a faithful
subnormalized density. Stationarity was used only to identify the
actual trace-constrained optimized Hessian. Without that step the
same argument bounds the local unconstrained response directly.
\end{proof}

\section{Reciprocal powers and singular physical sources}\label{rect:compression-section}

A faithful density need not give an invertible source on the whole
physical space: the input matrices may share a kernel. The source can be
compressed to its support, but the deterministic center can still couple
that support to its complement. Thus replacing the whole optimization by
a smaller one would lose part of the density response.
We instead minimize the full regularizer quadratic form over all
off-support perturbations with a specified compression. The next inverse
curvature identity makes that minimization comparable to the compressed
regularizer. This is the reason for using reciprocal integer powers.

The faithful response estimate applies to all $0<q\le1/2$. We now
restrict to $q=1/p$, $p\ge2$ an integer, to transfer it to every coefficient covariance.

\begin{lemma}[Exact inverse curvature and compression]\label{rect:compression}
Let $q=1/p$, $p\ge2$, and
$\mathcal B_S=-D^2[\theta\Tr S^{1-q}/(1-q)]$ on the full matrix
Hilbert space. Then
\begin{equation}\label{rect:finite-sum}
 \mathcal B_S^{-1}
 =\frac1\theta\sum_{j=1}^p
       L_{S^{j/p}}R_{S^{(p+1-j)/p}}.
\end{equation}
For compression $\pi(X)=V^TXV$ by any isometry, and $S_0=\pi(S)$,
\begin{equation}\label{rect:compression-order}
 \pi\mathcal B_S^{-1}\pi^*\preceq\mathcal B_{S_0}^{-1}.
\end{equation}
\end{lemma}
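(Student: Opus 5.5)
The plan is to work in an eigenbasis of $S$ for \eqref{rect:finite-sum}, and then derive the compression inequality from an operator-monotonicity property of matrix powers, in the same way Proposition~\ref{sq:an:singular-transfer} used $\pi(\sqrt S)\preceq\sqrt{\pi S}$.

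\emph{The finite sum.} By Lemma~\ref{rect:power-inverse}, in an eigenbasis $S=\diag(s_i)$ the operator $\mathcal B_S^{-1}$ acts on matrix units by the multiplier $\theta^{-1}h_q(s_i,s_j)$, where
\[
 h_q(a,b)=\frac{a-b}{b^{-q}-a^{-q}}.
\]
With $q=1/p$, substitute $a=\alpha^p$ and $b=\beta^p$. Then
\[
 b^{-q}-a^{-q}=\frac{\alpha-\beta}{\alpha\beta},
\]
and the factorization $\alpha^p-\beta^p=(\alpha-\beta)\sum_{j=0}^{p-1}\alpha^j\beta^{p-1-j}$ gives
\[
 h_q(a,b)=\alpha\beta\sum_{j=0}^{p-1}\alpha^j\beta^{p-1-j}
 =\sum_{j=1}^{p}\alpha^{j}\beta^{p+1-j}.
\]
The diagonal case $a^{1+q}/q=p\alpha^{p+1}$ agrees with this by continuity. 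The operator $L_{S^{j/p}}R_{S^{(p+1-j)/p}}$ multiplies $E_{ik}$ by $s_i^{j/p}s_k^{(p+1-j)/p}$. Summing over $j$ therefore proves \eqref{rect:finite-sum} on the full real matrix space. Pairing terms $j$ and $p+1-j$ shows the sum is self-adjoint, so it is a genuine positive operator.

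\emph{Compression.} Set $T_j=\pi(S^{j/p})=V^TS^{j/p}V$. For a matrix $X$ on the compressed space,
\[
 \langle X,\pi\mathcal B_S^{-1}\pi^*X\rangle
 =\theta^{-1}\sum_j\Tr\bigl(X^T S^{j/p}(VXV^T)S^{(p+1-j)/p}\bigr)
\]
is not simply $\sum_j\Tr(X^TT_jXT_{p+1-j})$, because $V^T S^{a}VV^TS^bV\ne V^TS^{a+b}V$. So I would not compare term by term. Instead I would use a variational form of each summand, in the spirit of \eqref{sq:an:shorting-quadratic}.

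Each term $L_{S^\alpha}R_{S^\beta}$ with $\alpha+\beta=1+q$ is jointly concave under compression in an appropriate sense. The cleanest route I would try first is the integral representation of Lemma~\ref{rect:power-differentiation}:
\[
 h_q(a,b)=\left(\int_0^1\bigl((1-u)a+ub\bigr)^{-q-1}\,q\,du\right)^{-1}\cdot(\text{const}),
\]
which comes from writing $b^{-q}-a^{-q}$ as an integral of the derivative of $x^{-q}$ along the segment. This expresses $\mathcal B_S$ itself as $q\theta\int_0^1(\ldots)$ of a form $\sum_{ik}|X_{ik}|^2\phi(s_i,s_k)$, with $\phi$ built from $x^{-q-1}$.

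The key fact needed is the reversed inequality $\mathcal B_{S_0}\preceq(\pi\mathcal B_S^{-1}\pi^*)^{-1}$. By the Schur-complement rule \eqref{sq:an:shorting-quadratic}, $(\pi\mathcal B_S^{-1}\pi^*)^{-1}$ is the minimum of $\langle Y,\mathcal B_SY\rangle$ over all $Y$ with $\pi Y=X$. So it suffices to show
\[
 \langle Y,\mathcal B_SY\rangle\ge\langle\pi Y,\mathcal B_{\pi S}\pi Y\rangle
 \quad\text{for all }Y.
\]
This is a monotonicity of the quadratic form $-D^2\Tr S^{1-q}$ under the unital map $\pi$. It is a known convexity property: $(S,Y)\mapsto -D^2\Tr S^{1-q}[Y,Y]$ is contractive under compressions, equivalently the associated monotone-metric monotonicity. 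I would prove it from Lemma~\ref{sq:fo:sqrt}-type operator inequalities, namely operator concavity of $x^{1/p}$, applied to each $L_{S^{j/p}}R_{S^{(p+1-j)/p}}$ written via $X\mapsto S^{1/p}\,(\cdot)\,S^{1/p}$ factors.

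Concretely, I would try to show the summed operator inequality
\[
 \sum_j\pi\,L_{S^{j/p}}R_{S^{(p+1-j)/p}}\,\pi^*
 \preceq\sum_jL_{S_0^{j/p}}R_{S_0^{(p+1-j)/p}}.
\]
This can be done by writing $S=G^{p}$ with $G=S^{1/p}$ and using the two-sided Hansen–Pedersen/Jensen operator inequality $V^Tf(S)V\preceq f(V^TSV)$ for operator-concave $f(x)=x^{j/p}$ with $j\le p$. It must be combined with the Lieb-type joint concavity of $(A,B)\mapsto\Tr X^TA^{s}XB^{1-s}$; the case $s+(1-s)=1$ is handled after splitting off one factor of $S$.

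\emph{Main obstacle.} The main obstacle is the compression step: the product terms do not compress individually, so the proof must genuinely use Lieb/Ando concavity, or the shorting characterization, rather than the entrywise scalar argument that worked for $q=1/2$ in \eqref{sq:an:compressed-tsallis}.
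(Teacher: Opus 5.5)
Your derivation of \eqref{rect:finite-sum} is correct and is the paper's argument: substitute $a=\alpha^p$, $b=\beta^p$ and factor $\alpha^p-\beta^p$.

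The compression half has a genuine gap, and it comes from a computational error. You claim that $\langle X,\pi\mathcal B_S^{-1}\pi^*X\rangle$ is not $\theta^{-1}\sum_j\Tr(X^TT_jXT_{p+1-j})$, because $V^TS^aVV^TS^bV\ne V^TS^{a+b}V$. No such product occurs. The map $\pi^*(X)=VXV^T$ places $X$ between the two outer factors, so
\[
 \pi L_{S^a}R_{S^b}\pi^*(X)=(V^TS^aV)\,X\,(V^TS^bV),
 \qquad\text{that is,}\qquad
 \pi L_{S^a}R_{S^b}\pi^*=L_{\pi(S^a)}R_{\pi(S^b)}
\]
exactly. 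Your displayed trace $\Tr(X^TS^{j/p}(VXV^T)S^{(p+1-j)/p})$ is also ill-typed: it pairs a compressed matrix with a full-space matrix. Restoring the outer $\pi$ gives precisely $\Tr(X^TT_jXT_{p+1-j})$. The paper used this same factorization in the $q=1/2$ case, \eqref{sq:an:compressed-tsallis}, where $\pi(A_\theta^{\mathrm{original}})^{-1}\pi^*=\theta^{-1}(L_{S_0}R_{T_0}+L_{T_0}R_{S_0})$. That step was not an entrywise scalar argument, so the contrast you draw with it does not hold.

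Because you discarded the termwise comparison, the rest of your compression step remains a sketch. Your shorting reformulation, $\langle Y,\mathcal B_SY\rangle\ge\langle\pi Y,\mathcal B_{S_0}\pi Y\rangle$ for all $Y$, is a correct equivalent of \eqref{rect:compression-order}, but it only restates the goal. The appeals to a ``known'' monotone-metric property and to Lieb-type joint concavity, after ``splitting off one factor of $S$,'' are never turned into an argument.

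None of that machinery is needed. With the factorization above, two facts finish the proof term by term:
\begin{itemize}
\item $T_j=\pi(S^{j/p})\preceq S_0^{j/p}$ for $1\le j\le p$, with equality at $j=p$. This is the operator Jensen inequality you already cite, or the paper's resolvent-integral and block-elimination argument.
\item $L_UR_V$ is monotone in each positive argument:
\[
 L_{U'}R_{V'}-L_UR_V=L_{U'-U}R_{V'}+L_UR_{V'-V}\succeq0
 \qquad(0\preceq U\preceq U',\ 0\preceq V\preceq V'),
\]
which holds because $\langle X,L_UR_VX\rangle=\|U^{1/2}XV^{1/2}\|_{\HS}^2\ge0$.
\end{itemize}
Summing over $j$ gives \eqref{rect:compression-order}.
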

\begin{proof}
In an eigenbasis of $S$, the inverse multiplier is
$\theta^{-1}(a-b)/(b^{-q}-a^{-q})$.
Writing $a=x^p,b=y^p$ factors this as
$\theta^{-1}\sum_{j=1}^p x^j y^{p+1-j}$, proving
\eqref{rect:finite-sum}, including the continuous value at $a=b$.
The sum preserves real symmetric matrices by pairing indices $j$ and $p+1-j$.

For $0<a<1$, the resolvent representation proved in
Lemma~\ref{rect:power-differentiation} gives
\[
 S^a=c_a
       \int_0^\infty t^{a-1}S(S+tI)^{-1}\,dt.
\]
Here $c_a>0$ is the normalizing scalar defined in that lemma's proof.
Block elimination yields
$\pi((S+tI)^{-1})\succeq(S_0+tI)^{-1}$. Subtracting $t$ times this
inequality from the identity and integrating proves
$\pi(S^a)\preceq S_0^a$; for $a=1$ equality holds.
Each exponent in \eqref{rect:finite-sum} belongs to $[1/p,1]$.
Moreover
$\pi L_{S^a}R_{S^b}\pi^*=L_{\pi(S^a)}R_{\pi(S^b)}$.
If $0\preceq U\preceq U'$ and $0\preceq V\preceq V'$, then
\[
 L_{U'}R_{V'}-L_UR_V
 =L_{U'-U}R_{V'}+L_UR_{V'-V}\succeq0;
\]
positivity follows from
$\langle X,L_UR_VX\rangle=\|U^{1/2}XV^{1/2}\|_{\HS}^2$.
Apply this order comparison termwise to obtain
\eqref{rect:compression-order}. These are inequalities on the full
real matrix Hilbert space and hence also on its real symmetric subspace
when applied to the sums. The argument uses order under compression term by term.
\end{proof}

\begin{proposition}[Response transfer to singular sources]\label{rect:all-covariances}
For $q=1/p$, the faithful cap-to-response estimate proved above remains
valid for every coefficient covariance $0\preceq C\preceq I_k$, including singular physical sources,
with unchanged parameters and constants.
\end{proposition}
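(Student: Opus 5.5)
The plan follows the square-case template of Proposition~\ref{sq:an:singular-transfer}, with Lemma~\ref{rect:compression} in place of the square-root compression inequality \eqref{sq:an:compressed-tsallis}.

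If the physical source support $K=\operatorname{span}_a\ran B'_a$ is zero, every covariance-weighted force vanishes and $\Tr(CJ)=0$. Otherwise let $\pi$ be compression onto $K$ and set $S_0=\pi(S)$, where $S$ is the full faithful optimizer. Exactly as in Proposition~\ref{sq:an:singular-transfer}, the following hold:
\begin{itemize}
\item $M_0=\eta_C(S_0)$ is faithful on $K$.
\item The fidelity term depends only on $S_0$, so the full free curvature is $\pi^*A_{f,0}\pi$.
\item The covariance derivative on $\ran C$ equals the compressed derivative at $S_0$, so the Gram cap $t=L/\sqrt k$ is inherited.
\item The budgets \eqref{rect:endpoint-budgets} hold at $S_0$ with $\Tr S_0\le1$, because $\eta_C(I_K)\preceq kI_K$.
\end{itemize}

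The key step replaces the square-root compression inequality by \eqref{rect:compression-order}: $\pi\mathcal B_S^{-1}\pi^*\preceq\mathcal B_{S_0}^{-1}$. Combine this with the shorting identity \eqref{sq:an:shorting-quadratic}, namely $\inf_{\pi X=Y}\langle X,\mathcal B_SX\rangle=\langle Y,(\pi\mathcal B_S^{-1}\pi^*)^{-1}Y\rangle$. Since operator inversion reverses order, the minimal full regularizer energy at a fixed compression $Y$ is at least $\langle Y,\mathcal B_{S_0}Y\rangle$. Adding the free energy, which depends only on $Y$, gives
\[
 \inf_{\pi X=Y}\langle X,(\pi^*A_{f,0}\pi+\mathcal B_S)X\rangle\ge\langle Y,(A_{f,0}+\mathcal B_{S_0})Y\rangle .
\]

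Now apply the inverse variational formula to a force $B$ supported on $K$. Its linear term sees only $\pi X$, so the full unconstrained half-response is bounded by $\tfrac12\langle\pi B,(A_{f,0}+\mathcal B_{S_0})^{-1}\pi B\rangle$. Reinstating the global trace constraint only decreases the left side. This justifies working with the singular linearization \eqref{rect:coupled}, compressed as described after \eqref{rect:optimized-response}, rather than a restarted density problem. Summing over the factored coefficient directions bounds $\Tr(CJ)$ by the local unconstrained response at the faithful subnormalized density $S_0$, with the same $\theta$ and $q$.

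The last sentence of Theorem~\ref{rect:faithful-cap-response} covers exactly this situation. It applies to any faithful subnormalized density with the same contraction realization and Gram cap, and its proof uses only the budgets \eqref{rect:endpoint-budgets}, the fixed point $\Phi(P)=P$, the Gram cap, and \eqref{rect:balanced-inverse-bound} for $\mathcal B_{S_0}$. Hence \eqref{rect:cap-response-parameter} and \eqref{rect:cap-response} hold verbatim.

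The main obstacle is the order comparison \eqref{rect:compression-order}, and it has already been proved. For non-reciprocal $q$, the scalar bound \eqref{rect:power-inverse-bound} is only an inequality and need not compress correctly. The finite-sum identity \eqref{rect:finite-sum} is what allows termwise operator monotonicity of $\pi(S^a)\preceq S_0^a$. What remains delicate is checking that every balanced identity of Lemma~\ref{rect:free-response} holds on $K$ at $S_0$ rather than $S$, in particular $\Tr P=F(S_0,M_0)\le\sqrt k$ and $\Tr R^2\le\Tr M_0\le k$. Both follow from the commutator identity, which I would verify directly on $K$.
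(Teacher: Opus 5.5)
Your proposal is correct and follows the paper's proof essentially step for step. It treats the zero-support case, compresses to $K$ with $F(S,\eta_C(S))=F(S_0,\eta_C(S_0))$ and $\pi^*A_{f,0}\pi$ as the free curvature, combines the shorting identity with \eqref{rect:compression-order}, applies the inverse variational formula to forces supported on $K$, and finally invokes the subnormalized clause of Theorem~\ref{rect:faithful-cap-response} at $S_0$. One small correction: only $\Tr R^2\le\Tr M_0$ uses the commutator identity, while $\Tr P=F(S_0,M_0)\le\sqrt k$ comes from the transport equation $ZM_0Z=S_0$ and the fidelity trace bound.
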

\begin{proof}
Let $S$ be the full optimizing density, factor the coefficient covariance into
real symmetric $B'_a$, and set $K=\operatorname{span}_a\ran B'_a$.
If $K=0$, all covariance-weighted forces $B'_a$ vanish, so
$\Tr(CJ)=0$ and the assertion is immediate.
Otherwise let $\Pi_K$ be the orthogonal projection onto $K$, let $\pi$
compress matrices to $K$, and let $\pi^*$ extend them by zero.
Symmetry gives $B'_a=\Pi_KB'_a\Pi_K$.
For $S_0=\pi(S)$ the compressed source $M_0=\eta_C(S_0)$ is faithful:
if $u^TM_0u=0$, then every $B'_au=0$, and a vector in $K$ with this
property is zero. Fidelity compression gives
\[
 F(S,\eta_C(S))=F(S_0,\eta_C(S_0)).
\]
Define $A_{f,0}$ to be the negative density Hessian of
$2F(S_0,\eta_C(S_0))$ on the compressed symmetric matrix space.
The full free density curvature is then $\pi^*A_{f,0}\pi$.
The identity also preserves the actual covariance derivatives in
directions supported on the coefficient covariance's coefficient range.

For a positive definite quadratic form $A$, minimize
$\langle X,AX\rangle$ under $\pi X=Y$. A Lagrange multiplier $\Lambda$
gives $AX=\pi^*\Lambda$. Since $\pi$ is onto,
$\pi A^{-1}\pi^*$ is positive definite, and therefore
\[
 \Lambda=(\pi A^{-1}\pi^*)^{-1}Y,\qquad
 X=A^{-1}\pi^*(\pi A^{-1}\pi^*)^{-1}Y.
\]
Substitution gives
\[
 \inf_{\pi X=Y}\langle X,AX\rangle
 =\langle Y,(\pi A^{-1}\pi^*)^{-1}Y\rangle.
\]
By \eqref{rect:compression-order}, the minimal full regularizer curvature
at fixed compressed perturbation $Y$ is at least
$\langle Y,\mathcal B_{S_0}Y\rangle$. Adding the free curvature and
using the inverse variational formula proves, for every compressed
coefficient force $B$ on $K$,
\[
 \tfrac12 D_H^2E(H,C)[\pi^*B,\pi^*B]\le\tfrac12
  \langle B,(A_{f,0}+\mathcal B_{S_0})^{-1}B\rangle.
\]
The actual density trace restriction can only decrease the left-hand
response. In particular off-support density perturbations have been
accounted for through quadratic minimization, rather than discarded.
Every covariance-weighted force $B'_a$ is supported on $K$; equivalently,
$\A(u)$ is supported there for $u\in\ran C$. The covariance derivative
on this range and its cap agree with the compressed ones. Finally $\Tr S_0\le1$ and
$\eta_C(I_K)\preceq kI_K$, so all budgets in
\eqref{rect:endpoint-budgets} apply on $K$. The faithful estimate was proved for
arbitrary subnormalized faithful densities; the stated subnormalized hypotheses hold at $S_0$. Summing the
local estimates proves the assertion.
\end{proof}

\section{A small square-root term and the aspect-ratio choice}\label{rect:ridge-section}

The power potential has the desired analytic response, but its density
floor contains the exponent $1/q$. When that exponent grows with the
aspect ratio, the floor alone is inadequate for a uniform polynomial
accuracy bound. We therefore add a square-root term whose value is at
most two. Its purpose is numerical regularity; the next proof checks
that it preserves the response estimate at the new optimizing density.

For $\kappa>0$ define
\begin{equation}\label{rect:ridge-potential}
 \begin{aligned}
 E_\kappa(H,C)=\max_{S\succeq0,\ \Tr S=1}
 \biggl\{&\Tr(HS)+2F(S,\eta_C(S))\\
 &+\frac{\theta}{1-q}\Tr S^{1-q}+2\kappa\Tr\sqrt S\biggr\},
 \end{aligned}
\end{equation}
and set $f_\kappa(H)=E_\kappa(H,0)$.
The optimizer is unique and faithful by the same strict-concavity and
boundary-mixing arguments as in Lemma~\ref{rect:basic}.
The potential remains convex and $1$-Lipschitz in $H$, and monotone in $C$.
In particular
\begin{equation}\label{rect:ridge-basic}
 \begin{gathered}
 E_\kappa(H,C)\ge\lambda_{\max}(H),\qquad
 0\le E_\kappa(H,C)-f_\kappa(H)\le2\sqrt k,\\
 E_\kappa(0,I_n)\le2\sqrt n+\frac{\theta}{1-q}D^q+2\kappa\sqrt D.
 \end{gathered}
\end{equation}
The extra square-root term is independent of $C$, so the formula for
$\Gamma$ in \eqref{rect:optimized-response} is unchanged in form. All
its factors are now evaluated at the new optimizer.

\begin{proposition}[The same response bound at the new optimizer]\label{rect:ridge-response}
Assume $q=1/p$ with integer $p\ge2$. Let
$J_\kappa=\frac12D_H^2E_\kappa$ in coefficient coordinates and
$\Gamma_\kappa=D_CE_\kappa$. At every coefficient covariance
$0\preceq C\preceq I_k$, $k\ge1$, including singular ones,
\begin{equation}\label{rect:ridge-cap}
 \Gamma_\kappa|_{\ran C}\preceq \frac L{\sqrt{k}}I
 \quad\Longrightarrow\quad
 \Tr(CJ_\kappa)\le2\sqrt{k}+
 \frac{6L^q}{\theta q}k^{1-q}=:B_k\sqrt{k}.
\end{equation}
\end{proposition}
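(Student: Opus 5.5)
The plan is to rerun the proof of Theorem~\ref{rect:faithful-cap-response} and Proposition~\ref{rect:all-covariances} at the new optimizer $S=S_\kappa(H,C)$. The extra term $2\kappa\Tr\sqrt S$ will only be used through the fact that it adds positive curvature. The steps are as follows.

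First, I would record that the transport equation, the covariance envelope formula $\Gamma_\kappa(u,u)=\Tr(S\A(u)Z\A(u))$, and every balanced identity of Lemma~\ref{rect:free-response} do not involve the regularizer. All of them hold at the new density. So the Gram cap $C^{1/2}\Gamma_\kappa C^{1/2}\preceq (L/\sqrt k)I$ and the budgets \eqref{rect:endpoint-budgets} ($\Tr P\le\sqrt k$, $\Tr(WP)\le k$, $\Tr R^2\le k$) hold unchanged, with $P,R_q,R$ built from $S_\kappa$.

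Second, the total negative regularizer Hessian is now $\mathcal B_S+A_\kappa$, where $A_\kappa$ is the negative Hessian of $2\kappa\Tr\sqrt S$. Since $A_\kappa\succeq0$, we have $(\mathcal B_S+A_\kappa)^{-1}\preceq\mathcal B_S^{-1}$. I would then define $H_r=\mathcal J_P^{1/2}(A_{q,\theta}+A_{\kappa,\mathrm{bal}})\mathcal J_P^{1/2}$ and $\mathcal A=(I-T)^*(I-T)+H_r$. Lemma~\ref{sq:an:inverse} gives $\mathcal A^{-1}\preceq I+TH_r^{-1}T^*\preceq I+K_q$, and also $\mathcal A^{-1}\preceq H_r^{-1}\preceq H_{q,\theta}^{-1}$. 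Both monotonicities are just order reversal under inversion of positive operators. The force split, the low-forcing bound (Lemma~\ref{rect:low-forcing}) and the high-response bound (Lemma~\ref{rect:high-response}) then apply verbatim, because they use only these two inverse upper bounds, the cap, and the budgets. In the faithful case this gives $\Tr(CJ_\kappa)\le 2\sqrt k+6L^q k^{1-q}/(\theta q)$ with $c=L$.

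Third, the singular-source case needs care, because the compression argument must handle the sum of the two regularizers. Write $\mathcal B^\kappa_S=\mathcal B_S+A_\kappa$. I want $\inf_{\pi X=Y}\langle X,\mathcal B^\kappa_SX\rangle\ge\langle Y,\mathcal B_{S_0}Y\rangle$. This holds because dropping $A_\kappa\succeq0$ only lowers the infimum, and then \eqref{sq:an:shorting-quadratic} together with \eqref{rect:compression-order} bounds the result below by $\langle Y,\mathcal B_{S_0}Y\rangle$. So I would not need the analogous compression fact \eqref{sq:an:compressed-tsallis} for the $\kappa$ term, though it is available if one wished to keep it. The density trace constraint is dropped in the favorable direction, as before. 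Fidelity compression keeps $\Gamma_\kappa$ on $\ran C$ equal to the compressed derivative, and $S_0$ is a faithful subnormalized density on $K$ with the same budgets. The faithful estimate, which was proved for subnormalized densities, then gives the bound. Finally, the case $K=\{0\}$ gives $\Tr(CJ_\kappa)=0$.

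The main obstacle I expect is the bookkeeping in the third step: checking that the local unconstrained response used by Theorem~\ref{rect:faithful-cap-response} may legitimately be taken with $\mathcal B_{S_0}$ alone, without the $\kappa$ curvature. This is justified because every estimate there is an upper bound that is monotone in the added curvature. Everything else is a direct reuse of earlier results.
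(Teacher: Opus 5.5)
Your proposal is correct and follows essentially the paper's argument. You evaluate everything at the new optimizer and use that the $\kappa$-term is independent of $C$, so $\Gamma_\kappa$, the Gram cap and the budgets are unchanged, and that it only adds nonnegative curvature. In the singular case you note that adding $\kappa A_{1/2}\succeq0$ can only raise the compressed quadratic minimum before invoking \eqref{rect:compression-order}.

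The one cosmetic difference is in the faithful case. The paper drops the $\kappa$ curvature once, via $A_0+\kappa A_{1/2}\succeq A_0$ in the inverse variational formula, and then cites the pointwise estimate \eqref{rect:cap-response} as a black box. You instead carry the $\kappa$ curvature into $H_r$ and drop it inside the two inverse bounds. Both routes are valid.
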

\begin{proof}
Fix the new optimizing density $S$. At this same density let $A_0$ be
the negative density Hessian of the fidelity plus the original power
regularizer, restricted to the trace-zero tangent. Let
$A_{1/2}=-D_S^2(2\Tr\sqrt S)$ on that tangent. Then
\[
 A_\kappa=A_0+\kappa A_{1/2}\succeq A_0>0.
\]
For any force functional $b$, the inverse variational formula gives
\[
 \langle b,A_\kappa^{-1}b\rangle
 =\sup_X\{2\langle b,X\rangle-\langle X,A_\kappa X\rangle\}
 \le\sup_X\{2\langle b,X\rangle-\langle X,A_0X\rangle\}.
\]
The optimized center Hessian is the left-hand inverse response.
The pointwise response estimate in \eqref{rect:cap-response} bounds the right-hand
response at every faithful subnormalized density, not only at an
optimizer of the old objective. Its covariance Gram is evaluated at
this same $S$ and is precisely $\Gamma_\kappa$, since the new term is
independent of $C$. Summing the force inequalities proves
\eqref{rect:ridge-cap} for faithful physical sources.

For a singular physical source, fix its support $K$. Proposition~\ref{rect:all-covariances} minimizes the full negative density
quadratic form over perturbations with a prescribed compression to $K$.
Adding the nonnegative form $\kappa A_{1/2}$ can only increase this
minimum. The original compressed comparison therefore remains valid;
its reduced density is faithful and has trace at most one. Applying the
same pointwise estimate on $K$ proves the claim. This argument does not
compare Hessians at two different optimizing densities.
\end{proof}

\subsection{Choosing the exponent and balancing the two costs}\label{rect:parameters}

Fix $L=4096$ and $\kappa=D^{-1}$. Choose $p\ge2$ to be a power of two
such that, with $r_{\mathrm{asp}}=D/n$,
\begin{equation}\label{rect:p-choice}
 r_{\mathrm{asp}}^{1/p}\le2,\qquad p\le4+2\log_2 r_{\mathrm{asp}}\le4(1+\log r_{\mathrm{asp}}),
 \qquad q=1/p.
\end{equation}
Here $m\ge n\ge1$ gives $r_{\mathrm{asp}}\ge2$, and $\log$ is the natural logarithm.
A finite way to make this choice starts with $(p,b_{\mathrm{pow}})=(2,4)$ and, while
$D>n b_{\mathrm{pow}}$, replaces $(p,b_{\mathrm{pow}})$ by $(2p,b_{\mathrm{pow}}^2)$. Throughout $b_{\mathrm{pow}}=2^p$.
At termination the first inequality holds. If a doubling occurred, the
previous power failed and $p<2\log_2r_{\mathrm{asp}}$; otherwise $p=2$. This proves
the displayed upper bound. The number of doublings is at most $6D$.
The value $b_{\mathrm{pow}}$ is used only in an exact-arithmetic comparison; no
logarithm or non-dyadic power is required as a primitive.

Choose $\theta>0$ by
\begin{equation}\label{rect:theta-choice}
 \theta^2=\frac{(1-q)L^q n^{1-q}}{qD^q}.
\end{equation}
All powers here are computed using repeated square roots, since
$p$ is a power of two. The same dyadic choice permits the finite SDP
representation derived in Section~\ref{solver:main}. The parameters are chosen from the original
$n,m$ and stay fixed through all epochs and phases.

Define
\[
 C_{\mathrm{init}}=\frac{\theta D^q}{1-q},\qquad
 C_{\mathrm{resp}}=\frac{L^qn^{1-q}}{\theta q},\qquad
 B_n=2+\frac{6L^q}{\theta q}n^{1/2-q}.
\]
The scalar $C_{\mathrm{init}}$ is the initial power-regularization cost. The scalar $C_{\mathrm{resp}}$
measures the accumulated response cost: the common walk bounds its
phase charge by an absolute constant times $(B_n+1)\sqrt n$.
Equation~\eqref{rect:theta-choice} makes $C_{\mathrm{init}}=C_{\mathrm{resp}}$, and hence
\begin{equation}\label{rect:balanced-costs}
 C_{\mathrm{init}}^2=C_{\mathrm{resp}}^2=\frac{L^q n(D/n)^q}{q(1-q)}.
\end{equation}
Since $L^q\le64$, $(D/n)^q\le2$, and
$1/[q(1-q)]\le2p\le8(1+\log(D/n))$, writing
$\mathcal W=\sqrt{n(1+\log(D/n))}$ gives
\begin{equation}\label{rect:cost-bounds}
 C_{\mathrm{init}}=C_{\mathrm{resp}}\le\sqrt{1024}\,\mathcal W=32\mathcal W,
 \qquad (B_n+1)\sqrt n=3\sqrt n+6C_{\mathrm{resp}}\le195\mathcal W.
\end{equation}
Also $2\kappa\sqrt D\le2$, so the initial value in
\eqref{rect:ridge-basic} is at most $36\mathcal W$.
For every epoch size $k\le n$, the response bound
\eqref{rect:ridge-cap} is at most $B_n\sqrt k$, because
$1/2-q\ge0$. These are exactly the two scale estimates required by
the common epoch argument. They yield
$O(\sqrt{n(1+\log(2m/n))})$ discrepancy.
Since $m\ge n$ implies $\log(2m/n)\ge\log2$, this is equivalently
$O(\sqrt{n\log(2m/n)})$.

At $q=1/2$, $\theta=1$, and $\kappa=0$, the same response formula
reduces to $(2+12\sqrt L)\sqrt k=770\sqrt k$, the square estimate.
The rectangular theorem therefore uses the same geometric walk and
probability argument. The work specific to its aspect ratio lies in
the power-curvature comparison, the interpolation, and the uniform
regularity provided by the small square-root term.

\begin{corollary}[The rectangular part of Theorem~\ref{main:combined}]
\label{rect:conclusion}
For $m>n\ge1$, Algorithm~\ref{alg:full} returns a signing of discrepancy
at most $10^8\sqrt{n(1+\log(2m/n))}$, except with probability at most
$2^{-b}$ of returning failure.
\end{corollary}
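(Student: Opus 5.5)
The plan is to rerun the common signing argument of Section~\ref{sq:ms:short-section}, Proposition~\ref{walk:assembly}, with the rectangular profile $E_\kappa$ of \eqref{rect:ridge-potential} in place of the square potential. The parameters are those fixed in Section~\ref{rect:parameters}: $q=1/p$ with $p$ a dyadic power satisfying \eqref{rect:p-choice}, $\theta$ from \eqref{rect:theta-choice}, $\kappa=D^{-1}$ and $L=4096$. That proposition needs three inputs.

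\emph{The response contract \eqref{walk:response-contract}.} Proposition~\ref{rect:ridge-response} supplies $\Tr(CJ_\kappa)\le B_k\sqrt k$ under the cap $t_\ell=L/\sqrt\ell$, for every covariance, singular ones included. Since $1/2-q\ge0$, this gives $B_k\le B_n$ for every epoch size $k\le n$. I therefore take the response constant $B=B_n$, and $B_n\ge2$ holds. The positivity $J_\kappa\succeq0$ follows from convexity of $E_\kappa$ in $H$.

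\emph{The potential facts \eqref{walk:potential-contract}.} These are \eqref{rect:ridge-basic}, together with convexity of $f_\kappa$ and the $1$-Lipschitz property.

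\emph{The numerical contracts of Section~\ref{walk:contracts}.} These are discharged in the later implementation sections for the dyadic power profile. I will cite them there, including the uniform bounds $M_C$ and $M_4$ that make the small-step remainder negligible. This is where the ridge $\kappa$ is used.

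With these inputs, \eqref{walk:global-bound} gives every returned signing the bound
\[
 \Bigl\|\sum_i\sigma_iA_i\Bigr\|\le E_\kappa(0,I_n)+4(27K+64)\sqrt n,
\]
and the failure probability is at most $2^{-b}$. By \eqref{rect:cost-bounds} and the remark after it, $E_\kappa(0,I_n)\le36\mathcal W$, where $\mathcal W=\sqrt{n(1+\log(D/n))}$.

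The one place needing care is that $K=\lceil64(B+1)+129\rceil$ now grows with $B_n$. I would bound $K\sqrt n\le 64(B_n+1)\sqrt n+130\sqrt n$ and then apply $(B_n+1)\sqrt n\le195\mathcal W$ from \eqref{rect:cost-bounds}. This gives
\[
 4(27K+64)\sqrt n\le 4\bigl(27\cdot64\cdot195\,\mathcal W+27\cdot130\sqrt n+64\sqrt n\bigr).
\]
This is roughly $1.35\cdot10^6\,\mathcal W+1.5\cdot10^4\sqrt n$, well under $10^8\mathcal W-36\mathcal W$ since $\sqrt n\le\mathcal W$.

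It remains to compare $\log(D/n)$ with $\log(2m/n)$. Here $D=2m$, so $\mathcal W=\sqrt{n(1+\log(2m/n))}$ exactly, which is the stated bound.

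I expect the main obstacle to be bookkeeping rather than new analysis. The phase argument of Lemma~\ref{walk:phase} and the acceptance margins of Lemma~\ref{walk:moments} use $\tau=1/(B+1)$, which becomes small, and $\ell\ge32$. I must recheck that the constants $151/50$, $16\sqrt\ell$ and the Markov bounds still hold with a large $B$. They do, because they only use $B\tau<1$. I also need the first term $2\sqrt\ell$ of \eqref{walk:moments-eq} to be unchanged, which holds because the excess bound $2\sqrt k$ in \eqref{rect:ridge-basic} is the same. Finally I need to confirm that the case $m>n$ is covered by the choice of $p$, with $r_{\rm asp}=2m/n\ge2$.
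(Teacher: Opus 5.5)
Your proposal is correct and follows the paper's proof. Like the paper, you apply Proposition~\ref{walk:assembly} with $B=B_n$ from Proposition~\ref{rect:ridge-response}, the potential facts \eqref{rect:ridge-basic}, the initial bound $36\mathcal W$, and $(B_n+1)\sqrt n\le195\mathcal W$; the only differences are cosmetic. The paper absorbs the additive constant in $K$ through $K\le152(B_n+1)$, whereas you keep the $130\sqrt n$ term separate and use $\sqrt n\le\mathcal W$. The paper also explicitly records that $\Tr Q\ge\ell/16$ carries over via Lemma~\ref{walk:live-flat-trace} because $\tau\le1/3$, a step left implicit in your rerun of the common walk.
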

\begin{proof}
The implemented profile is $E_\kappa$ with the parameters above.
Proposition~\ref{rect:ridge-response} supplies the common response
contract with $B=B_n$, uniformly for all epoch sizes at most $n$.
The remaining potential properties are \eqref{rect:ridge-basic}.
The legal movement uses the same scaled projection covariance as in the
square case. Its trace bound is local to the current epoch's live count
$\ell$, even when all $n$ original labels remain in the coefficient
matrices: covariance annihilates the labels frozen before the epoch,
and its rank is at most $\ell$. Lemma~\ref{walk:live-flat-trace} charges
only newly frozen labels and the radial constraint. With the common
ledger and $\tau=(B+1)^{-1}\le1/3$, inequality~\eqref{walk:flat-positive}
therefore yields $\Tr Q\ge\ell/16$. The exponent choice changes the
response budget and the time scale, while the uniform signed projection
step and its progress calculation stay the same.

The common signing estimate, Proposition~\ref{walk:assembly}, therefore
applies once the numerical contracts are verified in the implementation
part. Since $B_n\ge2$, the scalar recipe gives
$K\le64(B_n+1)+130\le152(B_n+1)$. Using
\eqref{rect:cost-bounds} and the initial bound $36\mathcal W$, we obtain
\[
 E_\kappa(0,I_n)+4(27K+64)\sqrt n
 \le[36+4\cdot27\cdot152\cdot195+256]\mathcal W
 <10^8\mathcal W.
\]
The failure bound is the common retry argument. The case $m=n$ may
use the already proved square profile; its guarantee is stronger than
the displayed rectangular estimate.
\end{proof}

\subsection{Why the extra square root gives polynomial conditioning}

\begin{lemma}[A density floor for the implemented potential]
\label{rect:ridge-floor}
Let $S$ maximize $E_\kappa(H,C)$, with $0\preceq C\preceq I_k$ and
$\kappa>0$. Then
\[
 S\succeq
 \left(\frac{\kappa}{2\|H\|+2\sqrt k+\theta D^q+\kappa\sqrt D}\right)^2 I.
\]
For the choices in Section~\ref{rect:parameters}, put $s=n+D+2$.
If $k\le n$ and $\|H\|\le n+1$, then in particular
\begin{equation}\label{rect:polynomial-floor}
 S\succeq 2^{-10}s^{-6}I.
\end{equation}
\end{lemma}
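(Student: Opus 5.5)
The plan is to copy the density-floor argument of Lemma~\ref{sq:an:smooth}, reading the stationarity equation through the square-root ridge. The power term, whose floor exponent is $1/q$, will not be used for the lower bound.

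First I record the structural facts at the unique faithful maximizer $S$ of $E_\kappa(H,C)$.
\begin{itemize}
\item As in Lemma~\ref{rect:basic}, on the fixed coefficient face the free term is smooth on its physical support.
\item Its density gradient $G$ satisfies $G\succeq0$ by fidelity monotonicity.
\item Degree-one homogeneity gives $\Tr(SG)=2F(S,\eta_C(S))$.
\end{itemize}
Interior stationarity on the trace hyperplane then reads
\[
 H+G+\theta S^{-q}+\kappa S^{-1/2}=\lambda I .
\]
Pairing with $S$ gives $\lambda=\Tr(HS)+2F(S,\eta_C(S))+\theta\Tr S^{1-q}+\kappa\Tr\sqrt S$. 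I bound each term separately:
\begin{itemize}
\item $\Tr(HS)\le\|H\|$.
\item $2F(S,\eta_C(S))\le2\sqrt{\Tr\eta_C(S)}\le2\sqrt k$, using $\eta_C(I)\preceq kI$ and \eqref{sq:fo:fidelitytrace}.
\item $\Tr S^{1-q}\le D^q$ by Jensen.
\item $\Tr\sqrt S\le\sqrt D$.
\end{itemize}
Hence $\lambda\le\|H\|+2\sqrt k+\theta D^q+\kappa\sqrt D$.

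Next I drop the positive terms $G$ and $\theta S^{-q}$ from the stationarity equation. This gives $\kappa S^{-1/2}\preceq\lambda I-H\preceq(\lambda+\|H\|)I$. Every eigenvalue $s_i$ of $S$ therefore satisfies $\sqrt{s_i}\ge\kappa/(\lambda+\|H\|)$. Squaring and inserting the bound on $\lambda$ yields the displayed floor. The argument is pointwise in the eigenvalues and needs no faithfulness of the source. If the source is singular, $G$ is still a positive gradient supported on $K$, extended by zero.

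For \eqref{rect:polynomial-floor} I substitute the parameters of Section~\ref{rect:parameters}, and this step needs the most care.
\begin{itemize}
\item $\kappa=D^{-1}$, so $\kappa\sqrt D\le1$.
\item $\|H\|\le n+1$ and $k\le n$.
\item The main task is bounding $\theta D^q$. From \eqref{rect:theta-choice},
\[
 \theta D^q=\sqrt{(1-q)L^q n^{1-q}D^q/q}\le\sqrt{p\,L^q n(D/n)^q}\le\sqrt{128\,p\,n}.
\]
Here I use $L^q\le64$, $(D/n)^q\le2$, and $p\le4+2\log_2(D/n)\le 6D$ (crude).
\end{itemize}
This gives $\theta D^q\le\sqrt{768\,nD}\le 28s$. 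The denominator is then at most
\[
 2(n+1)+2\sqrt n+28s+1\le 32s .
\]
So $S\succeq D^{-2}(32s)^{-2}I\succeq2^{-10}s^{-4}I$, and in particular $S\succeq2^{-10}s^{-6}I$ since $s\ge1$. The weaker power $s^{-6}$ leaves ample slack for the crude constants, so I would not optimize them. The only real obstacle is making sure the bound on $\theta D^q$ uses the dyadic choice of $p$ correctly. If that bound turns out tighter than expected, I would fall back on the trivial estimate $p\le 6D$ together with $\theta D^q\le\sqrt{2pL^q\cdot 2n}$, which still keeps the denominator polynomial of degree one in $s$.
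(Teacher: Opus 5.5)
Your proposal is correct and follows the paper's proof: you use the stationarity equation with the ridge term, pair it with $S$ to bound $\lambda$, drop $G\succeq0$ and $\theta S^{-q}$, and finish by scalar spectral calculus and parameter substitution. The only difference is in the arithmetic. You bound the product $\theta D^q=\sqrt{(1-q)L^qn^{1-q}D^q/q}$ directly, which gives a denominator at most $32s$ and the sharper floor $2^{-10}s^{-4}$. The paper instead bounds $\theta\le20s$ and $D^q\le\sqrt D$ separately, giving $32s^{3/2}$ and $2^{-10}s^{-5}$; both imply \eqref{rect:polynomial-floor}.
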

\begin{proof}
Let $G$ be the density gradient of $2F(S,\eta_C(S))$, with the
transport compressed to its physical source support if necessary.
Monotonicity gives $G\succeq0$, and homogeneity gives
$\Tr(SG)=2F(S,\eta_C(S))\le2\sqrt k$. Stationarity reads
\[
 H+G+\theta S^{-q}+\kappa S^{-1/2}=\lambda I.
\]
Pair with $S$ and use $\Tr S^{1-q}\le D^q$ and
$\Tr\sqrt S\le\sqrt D$ to obtain
$\lambda\le\|H\|+2\sqrt k+\theta D^q+\kappa\sqrt D$.
Dropping the positive terms $G$ and $\theta S^{-q}$ shows
$\kappa S^{-1/2}\preceq(\lambda+\|H\|)I$, which proves the
first inequality by scalar spectral calculus.

The parameter bounds imply $p\le6D$ and
\[
 \theta^2\le64pn\le384s^2,
 \qquad \theta\le20s,\qquad D^q\le\sqrt D,\qquad
 \kappa=D^{-1}\ge s^{-1}.
\]
The denominator in the first bound is at most
$2s+2\sqrt s+20s\sqrt s+1\le32s^{3/2}$.
Thus $S\succeq2^{-10}s^{-5}I$, which implies the stated more
conservative floor. The estimate depends on the density regularizer
and the total source budget.
\end{proof}

\section{Computing the first derivatives from a primal SDP solution}
\label{query:section}

The proofs so far use a covariance derivative to prepare the state and
a supporting density to decide whether a trial is useful. A feasible
approximate primal solution supplies both reports. There are three steps:
strong concavity converts objective error to density error; the transport
formula converts density error to covariance-derivative error; and a
positive shift makes that derivative stable to evaluate. This gives the
reports used by the algorithm directly from its SDP solution.

For the duration of this section, write the density objective in the common
form
\begin{equation}\label{query:objective}
 \Phi_{H,C}(S)=\operatorname{Tr}(HS)+2F(S,\eta_C(S))
                  +2c\operatorname{Tr}\sqrt S+\mathcal R_0(S),
 \qquad
 E(H,C)=\max_{S\succeq0,\ \operatorname{Tr}S=1}\Phi_{H,C}(S).
\end{equation}
Here $F$ is the unsquared fidelity, $c>0$, and $\mathcal R_0$ is concave.
For the square potential, $c=1$ and $\mathcal R_0=0$; for the rectangular
potential, $c=\kappa$ and $\mathcal R_0$ is its generalized Tsallis power
regularizer. All matrices
in this section are real. The physical matrices $\A_1,\ldots,\A_\ell$ are
symmetric contractions of order $D$. The retained coefficient space is a
subspace $U\subseteq\mathbb R^\ell$, with orthogonal projection $P_U$, and
the current covariance satisfies
\begin{equation}\label{query:coefficient-floor}
 aP_U\preceq C\preceq P_U,\qquad 0<a\le1.
\end{equation}
Thus $\ell$ counts the original labels of the epoch, rather than
$\dim U$. We use the notation
\[
 \A(v)=\sum_{i=1}^\ell v_i\A_i,\qquad
 \eta_C(S)=\sum_{i,j=1}^\ell C_{ij}\A_iS\A_j.
\]
The operator norm on $U$ means the norm of the compression to $U$.

\subsection{What a certified primal solution provides}

Let $S_*$ be the unique maximizing density in
\eqref{query:objective}. A \emph{certified primal gap} $\nu$ means that the
solver returns a feasible point of the lifted SDP whose objective value is
at least $E(H,C)-\nu$. Its density component is denoted by $\widehat S$.
Feasibility makes its lifted objective a lower bound for the density
objective at $\widehat S$. That lower bound is what allows the objective
gap to control the distance to the optimizing density.

\begin{lemma}[Objective accuracy gives density accuracy]
\label{query:gap-density}
A certified primal gap $\nu\ge0$ gives
\begin{equation}\label{query:density-error}
 \|\widehat S-S_*\|_{\mathrm{op}}
 \le\|\widehat S-S_*\|_{\mathrm{HS}}
 \le2\sqrt{\nu/c}.
\end{equation}
\end{lemma}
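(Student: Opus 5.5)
The plan is a strong-concavity argument for $\Phi_{H,C}$ on the density set, with modulus coming only from the term $2c\Tr\sqrt S$. The other terms are concave: the linear term trivially, the fidelity term by joint concavity in Lemma~\ref{sq:fo:fidelity} since both arguments are linear in $S$, and $\mathcal R_0$ by hypothesis. So I need the following quantitative claim: for densities $S,T$,
\[
 \Tr\sqrt{\tfrac{S+T}{2}}-\tfrac12\Tr\sqrt S-\tfrac12\Tr\sqrt T\ \ge\ \tfrac1{16}\|S-T\|_{\HS}^2 .
\]
With $c$ in front, this makes $2c\Tr\sqrt S$ midpoint $(c/8)$-strongly concave in Hilbert--Schmidt norm.

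For the claim I will use the curvature formula \eqref{sq:fo:sqrthessian}. Along the segment $S_u=T+u(S-T)$, with $X=S-T$, the negative second derivative of $2\Tr\sqrt{S_u}$ equals
\[
 \sum_{i,j}\frac{|X_{ij}|^2}{\sqrt{s_is_j}(\sqrt{s_i}+\sqrt{s_j})}\ \ge\ \frac12\|X\|_{\HS}^2,
\]
computed in an eigenbasis of $S_u$. The inequality holds because every eigenvalue of a density is at most $1$, so each denominator is at most $2$. Integrating this bound against the midpoint kernel, whose weight $\int_0^1\min(u,1-u)\,du=1/4$ appears in the standard Taylor identity for $g(\tfrac12)-\tfrac12g(0)-\tfrac12g(1)$, gives the displayed inequality with constant $1/16$.

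The formula \eqref{sq:fo:sqrthessian} is only valid on the positive definite cone. To cover singular $S,T$, I first replace them by $(1-\epsilon)S+\epsilon I/D$, and similarly for $T$. Both sides of the inequality are continuous in $\epsilon$ (Lemma~\ref{sq:fo:sqrt} gives continuity of the square root), so letting $\epsilon\downarrow0$ recovers the claim. An alternative is to restrict to $\ran(S+T)$ as in the strict-concavity proof of Lemma~\ref{sq:fo:sqrt}.

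Now take $S_*$ to be the maximizer and $T=\widehat S$, which is a feasible density. Feasibility of the lifted SDP point means its lifted objective is at most $\Phi(\widehat S)$, which is the content of Proposition~\ref{sq:fo:sdp}: each auxiliary block variable is bounded by the corresponding exact term at $\widehat S$. Hence $\Phi(\widehat S)\ge E-\nu$. Midpoint concavity of the remaining terms together with the claim gives
\[
 E\ \ge\ \Phi\!\left(\tfrac{S_*+\widehat S}{2}\right)\ \ge\ \tfrac12\Phi(S_*)+\tfrac12\Phi(\widehat S)+\tfrac{c}{8}\|S_*-\widehat S\|_{\HS}^2\ \ge\ E-\tfrac\nu2+\tfrac c8\|S_*-\widehat S\|_{\HS}^2 .
\]
Therefore $\|S_*-\widehat S\|_{\HS}^2\le4\nu/c$, which is exactly $2\sqrt{\nu/c}$ after taking square roots. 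The operator-norm inequality in \eqref{query:density-error} is immediate from $\|\cdot\|_{\op}\le\|\cdot\|_{\HS}$.

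The only delicate step is extending the curvature lower bound to singular densities. The perturbation-and-limit argument handles it, because $\nu$ and the Hilbert--Schmidt norm are continuous along the perturbation. For the rectangular profile, $\mathcal R_0$ is concave, so it is simply dropped from the strong-concavity bound.
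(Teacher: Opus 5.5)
There is a genuine gap, in the constant. The integral $\int_0^1\min(u,1-u)\,du$ is indeed $1/4$, but the Taylor identity you invoke carries an extra factor $\tfrac12$:
\[
 g(\tfrac12)-\tfrac12g(0)-\tfrac12g(1)=\tfrac12\int_0^1\min(u,1-u)\,\bigl(-g''(u)\bigr)\,du .
\]
You can check this with $g(u)=-u^2/2$, where the left side equals $1/8$. Your curvature bound is $-\frac{d^2}{du^2}2\Tr\sqrt{S_u}\ge\frac12\|X\|_{\HS}^2$. With the correct weight $1/8$ it only gives a midpoint defect of $\Tr\sqrt{\cdot}$ at least $\frac1{32}\|S-T\|_{\HS}^2$, not $\frac1{16}$. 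So $2c\Tr\sqrt S$ is only midpoint $(c/16)$-strongly concave. Your final chain then gives $\|\widehat S-S_*\|_{\HS}^2\le 8\nu/c$, i.e.\ $2\sqrt2\sqrt{\nu/c}$, not the stated $2\sqrt{\nu/c}$.

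This is not merely bookkeeping; the midpoint comparison is intrinsically lossy. For an exact quadratic of curvature $m$, the midpoint route yields $\|X\|^2\le 4\nu/m$, while the truth is $2\nu/m$. With the crude modulus $m=c/2$, no rearrangement of your midpoint argument reaches the stated constant. You would need a sharper curvature bound that uses the trace constraint, and you have not supplied one.

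The paper avoids the loss by using first-order information at the maximizer. Since $S_*$ is faithful, the derivative of $\Phi_{H,C}$ vanishes along the trace-zero direction $\widehat S-S_*$. Integrating $-\frac{d^2}{du^2}\Phi(S_*+u(S-S_*))\ge\frac c2\|S-S_*\|_{\HS}^2$ against the kernel $1-u$, whose weight is $1/2$, gives $E-\Phi(S)\ge\frac c4\|S-S_*\|_{\HS}^2$ directly. Combined with $\Phi(\widehat S)\ge E-\nu$, this is exactly $2\sqrt{\nu/c}$.

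Alternatively, you can keep your chord-style argument by replacing the midpoint with $(1-\lambda)S_*+\lambda\widehat S$. The same curvature bound then gives a concavity defect of at least $\frac c4\lambda(1-\lambda)\|X\|_{\HS}^2$. Comparing with $E$ yields $\|X\|_{\HS}^2\le 4\nu/(c(1-\lambda))$, and letting $\lambda\downarrow0$ recovers the stated bound.

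A minor point: for the rectangular profile, the fact that the lifted objective is at most $\Phi(\widehat S)$ should be cited from Proposition~\ref{solver:program}, via the $\mathsf T$-lift and the power blocks. Proposition~\ref{sq:fo:sdp} covers only the square-root program, though the reasoning is identical.
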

\begin{proof}
At a positive trace-one density $S$, diagonalize $S$ with eigenvalues
$s_1,\ldots,s_D\in(0,1]$. The square-root derivative formula gives, for
every real symmetric $X$,
\[
 -D^2\bigl(2c\operatorname{Tr}\sqrt S\bigr)[X,X]
 =c\sum_{i,j}
 \frac{|X_{ij}|^2}{\sqrt{s_is_j}(\sqrt{s_i}+\sqrt{s_j})}
 \ge\frac c2\|X\|_{\mathrm{HS}}^2.
\]
The remaining nonlinear terms in \eqref{query:objective} are concave,
so $\Phi_{H,C}$ has the same strong concavity bound on the density set.
The optimizer is faithful, and its derivative vanishes on trace-zero
directions. Integrating the second-derivative bound along the segment
from $S_*$ to any positive density $S$ gives
\[
 E(H,C)-\Phi_{H,C}(S)\ge\frac c4\|S-S_*\|_{\mathrm{HS}}^2.
\]
For a singular density, apply this inequality at interior points of the
segment and pass to its endpoint by continuity. Finally, the objective
of a feasible lifted SDP point is no larger than
$\Phi_{H,C}(\widehat S)$: maximizing its auxiliary variables at fixed
$\widehat S$ gives exactly that quantity. The certified gap therefore
implies \eqref{query:density-error}.
\end{proof}

\subsection{Stability of the covariance derivative}

For any faithful density $S$, let $\Gamma(S)$ denote the covariance
derivative of $2F(S,\eta_C(S))$, keeping $S$ fixed and allowing only
perturbations supported on $U$. At $S=S_*$, the envelope identity makes
this the derivative of the optimized potential:
\[
 D_CE(H,C)[V]=\operatorname{Tr}(\Gamma(S_*)V)
 \quad\text{for symmetric }V=P_UVP_U.
\]
This notation separates the analytic derivative at the true optimizer
from the same explicit formula evaluated at $\widehat S$.

\begin{lemma}[Relative density stability]
\label{query:relative-stability}
Suppose $S_*\succeq\mu I_D$ and
$\|\widehat S-S_*\|_{\mathrm{op}}\le e\le\mu/2$, where
$\widehat S$ is a trace-one density. Under
\eqref{query:coefficient-floor},
\begin{equation}\label{query:gamma-density-error}
 \|\Gamma(\widehat S)-\Gamma(S_*)\|_{\mathrm{op},U}
 \le\frac{4e\sqrt\ell}{\mu a}.
\end{equation}
No lower bound on a positive eigenvalue of the physical source is needed.
\end{lemma}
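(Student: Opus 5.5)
The plan is to compare the two covariance derivatives through operator sandwiches rather than by differentiating anything. By \eqref{sq:an:gamma-original}, in the compressed form of Proposition~\ref{sq:an:singular-transfer}, the form $\Gamma(S)$ on $U$ is
\[
 \Gamma(S)(u,u)=\Tr\bigl(S_0\A(u)Z_S\A(u)\bigr),
\]
where $S_0=\pi S$ is the compression to the fixed source support $K$. Here $Z_S$ is the transport on $K$ solving $Z_SM_SZ_S=S_0$, with $M_S=\eta_C(S_0)|_K$. Lemma~\ref{sq:an:smooth} and Proposition~\ref{sq:an:singular-transfer} show that $K$ depends only on $U$, so it is the same for $\widehat S$ and $S_*$. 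Since $M_S$ is faithful on $K$ whenever $S_0$ is, the same formula defines $\Gamma(\widehat S)$. This is why no lower bound on the physical source is needed: every comparison below is relative to $S_*$ itself.

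\emph{Step 1: a relative sandwich for the density.} Put $r=e/\mu\le1/2$. Since $|\widehat S-S_*|\preceq eI\preceq rS_*$, we get
\[
 (1-r)S_*\preceq\widehat S\preceq(1+r)S_*.
\]
Compression by $\pi$ preserves this order. Positivity of $\eta_C$ then gives $(1-r)M_*\preceq\widehat M\preceq(1+r)M_*$ on $K$, and hence
\[
 (1+r)^{-1}M_*^{-1}\preceq\widehat M^{-1}\preceq(1-r)^{-1}M_*^{-1}.
\]

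\emph{Step 2: a sandwich for the transport.} The transport formula $Z=M^{-1/2}(M^{1/2}S_0M^{1/2})^{1/2}M^{-1/2}$ identifies $Z$ as the matrix geometric mean $M^{-1}\#S_0$. The key tool, and the one point to justify carefully, is that the geometric mean is jointly operator monotone and positively homogeneous in each argument. I would prove monotonicity via the standard characterization of $X\#Y$ as the largest $G$ with $\left(\begin{smallmatrix}X&G\\G&Y\end{smallmatrix}\right)\succeq0$, which uses only the Schur-complement facts already cited. Feeding in Step 1 then gives
\[
 \sqrt{\tfrac{1-r}{1+r}}\,Z_*\preceq\widehat Z\preceq\sqrt{\tfrac{1+r}{1-r}}\,Z_*.
\]

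\emph{Step 3: a relative bound on $\Gamma$.} The map $(S_0,Z)\mapsto\Tr(S_0\A(u)Z\A(u))$ is monotone in each positive argument. Multiplying the two sandwiches therefore gives
\[
 \rho_-\,\Gamma(S_*)(u,u)\le\Gamma(\widehat S)(u,u)\le\rho_+\,\Gamma(S_*)(u,u),
\]
where
\[
 \rho_-=(1-r)\sqrt{\tfrac{1-r}{1+r}},\qquad \rho_+=(1+r)\sqrt{\tfrac{1+r}{1-r}}.
\]
An elementary scalar check on $0\le r\le1/2$ gives $\rho_+-1\le4r$ and $1-\rho_-\le4r$ (both are about $2r$ for small $r$, and $\rho_+-1\approx1.6\le2$ at $r=1/2$). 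Because $\Gamma(S_*)\succeq0$, this yields
\[
 -4r\,\Gamma(S_*)\preceq\Gamma(\widehat S)-\Gamma(S_*)\preceq4r\,\Gamma(S_*)
\]
as forms on $U$. Consequently $\|\Gamma(\widehat S)-\Gamma(S_*)\|_{\op,U}\le4r\|\Gamma(S_*)\|_{\op,U}$.

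\emph{Step 4: an absolute bound on $\Gamma(S_*)$.} By Lemma~\ref{sq:an:frame} (in its compressed form), $\Tr(C\Gamma(S_*))=F(S_*,\eta_C(S_*))\le\sqrt\ell$, using $\eta_C(I)\preceq\ell I$. Since $\Gamma(S_*)\succeq0$ on $U$ and $C\succeq aP_U$, this gives
\[
 a\|\Gamma(S_*)\|_{\op,U}\le a\Tr_U\Gamma(S_*)\le\sqrt\ell .
\]
Combining with Step 3 gives the bound $4e\sqrt\ell/(\mu a)$ of \eqref{query:gamma-density-error}.

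The main obstacle is Step 2. The argument needs the full matrix (Loewner-order) comparison of transports, and this does not follow from any scalar perturbation bound, since $M$ can be badly conditioned on $K$. Joint operator monotonicity of the geometric mean, checked on the fixed support $K$, is exactly what keeps the argument free of the physical source's smallest eigenvalue.
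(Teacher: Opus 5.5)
Your proposal is correct and follows the paper's proof essentially step for step: the relative density sandwich with $r=e/\mu$, its transfer to the sources, a Loewner sandwich for the transport with factors $\sqrt{(1\pm r)/(1\mp r)}$, separate monotonicity of $\Tr(S\A(u)Z\A(u))$, the scalar comparison of the products with $1\pm4r$, and the bound $\|\Gamma(S_*)\|_{\op,U}\le\sqrt\ell/a$ obtained from $\Tr(C\Gamma(S_*))=F(S_*,\eta_C(S_*))\le\sqrt\ell$. The only cosmetic difference is in the transport sandwich. You derive it from joint monotonicity and homogeneity of the geometric mean $Z=M^{-1}\#S_0$, using its block-matrix characterization. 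The paper instead reads monotonicity in $S_0$ and antitonicity in $M$ off the two explicit formulas $M^{-1/2}(M^{1/2}S_0M^{1/2})^{1/2}M^{-1/2}=S_0^{1/2}(S_0^{1/2}MS_0^{1/2})^{-1/2}S_0^{1/2}$, via square-root monotonicity and inverse order.
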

\begin{proof}
Set $r=e/\mu\le1/2$. The density floor gives
\[
 (1-r)S_*\preceq\widehat S\preceq(1+r)S_*.
\]
Positivity of $\eta_C$ transfers this comparison to the two sources.
Their common physical support is
\[
 K=\operatorname{span}\{\operatorname{ran}\A(v):v\in U\}.
\]
Indeed a positive factorization of $C$ writes $\eta_C(I)$ as a sum of
squares of matrices spanning $\{\A(v):v\in U\}$, and every faithful
density gives a source with the same kernel as $\eta_C(I)$.

On $K$, write $S_0$ for the compression of $S_*$ and
$M=\eta_C(S_*)$. Let $Z$ be the unique positive solution of $ZMZ=S_0$;
use hats for the corresponding quantities at $\widehat S$. If $K$ is
zero, every retained force vanishes and there is nothing to prove.
Otherwise the two equivalent transport formulas are
\begin{align*}
 Z(S_0,M)
 &=M^{-1/2}(M^{1/2}S_0M^{1/2})^{1/2}M^{-1/2}\\
 &=S_0^{1/2}(S_0^{1/2}MS_0^{1/2})^{-1/2}S_0^{1/2}.
\end{align*}
Square-root monotonicity in the first formula shows monotonicity in
$S_0$. Inverse order followed by square-root monotonicity in the second
shows antitonicity in $M$. Scalar homogeneity then yields
\[
 \sqrt{\frac{1-r}{1+r}}\,Z
 \preceq\widehat Z\preceq
 \sqrt{\frac{1+r}{1-r}}\,Z.
\]
The derivative formula, with transports extended by zero outside $K$, is
\[
 \Gamma(S_*)[v,v]=\operatorname{Tr}(S_*\A(v)Z\A(v)),\qquad v\in U.
\]
This is a nonnegative quadratic form and is separately monotone in the
positive matrices $S_*$ and $Z$. Hence
\[
 (1-r)\sqrt{\frac{1-r}{1+r}}\,\Gamma(S_*)
 \preceq\Gamma(\widehat S)\preceq
 (1+r)\sqrt{\frac{1+r}{1-r}}\,\Gamma(S_*).
\]
For $0\le r\le1/2$, the two scalar factors lie between $1-4r$ and
$1+4r$. Finally,
\[
 \operatorname{Tr}(C\Gamma(S_*))
 =F(S_*,\eta_C(S_*))\le\sqrt\ell.
\]
The last inequality follows from
$\operatorname{Tr}S_*=1$ and
$\operatorname{Tr}\eta_C(S_*)\le\ell$.
Since $C\succeq aP_U$ and $\Gamma(S_*)\succeq0$ on $U$, its operator
norm on $U$ is at most $\sqrt\ell/a$. Combining these estimates proves
\eqref{query:gamma-density-error}.
\end{proof}

\subsection{A stable formula for the covariance derivative}

The transport formula in the preceding proof is useful analytically,
but direct numerical inversion of its source would introduce an
unnecessary conditioning problem. A small positive shift after a change
of coordinates avoids this problem. The shift is used only to evaluate
the derivative; it does not modify the potential.

\begin{lemma}[A stable direct formula]
\label{query:stable-formula}
For a faithful trace-one density $S$, define
\begin{equation}\label{query:whitened-matrices}
 R=S^{1/2},\qquad L_i=R\A_iR,\qquad
 N=R\eta_C(S)R=\sum_{i,j}C_{ij}L_iL_j.
\end{equation}
For $\sigma>0$, put
\begin{equation}\label{query:shifted-estimator}
 P_\sigma=(N+\sigma I_D)^{-1/2},\qquad
 (\Gamma_\sigma(S))_{ij}=\operatorname{Tr}(L_iP_\sigma L_j).
\end{equation}
The displayed coefficient matrix is real symmetric. On the retained
coefficient space,
\begin{equation}\label{query:shift-error}
 0\preceq\Gamma(S)-\Gamma_\sigma(S),\qquad
 \|\Gamma(S)-\Gamma_\sigma(S)\|_{\mathrm{op},U}
 \le\frac{D\sqrt\sigma}{a}.
\end{equation}
\end{lemma}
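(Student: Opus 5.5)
The plan is to rewrite the exact covariance derivative $\Gamma(S)$ in the whitened coordinates \eqref{query:whitened-matrices}. In that form it becomes $\Tr(L_i\,N^{-1/2}L_j)$, with the inverse square root taken on the support of $N$. The shifted estimator then differs from it only by replacing the scalar function $x^{-1/2}$ by $(x+\sigma)^{-1/2}$. Both claims in \eqref{query:shift-error} then reduce to scalar facts about this function, combined with a trace budget against $C$.

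\emph{Step 1 (whitened formula).} Let $K$ be the physical source support and $M=\eta_C(S)$. I would use the second transport formula from the proof of Lemma~\ref{query:relative-stability}, but written with the full faithful $S$ rather than its compression. On the full space the transport $Z$ extended by zero should equal $R\,N^{+1/2}R$, where $N^{+1/2}$ denotes the inverse square root of $N$ on $\ran N$ (the pseudo-inverse square root). To justify this:
\begin{itemize}
\item $N=RMR$ has range $R(K)$.
\item The candidate $R\,N^{+1/2}R$ solves $ZMZ=S$ on the relevant support.
\item The candidate is supported where the transport acts, because $M$ vanishes off $K$.
\end{itemize}
Cyclicity then gives
\[
 \Gamma(S)_{ij}=\Tr(S\A_iZ\A_j)=\Tr(R\A_iR\,N^{+1/2}\,R\A_jR)=\Tr(L_iN^{+1/2}L_j),
\]
restricted to $i,j$ combined through $U$. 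For $v\in U$, the matrix $L(v)=R\A(v)R$ has range inside $\ran N$, because $\ker N=R^{-1}\ker M$. So only the support of $N$ is ever seen by the forces, and the pseudo-inverse causes no trouble. Symmetry of $\Gamma_\sigma(S)$ follows from transposition and cyclicity, since $L_i$ and $P_\sigma$ are symmetric.

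\emph{Step 2 (sign).} Set $g(x)=x^{-1/2}-(x+\sigma)^{-1/2}\ge0$ for $x>0$. For $v\in U$,
\[
 (\Gamma-\Gamma_\sigma)[v,v]=\Tr\bigl(L(v)\,g(N)\,L(v)\bigr)\ge0 .
\]
Here $g(N)$ is understood on $\ran N$, and $(N+\sigma)^{-1/2}$ is compressed there; this is legitimate because $L(v)$ lives on $\ran N$. So the difference is positive semidefinite on $U$.

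\emph{Step 3 (norm).} Because $\Delta=\Gamma-\Gamma_\sigma\succeq0$ on $U$ and $C\succeq aP_U$, we get $\|\Delta\|_{\op,U}\le\Tr(C\Delta)/a$. By the definition of $N$,
\[
 \Tr(C\Delta)=\sum_{ij}C_{ij}\Tr(L_ig(N)L_j)=\Tr\bigl(g(N)N\bigr).
\]
Each positive eigenvalue $\lambda$ of $N$ contributes
\[
 \sqrt\lambda-\frac{\lambda}{\sqrt{\lambda+\sigma}}
 =\frac{\sqrt\lambda\,(\sqrt{\lambda+\sigma}-\sqrt\lambda)}{\sqrt{\lambda+\sigma}}
 \le\sqrt\sigma .
\]
There are at most $D$ such eigenvalues, which yields $D\sqrt\sigma/a$.

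I expect the only delicate point to be Step 1: identifying the zero-extended transport with $RN^{+1/2}R$ when the source is singular, and checking that the supports line up so the pseudo-inverse is harmless. Steps 2 and 3 are routine spectral calculus.
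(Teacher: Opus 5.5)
Steps 2 and 3 are exactly the paper's argument: the scalar comparison $x^{-1/2}\ge(x+\sigma)^{-1/2}$ on $\ran N$ gives the sign. The identity $\Tr(C\Delta)=\Tr\sqrt N-\Tr\bigl(N(N+\sigma I_D)^{-1/2}\bigr)\le D\sqrt\sigma$, divided by $a$, gives the norm bound.

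Step 1 takes a different route. The paper never uses the transport here. It notes that $F(S,\eta_C(S))=\Tr\sqrt N$, that $N=\sum_{ij}C_{ij}L_iL_j$ is linear in $C$, and that $\ker N=\bigcap_{v\in U}\ker L(v)$ is fixed on the face. Differentiating $2\Tr\sqrt N$ on that fixed support then gives $\Gamma(S)[v,v]=\Tr(L(v)N^{+1/2}L(v))$ at once, and agreement with the transport formula is only a remark. Your route starts from $\Gamma[v,v]=\Tr(S\A(v)Z\A(v))$. That links directly to the expression used in Lemma~\ref{query:relative-stability}, but it needs a support-matching argument, and as written that argument contains a false claim.

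The zero-extended transport $\pi^*Z$ is \emph{not} equal to $RN^{+1/2}R$ on the full space. The latter is also not supported on $K$: its range is $R(\ran N)=R(R(K))=S(K)$, which differs from $K$ whenever $S$ does not leave $K$ invariant. For example, take $D=2$, $K=\operatorname{span}(e_1)$, and $M=me_1e_1^T$. Then $RN^{+1/2}R$ is a multiple of $(Se_1)(Se_1)^T$, while $\pi^*Z=\sqrt{s_{11}/m}\,e_1e_1^T$.

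What is true is the compressed identity $\Pi RN^{+1/2}R\Pi=\pi^*Z$, where $\Pi$ is the projection onto $K$:
\begin{itemize}
\item The compression is positive definite on $K$, because $R$ maps $K$ injectively into $R(K)=\ran N$, where $N^{+1/2}$ is positive definite.
\item It solves $ZM_KZ=S_0$. Indeed $M=\Pi M\Pi$ gives $\Pi R\,N^{+1/2}NN^{+1/2}R\,\Pi=\Pi RP_{R(K)}R\Pi=\Pi S\Pi$, since $R\Pi$ maps into $R(K)$.
\item Uniqueness of the positive transport then identifies it with $Z$.
\end{itemize}
Since $\A(v)=\Pi\A(v)\Pi$ for $v\in U$, you get $\A(v)\pi^*Z\A(v)=\A(v)RN^{+1/2}R\A(v)$. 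Your cyclicity computation then goes through for the form on $U$. With this correction your proof is complete.
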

\begin{proof}
Factor $C$ into positive rank-one terms. The resulting expression for
$N$ is a sum of squares of symmetric matrices spanning
$\{L(v)=R\A(v)R:v\in U\}$. Consequently every such $L(v)$ vanishes on
$\ker N$. On the fixed support of $N$, differentiation of
$2\operatorname{Tr}\sqrt N$ therefore gives
\[
 \Gamma(S)[v,v]
 =\operatorname{Tr}(L(v)N^{\dagger/2}L(v)).
\]
Here $N^{\dagger/2}$ denotes the inverse square root on
$\operatorname{ran}N$, extended by zero on its kernel. This formula
also proves agreement with the transport expression used above.
On every positive eigenspace of $N$, the difference between the inverse
square root and $P_\sigma$ is nonnegative. The forces vanish on the
kernel, so the resulting coefficient difference is positive semidefinite
on $U$. Its weighted trace is
\begin{align*}
 \operatorname{Tr}\bigl(C(\Gamma(S)-\Gamma_\sigma(S))\bigr)
 &=\operatorname{Tr}\sqrt N
       -\operatorname{Tr}\bigl(N(N+\sigma I_D)^{-1/2}\bigr)\\
 &=\sum_{\lambda>0}
       \left(\sqrt\lambda-\frac{\lambda}{\sqrt{\lambda+\sigma}}\right)
 \le D\sqrt\sigma,
\end{align*}
where $\lambda$ runs over the positive eigenvalues with multiplicities.
Each summand is at most
$\sqrt{\lambda+\sigma}-\sqrt\lambda\le\sqrt\sigma$.
Dividing the weighted trace bound by $a$ bounds the ordinary trace of
the positive coefficient difference, and hence its operator norm.
\end{proof}

\begin{corollary}[An inverse-polynomial accuracy contract]
\label{query:direct-contract}
Let $\eta>0$ be the desired operator error for the covariance derivative,
and suppose the current density floor is $\mu$. Choose
\begin{equation}\label{query:accuracy-choice}
 e=\min\left\{\frac\mu2,
                 \frac{\eta\mu a}{12\sqrt\ell}\right\},\qquad
 \nu=\frac{ce^2}{4},\qquad
 \sigma=\min\left\{1,
                 \left(\frac{\eta a}{3D}\right)^2\right\}.
\end{equation}
Obtain a feasible primal SDP solution with gap at most $\nu$, and evaluate
$\Gamma_\sigma(\widehat S)$ with operator error at most $\eta/3$ on $U$.
The resulting report $\widehat\Gamma$ satisfies
\[
 \|\widehat\Gamma-D_CE(H,C)\|_{\mathrm{op},U}\le\eta.
\]
Every requested reciprocal accuracy is polynomial when
$D,\ell,c^{-1},\mu^{-1},a^{-1},\eta^{-1}$ have polynomial bounds.
\end{corollary}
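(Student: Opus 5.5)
The proof is a triangle inequality across three error sources:
\[
 \|\widehat\Gamma-D_CE\|_{\mathrm{op},U}
 \le\|\widehat\Gamma-\Gamma_\sigma(\widehat S)\|
 +\|\Gamma_\sigma(\widehat S)-\Gamma(\widehat S)\|
 +\|\Gamma(\widehat S)-\Gamma(S_*)\|,
\]
all norms being on $U$. At the true optimizer the envelope identity gives $D_CE(H,C)=\Gamma(S_*)$ on symmetric $V=P_UVP_U$. So it suffices to show that each of the three terms is at most $\eta/3$. The first term is at most $\eta/3$ by the stated evaluation accuracy.

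\emph{Density and derivative error.} By Lemma~\ref{query:gap-density}, a certified gap $\nu=ce^2/4$ gives $\|\widehat S-S_*\|_{\mathrm{op}}\le2\sqrt{\nu/c}=e$, and $e\le\mu/2$ by the choice in \eqref{query:accuracy-choice}. Here $\mu$ is the density floor $S_*\succeq\mu I$, supplied by Lemma~\ref{sq:an:smooth} or Lemma~\ref{rect:ridge-floor}. The hypotheses of Lemma~\ref{query:relative-stability} therefore hold, and \eqref{query:gamma-density-error} bounds the third term by $4e\sqrt\ell/(\mu a)$. The choice $e\le\eta\mu a/(12\sqrt\ell)$ makes this at most $\eta/3$.

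\emph{Shift error.} We need $\widehat S$ faithful to apply Lemma~\ref{query:stable-formula}. This follows from $\widehat S\succeq S_*-eI\succeq(\mu/2)I$. Then \eqref{query:shift-error} bounds the second term by $D\sqrt\sigma/a$, and $\sigma\le(\eta a/(3D))^2$ makes this at most $\eta/3$. Summing the three bounds gives $\eta$.

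\emph{Polynomiality.} The requested reciprocal accuracies are $\nu^{-1}=4/(ce^2)$, $\sigma^{-1}$, and $3/\eta$. Each is a product of powers of $D,\ell,c^{-1},\mu^{-1},a^{-1},\eta^{-1}$, because $e^{-1}\le 2/\mu+12\sqrt\ell/(\eta\mu a)$. They are therefore polynomially bounded when those quantities are.

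The only point needing care is the faithfulness and trace-one status of $\widehat S$ required by the two lemmas. Feasibility of the SDP point gives trace one, and the operator-norm bound gives faithfulness. I expect no real obstacle beyond this check.
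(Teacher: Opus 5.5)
Your proposal is correct and matches the paper's proof: the same triangle inequality with $\eta/3$ allotted to each of the evaluation budget, the shift error of Lemma~\ref{query:stable-formula}, and the density-stability error of Lemma~\ref{query:relative-stability}, the last fed by the density bound of Lemma~\ref{query:gap-density}. Your extra checks, that $\widehat S\succeq(\mu/2)I$ is a faithful trace-one density and that $\nu^{-1},\sigma^{-1}$ are polynomial, just make explicit what the paper's three-line proof leaves implicit.
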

\begin{proof}
Lemma~\ref{query:gap-density} bounds the density error by $e$.
Lemmas~\ref{query:relative-stability} and \ref{query:stable-formula}
each contribute at most $\eta/3$. The stated numerical evaluation
budget supplies the final third.
\end{proof}

For completeness, the following estimate makes precise why ordinary
approximate spectral computations suffice for this last evaluation.
It also distinguishes approximation of a first derivative from the
finite-movement Taylor remainders needed elsewhere in the walk.

\begin{lemma}[Error in the spectral evaluation]
\label{query:spectral-error}
Assume $0<\sigma\le1$ and $\ell\ge1$. Let $\widetilde R$ be real
symmetric with
$\|\widetilde R-S^{1/2}\|_{\mathrm{op}}\le d\le1$, and form
\[
 \widetilde L_i=\widetilde R\A_i\widetilde R,
 \qquad \widetilde N=\sum_{i,j}C_{ij}\widetilde L_i\widetilde L_j.
\]
Let $\widetilde P$ be real symmetric and satisfy
\[
 \|\widetilde P-(\widetilde N+\sigma I_D)^{-1/2}\|_{\mathrm{op}}
 \le b.
\]
If
$\widetilde\Gamma_{ij}=\operatorname{Tr}(\widetilde L_i
 \widetilde P\widetilde L_j)$, then
\begin{equation}\label{query:evaluation-error}
 \|\widetilde\Gamma-\Gamma_\sigma(S)\|_{\mathrm{op},U}
 \le135D\ell^2\sigma^{-3/2}d+16D\ell b.
\end{equation}
The bound assumes exact scalar operations in forming the displayed
products; it is the spectral approximation error in the stated
real-arithmetic model.
\end{lemma}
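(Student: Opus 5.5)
The plan is a telescoping perturbation argument: pass from $(\widetilde R,\widetilde P)$ to $(S^{1/2},(N+\sigma I_D)^{-1/2})$ in two stages and estimate each with operator-norm and trace bounds. Write $R=S^{1/2}$, $L_i=R\A_iR$, $P_\sigma=(N+\sigma I_D)^{-1/2}$ and $\widetilde P_\sigma=(\widetilde N+\sigma I_D)^{-1/2}$. The difference splits as
\[
 \widetilde\Gamma-\Gamma_\sigma(S)
 =\bigl[\Tr(\widetilde L_i(\widetilde P-\widetilde P_\sigma)\widetilde L_j)\bigr]
 +\bigl[\Tr(\widetilde L_i\widetilde P_\sigma\widetilde L_j)-\Tr(L_iP_\sigma L_j)\bigr].
\]
Each coefficient form is tested on unit vectors $v\in U$. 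Because every coefficient matrix here is a real symmetric bilinear form $(u,w)\mapsto\Tr(L(u)XL(w))$, it suffices to bound its quadratic form on unit $v$. For each such $v$ we write $L(v)=\sum_iv_iL_i$.

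\textbf{Step 1: preliminary bounds.} Since $\|R\|_{\op}\le1$ and $d\le1$, we have $\|\widetilde R\|_{\op}\le2$. For $\|v\|=1$, $\|\A(v)\|_{\op}\le\sum_i|v_i|\le\sqrt\ell$, so $\|L(v)\|_{\op}\le\sqrt\ell$ and $\|\widetilde L(v)\|_{\op}\le4\sqrt\ell$. Each $\|\A_i\|\le1$, and $0\preceq C\preceq I$ gives $\|\sum_{ij}C_{ij}X_iX_j\|\le\sum_i\|X_i\|^2$. Hence $\|N\|\le\ell$ and $\|\widetilde N\|\le16\ell$. Also
\[
 \|\widetilde L_i-L_i\|\le\|\widetilde R-R\|(\|\widetilde R\|+\|R\|)\le3d .
\]
The same telescoping for $\widetilde N-N=\sum C_{ij}(\widetilde L_i\widetilde L_j-L_iL_j)$, using the factorization $C=\sum_a c_ac_a^T$ so that the sum is $\sum_a(\widetilde X_a\widetilde X_a-X_aX_a)$ with $\sum_a\|X_a\|^2\le\ell$ in the same spirit, gives $\|\widetilde N-N\|\le c_1\ell d$ with a modest absolute constant $c_1$.

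\textbf{Step 2: the $b$-term.} For unit $v$,
\[
 |\Tr(\widetilde L(v)(\widetilde P-\widetilde P_\sigma)\widetilde L(v))|
 \le b\,\|\widetilde L(v)\|_{\HS}^2\le b\,D\,\|\widetilde L(v)\|_{\op}^2\le16D\ell b,
\]
which is exactly the second term of \eqref{query:evaluation-error}.

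\textbf{Step 3: the $d$-term.} Telescope the second bracket into three pieces, replacing in turn the left factor $\widetilde L_i$, the middle factor $\widetilde P_\sigma$, and the right factor $\widetilde L_j$. Since $\|P_\sigma\|,\|\widetilde P_\sigma\|\le\sigma^{-1/2}$, each of the two outer pieces is bounded by $D\cdot 3d\sqrt\ell\cdot 4\sqrt\ell\cdot\sigma^{-1/2}$. The middle piece needs the Lipschitz bound for $X\mapsto(X+\sigma I)^{-1/2}$ on positive semidefinite matrices. I would prove it from the resolvent integral
\[
 Y^{-1/2}=\tfrac1\pi\int_0^\infty t^{-1/2}(Y+t)^{-1}\,dt,
\]
which gives
\[
 \|(X+\sigma)^{-1/2}-(X'+\sigma)^{-1/2}\|
 \le\tfrac1\pi\int_0^\infty t^{-1/2}(\sigma+t)^{-2}\,dt\,\|X-X'\|=\tfrac12\sigma^{-3/2}\|X-X'\| .
\]
The middle piece is then at most $D\cdot16\ell\cdot\tfrac12\sigma^{-3/2}c_1\ell d$. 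Using $\sigma\le1$ and $\ell\ge1$, every term is dominated by a multiple of $D\ell^2\sigma^{-3/2}d$. Tracking the constants should give at most $135$: roughly $12+12$ from the outer pieces plus about $8c_1$ from the middle, which holds if $c_1\le13$.

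\textbf{Main obstacle.} I expect the main obstacle to be the constant bookkeeping in the estimate for $\|\widetilde N-N\|$. Bounding the double sum $\sum_{ij}C_{ij}$ crudely would cost a factor $\ell^2$ in that estimate, and hence an extra power of $\ell$ overall. The fix is to use the factorization of $C$ and the operator inequality $\sum_a X_aY_a\preceq$-type Cauchy--Schwarz bounds, $\|\sum_aX_aY_a\|\le\|\sum_aX_aX_a^T\|^{1/2}\|\sum_aY_a^TY_a\|^{1/2}$, which keep the bound linear in $\ell$.
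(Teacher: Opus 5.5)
Your proposal is correct and is essentially the paper's proof: the same split into the $b$-term, the changes of the two outer factors, and the change of the middle factor, the same resolvent-integral estimate $\|X^{-1/2}-Y^{-1/2}\|_{\op}\le\|X-Y\|_{\op}/(2\sigma^{3/2})$ for $X,Y\succeq\sigma I_D$, and the same stacking (operator Cauchy--Schwarz) bound for $\widetilde N-N$, which gives $c_1=15$. Note that your parenthetical claim $\sum_a\|X_a\|_{\op}^2\le\ell$ is false for a general factorization of $C$; the quantity that actually controls this step is $\|\sum_aX_a^2\|_{\op}=\|N\|_{\op}\le\ell$.

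With $c_1=15$, your tally $24+8c_1=144$ overshoots $135$. However, in your own telescoping order the middle piece is $\Tr\bigl(L(v)(\widetilde P_\sigma-P_\sigma)\widetilde L(v)\bigr)$, which costs only $D\cdot4\ell\cdot\tfrac12\sigma^{-3/2}\cdot15\ell d=30D\ell^2\sigma^{-3/2}d$, and the outer pieces cost $12+3$, so the bound closes with room to spare. The paper instead keeps $\widetilde L(v)$ on both sides of the middle factor, giving $120$, and bounds the two outer changes jointly by $3\sqrt\ell d\,\sigma^{-1/2}(4\sqrt\ell+\sqrt\ell)$, giving $15$; that is where exactly $135$ comes from.
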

\begin{proof}
For unit $v\in U$, the contraction hypothesis gives
$\|\A(v)\|_{\mathrm{op}}\le\sqrt\ell$. Since
$\|S^{1/2}\|_{\mathrm{op}}\le1$ and
$\|\widetilde R\|_{\mathrm{op}}\le2$,
\[
 \|\widetilde L(v)-L(v)\|_{\mathrm{op}}\le3\sqrt\ell d,
 \qquad \|\widetilde L(v)\|_{\mathrm{op}}\le4\sqrt\ell.
\]
Stacking $L_1,\ldots,L_\ell$ as a map from physical space to
$\ell$ copies of it writes
$N=\mathcal L^T(C\otimes I_D)\mathcal L$.
The stack has norm at most $\sqrt\ell$, its approximating stack has
norm at most $4\sqrt\ell$, and their difference has norm at most
$3\sqrt\ell d$. Thus
\[
 \|\widetilde N-N\|_{\mathrm{op}}\le15\ell d.
\]
Both matrices are positive semidefinite, because $C\succeq0$.
For $X,Y\succeq\sigma I_D$, the resolvent identity in
\[
 X^{-1/2}=\frac1\pi\int_0^\infty
            t^{-1/2}(X+tI_D)^{-1}\,dt
\]
gives
$\|X^{-1/2}-Y^{-1/2}\|_{\mathrm{op}}
 \le\|X-Y\|_{\mathrm{op}}/(2\sigma^{3/2})$.
Expanding the trace quadratic form at a unit $v$ now bounds the error
from changing its two outer factors by
$15D\ell\sigma^{-1/2}d$, the error from replacing $N$ by
$\widetilde N$ in its middle factor by
$120D\ell^2\sigma^{-3/2}d$, and the final middle-factor evaluation
error by $16D\ell b$. Since $\sigma\le1$ and $\ell\ge1$, their sum
implies \eqref{query:evaluation-error}.
\end{proof}

For example, the evaluation budget $\eta/3$ in
Corollary~\ref{query:direct-contract} is ensured by
\[
 d\le\min\left\{1,
           \frac{\eta\sigma^{3/2}}{810D\ell^2}\right\},
 \qquad b\le\frac{\eta}{96D\ell}.
\]
These tolerances depend on the chosen positive shift $\sigma$, not on
the smallest positive eigenvalue of the unshifted physical source.
Exact EVD would simply set $d=b=0$. With approximate EVD one needs
certified matrix-function accuracy, rather than accurate individual
eigenvectors or a lower bound on an eigenvalue gap.

\subsection{The supporting plane for epoch acceptance}

Set $f(H)=E(H,0)$, and let $S_f$ be its unique maximizing density.
Its supporting-plane inequality follows directly by retaining this
density at a new center:
\begin{equation}\label{query:supporting-plane}
 f(H+\Delta H)\ge f(H)+\operatorname{Tr}(S_f\Delta H).
\end{equation}
Thus no differentiation routine is needed to obtain this plane.

\begin{lemma}[A direct supporting-plane report]
\label{query:plane-report}
If a feasible source-free primal SDP solution has gap at most $\nu$
and density component $\widehat S_f$, then the scalar report
$\widehat g=\operatorname{Tr}(\widehat S_f\Delta H)$ satisfies
\begin{equation}\label{query:plane-error}
 \left|\widehat g-\operatorname{Tr}(S_f\Delta H)\right|
 \le2\sqrt{\nu/c}\,\|\Delta H\|_{\mathrm{HS}}.
\end{equation}
For $\Delta H=\A(z)$, one may replace the last norm by
$\sqrt{D\ell}\,\|z\|_2$.
\end{lemma}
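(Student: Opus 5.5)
The plan is to reduce the report error to a Hilbert--Schmidt pairing and then invoke Lemma~\ref{query:gap-density} in the source-free case $C=0$. The source-free objective $\Phi_{H,0}$ has the common form \eqref{query:objective} with the same $c>0$ and concave $\mathcal R_0$, and with vanishing fidelity term. Hence the strong concavity argument of that lemma applies verbatim to $f(H)=E(H,0)$. A feasible lifted solution with certified gap at most $\nu$ therefore has density component satisfying
\[
 \|\widehat S_f-S_f\|_{\mathrm{HS}}\le2\sqrt{\nu/c}.
\]

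The main estimate is then immediate. Since $\widehat g-\Tr(S_f\Delta H)=\Tr((\widehat S_f-S_f)\Delta H)$, the Hilbert--Schmidt Cauchy--Schwarz inequality gives
\[
 |\widehat g-\Tr(S_f\Delta H)|\le\|\widehat S_f-S_f\|_{\mathrm{HS}}\,\|\Delta H\|_{\mathrm{HS}}\le 2\sqrt{\nu/c}\,\|\Delta H\|_{\mathrm{HS}},
\]
which proves \eqref{query:plane-error}. Both matrices are real symmetric, so the pairing $\Tr(X^TY)$ coincides with $\Tr(XY)$.

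For the special case $\Delta H=\A(z)$, we bound $\|\A(z)\|_{\mathrm{HS}}\le\sqrt D\,\|\A(z)\|_{\mathrm{op}}$. The triangle inequality and the contraction hypothesis give $\|\A(z)\|_{\mathrm{op}}\le\sum_i|z_i|$. Scalar Cauchy--Schwarz over the $\ell$ labels then gives $\sum_i|z_i|\le\sqrt\ell\,\|z\|_2$. Together these yield $\|\A(z)\|_{\mathrm{HS}}\le\sqrt{D\ell}\,\|z\|_2$, the stated replacement.

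The only step needing care is confirming that Lemma~\ref{query:gap-density} applies with $C=0$. Its proof uses only three facts: strong concavity from the $2c\Tr\sqrt S$ term, concavity of the remaining terms, and faithfulness of the optimizer. The faithfulness of $S_f$ holds by Lemma~\ref{sq:fo:optimizer} in the square case, and by the boundary-mixing argument recorded for $E_\kappa$ in the rectangular case. The supporting-plane inequality \eqref{query:supporting-plane} itself is not needed for the error bound; it only explains what the report approximates.
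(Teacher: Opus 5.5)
Your proposal is correct and follows the paper's proof: Lemma~\ref{query:gap-density} at $C=0$ combined with Hilbert--Schmidt Cauchy--Schwarz, then the chain $\|\A(z)\|_{\mathrm{HS}}\le\sqrt D\|\A(z)\|_{\mathrm{op}}\le\sqrt D\sum_i|z_i|\le\sqrt{D\ell}\|z\|_2$. Your added check that the gap-to-density lemma applies to the source-free objective (faithful optimizer, strong concavity from the $2c\Tr\sqrt S$ term) is a step the paper leaves implicit.
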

\begin{proof}
Hilbert--Schmidt Cauchy--Schwarz and
Lemma~\ref{query:gap-density} give \eqref{query:plane-error}.
For the last assertion,
$\|\A(z)\|_{\mathrm{HS}}\le\sqrt D\|\A(z)\|_{\mathrm{op}}
 \le\sqrt D\sum_i|z_i|
 \le\sqrt{D\ell}\|z\|_2$.
\end{proof}

The cap-test slack and the epoch acceptance margins can therefore be
allocated directly to the certified optimizer error. This replaces the
first-derivative finite differences in both uses. The step size and the
fourth-derivative bound for a finite movement remain separate parts of
the proof.

\section{Computing a feasible optimizing density}
\label{solver:main}

We construct the feasible primal solution required by the preceding
section. The main issue is the geometry of the feasible set: a direct
fidelity block may be singular for every density, even though its value
is well behaved. Factoring the source moves its coefficient matrices
into the objective and yields a larger lift with an explicit interior
ball. That formulation can be solved by a standard ellipsoid algorithm
using the permitted exact eigendecompositions for separation.

The source may therefore have arbitrarily small positive eigenvalues;
the interior radius of the new lift will depend only on its dimensions.

\begin{definition}[Arithmetic model]\label{model:main}
We count scalar real arithmetic, comparisons, nonnegative square roots,
exact symmetric eigendecompositions, and uniform draws in $[0,1)$.
A fixed eigendecomposition routine returns an ordered orthonormal
eigenbasis and its eigenvalues; ties are resolved by the returned order.
Each spectral call counts as one operation, while forming its input and
processing its output are counted separately. The density optimization
is implemented by the semidefinite program and ellipsoid algorithm in
this section.
\end{definition}
The runtime bounds count these real-arithmetic operations;
they do not assert a binary encoding bound.

\subsection{An equivalent SDP with a uniform interior point}

Fix real symmetric contractions $\A_1,\ldots,\A_\ell$ of order $D$, a
real symmetric center $H$ of the same order, and a real covariance
$0\preceq C\preceq I_\ell$.  Here $\ell$ is the number of original labels
in the query; zero columns will be retained in the construction below.
Let $F_{\mathrm{coef}}=C^{1/2}$, computed by exact eigendecomposition, and define
\begin{equation}\label{solver:kraus}
 T_a=\sum_{i=1}^{\ell}(F_{\mathrm{coef}})_{ia}\A_i\quad(1\le a\le\ell),
 \qquad \mathsf T=\begin{pmatrix}T_1&\cdots&T_\ell\end{pmatrix}.
\end{equation}
Thus $\mathsf T$ is a $D$ by $\ell D$ real matrix, and
\begin{equation}\label{solver:sourcefactor}
 \eta_C(S)=\mathsf T(I_\ell\otimes S)\mathsf T^{\mathsf T},
 \qquad
 \mathsf T\mathsf T^{\mathsf T}=\eta_C(I)\preceq\ell I_D.
\end{equation}
For real matrices of the same size, the Hilbert--Schmidt inner product
is $\langle U,V\rangle_{\mathrm{HS}}=\operatorname{Tr}(U^{\mathsf T}V)$.

\begin{lemma}[A uniformly feasible fidelity lift]\label{solver:fidelity}
For every real positive semidefinite $S$,
\begin{equation}\label{solver:fidelitylift}
 F(S,\eta_C(S))=
 \max_{X\in\mathbb R^{D\times\ell D}}
 \left\{\langle\mathsf T,X\rangle_{\mathrm{HS}}:
 \begin{pmatrix}S&X\\X^{\mathsf T}&I_\ell\otimes S\end{pmatrix}
 \succeq0\right\}.
\end{equation}
The maximum is attained, including when $S$ or $C$ is singular.
\end{lemma}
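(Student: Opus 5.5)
The plan is to reduce \eqref{solver:fidelitylift} to the SDP formula \eqref{sq:fo:fidsdp} of Lemma~\ref{sq:fo:fidelity}, applied to the pair $(S,\eta_C(S))$, by means of the factorization \eqref{solver:sourcefactor}. The argument is two inequalities plus attainment.

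For the upper bound, take any feasible $X$ for the lifted block constraint. Congruence by $\operatorname{diag}(I_D,\mathsf T)$ preserves positive semidefiniteness. It produces
\[
 \begin{pmatrix}S&X\mathsf T^{\mathsf T}\\ \mathsf TX^{\mathsf T}&\mathsf T(I_\ell\otimes S)\mathsf T^{\mathsf T}\end{pmatrix}
 =\begin{pmatrix}S&X\mathsf T^{\mathsf T}\\ \mathsf TX^{\mathsf T}&\eta_C(S)\end{pmatrix}\succeq0.
\]
The objective is $\langle\mathsf T,X\rangle_{\mathrm{HS}}=\Tr(X\mathsf T^{\mathsf T})$, so \eqref{sq:fo:fidsdp}, with $W=X\mathsf T^{\mathsf T}$, bounds it by $F(S,\eta_C(S))$.

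For the lower bound, I construct an attaining $X$ explicitly through the contraction factorization. The lifted constraint is equivalent to $X=S^{1/2}K(I_\ell\otimes S^{1/2})$ for a contraction $K$ from $\R^{\ell D}$ to $\R^D$. In the singular case this follows by restriction to supports, exactly as in the proof of Lemma~\ref{sq:fo:fidelity}, using the $2\times2$ determinant argument to show that $X$ vanishes on the kernels. With this form the objective becomes
\[
 \Tr\bigl(S^{1/2}K(I\otimes S^{1/2})\mathsf T^{\mathsf T}\bigr)=\Tr(KY),\qquad Y=(I\otimes S^{1/2})\mathsf T^{\mathsf T}S^{1/2}.
\]
Maximizing over contractions gives the trace norm $\|Y\|_1$, by the singular-value duality already used in Lemma~\ref{sq:fo:fidelity}. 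It remains to identify this trace norm with the fidelity. We have $Y^{\mathsf T}Y=S^{1/2}\mathsf T(I\otimes S)\mathsf T^{\mathsf T}S^{1/2}=S^{1/2}\eta_C(S)S^{1/2}$, so
\[
 \|Y\|_1=\Tr\sqrt{Y^{\mathsf T}Y}=F(S,\eta_C(S)).
\]
Thus the maximum equals the fidelity, and it is attained by $K$ built from the polar factor of $Y$, namely $K=VU^{\mathsf T}$ from an SVD $Y=U\Sigma V^{\mathsf T}$ with the signs and orientation chosen so that $\Tr(KY)=\Tr\Sigma$.

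Attainment could alternatively be argued by compactness. Feasibility gives $|u^{\mathsf T}Xv|^2\le(u^{\mathsf T}Su)\,(v^{\mathsf T}(I\otimes S)v)$, so the feasible set is bounded, and it is closed; the objective is linear. The explicit maximizer above already suffices, however.

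The only delicate point is the singular case, where $S$ may fail to be invertible. I would handle it exactly as in Lemma~\ref{sq:fo:fidelity}, by restricting to $\ran S$ and $\ran(I\otimes S)$. A singular $C$ causes no difficulty, because it enters only through $\mathsf T$ in the objective and the constraint does not involve $C$ at all. Zero columns of $\mathsf T$ are harmless for the same reason.
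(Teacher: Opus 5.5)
Your proof is correct, and its core (the factorization $X=S^{1/2}K(I_\ell\otimes S^{1/2})$ with $\|K\|_{\mathrm{op}}\le1$, the trace-norm duality, and the identity $Y^{\mathsf T}Y=S^{1/2}\eta_C(S)S^{1/2}$) is exactly the paper's argument; your $Y$ is the transpose of the paper's $N=S^{1/2}\mathsf T(I_\ell\otimes S^{1/2})$. The only variation is your upper bound, obtained by congruence with $\operatorname{diag}(I_D,\mathsf T)$ and the earlier formula \eqref{sq:fo:fidsdp}: this lets you use only the easy direction of the factorization for the lifted block (every contraction yields a feasible $X$), whereas the paper invokes the full equivalence, including its singular case, for both bounds.
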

\begin{proof}
For positive semidefinite matrices $P,Q$, positivity of
$\left(\begin{smallmatrix}P&X\\X^{\mathsf T}&Q\end{smallmatrix}\right)$
is equivalent to a factorization $X=P^{1/2}KQ^{1/2}$ with
$\|K\|_{\mathrm{op}}\le1$.  For invertible $P,Q$, this follows by
congruence and the Schur complement.  In the singular case, positivity
first forces the appropriate rows and columns of $X$ to vanish on the
kernels; the same argument on the supports proves the assertion, with
$K$ extended by zero.

Apply the factorization with $P=S$ and $Q=I_\ell\otimes S$.  Set
$N=S^{1/2}\mathsf T(I_\ell\otimes S^{1/2})$.  The maximum in
\eqref{solver:fidelitylift} becomes
$\max_{\|K\|_{\mathrm{op}}\le1}\langle N,K\rangle_{\mathrm{HS}}
=\operatorname{Tr}\sqrt{NN^{\mathsf T}}$.
To check this last identity, take a singular-value decomposition of $N$.
Each diagonal entry of a contraction in the singular-vector bases has
absolute value at most one, giving the upper bound by the sum of the
singular values; the partial isometry matching the two singular-vector
bases attains it.  Finally,
$NN^{\mathsf T}=S^{1/2}\eta_C(S)S^{1/2}$, which proves the lemma.
\end{proof}

We give one format for both regularizers.  Let $a\ge1$ be an integer
and let $b_1,\ldots,b_a\ge0$ be fixed scalar coefficients.  Define
\begin{equation}\label{solver:generalpotential}
 \mathcal E(H,C)=\max_{S\succeq0,\ \operatorname{Tr}S=1}
 \left\{\operatorname{Tr}(HS)+2F(S,\eta_C(S))
       +\sum_{j=1}^{a}b_j\operatorname{Tr}S^{1-2^{-j}}\right\}.
\end{equation}
For the square potential, take $a=1$ and $b_1=2\theta$.
For the rectangular potential, take $p=2^a$, $q=1/p$, add
$\theta/(1-q)$ to $b_a$, add $2\kappa$ to $b_1$, and set the other
coefficients to zero.  If $a=1$, the two contributions are added to
the same coefficient.

\begin{proposition}[A real SDP for both potentials]\label{solver:program}
The value in \eqref{solver:generalpotential} is the attained maximum of
\begin{equation}\label{solver:objective}
 \operatorname{Tr}(HS)+2\langle\mathsf T,X\rangle_{\mathrm{HS}}
                +\sum_{j=1}^{a}b_j\operatorname{Tr}Y_j
\end{equation}
over a real symmetric density $S$, real symmetric matrices
$Y_1,\ldots,Y_a\succeq0$, and a real matrix
$X\in\mathbb R^{D\times\ell D}$, subject to
\begin{equation}\label{solver:constraints}
 \begin{pmatrix}S&X\\X^{\mathsf T}&I_\ell\otimes S\end{pmatrix}\succeq0,
 \qquad
 \begin{pmatrix}S&Y_j\\Y_j&Y_{j-1}\end{pmatrix}\succeq0
 \quad(1\le j\le a),\qquad Y_0=I_D.
\end{equation}
The number of scalar variables after imposing $\operatorname{Tr}S=1$
is
\begin{equation}\label{solver:dimension}
 d=\frac{(a+1)D(D+1)}2-1+\ell D^2.
\end{equation}
There is one block of order $(\ell+1)D$, $a$ blocks of order $2D$,
and the individual positivity constraints.
\end{proposition}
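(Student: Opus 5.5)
The plan is to fix the density $S$ and maximize separately over the auxiliary variables. The constraints involving $X$ and those involving $Y_1,\ldots,Y_a$ share only $S$. So the program value equals the maximum over densities $S$ of
\[
 \Tr(HS)+2\max_X\langle\mathsf T,X\rangle_{\HS}+\sum_j b_j\max_{(Y_j)}\Tr Y_j .
\]
Lemma~\ref{solver:fidelity} identifies the $X$-maximum with $F(S,\eta_C(S))$, attained even for singular $S$ or $C$. It therefore remains to show that, under the chain constraints, the quantity $\max\sum_j b_j\Tr Y_j$ equals $\sum_j b_j\Tr S^{1-2^{-j}}$, and that this maximum is attained at $Y_j=S^{1-2^{-j}}$ simultaneously for all $j$. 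Since $b_j\ge0$, it suffices to show that each $\Tr Y_j\le\Tr S^{1-2^{-j}}$ for every feasible chain, and that the diagonal choice is feasible.

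For feasibility, the chain $Y_j=S^{1-2^{-j}}$, $Y_0=I$, commutes with $S$. In a common eigenbasis each block constraint reduces to the scalar condition $s\,y_{j-1}\ge y_j^2$, with $y_j=s^{1-2^{-j}}$. This holds with equality, since $2(1-2^{-j})=1+(1-2^{-(j-1)})$.

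For the upper bound, I prove by induction the matrix inequality $Y_j\preceq S^{1-2^{-j}}$, not merely the trace bound. The base case $Y_0=I=S^0$ is immediate. Suppose $0\preceq Y_{j-1}\preceq S^{1-2^{-(j-1)}}=:G$. The block constraint gives $Y_j\preceq S\#Y_{j-1}$, where $\#$ is the matrix geometric mean. This is the standard characterization: for $A,B\succeq0$, the mean $A\#B$ is the largest symmetric $Y$ with $\bigl(\begin{smallmatrix}A&Y\\Y&B\end{smallmatrix}\bigr)\succeq0$. It follows from Schur complements after regularizing $A,B$ by $\epsilon I$ and passing to the limit. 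Monotonicity of $\#$ in its second argument, which follows from the same characterization, then gives
\[
 Y_j\preceq S\#G=S^{1/2}\bigl(S^{-1/2}GS^{-1/2}\bigr)^{1/2}S^{1/2}=S^{1-2^{-j}},
\]
because $S$ and $G$ commute. Taking traces and summing with the weights $b_j\ge0$ yields the upper bound. Attainment of the full program follows because each inner maximum is attained, and the outer maximum over the compact density set of a continuous function (Lemma~\ref{sq:fo:fidelity}, with continuity of matrix powers) is attained as well. The derivation of the geometric-mean bound when $S$ is singular is the main technical point. I handle it by the $\epsilon$-regularization just described together with closedness of the feasible set; alternatively, one can restrict to the support of $S$, since the block constraint forces $Y_j$ to vanish on $\ker S$.

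Finally, the count \eqref{solver:dimension} is bookkeeping. The symmetric matrices $S,Y_1,\ldots,Y_a$ contribute $(a+1)D(D+1)/2$ scalars. The trace normalization removes one. The unrestricted real matrix $X$ contributes $\ell D^2$. The listed blocks are the fidelity lift of order $D+\ell D$ and the $a$ chain blocks of order $2D$, together with the positivity constraints $S,Y_j\succeq0$.
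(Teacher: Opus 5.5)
Your proposal is correct and follows essentially the same route as the paper: Lemma~\ref{solver:fidelity} handles the $X$-maximum, the block constraint gives $Y_j\preceq S\mathbin{\#}Y_{j-1}$, and monotonicity of the geometric mean with induction gives $Y_j\preceq S^{1-2^{-j}}$; the commuting choice attains this bound, and compactness together with the coordinate count finishes the argument. The only minor differences are that you derive monotonicity of $\#$ from its maximal-block characterization and treat singular $S$ by $\epsilon$-regularization, whereas the paper uses the square-root formula for monotonicity and restricts to $\operatorname{ran}S$ (an option you also mention).
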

\begin{proof}
Lemma~\ref{solver:fidelity} accounts for $X$.  For the power variables,
define the matrix geometric mean, for $S\succ0$ and $B\succeq0$, by
$S\mathbin{\#}B=S^{1/2}(S^{-1/2}BS^{-1/2})^{1/2}S^{1/2}$.
The positive-block constraint gives $Y_j\preceq S\mathbin{\#}Y_{j-1}$.
Monotonicity of the second argument and induction therefore give
$Y_j\preceq S^{1-2^{-j}}$.
For completeness, when $S$ is invertible the Schur complement gives
\[
 (S^{-1/2}Y_jS^{-1/2})^2
 \preceq S^{-1/2}Y_{j-1}S^{-1/2};
\]
apply square-root monotonicity and conjugate by $S^{1/2}$.
The square-root formula for the geometric mean proves its monotonicity
in the second argument.  If $S$ is singular, positivity forces every
$Y_j$ to be supported on $\operatorname{ran}S$; the same proof on that
support applies.  The commuting choices $Y_j=S^{1-2^{-j}}$ attain all
these inequalities simultaneously.  Since each $b_j$ is nonnegative,
the optimum over the power variables is exactly the required trace
regularizer.  The density set is compact, and its continuous objective
has a maximizer.  The preceding formulas provide attaining auxiliary
variables.  Counting the independent real symmetric coordinates and
the entries of $X$ proves \eqref{solver:dimension}.
\end{proof}

\begin{lemma}[Explicit geometry of the feasible set]\label{solver:geometry}
In the product Hilbert--Schmidt metric on the trace-one affine space,
the feasible set of Proposition~\ref{solver:program} is a compact convex
body.  It contains the ball of radius
\begin{equation}\label{solver:radii}
 r_{\mathrm{in}}=\frac1{64D}
\end{equation}
about the explicitly known point
\begin{equation}\label{solver:interiorpoint}
 S_0=I_D/D,\qquad X_0=0,\qquad Y_{j,0}=I_D/(4D)
 \quad(1\le j\le a).
\end{equation}
It is contained in the ball of radius
$R_{\mathrm{out}}=2\sqrt{(a+2)D}$ about that point.  The absolute value
of its linear objective is at most
\begin{equation}\label{solver:objectivebound}
 B_{\mathrm{obj}}=1+\|H\|_{\mathrm{op}}+2\sqrt\ell
                  +D\sum_{j=1}^{a}b_j.
\end{equation}
These bounds hold uniformly at singular coefficient and physical
covariances.
\end{lemma}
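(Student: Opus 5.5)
The proof splits into four checks: the feasible set is a compact convex body, it contains the stated inner ball, it lies in the stated outer ball, and the objective is bounded as claimed. All of these are checked uniformly in $C$.

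\emph{Convexity and compactness.} Convexity is immediate, since the constraints of Proposition~\ref{solver:program} are linear matrix inequalities together with an affine trace condition. For boundedness I would argue by factorization. Any density has $0\preceq S\preceq I$ and $\|S\|_{\HS}\le1$. The block constraint with $Y_{j-1}$ gives $Y_j\preceq S\mathbin{\#}Y_{j-1}$, and induction yields $0\preceq Y_j\preceq S^{1-2^{-j}}$, so $\|Y_j\|_{\HS}\le\sqrt D$. The fidelity block gives $X=S^{1/2}K(I_\ell\otimes S^{1/2})$ with $\|K\|_{\op}\le1$. Then $\|X\|_{\HS}\le\|S^{1/2}\|_{\HS}\|I_\ell\otimes S^{1/2}\|_{\op}\le1$. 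Adding the distances from the centre point (at most $2$ for $S$, $2\sqrt D$ for each $Y_j$, $1$ for $X$) in quadrature gives a radius at most $2\sqrt{(a+2)D}$. Closedness then gives compactness.

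\emph{Inner ball.} This is the main computation. Take a perturbation $(\Delta S,\Delta X,\Delta Y_j)$ with $\Tr\Delta S=0$ and total Hilbert--Schmidt norm at most $r=1/(64D)$, so each component has operator norm at most $r$.
\begin{itemize}
\item For the density: $S=I/D+\Delta S\succeq(1/D-r)I\succeq I/(2D)$, and $S\preceq(1/D+r)I$.
\item For the power blocks: each $Y_j=I/(4D)+\Delta Y_j$ lies between $I/(8D)$ and $I/(2D)$, so $Y_j\succeq0$. For $j=1$, the block $\left(\begin{smallmatrix}S&Y_1\\Y_1&I\end{smallmatrix}\right)$ has Schur complement $S-Y_1^2\succeq I/(2D)-I/(4D^2)\succ0$. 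For $j\ge2$, the Schur complement is $S-Y_jY_{j-1}^{-1}Y_j\succeq I/(2D)-(1/(2D))^2\cdot 8D\,I=I/(2D)-I/(2D)$. That is only borderline, so I would redo these constants more carefully: instead use the Schur complement with respect to $S$, namely $Y_{j-1}-Y_jS^{-1}Y_j\succeq I/(8D)-(1/(2D))^2\cdot\tfrac{1}{1/D-r}$. This still looks tight, and the proof must use the sharper ranges $Y_j\in[I/(4D)-r,\,I/(4D)+r]$. With those, $Y_jS^{-1}Y_j\preceq(1/(4D)+r)^2\cdot D/(1-rD)\,I\approx I/(16D)$, while $Y_{j-1}\succeq(1/(4D)-r)I\approx I/(4D)$, which leaves a positive margin.
\item For the fidelity block: write $\left(\begin{smallmatrix}S&X\\X^{\mathsf T}&I\otimes S\end{smallmatrix}\right)\succeq\left(\begin{smallmatrix}S&0\\0&I\otimes S\end{smallmatrix}\right)-\|X\|_{\op}I\succeq(1/(2D)-r)I\succ0$. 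This step is what the zero column padding buys: the fidelity lift uses $I_\ell\otimes S$ rather than $\eta_C(S)$, so its positivity does not depend on $C$.
\end{itemize}

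\emph{Objective bound.} On the feasible set, $|\Tr(HS)|\le\|H\|_{\op}$. Next, $|\langle\mathsf T,X\rangle|\le F(S,\eta_C(S))\le\sqrt{\Tr\eta_C(S)}\le\sqrt\ell$ by Lemma~\ref{solver:fidelity} and \eqref{solver:sourcefactor}. Finally, $0\le\Tr Y_j\le\Tr S^{1-2^{-j}}\le D$. Summing, with the harmless extra $1$, gives $B_{\mathrm{obj}}$.

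\emph{Main obstacle.} I expect the only delicate step to be the constant bookkeeping for the chained power blocks in the inner-ball argument. Each block $j$ involves both $Y_j$ and $Y_{j-1}$ near $I/(4D)$, except $Y_0=I$. I would handle $j=1$ and $j\ge2$ separately, as above, and verify the numerical margins explicitly with $r=1/(64D)$.
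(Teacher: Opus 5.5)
Your proof is correct and follows the paper's four checks (convexity/compactness, inner ball, outer ball, objective bound). The only real difference is in the power blocks: you verify positivity through the Schur complement $Y_{j-1}-Y_jS^{-1}Y_j$, using the sharp ranges $Y_j\in[\tfrac{15}{64D},\tfrac{17}{64D}]$ and $S\succeq\tfrac{63}{64D}$, which leave the margin $\tfrac{15}{64}-\tfrac{289}{4032}>0$. The paper instead notes that each centre block dominates $\frac1{4D}\left(\begin{smallmatrix}4I&I\\I&I\end{smallmatrix}\right)$, whose least eigenvalue exceeds $1/(8D)$, and that an $r_{\mathrm{in}}$-perturbation moves a block by at most $4r_{\mathrm{in}}=1/(16D)$ in operator norm. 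You should delete the abandoned crude-bound line from a final write-up, since it is also miscomputed: $\frac{1}{4D^2}\cdot 8D=2/D$, not $1/(2D)$.
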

\begin{proof}
Use an orthonormal Hilbert--Schmidt basis for the trace-zero perturbations
of $S$, orthonormal symmetric coordinates for each $Y_j$, and the usual
entry coordinates of $X$.  Thus the stated metric is exactly Euclidean
in dimension $d$, without a coordinate-chart distortion.

At \eqref{solver:interiorpoint}, the fidelity block is
$I_{(\ell+1)D}/D$.  Each power block dominates
\[
 \frac1{4D}\begin{pmatrix}4I_D&I_D\\I_D&I_D\end{pmatrix},
\]
whose smallest eigenvalue is greater than $1/(8D)$.  For the first
power block, its second diagonal block is $I_D$, which is even larger.
The individual $Y_j$ positivity margins equal $1/(4D)$, and the density
margin equals $1/D$.

A perturbation of product Hilbert--Schmidt norm at most $r_{\mathrm{in}}$
changes a power block by operator norm at most $4r_{\mathrm{in}}$.
It changes the fidelity block by at most $2r_{\mathrm{in}}$, because
$\|I_\ell\otimes\Delta S\|_{\mathrm{op}}=
\|\Delta S\|_{\mathrm{op}}$ and the off-diagonal perturbation has norm
$\|\Delta X\|_{\mathrm{op}}$.
Each individual positivity block changes by at most
$r_{\mathrm{in}}$.  All blocks therefore remain positive definite,
proving the inner-ball claim.

For any feasible point, $0\preceq S\preceq I_D$ and
$0\preceq Y_j\preceq S^{1-2^{-j}}\preceq I_D$.
The factorization in Lemma~\ref{solver:fidelity} then gives
$\|X\|_{\mathrm{op}}\le1$, and hence
$\|X\|_{\mathrm{HS}}\le\sqrt D$, since $X$ has $D$ rows.
Also $\|S-I_D/D\|_{\mathrm{HS}}\le1$ and
$\|Y_j-I_D/(4D)\|_{\mathrm{HS}}\le\sqrt D$.
The squared distance to the stated center is at most $1+(a+1)D$,
which is less than $R_{\mathrm{out}}^2$.
The feasible set is closed and convex, so these bounds prove compactness
and the convex-body assertion.

Finally $|\operatorname{Tr}(HS)|\le\|H\|_{\mathrm{op}}$.
The block factorization gives
$|\langle\mathsf T,X\rangle_{\mathrm{HS}}|
\le F(S,\eta_C(S))\le\sqrt\ell$; the last inequality follows from
$\operatorname{Tr}S=1$ and
$\operatorname{Tr}\eta_C(S)=\operatorname{Tr}(S\eta_C(I))\le\ell$.
Each $\operatorname{Tr}Y_j$ is at most $D$, proving
\eqref{solver:objectivebound}.
\end{proof}

\subsection{A standard ellipsoid theorem and the returned certificate}

We state the version of the ellipsoid theorem that we use, including
its output requirement.  An exact separation procedure either certifies
membership in a convex set or supplies a separating affine half-space.

\begin{theorem}[Ellipsoid optimization; Bubeck, Theorem 2.4]
\label{solver:ellipsoid}
Let $\mathcal K\subset\mathbb R^d$ be a compact convex body containing
a ball of radius $r>0$ and contained in a known ball of radius $R$.
Let $f:\mathcal K\to[-B,B]$ be continuous and convex.  Given exact
separation for $\mathcal K$ and exact values and subgradients of $f$,
the ellipsoid algorithm returns a feasible visited point $z_t$ with
\[
 f(z_t)-\min_{z\in\mathcal K}f(z)
 \le\frac{2BR}{r}\exp\left(-\frac{t}{2d^2}\right)
\]
whenever $t\ge2d^2\log(R/r)$.
\end{theorem}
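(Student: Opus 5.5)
This is the standard central-cut ellipsoid method for convex minimization, and I would follow Bubeck's argument. Start with the ellipsoid $\mathcal E_0$, the known ball of radius $R$ containing $\mathcal K$. At step $s$ the center $c_s$ of $\mathcal E_s$ is queried. If $c_s\notin\mathcal K$, the separation oracle returns $w_s\ne0$ with $\mathcal K\subseteq\{z:\langle w_s,z-c_s\rangle\le0\}$. If $c_s\in\mathcal K$, take $w_s$ to be a subgradient of $f$ at $c_s$. If that subgradient is zero, $c_s$ is already optimal and we may stop. Otherwise every point with $f(z)<f(c_s)$ lies in the half-space $\{\langle w_s,z-c_s\rangle<0\}$. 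Let $\mathcal E_{s+1}$ be the minimum-volume ellipsoid containing the half-ellipsoid $\mathcal E_s\cap\{\langle w_s,z-c_s\rangle\le0\}$; it is given by the explicit rank-one update formula. The first lemma is the classical volume bound $\operatorname{vol}\mathcal E_{s+1}\le e^{-1/(2d)}\operatorname{vol}\mathcal E_s$. I would verify it by an affine reduction to the unit ball cut by a coordinate half-space, compute the ratio $\bigl(\tfrac{d}{d+1}\bigr)\bigl(\tfrac{d^2}{d^2-1}\bigr)^{(d-1)/2}$, and bound it using $1+x\le e^x$.

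Second, I would show that some feasible center is visited once $t\ge2d^2\log(R/r)$. Every cut of the first kind keeps all of $\mathcal K$ inside the ellipsoid. If no feasible center occurred, after $t$ steps we would have $\operatorname{vol}\mathcal K\le e^{-t/(2d)}R^d\operatorname{vol}B_1$. Since $\mathcal K$ contains a ball of radius $r$, this contradicts $e^{-t/(2d)}<(r/R)^d$, which holds once $t>2d\log(R/r)$. The stated threshold is stronger than this. The returned point $z_t$ is the feasible visited center with the smallest $f$-value.

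Third comes the accuracy estimate, which I expect to be the main point. Let $x^*$ be a minimizer and, for $\epsilon\in(0,1]$, set $\mathcal K_\epsilon=x^*+\epsilon(\mathcal K-x^*)$. Then $\operatorname{vol}\mathcal K_\epsilon\ge\epsilon^dr^d\operatorname{vol}B_1$, and by convexity $f\le(1-\epsilon)f(x^*)+\epsilon f\le f(x^*)+2B\epsilon$ on $\mathcal K_\epsilon$. Choose $\epsilon$ slightly larger than $(R/r)e^{-t/(2d^2)}$, which is at most one under the hypothesis on $t$. Then $\operatorname{vol}\mathcal E_t<\operatorname{vol}\mathcal K_\epsilon$, because $e^{-t/(2d)}\le e^{-t/(2d^2)\cdot d}$. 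So some $z\in\mathcal K_\epsilon$ was removed by a cut at some step $s$. That cut cannot be a separation cut, since $\mathcal K_\epsilon\subseteq\mathcal K$. It is therefore a subgradient cut at a feasible center $c_s$, giving $f(c_s)\le f(z)$ by the subgradient inequality applied to the removed point $z$. Hence $f(z_t)-f(x^*)\le2B\epsilon$. Letting $\epsilon$ decrease to its limit gives the stated bound $\frac{2BR}{r}\exp(-t/(2d^2))$.

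The delicate bookkeeping is at this removal step. Points on the cut hyperplane itself may remain, so one should take $z$ in the interior of $\mathcal K_\epsilon$ and use strictness. The degenerate case of a zero subgradient is handled by the early stop described above. Everything else is exact arithmetic with the stated oracles, and that matches the real-arithmetic model.
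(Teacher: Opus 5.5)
Your outline is Bubeck's argument: a volume-shrinking lemma for central cuts, followed by the homothetic copy $\mathcal K_\epsilon=x^*+\epsilon(\mathcal K-x^*)$. The paper does not prove this theorem; it imports it by citation, so this is the right route. Your feasibility step and removal step are sound. In fact, a removed point $z\in\mathcal E_s\setminus\mathcal E_{s+1}$ automatically satisfies $\langle w_s,z-c_s\rangle>0$, so the subgradient inequality already gives $f(c_s)<f(z)$ without choosing $z$ in the interior.

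The one step that fails as written is the per-iteration volume factor. Your ratio $\rho_d=\frac{d}{d+1}\bigl(\frac{d^2}{d^2-1}\bigr)^{(d-1)/2}$ is correct. However, bounding it with $1+x\le e^x$, through $1-\frac1{d+1}\le e^{-1/(d+1)}$ and $\bigl(1+\frac1{d^2-1}\bigr)^{(d-1)/2}\le e^{1/(2(d+1))}$, only gives $\rho_d\le e^{-1/(2(d+1))}$. Fed into your third step, that yields $\frac{2BR}{r}\exp\bigl(-t/(2d(d+1))\bigr)$ under the threshold $t\ge2d(d+1)\log(R/r)$. This is strictly weaker than the stated bound.

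The factor $e^{-1/(2d)}$ is true, but its margin is only of order $d^{-3}$ in the exponent: $\log\rho_d=-\frac1{2d}-\frac1{12d^3}+O(d^{-4})$. So you need third-order estimates. Two bounds suffice:
\begin{itemize}
\item $\log(1+x)\ge x-\frac{x^2}2+\frac{x^3}3-\frac{x^4}4$, obtained by integrating $\frac1{1+u}\ge1-u+u^2-u^3$, applied at $x=1/d$;
\item $-\log(1-y)\le y+\frac{y^2}{2(1-y)}$, applied at $y=1/d^2$.
\end{itemize}
Together they give
\[
 2\log\Bigl(1+\frac1d\Bigr)+(d-1)\log\Bigl(1-\frac1{d^2}\Bigr)-\frac1d
 \ge\frac{d^2+d-3}{6d^4(d+1)}>0\qquad(d\ge2).
\]
This is equivalent to $\rho_d\le e^{-1/(2d)}$. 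For $d=1$ the factor is exactly $1/2$. The same inequality, written as $(1+\frac1n)^2(1-\frac1{n^2})^{n-1}\ge e^{1/n}$, is the ``elementary computation'' behind Bubeck's Lemma~2.3. With this replacement, the rest of your argument yields the stated $\frac{2BR}{r}\exp(-t/(2d^2))$.
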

The reference is \cite[Theorem~2.4, p.~250]{bubeck2015}; the convex-body
hypotheses are stated on p.~244 and the algorithm and feasible-point
output rule on pp.~249--250.  This theorem concerns an exact-oracle
arithmetic model.  We verify its oracle requirements directly rather
than infer them from a generic assertion about semidefinite programs.

\begin{proposition}[A feasible primal solution of certified accuracy]
\label{solver:accuracy}
For every $\nu>0$, Proposition~\ref{solver:program} admits an algorithm
returning feasible matrices $(\widehat S,\widehat X,\widehat Y_1,
\ldots,\widehat Y_a)$ and their linear objective value $v$ such that
\begin{equation}\label{solver:valuecertificate}
 v\le\mathcal E(H,C)\le v+\nu.
\end{equation}
Its number of scalar operations and exact real symmetric EVD calls is
polynomial in $\ell,D,a$ and
$\log(2+B_{\mathrm{obj}}+\nu^{-1})$.
All its input coefficients can be formed in polynomially many scalar
operations and EVD calls from $H,C,\A_1,\ldots,\A_\ell$ and $b_1,\ldots,b_a$.
\end{proposition}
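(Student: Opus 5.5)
We apply Theorem~\ref{solver:ellipsoid} to the convex function $f=-(\text{objective }\eqref{solver:objective})$ on the feasible body $\mathcal K$ of Proposition~\ref{solver:program}. We use the Euclidean coordinates fixed in the proof of Lemma~\ref{solver:geometry}, in dimension $d$ given by \eqref{solver:dimension}. That lemma supplies all geometric inputs: $\mathcal K$ is a compact convex body, it contains the ball of radius $r=r_{\mathrm{in}}=1/(64D)$ about the explicit point \eqref{solver:interiorpoint}, and it lies in the ball of radius $R=R_{\mathrm{out}}=2\sqrt{(a+2)D}$. We may take $B=B_{\mathrm{obj}}$. The objective is linear, so its subgradient is the constant coefficient vector $(H,2\mathsf T,b_1I_D,\ldots,b_aI_D)$ expressed in these coordinates. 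Forming it requires $F_{\mathrm{coef}}=C^{1/2}$ by one EVD, and then $T_a$ by $O(\ell^2D^2)$ operations. Choosing an explicit orthonormal basis of trace-zero symmetric matrices makes the coordinate change polynomial.

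The separation oracle is the main point to verify. Given a point $z$, we reconstruct $(S,X,Y_j)$ with $S=I/D+(\text{trace-zero part})$. We then run an exact EVD on each constraint block: the fidelity block of order $(\ell+1)D$, the $a$ power blocks of order $2D$ (where $Y_0=I_D$ is a constant block), $S$ itself, and each $Y_j$. If every smallest eigenvalue is nonnegative, then $z\in\mathcal K$. Otherwise some block $\mathcal M(z)$, affine in $z$, has a unit eigenvector $w$ with $w^T\mathcal M(z)w<0$. The affine functional $z'\mapsto w^T\mathcal M(z')w$ is nonnegative on $\mathcal K$ and negative at $z$, so its gradient gives an exact separating half-space. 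Computing that gradient means writing $w^T\mathcal M(z')w$ as a linear form in the coordinates. This costs polynomially many operations, since each block entry is a coordinate or a fixed linear combination of coordinates. The only mild care needed is with $I_\ell\otimes S$, whose contribution is $\sum_a w_a^TSw_a$ over the blocks $w_a$ of $w$.

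Theorem~\ref{solver:ellipsoid} then returns a feasible visited point $z_t$ whose gap is at most $\tfrac{2BR}{r}e^{-t/(2d^2)}$. We choose
\[
 t=\left\lceil 2d^2\log\Bigl(\max\Bigl\{\frac Rr,\frac{2BR}{r\nu}\Bigr\}\Bigr)\right\rceil .
\]
This $t$ is polynomial in $\ell,D,a$ and $\log(2+B_{\mathrm{obj}}+\nu^{-1})$, because $R/r=128D\sqrt{(a+2)D}$. We set $v=\Tr(H\widehat S)+2\langle\mathsf T,\widehat X\rangle+\sum_jb_j\Tr\widehat Y_j$. Feasibility gives $v\le\mathcal E(H,C)$ by Proposition~\ref{solver:program}, and the gap bound gives $\mathcal E(H,C)\le v+\nu$. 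This is \eqref{solver:valuecertificate}.

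For the operation count, each ellipsoid iteration consists of one oracle call, which uses $a+3$ EVDs plus polynomial arithmetic, and an ellipsoid update of $O(d^2)$ arithmetic with one square root. The count is therefore polynomial. Tracking the best feasible visited point needs only comparisons of objective values. Singular $C$ or singular sources cause no difficulty, because the radii in Lemma~\ref{solver:geometry} are uniform in $C$. This is the reason the lifted SDP was constructed.
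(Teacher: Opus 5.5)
Your proposal is correct and follows the paper's proof essentially step for step. Both arguments separate by an exact EVD of each affine PSD block with the cut $u^{\mathsf T}L(z')u\ge0$, use the constant gradient of the linear objective, and apply Theorem~\ref{solver:ellipsoid} with the radii and objective bound of Lemma~\ref{solver:geometry}. The paper also spells out some model-level details you skipped:
\begin{itemize}
\item It chooses the iteration count without a logarithm primitive: it doubles until $2^j\ge 2B_{\mathrm{obj}}R_{\mathrm{out}}/(r_{\mathrm{in}}\min\{1,\nu\})$ and takes $t=2d^2j+1$.
\item It returns the known point \eqref{solver:interiorpoint} when the objective gradient in the chart vanishes, since the ellipsoid cut is then undefined.
\item It treats the degenerate dimensions $d\le1$ separately.
\end{itemize}
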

\begin{proof}
Factor $C$ and form \eqref{solver:kraus}; its entries and those of every
constraint are explicit real scalar expressions.  At a candidate in
the trace-one coordinate chart, evaluate every affine PSD block.
Exact EVD either verifies all blocks are positive semidefinite or gives
a unit vector $u$ with $u^{\mathsf T}L(z)u<0$ for a violated affine
pencil $L(z)$.  The inequality $u^{\mathsf T}L(z')u\ge0$ is an explicit
separating half-space containing every feasible point.  Thus each
separation call uses polynomially many of the permitted operations.
The objective is linear, so its values and gradient are directly
available. If its gradient in the affine coordinate chart is zero,
the objective is constant and the explicitly known feasible point
\eqref{solver:interiorpoint} is already optimal; return that point.

Apply Theorem~\ref{solver:ellipsoid} to the negative of
\eqref{solver:objective}, using Lemma~\ref{solver:geometry}.  Taking
\[
 t\ge 2d^2\log\left(
 \frac{2B_{\mathrm{obj}}R_{\mathrm{out}}}
      {r_{\mathrm{in}}\min\{1,\nu\}}\right)+1
\]
suffices for the stated objective gap and for the feasible-output
threshold. The usual ellipsoid update has polynomial scalar cost
per iteration. To choose an integer count without a logarithm primitive,
write $A=2B_{\mathrm{obj}}R_{\mathrm{out}}/
(r_{\mathrm{in}}\min\{1,\nu\})$ and double $1$ until reaching
$2^j\ge A$. Then $t=2d^2j+1$ suffices, since
$\log A\le j\log2\le j$. This scan uses $O(1+\log A)$
comparisons and scalar operations. The
one-dimensional case uses interval bisection and a zero-dimensional
feasible set is immediate.  This proves the complexity claim.
\end{proof}

The certificate in \eqref{solver:valuecertificate} consists of a
feasible primal objective and the proved approximation bound for the
specified iteration count.  We do not require a separate SDP dual
matrix from the solver.  In particular, the assertion is stronger than
merely reporting small numerical feasibility residuals: the returned
point is feasible in the stipulated exact arithmetic model.

\begin{corollary}[From objective accuracy to density accuracy]
\label{solver:densityaccuracy}
Write $\varphi(S)$ for the density objective in
\eqref{solver:generalpotential}.  Suppose it is strongly concave with
modulus $c/2$ in the Hilbert--Schmidt metric, and its maximizing density
$S_*$ is faithful.  The output of Proposition~\ref{solver:accuracy}
satisfies
\begin{equation}\label{solver:densityerror}
 \|\widehat S-S_*\|_{\mathrm{op}}
 \le\|\widehat S-S_*\|_{\mathrm{HS}}
 \le2\sqrt{\nu/c}.
\end{equation}
Consequently, if $S_*\succeq\mu I_D$, choosing
$\nu\le ce^2/4$ with $e\le\mu/2$ returns a faithful density within
operator-norm error $e$.
\end{corollary}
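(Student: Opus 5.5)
The plan is to repeat the argument of Lemma~\ref{query:gap-density} with the general objective $\varphi$ in place of $\Phi_{H,C}$. Only three facts enter: strong concavity, interiority of the optimizer, and the relation between the lifted objective and $\varphi(\widehat S)$.

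\emph{Lower bound on the gap.} Since $S_*$ is faithful, it is an interior maximizer on the trace-one affine slice. Its tangent derivative therefore vanishes on trace-zero directions. Along the segment $S_u=S_*+u(S-S_*)$ from $S_*$ to any positive trace-one density $S$, the second derivative of $u\mapsto\varphi(S_u)$ is at most $-\tfrac c2\|S-S_*\|_{\HS}^2$ by the assumed modulus. Integrating twice gives
\[
 \varphi(S_*)-\varphi(S)\ge\tfrac c4\|S-S_*\|_{\HS}^2 .
\]
If $S$ is singular, I would apply this at interior points of the segment and pass to the limit using continuity of $\varphi$. Continuity holds because fidelity and the trace powers are continuous on the density set.

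\emph{Upper bound on the gap.} The output is feasible, so Proposition~\ref{solver:program} and Lemma~\ref{solver:fidelity} apply. They show that for fixed $\widehat S$ the maximal lifted objective over the auxiliary variables $X,Y_j$ equals $\varphi(\widehat S)$. Hence $v\le\varphi(\widehat S)$. Together with \eqref{solver:valuecertificate} and $\varphi(S_*)=\mathcal E(H,C)$, this gives $\varphi(S_*)-\varphi(\widehat S)\le\nu$.

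\emph{Conclusion.} Comparing the two bounds gives $\|\widehat S-S_*\|_{\HS}^2\le 4\nu/c$. The operator norm is at most the Hilbert--Schmidt norm, which proves \eqref{solver:densityerror}. For the last claim, $\nu\le ce^2/4$ gives operator error at most $e\le\mu/2$. Weyl's inequality then yields $\widehat S\succeq S_*-eI\succeq(\mu/2)I$, so $\widehat S$ is faithful.

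The only delicate point is the singular-endpoint limit in the first step, and continuity already covers it. The strong concavity itself is an assumption here. In the applications it comes from the $2c\Tr\sqrt S$ term, as in Lemma~\ref{query:gap-density}.
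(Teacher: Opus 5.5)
Your proposal is correct and follows the paper's proof essentially step for step. Interior stationarity plus strong concavity along the segment, extended to singular endpoints by continuity, gives $\varphi(S_*)-\varphi(S)\ge(c/4)\|S-S_*\|_{\HS}^2$; feasibility of the lift gives $v\le\varphi(\widehat S)$, the certificate closes the gap bound, and your explicit Weyl-type check that $\widehat S\succeq(\mu/2)I$ fills in the faithfulness claim that the paper leaves implicit.
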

\begin{proof}
Trace-constrained stationarity at $S_*$ and strong concavity give
$\varphi(S_*)-\varphi(S)\ge(c/4)\|S-S_*\|_{\mathrm{HS}}^2$.
This extends to boundary densities by continuity.  The linear objective
of any feasible lift of $\widehat S$ is at most $\varphi(\widehat S)$.
Combine this observation with \eqref{solver:valuecertificate}, and then
use $\|\cdot\|_{\mathrm{op}}\le\|\cdot\|_{\mathrm{HS}}$.
\end{proof}

For the square normalization one takes $c=\theta=1$; for the
rectangular implementation the square-root ridge supplies $c=\kappa$.
The density floors and strong-concavity estimates proved for the two
potentials therefore turn inverse-polynomial objective accuracy into
the density accuracy needed by the direct covariance-derivative
routine.  Finite differences in $C$ are unnecessary.  The genuine
Taylor remainder of a finite movement remains part of the walk analysis.

The SDP geometry and the separation algorithm verify the hypotheses of
the stated ellipsoid theorem. In particular, its dependence on objective
accuracy is logarithmic in this arithmetic model. The final section
combines this solver with the finite walk and bounds the number of calls.

\section{Implementing preparation, sampling, and acceptance}
\label{impl:main}

We can now discharge the numerical contracts of
Section~\ref{intro:algorithm}. The solver gives a feasible density and an
objective gap; the direct derivative formula gives the preparation test;
and the covariance and fourth-derivative bounds make finite cuts and
movements affordable. The remaining task is to count these operations
on every branch of the algorithm.

In the scalar recipe, use the explicit constants
\begin{equation}\label{impl:numerical-constants}
 s=n+D+2,\qquad
 M_C=2^{300}s^{60},\qquad M_4=2^{1040}s^{200}.
\end{equation}
Appendix~\ref{reg:section} proves that they bound the covariance second
derivative and the fourth derivative along each movement curve. Their
large exponents only serve to provide one uniform polynomial choice.
All other parameters are those in Section~\ref{intro:algorithm}.
Throughout an epoch, $\ell$ is its initial live count and the center
includes every original matrix, including frozen labels.

\subsection{The numerical recipe and complete algorithm}\label{intro:algorithm}
The complete implementation uses the analytical parameters in
Section~\ref{walk:contracts} and the derivative bounds
\eqref{impl:numerical-constants}. We collect the choices here in the
order in which they are computed. Rounding chooses the nearer sign;
a tie is resolved as $+1$.

\paragraph{Parameters.}
The following parameters are computed for $n,m\ge1$, after the empty-input test. Put $s=n+D+2$,
$L=4096$, $\delta=2^{-13}$, $a_0=(16384n)^{-1}$, and
\[
 \mu=(1024s^6)^{-1},\qquad \epsilon_A=10^{-4}.
\]
For $m\le n$ take $(q,\theta,\kappa,\chi,B)=(1/2,1,0,1,770)$.
For $m>n$, start $p=2$ and double it until $2^p\ge D/n$, and set
\[
 q=1/p,\quad \theta=\sqrt{\frac{(1-q)L^q n^{1-q}}{qD^q}},\quad
 \kappa=\chi=1/D,\quad B=2+\frac{6L^q n^{1/2-q}}{\theta q}.
\]
Since $p$ is a power of two, these powers are evaluated by repeated
square roots. The regularizer contains a term at least $2\chi\Tr\sqrt S$;
$\chi$ controls optimizer accuracy through this term's curvature.
Set
\begin{gather*}
 \tau=(B+1)^{-1},\quad
 h=\min\{a_0/(2\sqrt n),1/2,\sqrt{\tau/2},\sqrt{2^{-40}/M_4}\},\\
 K=\lceil64(B+1)+129\rceil,\quad M=K(n+1),\quad r=M+b+1,\\
 e_A=\epsilon_A/(3s^2),\qquad
 \nu_A=\min\{\epsilon_A/3,\chi e_A^2/4\}.
\end{gather*}
At an epoch with $\ell$ labels, also set
\begin{gather*}
 t_\ell=L/\sqrt\ell,\quad \eta_\ell=t_\ell/64,\quad
 \alpha_\ell=\min\{\delta/2,t_\ell/(2M_C)\},\\
 e_\ell=\min\{\mu/2,\eta_\ell\mu\delta/(12\sqrt\ell)\},\quad
 \nu_\ell=\chi e_\ell^2/4,\quad
 \omega_\ell=(\eta_\ell\delta/(3D))^2.
\end{gather*}
The derivative bounds $M_C,M_4$ and the density floor $\mu$ are proved in Appendix~\ref{reg:section};
Section~\ref{impl:runtime} proves every loop count and inverse tolerance
polynomial. Here $\chi$ is a lower bound on the square-root coefficient and
$\nu_\ell,\nu_A$ are primal objective-gap targets. Their formulas
ensure the derivative and supporting-plane accuracies proved in
Section~\ref{query:section}. The algorithms use the following routines.

\begin{itemize}[nosep,leftmargin=*]
\item $\proc{Solve}(H,C,\nu)$ returns a feasible primal SDP solution,
its density $\widehat S$, and its value $\widehat E$, with objective gap
at most $\nu$. Section~\ref{solver:main} specifies the ellipsoid implementation.
\item $\proc{Prepare}(H,C)$ discards covariance eigenvalues below $2\delta$,
recording their trace as dust; return if the range is empty.
From $\proc{Solve}(H,C,\nu_\ell)$ form
$R=\sqrt{\widehat S}$, $L_i=R\A_iR$, $N=\sum C_{ij}L_iL_j$, and
$\widehat\Gamma_{ij}=\Tr(L_i(N+\omega_\ell I)^{-1/2}L_j)$.
Restrict to $\ran C$. If its largest eigenvalue is
at most $15t_\ell/16$, return. Otherwise take a unit top eigenvector $v$,
replace $C$ by $C-\alpha_\ell vv^T$, record $\alpha_\ell$ as paid loss,
and repeat. Return the final $C$ and both accumulated trace losses.
Only $C$ changes. All spectral operations use the fixed EVD
primitive, with a fixed returned ordering and least-index tie breaking.
\item $\proc{Project}(C,V)$ first computes
$P=\mathbf 1_{[1/2,1]}(C)$ by EVD. For equations
$V=\cap_j v_j^\perp$, replace $P$ in their stated order by
$P-(Pv_j)(Pv_j)^T/(v_j^TPv_j)$ whenever the denominator is positive;
a zero denominator leaves $P$ unchanged. Use frozen-coordinate equations
in increasing original-label order, followed by the radial equation.
Return $Q=P/2$. Lemma~\ref{sq:lem:scalarshort} proves that this is
half the projection onto the legal part of the high spectral subspace.
\item $\proc{Sample}(Q)$ uses an EVD of the projection $2Q$ and keeps
its $k$ eigenvectors $u_j$ with eigenvalue one. Choose $(j,\pm)$ uniformly
from its $2k$ possibilities and return $\pm\sqrt{k/2}\,u_j$.
One uniform-real draw and a scan of equal intervals of length $1/(2k)$
implement this distribution exactly.
\end{itemize}

In Algorithm~\ref{alg:epoch}, the sampled $\xi$ is multiplied by $h$.
Equivalently, the update takes a step of length
$s_{\rm step}=h\sqrt{k/2}$ along the selected unit vector, as in
\eqref{eq:unit-direction-update}. Since $k\le n$, the parameter choice gives
\[
 \norm{h\xi}_\infty\le\norm{h\xi}_2
   =s_{\rm step}\le\frac{a_0}{2\sqrt2}<a_0.
\]
Every live coordinate has distance greater than $a_0$ from its nearer
face before movement, so the step stays inside the cube. The time
counter advances by $h^2=2s_{\rm step}^2/k$ at each movement.

\begin{algorithm}[H]
\caption{One epoch trial from a saved state $x_*$}\label{alg:epoch}
\begin{algorithmic}[1]
\Require Profile and scalars above; $\ell\ge32$ live labels $I_0$ of $x_*$;
every live coordinate has $1-|x_{*,i}|>a_0$;
$(\widehat S_*,\widehat f_*)$ from $\proc{Solve}(H(x_*),0,\nu_A)$
\State $x\gets x_*$, $C\gets I_\ell$, $T\gets0$, $Y\gets0$, $d\gets0$
\While{$T+h^2\le\tau$ and fewer than $\ell/64$ new labels have frozen and $d\le\ell/64$}
 \State $C\gets\proc{Prepare}(H(x),C)$; add its dust and paid trace losses to $d$
 \If{$d>\ell/64$} \State \textbf{break} \EndIf
 \State $V\gets\{z:z_i=0\text{ on newly frozen labels},\ z^Tx_{I_0}=0\}$
 \State $Q\gets\proc{Project}(C,V)$, $\xi\gets\proc{Sample}(Q)$
 \State $x_{I_0}\gets x_{I_0}+h\xi$, $Y\gets Y+h\A(\xi)$, $C\gets C-h^2Q$, $T\gets T+h^2$
 \State Round each live $x_i$ with $1-|x_i|\le a_0$ to its nearest sign and freeze it
\EndWhile
\State $(\widehat S,\widehat E)\gets\proc{Solve}(H(x),C,\epsilon_A/3)$
\State $\widehat\Psi\gets\widehat E-\widehat f_*-\Tr(\widehat S_*(H(x)-H(x_*)))$,
$\widehat M\gets\Tr(\widehat S_*Y)$
\If{$d\le\ell/64$, $\widehat\Psi\le16.5\sqrt\ell$, and $\widehat M\le4.5\sqrt\ell$}
 \State \Return $x$
\EndIf
\State \Return failure
\end{algorithmic}
\end{algorithm}

\begin{algorithm}[H]
\caption{Full signing: square profile first, rectangular profile when $m>n$}\label{alg:full}
\begin{algorithmic}[1]
\Require Real symmetric contractions $A_i\in\R^{m\times m}$; confidence integer $b\ge1$
\If{$n=0$ or $m=0$} \State \Return the all-positive signing \EndIf
\State Form the signed block matrices, compute the scalar recipe, and set $x\gets0$
\For{at most $n+1$ phases}
 \State Let $k$ be the current live count
 \For{at most $K$ saved epoch requests, while the live count exceeds $k/2$}
  \If{fewer than $32$ labels remain live} \State Round them to nearest signs; \Return $x$ \EndIf
  \State Save $x_*=x$; compute $(\widehat S_*,\widehat f_*)=\proc{Solve}(H(x_*),0,\nu_A)$
  \State Run at most $r$ independent trials of Algorithm~\ref{alg:epoch}; install its first accepted $x$
  \If{all trials fail} \State \Return failure \EndIf
 \EndFor
 \If{all coordinates are signs} \State \Return $x$ \EndIf
\EndFor
\State \Return failure
\end{algorithmic}
\end{algorithm}

\subsection{A covariance cap from one approximate density}

\begin{lemma}[The derivative report in \proc{Prepare}]
\label{impl:gamma-report}
Suppose $C$ has positive eigenvalues at least $2\delta$, and let
$U=\operatorname{ran}C$. If $U\ne\{0\}$, the report in
\proc{Prepare} satisfies
\[
 \|\widehat\Gamma-\Gamma\|_{\mathrm{op},U}\le\eta_\ell=t_\ell/64,
 \qquad \Gamma=D_CE(H,C).
\]
\end{lemma}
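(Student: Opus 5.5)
The plan is to apply Corollary~\ref{query:direct-contract} with the parameters used inside \proc{Prepare}. First, set up the hypotheses of Section~\ref{query:section}. The covariance in \proc{Prepare} has its positive eigenvalues at least $2\delta$ and satisfies $C\preceq I_\ell$. Hence \eqref{query:coefficient-floor} holds with $U=\ran C$ and $a=2\delta$; the value $a=\delta$ is also admissible, which is the conservative choice built into $e_\ell$ and $\omega_\ell$. The square-root coefficient is at least $\chi$ in both profiles ($\chi=1$ for the square profile, $\chi=\kappa=1/D$ for the rectangular one). The strong-concavity estimate of Lemma~\ref{query:gap-density} therefore holds with $c=\chi$.

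Second, supply the density floor $\mu$. At every state reached by the walk, $\|H\|_{\op}\le n+1$, since each coordinate lies in $[-1,1]$ and the atoms are contractions, and $k=\ell\le n$. Lemma~\ref{sq:an:smooth}, with $v\le\ell$ and $\theta=1$ in the square case, gives $S_*\succeq(2n+2+2\sqrt n+\sqrt D)^{-2}I$, which is far above $\mu=(1024s^6)^{-1}$. Lemma~\ref{rect:ridge-floor} gives $S_*\succeq2^{-10}s^{-6}I=\mu I$ in the rectangular case.

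Third, match the tolerances to \eqref{query:accuracy-choice} with $\eta=\eta_\ell$:
\begin{itemize}[nosep]
\item $e_\ell=\min\{\mu/2,\eta_\ell\mu\delta/(12\sqrt\ell)\}$ is at most the prescribed $e$, because $a\ge\delta$.
\item $\nu_\ell=\chi e_\ell^2/4$ is exactly the prescribed $\nu$.
\item $\omega_\ell=(\eta_\ell\delta/(3D))^2$ is at most $\min\{1,(\eta_\ell a/(3D))^2\}$; it is below $1$ since $\eta_\ell\le64$ and $\delta$ is tiny.
\end{itemize}
Since $\sigma$ enters Lemma~\ref{query:stable-formula} only through the bound $D\sqrt\sigma/a$, a smaller shift is harmless. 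Lemma~\ref{query:gap-density} then gives $\|\widehat S-S_*\|_{\op}\le e_\ell\le\mu/2$. Lemma~\ref{query:relative-stability} contributes at most $\eta_\ell/3$, and Lemma~\ref{query:stable-formula} contributes at most $\eta_\ell/3$.

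Fourth, account for the evaluation step. \proc{Prepare} forms $R=\sqrt{\widehat S}$ and $(N+\omega_\ell I)^{-1/2}$ by exact EVD. So in Lemma~\ref{query:spectral-error} we have $d=b=0$, and the third error budget is zero. In this step one must check that $\widehat S$ is faithful, so that $R$ is its genuine square root. This follows from $\widehat S\succeq S_*-e_\ell I\succeq(\mu/2)I$.

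Summing the three contributions gives $\|\widehat\Gamma-\Gamma\|_{\op,U}\le\eta_\ell$, where $\Gamma=\Gamma(S_*)=D_CE(H,C)$ by the envelope identity. The main obstacle is the second step. One must confirm that the state at which \proc{Prepare} is called satisfies the norm bound $\|H\|\le n+1$ and the label bound $k\le n$, uniformly through rounding and the frozen terms. One must also confirm that the density floor lemma applies with the singular physical sources that arise. Both facts were stated earlier, so the remaining work is bookkeeping.
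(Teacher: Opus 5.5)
Your proposal is correct and follows the paper's proof. You apply Corollary~\ref{query:direct-contract} with $a=\delta$, $c=\chi$ and target error $\eta_\ell$, so that $\nu_\ell$ and $\omega_\ell$ are exactly the prescribed gap and shift, and exact EVD makes the evaluation error zero; the total is then two contributions of at most $\eta_\ell/3$ each. The only differences are minor: you take the density floor $\mu$ from Lemma~\ref{sq:an:smooth} and Lemma~\ref{rect:ridge-floor} rather than from the uniform Lemma~\ref{reg:floor}, and you state explicitly the faithfulness $\widehat S\succeq(\mu/2)I$ that the paper leaves implicit.
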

\begin{proof}
The uniform density floor in Appendix~\ref{reg:section} is $\mu$, and the
coefficient of a square-root regularizer is at least $\chi$ in either
profile. Proposition~\ref{solver:accuracy} supplies a feasible primal
solution with gap $\nu_\ell=\chi e_\ell^2/4$.
Lemma~\ref{query:gap-density} therefore gives density error at most
$e_\ell$. Apply Corollary~\ref{query:direct-contract} with
$a=\delta$ and target error $\eta_\ell$.
Its shift is precisely $\omega_\ell$; in fact the stated constants give
\[
 e_\ell=\frac{\mu}{1536\ell},\qquad
 \omega_\ell=\frac1{147456D^2\ell}<1.
\]
The formulas in \proc{Prepare} use exact EVD and exact scalar operations,
so their additional matrix-function evaluation error is zero. The
density and shift errors are each at most $\eta_\ell/3$, which is
stronger than the asserted total bound. The report is restricted to
$U$ before its largest eigenvalue is taken.
\end{proof}

\begin{lemma}[Paid covariance loss and termination]
\label{impl:preparation}
The routine \proc{Prepare} makes only positive-semidefinite decreases
of $C$ and leaves $H$ fixed. It returns either zero covariance or a
covariance with positive eigenvalues at least $2\delta$ and
\begin{equation}\label{impl:prepared-cap}
 \Gamma|_{\operatorname{ran}C}\preceq t_\ell I.
\end{equation}
Each paid decrement $C\mapsto C-\alpha_\ell vv^T$ decreases the potential
by at least $t_\ell\alpha_\ell/2$. Across an entire epoch trial there
are at most $\lfloor\ell/\alpha_\ell\rfloor$ paid decrements. The
total dust loss is at most
\begin{equation}\label{impl:dust}
 2\delta\ell=\ell/4096\le\ell/1024.
\end{equation}
In particular, every call terminates.
\end{lemma}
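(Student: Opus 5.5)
The plan is to treat the three operations of \proc{Prepare} separately: the dust cleanup, the report test, and the paid cut. Each operation either subtracts a positive semidefinite matrix from $C$ or leaves $C$ unchanged, and $H$ is never modified.

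\emph{Cleanup.} Discarding eigenvalues below $2\delta$ subtracts the corresponding spectral part of $C$, which is a positive semidefinite decrease. By monotonicity of $E$ in $C$ (Lemma~\ref{sq:fo:barrier}, or its analogue \eqref{rect:ridge-basic}), cleanup does not increase the potential.

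\emph{Dust budget.} An eigenvalue can be discarded only if it currently lies below $2\delta$. Every later operation only decreases $C$, and a discarded direction leaves the range permanently. Hence the discarded eigenvalues correspond to at most $\ell$ orthogonal directions over the whole trial, each contributing less than $2\delta$. This should give the bound \eqref{impl:dust}. The point needing care is that the bound must hold across repeated calls within one epoch. I would charge each discarded eigenvalue to the rank drop it causes: $\rank C$ starts at $\ell$ and never increases, since the paid cuts preserve or reduce the range and movements preserve it by Lemma~\ref{walk:step}. So at most $\ell$ rank drops occur in the trial.

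\emph{Return test.} If the routine returns, the top eigenvalue of $\widehat\Gamma$ on $\ran C$ is at most $15t_\ell/16$. Lemma~\ref{impl:gamma-report} gives an error of at most $t_\ell/64$, so $\Gamma\preceq(15/16+1/64)t_\ell\preceq t_\ell$, which is \eqref{impl:prepared-cap}. The return happens right after a cleanup, so the positive eigenvalues of $C$ are at least $2\delta$.

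\emph{Paid decrement.} Here the report exceeds $15t_\ell/16$, so $v^T\Gamma v\ge(15/16-1/64)t_\ell>7t_\ell/8$. Set $g(s)=E(H,C-svv^T)$ for $s\in[0,\alpha_\ell]$. Since $\alpha_\ell\le\delta/2<2\delta$ and the eigenvalues on the range are at least $2\delta$, the segment stays in the same coefficient face. Then $-g'(0)=v^T\Gamma v$, and $|g''|\le M_C$ by Theorem~\ref{reg:main}. Taylor's theorem gives
\[
 g(\alpha_\ell)\le g(0)-\tfrac78t_\ell\alpha_\ell+\tfrac12M_C\alpha_\ell^2 .
\]
Because $\alpha_\ell\le t_\ell/(2M_C)$, the last term is at most $t_\ell\alpha_\ell/4$, so the decrease is at least $\tfrac58t_\ell\alpha_\ell\ge t_\ell\alpha_\ell/2$. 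The main obstacle is confirming that the uniform bound $M_C$ applies on this segment, which requires the face to stay fixed and the density floor to hold. Both follow from the $2\delta$ floor together with the appendix hypotheses.

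\emph{Counting and termination.} Each paid cut removes trace $\alpha_\ell$, while $\Tr C\le\ell$ and no operation increases $\Tr C$ during a trial. Hence at most $\lfloor\ell/\alpha_\ell\rfloor$ paid cuts occur in the whole trial. Every loop iteration either returns or makes a paid cut, so each call terminates.
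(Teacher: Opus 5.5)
Your proposal is correct and follows the paper's proof essentially step for step. Cleanup is a monotone order decrease. The $15t_\ell/16$ report test with error $t_\ell/64$ gives either the cap or a top direction with $v^T\Gamma v>59t_\ell/64$. The $2\delta$ floor together with $\alpha_\ell\le\delta/2$ keeps the cut segment on a fixed face, where $M_C$ and $\alpha_\ell\le t_\ell/(2M_C)$ give a decrease of at least $t_\ell\alpha_\ell/2$. Finally, the trace ledger bounds the number of paid cuts and the rank ledger bounds the dust. The only difference is cosmetic: you round $59/64$ down to $7/8$ and obtain $5/8$ where the paper has $43/64$.
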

\begin{proof}
Removing eigenvalues below $2\delta$ is an order decrease. Its positive
survivors satisfy the stated floor. A cut has size
$\alpha_\ell\le\delta/2$, so its entire segment remains at least
$3\delta/2$ on the current range; its range therefore does not change.
All density and covariance derivative estimates apply along that segment.

At a stopped nonzero report, Lemma~\ref{impl:gamma-report} gives
\[
 \lambda_{\max}(\Gamma|_U)
 \le15t_\ell/16+t_\ell/64=61t_\ell/64<t_\ell.
\]
At a continuing report, an exact unit top eigenvector $v$ instead gives
\[
 v^T\Gamma v>15t_\ell/16-t_\ell/64=59t_\ell/64.
\]
The covariance second-derivative bound $M_C$ from
Appendix~\ref{reg:section}, together with
$\alpha_\ell\le t_\ell/(2M_C)$, yields
\begin{align*}
 E(H,C)-E(H,C-\alpha_\ell vv^T)
 &\ge\frac{59}{64}t_\ell\alpha_\ell
          -\frac{M_C\alpha_\ell^2}{2}\\
 &\ge\frac{43}{64}t_\ell\alpha_\ell
 \ge\frac12t_\ell\alpha_\ell.
\end{align*}
Dust removal cannot increase the potential, by its monotonicity in $C$.

An epoch starts from $I_\ell$. Both preparation and movement decrease
$C$ in positive-semidefinite order. Each paid cut consumes trace exactly
$\alpha_\ell$, so there can be at most $\ell/\alpha_\ell$ such cuts
over all preparation calls in that epoch. Every positive eigenvalue
discarded as dust is less than $2\delta$. Each discard permanently
reduces rank, since no later operation enlarges the covariance range.
There are at most $\ell$ discarded eigenvalues, proving
\eqref{impl:dust}. After each continuing cap test there is a paid cut;
the finite cut budget therefore forces the routine to return.
\end{proof}

The response theorem now applies at the actual returned covariance and
actual maximizing density. It gives $\operatorname{Tr}(CJ)\le B\sqrt\ell$
with the chosen profile constant $B$. In the rectangular profile the
response bound at epoch size $\ell\le n$ is no larger than this global
choice, because $q\le1/2$. Thus the covariance derivative test supplies
the curvature control required by the analysis.

\subsection{Sampling and acceptance}

The routine \proc{Project} uses an exact EVD of $C$ to form its high
spectral projection, followed by the scalar constraint scan of
Lemma~\ref{sq:lem:scalarshort}. That lemma identifies the computed matrix
with the canonical orthogonal projection, including zero denominators
and dependent constraints. There are at most $\ell+1$ coordinate and
radial equations. Forming the projection and applying the scan costs
$O((\ell+1)^3)$ scalar operations in addition to the EVD call; every
division follows a test that its denominator is positive. The final
factor $1/2$ gives the movement covariance.

\begin{lemma}[The implemented movement]\label{impl:sample}
The routine \proc{Sample} uses one exact EVD, one uniform draw, and
$O(\ell+1)$ scalar operations. Its outputs and the resulting movements
satisfy Lemmas~\ref{walk:step} and \ref{walk:invariant}. Every movement
curve meets the hypotheses of Theorem~\ref{reg:main}.
\end{lemma}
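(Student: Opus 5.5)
The plan is to verify the three assertions of Lemma~\ref{impl:sample} in turn: the operation count, the distributional and ledger properties required by Lemmas~\ref{walk:step} and \ref{walk:invariant}, and the hypotheses of Theorem~\ref{reg:main} for each movement curve.

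\emph{Operation count.} \proc{Sample} receives $Q=P/2$, where $P$ is the exact orthogonal projection produced by \proc{Project}. By Lemma~\ref{sq:lem:scalarshort}, $P$ is the projection onto $W=U\cap V$. One exact EVD of $2Q$ returns an orthonormal basis. Its eigenvalues lie in $\{0,1\}$, so scanning them to keep the $k$ unit eigenvectors costs $O(\ell)$ comparisons. A single uniform draw $\omega\in[0,1)$ is then compared against the $2k$ equal intervals of length $1/(2k)$. Since $k\le\ell$, this costs $O(\ell+1)$ scalar operations and selects each pair $(j,\pm)$ with probability exactly $1/(2k)$. Forming $\pm\sqrt{k/2}\,u_j$ costs one square root and $O(\ell)$ multiplications.

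\emph{Properties for Lemmas~\ref{walk:step} and \ref{walk:invariant}.}
\begin{itemize}
\item \emph{Moments.} Uniformity over the signed basis gives $\E\xi=0$ and $\E\xi\xi^T=\tfrac1k\sum_j\tfrac k2u_ju_j^T=Q$. Also $\|\xi\|^2=k/2=\Tr Q$ and $\xi\in W$.
\item \emph{Legality of the step.} Lemma~\ref{sq:fi:short} gives $Q\preceq C$, and $W\subseteq V$ supplies the frozen and radial constraints. These are exactly the hypotheses of Lemma~\ref{walk:step}.
\item \emph{Staying in the cube.} The step satisfies $h\|\xi\|_2=h\sqrt{k/2}\le h\sqrt n\le a_0/2$. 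Every live coordinate starts more than $a_0$ from a face, so the move stays in the cube, as in the proof of Lemma~\ref{walk:invariant}.
\item \emph{Ledger.} The radial identity \eqref{walk:radial} holds exactly. Since $h^2\le1/2$, we have $C-h^2Q\succeq(1-h^2)C$, so the range of $C$ and its floor $\delta$ are preserved.
\item \emph{Normalization.} $k>0$ follows from $\Tr Q\ge\ell/16$, which \eqref{walk:flat-positive} provides before each movement.
\end{itemize}

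\emph{Hypotheses of Theorem~\ref{reg:main}.} For each of the two signs, the movement curve is $u\mapsto(H+u\A(\xi),\,C-u^2Q)$ for $|u|\le h$.
\begin{itemize}
\item The covariance stays on a fixed coefficient face, with positive eigenvalues at least $(1-h^2)\cdot2\delta\ge\delta$.
\item The center satisfies $\|H\|\le n+1$: the coefficients remain in the cube, and the atoms are contractions.
\item The covariance satisfies $0\preceq C-u^2Q\preceq I_\ell$ with $\ell\le n$.
\end{itemize}
These are the state conditions under which Appendix~\ref{reg:section} supplies the floor $\mu$ and the bound $M_4$.

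I expect the main obstacle to be matching these curve conditions precisely to the stated hypotheses of Theorem~\ref{reg:main}, which appear later in the paper. In particular, the eigenvalue floor along the whole segment must not degrade below the threshold used there. I would handle this with the uniform comparison $C-u^2Q\succeq(1-h^2)C$ over the entire interval $|u|\le h$, rather than only at the endpoint.
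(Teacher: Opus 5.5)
Your argument follows the paper's route: an exact EVD of $2Q$, a cumulative scan of $2k$ equal intervals, and then Lemmas~\ref{walk:step} and \ref{walk:invariant} for the moments, cube feasibility and the ledger. Those parts are fine.

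The gap is in the last claim. Theorem~\ref{reg:main} is not a statement about states alone. Its fourth-derivative bound concerns the curve $t\mapsto(H+tA,\,C-t^2Q)$ for a fixed direction $A$, and it explicitly assumes $\|A\|_{\HS}\le s^2$. In the appendix this assumption is what bounds the linear-center contributions to $K_*$. It cannot be dropped, because the fourth derivative of $t\mapsto E(H+tA,\cdot)$ grows with powers of $\|A\|$. So the state conditions you list (density floor, center bound, covariance floor) do not by themselves give $|e^{(4)}|\le M_4$. You never check this hypothesis for $A=\A(\xi)$. It is easy to supply: the atoms are contractions and $\|\xi\|_2=\sqrt{k/2}\le\sqrt\ell$, so
\[
 \|\A(\xi)\|_{\HS}\le\sqrt D\,\|\A(\xi)\|_{\op}\le\sqrt D\,\|\xi\|_1
 \le\sqrt{D\ell}\,\|\xi\|_2\le\ell\sqrt D\le s^2 .
\]

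Two smaller hypotheses should also be stated explicitly rather than left implicit.
\begin{itemize}
\item The interval must lie in $[-1/2,1/2]$. This holds because the recipe gives $h\le1/2$.
\item The covariance floor must meet the theorem's $\gamma_0=2^{-15}$. Along $|t|\le h$,
\[
 C\succeq C-t^2Q\succeq(1-h^2)C\succeq\tfrac34C\succeq\tfrac{3\delta}{2}P_U,
\]
and $\tfrac{3\delta}{2}>2^{-15}$ since $\delta=2^{-13}$. The upper bound $C-t^2Q\preceq P_U$, rather than $\preceq I_\ell$, holds because $C\preceq I_\ell$ is supported on $U$.
\end{itemize}
With these additions your verification coincides with the paper's.
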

\begin{proof}
Apply EVD to the computed projection $2Q$. Its nonzero eigenvalues
are all one, and their eigenvectors form an orthonormal basis of $W$.
If there are $k$ such vectors, partition $[0,1)$ into $2k$ equal
half-open intervals, one for each vector and sign. A cumulative scan
implements exactly the uniform law proved in Lemma~\ref{walk:step}.
Lemma~\ref{walk:invariant} ensures
$\operatorname{Tr}Q>0$, cube feasibility, and radial progress.
For the derivative bound, the sampled force satisfies
\[
 \|\A(\xi)\|_{\HS}
 \le\sqrt{D\ell}\,\|\xi\|_2
 \le\ell\sqrt D\le s^2.
\]
For every $|t|\le h\le1/2$, both signs stay in the cube, so
$\|H+t\A(\xi)\|_{\op}\le n$. On the fixed range $U$ of the prepared
covariance,
$C-t^2Q\succeq(3/4)C\succeq(3\delta/2)P_U$ and
$C-t^2Q\preceq P_U$. This exceeds the floor $\gamma_0=2^{-15}$ in
Theorem~\ref{reg:main}, verifying all its hypotheses.
\end{proof}

\begin{lemma}[Acceptance accuracy]\label{impl:acceptance}
For the exact excess $\Psi$ and martingale $M_*$ defined in
\eqref{walk:excess}, the algorithm's reports satisfy
\[
 |\widehat\Psi-\Psi|\le5\epsilon_A/3,
 \qquad |\widehat M-M_*|\le\epsilon_A.
\]
\end{lemma}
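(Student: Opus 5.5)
The proof is a direct error accounting: write each report minus its exact counterpart as a sum of a few terms, and bound each term by a certified report accuracy together with a crude norm bound. Three report guarantees are available. The first is the saved source-free solve $\proc{Solve}(H(x_*),0,\nu_A)$, whose gap is at most $\nu_A\le\epsilon_A/3$; this bounds $|\widehat f_*-f(H_*)|\le\epsilon_A/3$, because the feasible value is a lower bound and the gap is certified by \eqref{solver:valuecertificate}. The second is the terminal solve with gap $\epsilon_A/3$, which gives $|\widehat E-E(H,C)|\le\epsilon_A/3$ in the same way. The third is density accuracy for the saved source-free density. Since $\nu_A\le\chi e_A^2/4$ and the square-root coefficient is at least $\chi$ in both profiles, Lemma~\ref{query:gap-density} (equivalently Corollary~\ref{solver:densityaccuracy}) gives $\|\widehat S_*-S_*\|_{\HS}\le e_A=\epsilon_A/(3s^2)$.

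Next I decompose the two errors:
\[
 \widehat\Psi-\Psi=(\widehat E-E)-(\widehat f_*-f(H_*))-\Tr\bigl((\widehat S_*-S_*)(H-H_*)\bigr),\qquad
 \widehat M-M_*=\Tr\bigl((\widehat S_*-S_*)Y\bigr).
\]
The first two terms of $\widehat\Psi-\Psi$ contribute at most $2\epsilon_A/3$. For each pairing term I apply Hilbert--Schmidt Cauchy--Schwarz, so it remains to show $\|H-H_*\|_{\HS}\le3s^2$ and $\|Y\|_{\HS}\le3s^2$; each pairing then costs at most $\epsilon_A$. With these bounds the totals are $5\epsilon_A/3$ and $\epsilon_A$, as claimed.

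For the norm bounds I use the pathwise invariant of Lemma~\ref{walk:invariant}. All coordinates stay in the cube, and only the $\ell$ labels in $I_0$ move, so $\|x-x_*\|_1\le2\ell$ and hence $\|H-H_*\|_{\op}\le2\ell$, because the lifted atoms are contractions. For $Y=\A(y_{I_0})$, the bound $\|x-x_*-y\|_1\le r_c\le a_0\ell$ gives $\|y\|_1\le2\ell+a_0\ell$, and therefore $\|Y\|_{\op}\le(2+a_0)\ell$. Multiplying by $\sqrt D$ turns operator norms into Hilbert--Schmidt norms. With $\ell\le n$ and $s=n+D+2$ we get $(2+a_0)\ell\sqrt D\le3s^2$. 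This is also the computation sketched in the proof of Lemma~\ref{walk:acceptance}.

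The one point that needs care, though not a real obstacle, is that the value report for $\widehat E$ must be two-sided at the \emph{terminal} covariance. That covariance may be singular or have dust removed, so I have to check that Proposition~\ref{solver:accuracy} holds for an arbitrary $0\preceq C\preceq I_\ell$. It does: Lemma~\ref{solver:geometry} is uniform over singular covariances, and the value certificate needs no density floor. Only the source-free density report requires strong concavity, and that holds because the coefficient $\chi$ does not depend on $C$.
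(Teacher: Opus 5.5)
Your proposal is correct and follows the paper's proof essentially step for step: each of the two certified value gaps contributes $\epsilon_A/3$; Lemma~\ref{query:gap-density} with $\nu_A\le\chi e_A^2/4$ gives density error $e_A=\epsilon_A/(3s^2)$; and the trial invariant's bounds $\|H-H_*\|_{\HS},\|Y\|_{\HS}\le3s^2$ make each Hilbert--Schmidt pairing cost at most $\epsilon_A$. You also check that the terminal value certificate of Proposition~\ref{solver:accuracy} holds uniformly at singular covariances; this is correct, and the paper leaves it implicit.
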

\begin{proof}
Proposition~\ref{solver:accuracy} gives initial and terminal value
errors at most $\epsilon_A/3$, and Lemma~\ref{query:gap-density}
gives initial density error at most $e_A=\epsilon_A/(3s^2)$.
The trial invariant gives
$\|H-H_*\|_{\HS},\|Y\|_{\HS}\le3s^2$.
Lemma~\ref{query:plane-report} therefore bounds each pairing error
by $\epsilon_A$. Summing the appropriate errors proves the claims.
Both are smaller than $\sqrt\ell/100$; thus
Lemma~\ref{walk:acceptance} applies to these actual primal SDP reports.
\end{proof}

\subsection{Polynomial work on every execution}
\label{impl:runtime}

\begin{theorem}[Finite work and confidence amplification]
\label{impl:polynomial}
In the arithmetic model of Definition~\ref{model:main}, every execution
of Algorithms~\ref{alg:epoch} and \ref{alg:full}, including all rejected
trials, has a polynomial bound on scalar operations, exact EVD calls,
and uniform draws in $n,D,b$. The retry count $r=M+b+1$ gives total
failure probability at most $2^{-b}$.
\end{theorem}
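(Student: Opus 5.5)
The plan is to count operations hierarchically: inside-out along the call tree, and uniformly over every branch, so that rejected trials and failed executions are covered with no probabilistic assumption. The outer structure of Algorithm~\ref{alg:full} is explicit. There are at most $n+1$ phases. Each phase makes at most $K$ epoch requests, and each request runs at most $r=M+b+1$ trials. For the square profile $K$ is an absolute constant. For the rectangular profile, $K\le152(B_n+1)$ with $(B_n+1)\sqrt n\le195\mathcal W$ by \eqref{rect:cost-bounds}, so $K=O(\sqrt{1+\log(D/n)})$, which is polynomial in $n,D$. Hence $M=K(n+1)$ and $r$ are polynomial in $n,D,b$. Computing the scalar recipe is also polynomial. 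The doubling for $p$ takes at most $6D$ steps, and the dyadic powers need $O(p)$ square roots. It therefore suffices to bound one trial of Algorithm~\ref{alg:epoch} by a polynomial in $n,D$ alone.

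Inside a trial, I would count three kinds of loop. First, the movement count is at most $\lfloor\tau/h^2\rfloor$ by Lemma~\ref{walk:invariant}. With $h^{-2}\le\max\{16384^2\cdot4n^3,\,4,\,2/\tau,\,2^{40}M_4\}$ and $M_4=2^{1040}s^{200}$, this is polynomial in $s$; the huge constant is harmless here. Second, the total number of paid cuts across all calls to \proc{Prepare} is at most $\ell/\alpha_\ell$ by Lemma~\ref{impl:preparation}. Since $\alpha_\ell^{-1}\le\max\{2/\delta,\,2M_C\sqrt\ell/L\}$ with $M_C=2^{300}s^{60}$, this is also polynomial. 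Each call to \proc{Prepare} either cuts or returns, so the number of \proc{Solve} calls per trial is at most (movements $+$ cuts $+\,2$), plus one terminal solve. Third, each \proc{Project}, \proc{Sample}, rounding and dust step costs $O((\ell+1)^3+D^3\ell^2)$ scalar operations and $O(1)$ EVD calls, by Lemma~\ref{sq:lem:scalarshort} and Lemma~\ref{impl:sample}.

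The remaining and main step is bounding each \proc{Solve} call through Proposition~\ref{solver:accuracy}. Its cost is polynomial in $\ell,D,a$ and $\log(2+B_{\mathrm{obj}}+\nu^{-1})$, so I need polynomial bounds on $a$, on $B_{\mathrm{obj}}$, and on $\log\nu^{-1}$ for every requested gap.

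\begin{itemize}
\item For $a$: $a=\log_2p\le\log_2(6D)$.
\item For $B_{\mathrm{obj}}$: $\|H\|_{\op}\le n+1$ on every visited state, since states stay in the cube and frozen terms are contractions. Also $D\sum b_j\le D(2\theta/(1-q)+2\kappa)$ with $\theta\le20s$ from the proof of Lemma~\ref{rect:ridge-floor}.
\item For the gaps: $\nu_A$ and $\epsilon_A/3$ are inverse-polynomial by inspection. For $\nu_\ell=\chi e_\ell^2/4$, the explicit value $e_\ell=\mu/(1536\ell)$ from Lemma~\ref{impl:gamma-report}, together with $\mu=(1024s^6)^{-1}$ and $\chi\ge1/D$, makes $\nu_\ell^{-1}$ polynomial in $s$.
\end{itemize}

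I expect the subtle point to be justifying that these bounds hold on \emph{every} execution and not only on good events. To handle this, I would check the following: the density floor $\mu$ of Appendix~\ref{reg:section} and the cube invariants are pathwise, through Lemma~\ref{walk:invariant} and the $a_0$ margin; the covariance floor $2\delta$ is deterministic by cleanup; and no tolerance depends on a source eigenvalue, thanks to the shift $\omega_\ell$ in Lemma~\ref{query:stable-formula}. The ellipsoid iteration count is selected by the doubling scan in the proof of Proposition~\ref{solver:accuracy}, so no logarithm primitive is needed.

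Multiplying the three levels of bounds gives a polynomial total. Uniform draws number at most one per movement. The failure bound $2^{-b}$ is exactly the retry argument of Proposition~\ref{walk:assembly}, which uses the conditional rejection probability $301/800<1/2$ from Lemmas~\ref{walk:moments} and \ref{walk:acceptance}.
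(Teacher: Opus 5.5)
Your proposal is correct and takes essentially the same route as the paper's proof. Both arguments use polynomial bounds on $B$, $\tau^{-1}$, $h^{-2}$, $\alpha_\ell^{-1}$ and the gap targets $\nu_\ell,\nu_A$; per-trial loop counts from the movement clock and the paid-cut budget of Lemma~\ref{impl:preparation}; Proposition~\ref{solver:accuracy} for each solve; the explicit outer bound of $(n+1)\cdot K\cdot r$ trials; and the conditional retry argument of Proposition~\ref{walk:assembly}. The differences are cosmetic: you bound $K$ through \eqref{rect:cost-bounds} rather than the paper's direct estimate $B\le2+96\sqrt p$, and your pathwise density-floor check is not needed for the work bound, because every requested gap is an explicit function of $n,D,\ell$ (the floor matters only for the correctness of the reports).
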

\begin{proof}
The empty input cases have already returned. For nonempty inputs,
$n\ge1$ and $D\ge2$. We first check the magnitudes of every parameter
that occurs in a loop count or requested accuracy.

For the square profile, $B=770$. In the rectangular profile, the doubling
scan gives a dyadic $p\ge2$ with $(D/n)^{1/p}\le2$ and $p\le6D$.
Substituting the formula for $\theta$ gives
\[
 B=2+\frac{6L^{q/2}(D/n)^{q/2}}{\sqrt{q(1-q)}}
   \le2+96\sqrt p,
 \qquad q=1/p.
\]
Here $L^{q/2}\le8$, $(D/n)^{q/2}\le\sqrt2$, and $1-q\ge1/2$.
Also $\theta\le20s$ and $\theta^{-1}\le s$, directly from its formula.
Thus $B$, $\tau^{-1}$, the number of power blocks, and all scalar
regularizer coefficients are polynomially bounded. The dyadic powers
require only polynomially many scalar square roots and operations.

By their explicit definitions,
$\mu^{-1}$, $M_C$, $M_4$, and $a_0^{-1}$ are polynomial in $s$.
The minimum defining $h$ gives
\[
 h^{-2}=\max\left\{\frac{4n}{a_0^2},4,
                         \frac2\tau,2^{40}M_4\right\},
\]
so every trial has at most $N=\lfloor\tau/h^2\rfloor$ movements and
at most $N+1$ preparation calls. At epoch size $\ell\le n$,
\[
 \alpha_\ell^{-1}
 =\max\{2/\delta,2M_C/t_\ell\}
\]
is polynomial. Lemma~\ref{impl:preparation} bounds the total number of
paid cuts across these calls by $\ell/\alpha_\ell$, with at most
$\ell$ rank drops. Consequently the total number of cap reports and
spectral cleanup calls is polynomial, even when preparation consumes
too much covariance and causes a rejected trial.

The simplified identities in Lemma~\ref{impl:gamma-report} show that
$e_\ell^{-1}$ and $\omega_\ell^{-1}$ are polynomial. Since
$\chi^{-1}\le D$,
$\nu_\ell^{-1}=4/(\chi e_\ell^2)$ is polynomial as well.
The two acceptance accuracies $\epsilon_A/3$ and $\nu_A$ have polynomial
reciprocals. Proposition~\ref{solver:accuracy} therefore bounds each
SDP solve by polynomially many scalar operations and EVD calls.
Its input dimensions are polynomial, and all coefficient arrays are
formed explicitly. Computing each derivative report from the density
requires only polynomially many matrix products, traces, and EVD calls.
The legal projection and its uniform signed sampler have polynomial cost by the
preceding lemmas. There is exactly one uniform draw per movement.
This proves a common polynomial work bound for every trial.

Lemmas~\ref{walk:acceptance} and \ref{walk:phase} prove that each
requested epoch trial, given its saved state, is accepted with probability
at least $499/800$, and that at most $K$ accepted epochs halve the live
count in a phase. There are
at most $n+1$ phases, hence at most $M=K(n+1)$ saved epoch requests.
This also bounds the explicit loops in Algorithm~\ref{alg:full},
independently of success. Each request runs at most $r=M+b+1$ trials.
All requests, retries, and solver calls therefore have polynomial
total cost on every path.

Finally, Proposition~\ref{walk:assembly} applies to the verified
routine contracts. Its conditional retry argument gives
$M2^{-r}=M2^{-M-b-1}\le2^{-b}$, and its phase bound guarantees a full
signing when no request exhausts its retries. The runtime bound above
holds on every execution, without conditioning on that success.
\end{proof}

\begin{proof}[Proof of Theorem~\ref{main:combined}]
The square discrepancy guarantee is Corollary~\ref{walk:square-conclusion},
and the rectangular guarantee is Corollary~\ref{rect:conclusion}.
Proposition~\ref{solver:accuracy}, Lemmas~\ref{impl:gamma-report}--\ref{impl:acceptance},
and Theorem~\ref{reg:main} verify every routine and derivative contract
used in their common walk analysis. Theorem~\ref{impl:polynomial} bounds
the work on every execution in Definition~\ref{model:main}'s arithmetic
model and proves the requested failure probability. Taking $b=1$ gives
positive probability of a successful signing, which also establishes
the existential statement. The empty cases are returned at the first
line of the full algorithm.
\end{proof}

\section{AI Usage}
The author conceived the approach of using operator-valued
$K$-transforms, or variants based on regularizers such as Tsallis--$1/2$
rather than log-determinants, as discrepancy potentials in 2022--2023.
The semidefinite representation of the Tsallis--$1/2$ regularized
potential was subsequently conceived by an internal version of Google's
Gemini, used with a specific harness. Extensive calculations with
ChatGPT 5.5--5.6 Sol Ultra helped develop and test proof strategies,
and rule out simpler approaches, leading to the strategy presented here.
GPT 6 Astra Ultra was used to develop Lean formalizations of the
Matrix Spencer and Weaver/Kadison--Singer arguments, including separate
existence and algorithmic formulations.

\section{Acknowledgements}
The idea of using free probability to approach Weaver's discrepancy
problem, potentially algorithmically, was encouraged by inspiring
conversations with Adam W. Marcus in 2016--2017. The author is
profoundly grateful for his encouragement. Numerous conversations and
continued encouragement from Nikhil Srivastava have been invaluable
over the years, as were early discussions with Nick Ryder and Jonathan
Leake.

The author thanks Daniel A. Spielman for hosting him at Yale University
in the summer of 2019, and for extensive discussions of scalar-valued
free probability and its possible applications to matrix discrepancy.
The approach began to crystallize in the summer of 2023, during a
visit hosted by Ramon van Handel at Princeton University. Ramon's
persistent questions about concrete special cases, bottlenecks, and
obstacles pushed the author to examine the approach carefully and
eliminate many of those obstacles, strengthening his conviction that
it could succeed. The author warmly thanks Ramon for his hospitality,
many discussions of operator-valued free probability, and insistence
on an exceptionally high but necessary standard of clarity and rigor.

Finally, the author thanks Tibo from OpenAI for providing numerous
Codex resets to paid Codex subscribers, which were invaluable in
helping this project go \emph{Ultra Fast}.

\appendix
\section{Uniform derivative bounds for the finite walk}
\label{reg:section}

The algorithm moves on real matrix spaces.  Complex parameters enter only
in this appendix, where Cauchy's estimate provides explicit bounds on real
derivatives.  We first bound the unoptimized density objective and then
differentiate its stationarity equation.  The resulting constants depend
on the density floor and the retained coefficient-covariance floor.  They
do not depend on a smallest positive eigenvalue of the physical source.

We use a common notation for the two profiles.  Let $n,D\ge1$, put
$s=n+D+2$, and fix $\ell\le n$ real symmetric contractions
$\A_1,\ldots,\A_\ell$ of order $D$.  For a coefficient covariance $C$,
write $\eta_C(S)=\sum_{i,j}C_{ij}\A_iS\A_j$ as before.  Define
\begin{align}
 \mathcal R(S,C)&=2F(S,\eta_C(S))
       +\frac{\vartheta}{1-q}\operatorname{Tr}S^{1-q}
       +2\chi\operatorname{Tr}\sqrt S,\label{reg:profile}\\
 \Phi(S;H,C)&=\operatorname{Tr}(HS)+\mathcal R(S,C),
 &E(H,C)&=\max_{S\succeq0,\ \operatorname{Tr}S=1}\Phi(S;H,C).
 \label{reg:value}
\end{align}
In the square case, take $\vartheta=0$, $\chi=1$, and $q=1/2$.
In the rectangular case, take $\vartheta=\theta$, $\chi=\kappa=D^{-1}$,
and the previously chosen $q$.  In both cases the parameter bounds give
\begin{equation}\label{reg:parameterbounds}
 0<q\le\tfrac12,\qquad 0\le\vartheta\le20s,
 \qquad s^{-1}\le\chi\le1.
\end{equation}
The density objective is strictly concave and its optimizer is faithful,
as established for the two profiles.  We denote that unique optimizer by
$S_*(H,C)$.  All derivatives below use the Hilbert--Schmidt metric on
real symmetric matrices; covariance directions are restricted to the
stated coefficient support.

\begin{theorem}[The quantitative bounds used by the algorithm]
\label{reg:main}
Set
\begin{equation}\label{reg:constants}
 \mu=2^{-10}s^{-6},\qquad M_C=2^{300}s^{60},
 \qquad M_4=2^{1040}s^{200},\qquad \gamma_0=2^{-15}.
\end{equation}
For $\|H\|_{\mathrm{op}}\le n+1$ and $0\preceq C\preceq I$,
$S_*(H,C)\succeq\mu I_D$.
On every fixed coefficient support $U$ where
$\gamma_0P_U\preceq C\preceq P_U$, the optimized value is smooth and
\begin{equation}\label{reg:covariancebound}
 |D_C^2E(H,C)[V,V]|\le M_C
 \quad\text{if }V=P_UVP_U\text{ and }\|V\|_{\mathrm{HS}}\le1.
\end{equation}
Further, let $Q$ and $A$ be fixed real symmetric matrices on coefficient
and physical space, respectively, with $0\preceq Q\preceq C$ and
$\|A\|_{\mathrm{HS}}\le s^2$.  Along any interval
$I\subseteq[-1/2,1/2]$ on which
\[
 \|H+tA\|_{\mathrm{op}}\le n+1,
 \qquad \gamma_0P_U\preceq C-t^2Q\preceq P_U,
\]
the scalar function $e(t)=E(H+tA,C-t^2Q)$ satisfies
\begin{equation}\label{reg:fourthbound}
 |e^{(4)}(t)|\le M_4\qquad(t\in I).
\end{equation}
The same assertions hold when $U=\{0\}$, with covariance derivatives
interpreted on the zero-dimensional space.
\end{theorem}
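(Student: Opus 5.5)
The plan has three parts: the density floor, the covariance second-derivative bound \eqref{reg:covariancebound}, and the fourth-derivative bound \eqref{reg:fourthbound}. For the floor I will repeat the stationarity argument of Lemma~\ref{rect:ridge-floor}. Let $G\succeq0$ be the free-term gradient. The density equation is $H+G+\vartheta S^{-q}+\chi S^{-1/2}=\lambda I$. Pairing with $S$ gives $\lambda\le\|H\|+2\sqrt\ell+\vartheta D^q+\chi\sqrt D\le 32 s^{3/2}$ by \eqref{reg:parameterbounds}. Dropping $G$ and the power term then gives $\chi S^{-1/2}\preceq 64s^{3/2}I$, hence $S\succeq \chi^2 2^{-12}s^{-3}I\succeq 2^{-12}s^{-5}I$. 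This is already stronger than $\mu$ after adjusting constants. Lemma~\ref{sq:an:smooth} and Lemma~\ref{rect:basic} give the fixed-face smoothness.

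For derivative bounds I will use complexification and Cauchy's estimate, as the appendix preamble announces. Write the optimizer as the solution of the stationarity map $\mathcal F(S,\lambda;H,C)=0$ on the trace-one slice. The free gradient is $\pi^*\{Z^{-1}+\eta_C(Z)\}$, with $Z$ given by the explicit formula $Z=S_0^{1/2}(S_0^{1/2}MS_0^{1/2})^{-1/2}S_0^{1/2}$ on the fixed support $K$. The second form matters: it avoids inverting $M$. Only $S_0^{1/2}MS_0^{1/2}$ must be invertible on $K$, and it dominates $\mu^2\gamma_0\,\Pi\,\eta_{P_U}(I)\,\Pi$ on $K$.

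The singular-source issue sits here. I need a lower bound on $\eta_{P_U}(I)$ restricted to $K$, independent of small positive eigenvalues, and that is not available in general. So I would instead parametrize the fidelity through the lift of Lemma~\ref{solver:fidelity}: $F=\Tr\sqrt{NN^T}$ with $N=S^{1/2}\mathsf T(I\otimes S^{1/2})$. I would then use the formula $\Gamma[v,v]=\Tr(L(v)N^{\dagger/2}L(v))$ from Lemma~\ref{query:stable-formula}, accepting that the constants depend on $\gamma_0$ and $\mu$ via $a=\gamma_0$, exactly as the relative-stability Lemma~\ref{query:relative-stability} did.

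Concretely, I will bound the objective $\Phi(S;H,C)$ holomorphically in complex perturbations of $(H,C,S)$ in a polydisc of radius $\rho=\mathrm{poly}(s)^{-1}$. Matrix powers $S^{a}$ extend holomorphically there by the resolvent integrals of Lemma~\ref{rect:power-differentiation}, since $\operatorname{Re}S\succeq\mu/2$. For the free term I will use the infimal formula $2F=\min_Z g$ and a complex-analytic implicit-function step instead of the square root of $N N^T$. The strong concavity modulus, at least $\chi/2\ge s^{-1}/2$ on the trace-zero tangent by \eqref{sq:an:tsallis-positive}, gives an inverse linearized operator of norm at most $2s$. A quantitative implicit-function theorem then shows that $S_*$ extends holomorphically to a complex neighbourhood of radius $\rho'=\mathrm{poly}(s)^{-1}$, with $\|S_*-S_*^{\rm real}\|\le\mu/4$. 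The value $E=\Phi(S_*;H,C)$ is then holomorphic and bounded by $\mathrm{poly}(s)$ there.

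Cauchy's estimate on a disc of radius $\rho'$ gives $|D_C^2E[V,V]|\le 2\sup|E|/\rho'^2$, and similarly $|e^{(4)}|\le 24\sup|e|/\rho'^4$ for the composite $t\mapsto(H+tA,C-t^2Q)$. The hypotheses $\|A\|_{\HS}\le s^2$ and $|t|\le 1/2$ keep the complexified path inside the polydisc after shrinking $\rho'$ by $s^{-2}$. Tracking exponents crudely yields $s^{60}$ and $s^{200}$ with large powers of two.

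The main obstacle is the holomorphic extension of the free term and its transport on a singular physical support with uniform radius. The transport involves $(S_0^{1/2}MS_0^{1/2})^{-1/2}$, whose smallest eigenvalue on $K$ can be tiny even though $C\succeq\gamma_0P_U$. I would first try to avoid it entirely by working with the unshifted formula $\Tr\sqrt{N}$ for $N=R\eta_C(S)R$, as a function of $C$ restricted to $U$. Factoring $N=\mathcal L^T(C\otimes I)\mathcal L$ shows that $\sqrt N$ equals $\mathcal L^T(C\otimes I)^{1/2}$ composed with a partial isometry, and its conditioning relative to the range is governed by $C\succeq\gamma_0 P_U$ and the Gram of the $L_i$ on their span. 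If that Gram has tiny eigenvalues the estimate still fails, and then I would fall back on the density-floor comparisons of Lemma~\ref{query:relative-stability}. Those comparisons give order-multiplicative control $(1\pm r)$ that transfers to complex perturbations via a Möbius-type argument on positive cones.
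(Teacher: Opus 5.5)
Your density floor is fine: $2^{-12}s^{-5}\ge\mu$ already holds since $s\ge4$. The derivative bounds, however, have a genuine gap at exactly the step you call the main obstacle, and neither of your fallbacks closes it.

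Every later step of your plan needs a bound on the complexified free term $2F(S,\eta_C(S))$. This includes the holomorphic extension of $\Phi$, the quantitative holomorphic implicit-function step, and Cauchy's estimate on $E$. The bound must hold on a ball whose radius depends only on $\mu$ and $\gamma_0$, not on the smallest positive eigenvalue of the source on $K$. You correctly observe that the transport formula cannot give this. Your first fallback is abandoned by your own remark that it fails when the Gram of the $L_i$ is ill-conditioned. The second (``a M\"obius-type argument on positive cones'') is not specified. The Loewner sandwich $(1-r)S\preceq\widehat S\preceq(1+r)S$ of Lemma~\ref{query:relative-stability} has no meaning for non-Hermitian complex perturbations, and you do not say what replaces it. So \eqref{reg:covariancebound} and \eqref{reg:fourthbound} are not established.

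The missing idea is to measure perturbations relative to $S$ and $C$ and to transfer that relative size to the source exactly. Use the specific Kraus factor $B_a=\sum_b(C_U^{1/2})_{ba}\A(u_b)$ and write $U_S=S^{-1/2}\Delta S\,S^{-1/2}$, $U_C=C_U^{-1/2}(\Delta C)|_UC_U^{-1/2}$, $\mathcal Vv=(S^{1/2}B_av)_a$. Then one has the holomorphic identity $\eta_{C+\Delta C}(S+\Delta S)=\mathcal V^*[(I+U_C)\otimes(I+U_S)]\mathcal V$, with $M=\mathcal V^*\mathcal V$.

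Since $\mathcal VM_K^{-1/2}$ is an isometry on $K$, any perturbation with $\|\Delta S\|_{\HS}+\|\Delta C\|_{\HS}<\gamma\mu/64$ changes the source by a relative amount below $1/16$ on $K$, however small its eigenvalues are. Consequently $A=(S+\Delta S)|_K$, $A^{-1}$ and $B=M_{\mathrm{pert}}$ all have numerical ranges in narrow sectors about the positive axis. Every eigenvalue $\lambda=v^*Bv/v^*A^{-1}v$ of $AB$ therefore avoids $(-\infty,0]$. Hence $\Tr_K(AB)^{1/2}$ is holomorphic, equals the fidelity on real inputs, and is bounded by $2D\sqrt n$. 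No transport, no $(S_0^{1/2}MS_0^{1/2})^{-1/2}$, and no Gram conditioning enter. The same relative argument would also rescue your $N=\mathcal L^T(C\otimes I)\mathcal L$ fallback in the $C$-variable, because the conditioning of the $L_i$ is irrelevant there.

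Even with this bound in hand, the paper does not complexify the optimizer. It applies Cauchy only to the unoptimized objective at real points with $S\succeq\mu I$ (Lemma~\ref{reg:mixed}). It then uses the real implicit-function theorem with $\|A_S^{-1}\|\le2/\chi$, giving $D_C^2E=\Phi_{CC}+\langle b,A_S^{-1}b\rangle$ and recursive bounds on $S_*',S_*'',S_*'''$ inside the differentiated envelope identity.

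Your holomorphic implicit-function step additionally requires invertibility of the complex tangent Hessian on a whole neighbourhood, and this costs further powers of $s$. Since $M_C$ and $M_4$ are fixed in the statement, your claim that crude tracking gives $s^{60}$ and $s^{200}$ would have to be actually carried out.
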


\subsection{The standard analysis facts being used}

For clarity, we state the three external calculus facts and explain the
form in which they enter the proof.

\begin{theorem}[Smooth implicit function theorem]\label{reg:ift}
Let $G(p,y)$ be a smooth map between finite-dimensional real Euclidean
spaces, with $G(p_0,y_0)=0$.  If its partial derivative $D_yG(p_0,y_0)$
is invertible, then near $p_0$ the equation has a unique nearby smooth
solution $y=y(p)$.  Its derivative satisfies
\[
 D_yG(p,y(p))\,Dy(p)[h]+D_pG(p,y(p))[h]=0.
\]
\end{theorem}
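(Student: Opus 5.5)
The plan is to deduce the statement from the inverse function theorem, applied to the augmented map $\Psi(p,y)=(p,G(p,y))$, and then read off the derivative formula by differentiating the identity $G(p,y(p))=0$. Nothing proved earlier in the paper is needed beyond finite-dimensional linear algebra; the argument is the textbook one, recorded here in the form used by the later differentiation of the stationarity equations.

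\emph{Step 1 (local invertibility of $\Psi$).} The derivative of $\Psi$ at $(p_0,y_0)$ is the block-triangular operator
\[
 D\Psi(p_0,y_0)=\begin{pmatrix}\Id&0\\ D_pG(p_0,y_0)&D_yG(p_0,y_0)\end{pmatrix}.
\]
It is invertible because its diagonal blocks are invertible. Its inverse is explicit: it sends $(h,k)$ to $\bigl(h,\,(D_yG)^{-1}(k-D_pG\,h)\bigr)$. To get a local $C^1$ inverse of $\Psi$, I will use the contraction mapping principle. Fix a target $(p,w)$ near $(p_0,0)$ and iterate
\[
 y\mapsto y-(D_yG(p_0,y_0))^{-1}\bigl(G(p,y)-w\bigr)
\]
on a small closed ball around $y_0$. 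Continuity of $D_yG$ makes this map a contraction that sends the ball into itself, once the ball and the neighborhood of $p$ are chosen small enough.

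\emph{Step 2 (the solution map and its uniqueness).} Taking $w=0$ gives, for each $p$ near $p_0$, a unique fixed point $y(p)$ in the ball. This is the unique nearby solution of $G(p,y)=0$. By construction $\Psi^{-1}(p,0)=(p,y(p))$.

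\emph{Step 3 (differentiability and smoothness).} The main technical point is to show that the inverse is differentiable with derivative $(D\Psi)^{-1}$. The plan is the standard estimate: bound $\|\Psi^{-1}(a)-\Psi^{-1}(b)\|$ by a constant times $\|a-b\|$, using the contraction, and then insert this into the first-order Taylor expansion of $\Psi$. The identity
\[
 D(\Psi^{-1})=(D\Psi)^{-1}\circ\Psi^{-1}
\]
then bootstraps regularity. Matrix inversion is smooth on invertible operators, so whenever $\Psi^{-1}$ is $C^k$ the right-hand side is $C^k$, hence $\Psi^{-1}$ is $C^{k+1}$. By induction $\Psi^{-1}$, and therefore $y(p)$, is smooth.

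\emph{Step 4 (derivative formula).} Since $y$ is differentiable, the chain rule applied to $G(p,y(p))=0$ gives exactly the displayed formula
\[
 D_yG(p,y(p))\,Dy(p)[h]+D_pG(p,y(p))[h]=0 .
\]
Invertibility of $D_yG(p,y(p))$ persists on a neighborhood of $p_0$ by continuity, so on that neighborhood the formula determines $Dy(p)$ uniquely as $Dy(p)[h]=-(D_yG)^{-1}D_pG\,[h]$.
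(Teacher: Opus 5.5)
Your proposal is correct and gives the standard textbook proof, but it takes a different route from the paper, which does not prove this statement. The paper takes existence, local uniqueness, and smoothness of $y(p)$ from the finite-dimensional implicit function theorem in Guillemin--Haine (Appendix~B). Its only added remark is that the displayed identity is the chain rule applied to $G(p,y(p))=0$, which is exactly your Step~4.

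Your Steps~1--3 reconstruct what the citation supplies: the graph map $\Psi(p,y)=(p,G(p,y))$, a contraction for the local inverse, a Lipschitz bound plus first-order Taylor expansion for differentiability, and the bootstrap through $D(\Psi^{-1})=[D\Psi\circ\Psi^{-1}]^{-1}$ using smoothness of inversion. This makes the appendix self-contained at the cost of length. The paper's choice is reasonable because the theorem is used only qualitatively, to get smoothness of the optimizing density on a fixed coefficient face; every quantitative derivative bound is proved separately.

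Two small points to tighten if you write it out:
\begin{enumerate}
\item Observe that your contraction estimate $\|\Id-(D_yG(p_0,y_0))^{-1}D_yG(p,y)\|\le 1/2$ already makes $D_yG(p,y)$, and hence $D\Psi$, invertible throughout the ball, by a Neumann series. Step~3 needs this at every point of the neighborhood, not only at $(p_0,y_0)$.
\item Choose the radius so that the fixed point lies in the open ball. Then the Taylor expansion is taken at an interior point, and ``unique nearby solution'' means unique within that ball.
\end{enumerate}
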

This is the finite-dimensional theorem in
\cite[Appendix~B]{guilleminhaine2019}; the displayed equation is the
chain rule.  We apply it to the projected density stationarity equation.
It gives smoothness; all quantitative inverse and derivative bounds are
proved below.

\begin{theorem}[Holomorphic matrix functions]\label{reg:matrixfunctions}
If the spectrum of a complex matrix $M$ lies in an open set on which
a scalar function $f$ is holomorphic, then locally
\[
 f(M)=\frac1{2\pi i}\int_{\mathcal C}
       f(z)(zI-M)^{-1}\,dz,
\]
where $\mathcal C$ is a fixed positively oriented contour enclosing the
spectrum inside that open set.  Consequently $f(M)$ depends
holomorphically on the entries of $M$.  For the principal powers
$f(z)=z^\alpha$, the domain used here is
$\mathbb C\setminus(-\infty,0]$.
\end{theorem}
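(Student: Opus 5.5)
The statement to prove is Theorem~\ref{reg:matrixfunctions}, the standard holomorphic functional calculus. I will derive it from the Cauchy integral formula and the algebraic structure of resolvents.

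\textbf{Step 1 (well-definedness).} Fix $M$ with spectrum in the open set $\Omega$, and choose a positively oriented contour $\mathcal C\subset\Omega$ winding once around each eigenvalue and zero times around every point of $\mathbb C\setminus\Omega$. The resolvent $(zI-M)^{-1}$ is a rational function of $z$ and is holomorphic off the spectrum. Cauchy's theorem therefore shows that the integral does not depend on the admissible contour chosen.

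\textbf{Step 2 (agreement with spectral calculus).} First take $M$ diagonalizable, $M=VDV^{-1}$. The integral then conjugates to a diagonal matrix whose entries are
\[
 \frac1{2\pi i}\int_{\mathcal C}\frac{f(z)}{z-\lambda_j}\,dz=f(\lambda_j),
\]
by the scalar Cauchy formula. For general $M$, use the Jordan form. On a block $\lambda I+N$ with $N$ nilpotent, expand $(zI-\lambda I-N)^{-1}=\sum_k N^k(z-\lambda)^{-k-1}$. The generalized Cauchy formula gives $\sum_k f^{(k)}(\lambda)N^k/k!$, the usual definition of $f(M)$.

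For real symmetric $M$ and $f(z)=z^\alpha$ on $\mathbb C\setminus(-\infty,0]$, this is exactly the eigenvalue-wise definition used throughout the paper. The required spectrum condition is that $M$ be positive definite.

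\textbf{Step 3 (local holomorphy, the only real point).} Choose $\mathcal C$ for a fixed $M_0$. Compactness of $\mathcal C$ gives
\[
 \delta=\min_{z\in\mathcal C}\|(zI-M_0)^{-1}\|^{-1}>0.
\]
For $\|M-M_0\|<\delta/2$, a Neumann series shows that $zI-M$ is invertible on $\mathcal C$, with $\|(zI-M)^{-1}\|\le 2/\delta$. Continuity of eigenvalues, or simply the invertibility of $zI-M$ on $\mathcal C$ along the segment from $M_0$ to $M$, keeps the spectrum inside the same contour. Hence the same $\mathcal C$ is admissible for all nearby $M$.

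The integrand is holomorphic in the entries of $M$: Cramer's rule writes each entry as a polynomial in those entries divided by the nonvanishing determinant. The integrand is also jointly continuous and uniformly bounded on the compact set $\mathcal C\times\{\|M-M_0\|\le\delta/2\}$. Differentiation under the integral, or Morera's theorem in each entry combined with Hartogs' theorem, then gives holomorphy of $f(M)$ near $M_0$.

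The expected obstacle is only this uniformity of the contour, which the Neumann series handles. The resulting uniform resolvent bound is also what later makes the Cauchy estimates in the appendix explicit.
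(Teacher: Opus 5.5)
Your proof is correct and is essentially the paper's argument. The paper cites Higham--Lin for the Cauchy-integral definition and its equivalence with the Jordan-form definition. It justifies holomorphic dependence exactly as you do: the resolvent entries are rational by Cramer's rule wherever the determinant is nonzero, and one fixed contour stays admissible near the base matrix; your Neumann-series bound just makes the second point explicit.

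One correction to your closing aside, which is not part of the proof. The appendix's Cauchy estimates do not use the uniform resolvent bound. They bound the holomorphic extension by summing principal powers of eigenvalues, via the trace identity $\Tr f(M)=\sum_j f(\lambda_j)$ (with algebraic multiplicities). Your Jordan-block computation gives that identity immediately, since $\Tr N^k=0$ for $k\ge1$.
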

See \cite[Definition~3.3 and Section~3.1.4]{highamlin2013}.
Holomorphic dependence follows directly because the entries of the
resolvent are holomorphic wherever its determinant is nonzero; the same
contour works in a neighborhood of the base matrix.  A triangularization
also shows that $\operatorname{Tr}f(M)$ is the sum of $f$ evaluated at
the eigenvalues, with algebraic multiplicities.

\begin{theorem}[Cauchy's coefficient bound]\label{reg:cauchy}
If a scalar holomorphic function on a disc of radius $r$ has absolute
value at most $T$, its coefficient of degree $j$ has magnitude at most
$T/r^j$.  For a holomorphic function on a polydisc, applying this formula
successively in each variable bounds the coefficient of
$z_1^{j_1}\cdots z_k^{j_k}$ by $T\prod_i r_i^{-j_i}$.
\end{theorem}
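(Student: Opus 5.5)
The plan is to derive both assertions from the Cauchy integral formula for Taylor coefficients, applied on circles strictly inside the disc and then letting the radius increase to $r$. For the scalar case, let $g$ be holomorphic on the open disc $\{|z|<r\}$ centred at the expansion point (translate so this point is $0$), with $|g|\le T$ there. Write its power series $g(z)=\sum_{j\ge0}a_jz^j$, which converges on the whole disc. For each $0<\rho<r$ the series converges uniformly on the circle $|z|=\rho$. Integrating term by term against $z^{-j-1}$ therefore gives
\[
 a_j=\frac1{2\pi i}\oint_{|z|=\rho}\frac{g(z)}{z^{j+1}}\,dz
    =\frac1{2\pi}\int_0^{2\pi}g(\rho e^{i\phi})\rho^{-j}e^{-ij\phi}\,d\phi .
\]
Bounding the integrand by $T\rho^{-j}$ yields $|a_j|\le T\rho^{-j}$. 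Letting $\rho\uparrow r$ gives $|a_j|\le T/r^j$.

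For the polydisc $\{|z_i|<r_i,\ 1\le i\le k\}$, expand the holomorphic function as a convergent multiple power series $\sum a_{j_1\cdots j_k}z_1^{j_1}\cdots z_k^{j_k}$. Fix radii $\rho_i<r_i$. On the compact torus $\prod_i\{|z_i|=\rho_i\}$ this series converges absolutely and uniformly. The orthogonality of the characters $e^{i\sum_i m_i\phi_i}$ then gives
\[
 a_{j_1\cdots j_k}=\frac1{(2\pi)^k}\int_{[0,2\pi]^k}
 g(\rho_1e^{i\phi_1},\ldots,\rho_ke^{i\phi_k})\prod_i\rho_i^{-j_i}e^{-ij_i\phi_i}\,d\phi .
\]
This is the same as applying the one-variable formula successively in each variable, which matches the statement's phrasing. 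The integrand has modulus at most $T\prod_i\rho_i^{-j_i}$. Letting each $\rho_i\uparrow r_i$ proves the bound $T\prod_ir_i^{-j_i}$.

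The only step needing care is the justification of the multivariable series and the interchange of sum and integral. I would cite the standard fact that a function holomorphic on a polydisc equals its iterated Cauchy integral over any smaller closed torus. Expanding the Cauchy kernel $\prod_i(w_i-z_i)^{-1}$ as a geometric series then gives the uniformly convergent expansion. If one prefers to avoid that fact, the successive version works directly. Fix the other variables and apply the scalar bound in $z_1$ to get coefficient functions of $(z_2,\ldots,z_k)$ bounded by $T\rho_1^{-j_1}$. These coefficient functions are holomorphic, being Cauchy integrals of a holomorphic integrand. Iterate in $z_2,\ldots,z_k$.
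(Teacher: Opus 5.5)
Your proof is correct and takes the same route as the paper, which cites Cauchy's integral formula and inequalities for the one-variable bound and gets the polydisc bound by iterating the one-variable integral formula. The extra details you supply are the standard ones the paper leaves to its reference: working on circles of radius $\rho<r$ and letting $\rho\uparrow r$, and justifying term-by-term integration on the torus or via the successive extraction.
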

The one-variable statement is Cauchy's integral formula and inequalities
\cite[Chapter~2, Corollary~4.3]{steinshakarchi2003}; the polydisc statement
follows by iterating the one-variable integral formula.

\subsection{Density floor and inverse response}

\begin{lemma}[An explicit density floor]\label{reg:floor}
Under $\|H\|_{\mathrm{op}}\le n+1$ and $0\preceq C\preceq I$,
\[
 S_*(H,C)\succeq\mu I_D.
\]
At every faithful density of trace one, the negative density Hessian
of $\Phi$ is at least $\chi/2$ on the trace-zero symmetric tangent
space.  Its inverse there has operator norm at most
$R_*=2/\chi\le2s$.
\end{lemma}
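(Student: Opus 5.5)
The plan has two parts, one for each assertion of Lemma~\ref{reg:floor}. The floor comes from pairing the stationarity equation with $S_*$, exactly as in Lemma~\ref{rect:ridge-floor}. The Hessian bound comes from the square-root curvature formula of Lemma~\ref{sq:fo:sqrt}, used as in Lemma~\ref{query:gap-density}.

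\emph{The density floor.} By the strict concavity and faithfulness established for both profiles, $S=S_*(H,C)$ is an interior stationary point on the trace hyperplane. Stationarity reads
\[
 H+G+\vartheta S^{-q}+\chi S^{-1/2}=\lambda I,
\]
where $G\succeq0$ is the free density gradient (compressed to the source support if needed). Pairing with $S$ uses four facts:
\begin{itemize}
\item homogeneity gives $\Tr(SG)=2F(S,\eta_C(S))\le2\sqrt\ell$;
\item Jensen gives $\Tr S^{1-q}\le D^q\le\sqrt D$;
\item $\Tr\sqrt S\le\sqrt D$;
\item $\|H\|_{\mathrm{op}}\le n+1$.
\end{itemize}
Together these give
\[
 \lambda\le n+1+2\sqrt n+\vartheta\sqrt D+\chi\sqrt D.
\]
Dropping $G$ and $\vartheta S^{-q}$, both positive, leaves $\chi S^{-1/2}\preceq(\lambda+\|H\|)I$. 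Hence
\[
 S\succeq\chi^2(\lambda+n+1)^{-2}I.
\]
Inserting the parameter bounds \eqref{reg:parameterbounds}, namely $\vartheta\le20s$ and $\chi\ge s^{-1}$, bounds $\lambda+n+1$ by about $2s+2\sqrt s+20s^{3/2}+\sqrt s\le32s^{3/2}$. This gives $S\succeq2^{-10}s^{-5}I\succeq\mu I$. The only care needed is this constant bookkeeping, which mirrors the earlier lemma with $\|H\|\le n+1$ in place of $n$.

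\emph{Hessian and inverse bound.} At any faithful trace-one density with eigenvalues $s_i\in(0,1]$, formula \eqref{sq:fo:sqrthessian} gives
\[
 -D^2(2\chi\Tr\sqrt S)[X,X]=\chi\sum_{i,j}\frac{|X_{ij}|^2}{\sqrt{s_is_j}(\sqrt{s_i}+\sqrt{s_j})}.
\]
Each denominator is at most $2$, so this is at least $\tfrac{\chi}{2}\|X\|_{\HS}^2$. The remaining terms of $\Phi$ have positive semidefinite negative Hessians:
\begin{itemize}
\item the linear term contributes nothing;
\item the fidelity term is concave by Lemma~\ref{sq:fo:fidelity};
\item the power term is concave by Lemma~\ref{rect:basic}.
\end{itemize}
Hence the total negative Hessian is $\succeq\tfrac{\chi}{2}\Id$ on all symmetric directions. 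In particular this holds on the trace-zero tangent, which is what the lemma asserts.

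A self-adjoint positive operator with this lower bound has inverse of norm at most $2/\chi$. Since $\chi\ge s^{-1}$, this is at most $2s=R_*$.

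One technical point concerns a singular physical source. There the fidelity term is differentiated on its fixed support, as in Lemma~\ref{sq:an:smooth}, and is still concave. So its Hessian, extended by zero off the support, remains negative semidefinite.

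I expect no genuine obstacle beyond verifying the numerical constant in the floor.
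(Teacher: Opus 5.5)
Your proposal is correct and follows the paper's proof essentially step for step. You pair the stationarity equation with $S_*$, discard $G$ and $\vartheta S^{-q}$, then use the square-root curvature formula with eigenvalues in $(0,1]$ together with concavity of the remaining terms. The only difference is bookkeeping: you bound the denominator by $32s^{3/2}$, as in Lemma~\ref{rect:ridge-floor}, to get $2^{-10}s^{-5}$, whereas the paper uses the cruder bound $25s^2$ to get $(625s^6)^{-1}$; both imply $S_*\succeq\mu I_D$.
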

\begin{proof}
The fidelity contribution $g(S)=2F(S,\eta_C(S))$ is monotone for
positive semidefinite increments of $S$ and homogeneous of degree one.
At a faithful density it is differentiable on the fixed physical support.
Its gradient $G_S$ therefore satisfies
$G_S\succeq0$ and
\[
 \operatorname{Tr}(SG_S)=g(S)\le2\sqrt\ell\le2\sqrt n.
\]
The bound follows from $\eta_C(I)\preceq\ell I$ and the fidelity
trace bound.  Stationarity at the optimizer is
\[
 H+G_S+\vartheta S^{-q}+\chi S^{-1/2}=\lambda I_D.
\]
Pairing with $S$ gives
$\lambda\le\|H\|_{\mathrm{op}}+2\sqrt n+
\vartheta D^q+\chi\sqrt D$.
Discarding the positive terms $G_S$ and $\vartheta S^{-q}$ yields
\[
 S\succeq
 \left(\frac{\chi}
 {2\|H\|_{\mathrm{op}}+2\sqrt n+\vartheta D^q+\chi\sqrt D}
 \right)^2 I_D.
\]
Since $D^q\le\sqrt D$, the denominator is at most $25s^2$.
Together with $\chi\ge s^{-1}$, this proves the stronger lower bound
$S\succeq(625s^6)^{-1}I_D$, and hence the claimed $\mu$.

In an eigenbasis of a faithful trace-one density, write its eigenvalues
as $s_i\in(0,1]$.  The square-root derivative formula gives
\[
 -D_S^2(2\chi\operatorname{Tr}\sqrt S)[X,X]
 =\chi\sum_{i,j}
 \frac{|X_{ij}|^2}{\sqrt{s_i}\sqrt{s_j}(\sqrt{s_i}+\sqrt{s_j})}
 \ge\frac\chi2\|X\|_{\mathrm{HS}}^2.
\]
The remaining nonlinear terms are concave and contribute nonnegative
forms to the negative Hessian.  Restricting to the trace-zero tangent
preserves the lower bound, which proves the inverse estimate.
\end{proof}

\subsection{Why physical-source conditioning does not enter}

\begin{lemma}[Four derivatives of the unoptimized objective]
\label{reg:mixed}
Fix a coefficient support $U$ and
$\gamma P_U\preceq C\preceq P_U$ with $0<\gamma\le1$.
At a trace-one density $S\succeq\mu I_D$, set
\begin{equation}\label{reg:cauchyscales}
 r_0=\frac{\gamma\mu}{64},\qquad
 T_0=16D(1+\sqrt n+\vartheta+\chi),\qquad
 B_0=T_0(8/r_0)^4.
\end{equation}
Every mixed derivative of $\mathcal R$ of orders one through four is
bounded in magnitude by $B_0$ when each argument $(X,V)$ satisfies
$V=P_UVP_U$ and
$\|X\|_{\mathrm{HS}}+\|V\|_{\mathrm{HS}}\le1$.
Moreover,
\begin{equation}\label{reg:mixedpolynomial}
 B_0\le2^{85}\gamma^{-4}s^{26}
       \le2^{100}\gamma^{-4}s^{30}.
\end{equation}
\end{lemma}
\begin{proof}
Complexify the real symmetric density space and the supported real
symmetric coefficient space.  We use the complex Hilbert--Schmidt norm
for their perturbations.  For
$\|\Delta S\|_{\mathrm{HS}}+\|\Delta C\|_{\mathrm{HS}}<r_0$, define
\[
 U_S=S^{-1/2}\Delta S S^{-1/2},\qquad
 U_C=C_U^{-1/2}(\Delta C)|_U C_U^{-1/2},
 \qquad C_U=C|_U.
\]
Both relative perturbation norms are less than $1/64$.
Choose an orthonormal basis $u_1,\ldots,u_r$ of $U$, where $r=\dim U$.
In this basis use the positive square root of $C_U$ and set
\[
 B_a=\sum_{b=1}^r(C_U^{1/2})_{ba}\A(u_b),
 \qquad \A(u_b)=\sum_{i=1}^{\ell}(u_b)_i\A_i.
\]
These are real symmetric Kraus matrices.  Using this specific factor,
rather than an arbitrary Kraus representation, makes $U_C$ in the next
identity exactly the relative covariance perturbation defined above.
At the fixed base point let
\[
 \mathcal Vv=(S^{1/2}B_av)_{a=1}^r,\qquad
 M=\eta_C(S)=\mathcal V^*\mathcal V.
\]
The adjoint here belongs to the fixed base map; it is not applied to
the varying complex parameters.  Direct expansion of the covariance
indices gives the holomorphic identity
\begin{equation}\label{reg:relativefactor}
 \eta_{C+\Delta C}(S+\Delta S)
 =\mathcal V^*[(I+U_C)\otimes(I+U_S)]\mathcal V.
\end{equation}

Let $K=\operatorname{ran}\eta_C(I)$ be the fixed physical source support.
It equals the span of the ranges of the Kraus matrices and is independent
of the faithful density.  Since all coefficient perturbations stay on
$U$, every perturbed source is supported on this same $K$.
If $K=\{0\}$, the fidelity vanishes identically and the following
argument is needed only for the density powers.  Otherwise
$M_K=M|_K$ is positive definite and
$\mathcal VM_K^{-1/2}:K\to\bigoplus_{a=1}^r\mathbb C^D$ is an isometry.
Equation~\eqref{reg:relativefactor} therefore proves
\begin{equation}\label{reg:relativesource}
 \left\|M_K^{-1/2}(M_{\mathrm{pert}}-M_K)M_K^{-1/2}\right\|_{\mathrm{op}}
 <\frac2{64}+\frac1{4096}<\frac1{16},
\end{equation}
where $M_{\mathrm{pert}}$ is the perturbed source compressed to $K$.
The bound is relative to the source itself and depends only on the
density and coefficient-covariance perturbations.

We check explicitly that the matrix powers keep their principal branch.
For a positive definite $P$, any relative perturbation
$A=P^{1/2}(I+E)P^{1/2}$ with $\|E\|_{\mathrm{op}}<\alpha<1$
has numerical range in the sector
\[
 |\arg z|<\arctan\!\frac\alpha{1-\alpha}.
\]
Indeed $v^*Av/(v^*Pv)=1+w^*Ew/\|w\|^2$ for $w=P^{1/2}v$.
In particular $A$ is invertible.  Its inverse has the same sector
bound: write $v=Aw$ in $v^*A^{-1}v=w^*A^*w$.
If two such matrices are denoted by $A,B$ and $ABv=\lambda v$, then
\[
 \lambda=\frac{v^*Bv}{v^*A^{-1}v}.
\]
For the relative bounds $1/64$ and $1/16$ above, the sum of the two
sector angles is less than $\pi$.  Therefore the product has no
eigenvalue on $(-\infty,0]$.

Apply this observation to
$A=(S+\Delta S)|_K$ and $B=M_{\mathrm{pert}}$.
The compressed relative perturbation inherits the bound on $U_S$.
Indeed, with $S_K=S|_K$ and inclusion $\iota_K:K\to\mathbb C^D$,
$S^{1/2}\iota_KS_K^{-1/2}$ is an isometry, and compression conjugates
$U_S$ by that isometry.
The function $\operatorname{Tr}_K(AB)^{1/2}$ is holomorphic by
Theorem~\ref{reg:matrixfunctions}, and equals fidelity on real positive
inputs.  The compression is essential here: the density remains full
dimensional, while only the zero source directions are omitted from
this particular trace.  Since
$\|A\|_{\mathrm{op}}\le1+1/64$ and
$\|B\|_{\mathrm{op}}\le(1+1/16)n$, summing the principal roots of
the eigenvalues gives
\[
 |\operatorname{Tr}_K(AB)^{1/2}|\le2D\sqrt n.
\]
The same matrix-function theorem makes
$\operatorname{Tr}(S+\Delta S)^{1-q}$ and
$\operatorname{Tr}(S+\Delta S)^{1/2}$ holomorphic, each with absolute
value at most $2D$.  Consequently the holomorphic extension of
$\mathcal R$ has absolute value at most $T_0$ on the indicated ball.

For $1\le j\le4$ and unit directions $w_1,\ldots,w_j$ in the stated
sum norm, substitute $(S,C)+\sum_{i=1}^jz_iw_i$ and integrate on the
circles $|z_i|=r_0/(2j)$.  The polydisc lies inside the ball, and its
coefficient of $z_1\cdots z_j$ is exactly the mixed derivative
$D^j\mathcal R[w_1,\ldots,w_j]$.  Theorem~\ref{reg:cauchy} bounds it
by $T_0(2j/r_0)^j\le B_0$.  Finally,
$T_0\le2^9s^2$ and $r_0^{-1}=2^{16}\gamma^{-1}s^6$, so
$B_0\le2^{85}\gamma^{-4}s^{26}$ as claimed.
\end{proof}

\subsection{Passing the bounds through the optimizer}

On a fixed coefficient face, Lemma~\ref{reg:mixed} gives a smooth
unoptimized objective near every faithful density.  Let $P_0$ be the
orthogonal projection onto trace-zero symmetric matrices.  In an
orthonormal coordinate chart of that tangent space, stationarity is
$P_0\nabla_S\Phi=0$.  Its derivative in the density variable is the
negative definite tangent Hessian.  Lemma~\ref{reg:floor} makes it
invertible, so Theorem~\ref{reg:ift} gives a smooth local solution.
The continuous unique global optimizer coincides with that solution:
compactness and uniqueness imply continuity, and hence keep the global
optimizer in the local neighborhood.  This proves the smoothness of
$S_*(H,C)$ and $E(H,C)$ on each such face.

We first obtain \eqref{reg:covariancebound}.  At fixed $H$, let $V$ be a
supported covariance direction of Hilbert--Schmidt norm at most one.
Write $A_S=-P_0\Phi_{SS}P_0$ for the positive tangent Hessian, and
$b=P_0\Phi_{SC}[\,\cdot\,,V]$ for the corresponding tangent force,
identified with a vector by the Hilbert--Schmidt inner product.
Implicit differentiation and the envelope identity give
\begin{equation}\label{reg:covarianceenvelope}
 D_C^2E[V,V]=\Phi_{CC}[V,V]+\langle b,A_S^{-1}b\rangle.
\end{equation}
At the optimizer, the density floor allows
Lemma~\ref{reg:mixed} to bound $|\Phi_{CC}[V,V]|$ and $\|b\|$ by
$B_0$, while $\|A_S^{-1}\|\le R_*$.
For $\gamma\ge\gamma_0=2^{-15}$, the sharper inequality in
\eqref{reg:mixedpolynomial} gives $B_0\le2^{145}s^{26}$.
It follows that
\[
 |D_C^2E[V,V]|\le B_0+R_*B_0^2
 \le2^{292}s^{53}\le M_C.
\]
This proof of the covariance curvature bound uses the same uniform
local derivatives as the movement estimate; it requires no separate
bound on an inverse physical source.

For the fourth derivative, fix a legal matched path from
Theorem~\ref{reg:main} and define
\[
 \mathcal F(S,t)=\operatorname{Tr}((H+tA)S)+\mathcal R(S,C-t^2Q),
 \qquad K_*=32B_0s^2,
\]
where $B_0$ is evaluated with the lower covariance floor $\gamma$ on
this path.  Since $0\preceq Q\preceq I$ on a space of dimension at
most $n$, $\|Q\|_{\mathrm{HS}}\le\sqrt n$.
For $|t|\le1/2$, the first two derivatives of $C-t^2Q$ have norms at
most $v$ and $2v$, respectively, where $v=\sqrt n$; higher derivatives
vanish.  The chain-rule multipliers through four scalar derivatives
are bounded respectively by
\[
 1,\quad v,\quad v^2+2v,\quad v^3+6v^2,
 \quad v^4+12v^3+12v^2.
\]
Each is at most $25s^2$.  The linear-center contributions are bounded
by $s^2$, using $\|A\|_{\mathrm{HS}}\le s^2$ and
$\|H+tA\|_{\mathrm{HS}}\le\sqrt D(n+1)\le s^2$.
Thus every mixed derivative of $\mathcal F$ of total order one through
four, with unit density directions, has magnitude at most $K_*$.

Put $a_j=\|S_*^{(j)}(t)\|_{\mathrm{HS}}$ for $j=1,2,3$.
Repeated differentiation of projected stationarity gives
\begin{align*}
 a_1&\le R_*K_*,\\
 a_2&\le R_*K_*(1+a_1)^2\le4R_*^3K_*^3,\\
 a_3&\le R_*K_*\bigl((1+a_1)^3+3(1+a_1)a_2\bigr)
       \le32R_*^5K_*^5.
\end{align*}
For example, the second differentiated equation has leading term
$\mathcal F_{SS}S_*''$ and remaining terms
$\mathcal F_{SSS}[S_*',S_*']+2\mathcal F_{SSt}S_*'+\mathcal F_{Stt}$.
At the next order the terms group into a cubic expression in $(S_*',1)$
and three bilinear expressions in $(S_*'',0)$ and $(S_*',1)$.
All density derivatives lie in the trace-zero tangent, so the inverse
bound $R_*$ applies at each differentiation.

Finally, the envelope identity is $e'(t)=\mathcal F_t(S_*(t),t)$.
Differentiating it three times yields
\[
 |e^{(4)}(t)|\le
 K_*\bigl((1+a_1)^3+3(1+a_1)a_2+a_3\bigr)
 \le64R_*^5K_*^6.
\]
Using the weaker but convenient bound
$B_0\le2^{100}\gamma^{-4}s^{30}$ gives
$K_*\le2^{105}\gamma^{-4}s^{32}$ and therefore
\[
 64R_*^5K_*^6\le2^{641}\gamma^{-24}s^{197}
 \le2^{1001}s^{197}\le M_4
 \qquad(\gamma\ge2^{-15}).
\]
Together with Lemma~\ref{reg:floor} and
\eqref{reg:covarianceenvelope}, this completes the proof of
Theorem~\ref{reg:main}.

\begin{remark}[What these constants control]
The covariance bound controls a finite paid covariance cut.  The fourth
derivative bound controls the remainder after averaging the two signs
of a finite movement.  The numerical recipe chooses
\[
 h=\min\left\{\frac{a_0}{2\sqrt n},\frac12,
                 \sqrt{\frac\tau2},\sqrt{\frac{2^{-40}}{M_4}}\right\},
\]
where $a_0$ is the boundary margin and $\tau$ is the epoch duration.
Consequently $M_4h^2\le2^{-40}$.
These bounds are independent of numerical SDP accuracy and of a spectral
gap.  Improving a solver's precision does not remove this finite-step
remainder; the mesh size makes it affordable.
\end{remark}

\end{document}